\documentclass[11pt]{article}

\usepackage{setspace}
\usepackage[T1]{fontenc}
\usepackage[utf8]{inputenc}
\usepackage[margin=1.1in]{geometry}
\usepackage{amsmath, amsfonts, amssymb, bm, mathtools}
\usepackage{bbm}
\usepackage{booktabs}
\usepackage{graphicx}
\usepackage{enumitem}
\usepackage{xr}
\usepackage{xcolor}
\usepackage{amsthm}
\usepackage{thmtools}
\usepackage{algorithm}
\usepackage{algpseudocode}

\usepackage{natbib}
\usepackage{hyperref}
\usepackage{cleveref}

\usepackage{longtable}

\usepackage{makecell,tabularx}
\usepackage{multirow}
\usepackage{threeparttable}

\usepackage{subcaption}

\usepackage{pifont} 

\usepackage{authblk} 

\hypersetup{colorlinks=true, linkcolor=blue, citecolor=blue, urlcolor=blue}

\definecolor{MySlate}{RGB}{70, 80, 90}
\definecolor{SoftBG}{RGB}{245,247,250}
\definecolor{MissDot}{RGB}{180,50,50}    

\usepackage{tikz}
\usepackage{adjustbox}
\usetikzlibrary{arrows.meta,positioning,calc,fit,decorations.pathreplacing}
\tikzset{
  patient/.style={font=\bfseries\Large, text=MySlate},
  timeaxis/.style={line width=1.2pt, MySlate},
  tickmark/.style={MySlate, line width=0.8pt},
  yobs/.style={circle, draw=black, fill=black, inner sep=0pt, minimum size=12pt},       
  ymiss/.style={circle, draw=MissDot, line width=2.0pt, fill=white, inner sep=0pt, minimum size=12pt},  
  covtag/.style={font=\Large, text=MySlate},
  tablebox/.style={draw=MySlate, rounded corners=3pt, line width=0.8pt, fill=SoftBG},
  headcell/.style={font=\Large\bfseries, text=MySlate},
  sepLine/.style={MySlate, line width=0.7pt},
  maparrow/.style={->, MySlate, line width=1.4pt},
  yNA/.style={font=\Large, text=MissDot, draw=MissDot, fill=MissDot!14,
              rounded corners=2pt, inner sep=2pt}
}

\usepackage{xcolor}
\usepackage{colortbl} 

\definecolor{MySlate}{RGB}{70,80,90}
\definecolor{SoftBG}{RGB}{245,247,250}
\definecolor{MissDot}{RGB}{180,50,50}    

\usepackage{tikz}
\usepackage{adjustbox}
\usetikzlibrary{arrows.meta,positioning,calc,fit,decorations.pathreplacing}
\tikzset{
  patient/.style={font=\bfseries\Large, text=MySlate},
  timeaxis/.style={line width=1.2pt, MySlate},
  tickmark/.style={MySlate, line width=0.8pt},
  yobs/.style={circle, draw=black, fill=black, inner sep=0pt, minimum size=12pt},       
  ymiss/.style={circle, draw=MissDot, line width=2.0pt, fill=white, inner sep=0pt, minimum size=12pt},  
  covtag/.style={font=\Large, text=MySlate},
  tablebox/.style={draw=MySlate, rounded corners=3pt, line width=0.8pt, fill=SoftBG},
  headcell/.style={font=\Large\bfseries, text=MySlate},
  sepLine/.style={MySlate, line width=0.7pt},
  maparrow/.style={->, MySlate, line width=1.4pt},
  yNA/.style={font=\Large, text=MissDot, draw=MissDot, fill=MissDot!14,
              rounded corners=2pt, inner sep=2pt}
}

\usepackage{tikz}
\usetikzlibrary{shapes, arrows.meta, positioning, calc, backgrounds, decorations.pathmorphing}
\usepackage{booktabs}

\definecolor{primaryBlue}{RGB}{0, 85, 150}      
\definecolor{accessGreen}{RGB}{46, 125, 50}     
\definecolor{barrierRed}{RGB}{198, 40, 40}      
\definecolor{waterBlue}{RGB}{227, 242, 253}     
\definecolor{oceanLine}{RGB}{33, 150, 243}      
\definecolor{hiddenGray}{RGB}{117, 117, 117}    

\tikzset{
    semitransparent/.style={opacity=0.8, text opacity=1},
    box/.style={draw, rectangle, rounded corners, minimum width=2.2cm, minimum height=0.8cm, align=center, font=\small\bfseries},
    cloudNode/.style={draw=gray, cloud, cloud puffs=11, fill=gray!10, align=center, font=\footnotesize},
    mainArrow/.style={->, >={Stealth[length=3mm]}, line width=1.5pt},
    waveDecoration/.style={decorate, decoration={snake, amplitude=1.5mm, segment length=5mm}}
}

\declaretheorem[name=Assumption,numberwithin=section]{assumption}
\declaretheorem[name=Proposition,numberwithin=section]{proposition}
\declaretheorem[name=Theorem,numberwithin=section]{theorem}
\declaretheorem[name=Remark,numberwithin=section]{remark}
\declaretheorem[name=Lemma,numberwithin=section]{lemma}
\declaretheorem[name=Corollary,numberwithin=section]{corollary}

\DeclareMathOperator{\Var}{var}
\DeclareMathOperator{\Cov}{cov}
\DeclareMathOperator{\E}{E}
\DeclareMathOperator{\pr}{pr}

\newcommand{\dint}{\mathrm{d}}

\title{\textbf{Joint Modeling of Longitudinal EHR Data with Shared Random Effects for Informative Visiting and Observation Processes}}

\author[1]{Cheng-Han Yang}
\author[2]{Xu Shi}
\author[1]{Bhramar Mukherjee\thanks{Corresponding author: \href{mailto:bhramar.mukherjee@yale.edu}{bhramar.mukherjee@yale.edu}}}

\affil[1]{Department of Biostatistics, Yale University}
\affil[2]{Department of Biostatistics, University of Michigan}

\date{}

\begin{document}
\maketitle

\begin{abstract}
Longitudinal electronic health record data offer unprecedented opportunities to study biomarker trajectories in association with genetic, environmental, and social determinants of health. However, association estimates from standard longitudinal models designed for regular observation times may be biased by a two-stage hierarchical missingness mechanism. The first stage is the visiting process, where encounters occur at irregular times largely driven by patient health status; it introduces a bias known as informative presence. The second stage is the observation process: conditional on a visit occurring, biomarkers---the longitudinal outcomes of interest---are selectively measured mostly based on clinical judgment and the underlying health trajectory of the patient. This observation process induces a second source of bias termed informative observation.
Two critical gaps remain in this field. First, the joint influence of these mechanisms on statistical inference has not been systematically evaluated. Second, models accommodating both mechanisms remain limited. To address these gaps, we conduct extensive benchmarking studies and propose a unified semiparametric joint modeling framework that simultaneously characterizes the visiting, biomarker observation, and longitudinal outcome processes. Central to this framework is a shared subject-specific Gaussian latent variable that captures unmeasured frailty and induces dependence across all three components. 
We also introduce a sequential procedure that imputes missing biomarkers prior to adjusting for irregular visiting, and examine its performance.
To ensure computational tractability for large-scale studies, we develop a three-stage estimation procedure and establish the consistency and asymptotic normality of our estimators.
Simulation results demonstrate that our method yields unbiased estimates under the two-stage hierarchical mechanism, whereas existing approaches can be substantially biased. In particular, methods that adjust only for irregular visiting may exhibit even greater bias than approaches that ignore both mechanisms.
We further apply our framework---particularly suited for outpatient visiting settings---to data from the All of Us Research Program to investigate the association between two neighborhood-level socioeconomic status indicators and the trajectories of six biomarkers measured through blood tests.
\end{abstract}

\noindent\textbf{Keywords:} All of Us; Electronic health record; Informative observation process; Informative visiting process; Joint modeling; Missing not at random.

\section{Introduction}

Longitudinal data collected in routine clinical care pose inferential challenges that differ fundamentally from those in prospective cohort studies or clinical trials. In designed studies, a protocol governs both the schedule of patient encounters and the panel of measurements obtained at each encounter. In routine care, neither is predetermined; instead, both the timing of a visit and whether a particular biomarker is measured arise from a complex interplay of disease severity, clinical judgment, and patient characteristics and behavior \citep{hripcsak2013next, goldstein2016controlling}. When these clinically driven mechanisms depend on the latent health trajectory under study, analyses that treat the data collection process as fixed or exogenous yield systematically biased estimates of exposure--biomarker associations \citep{lin2004analysis, pullenayegum2016longitudinal}. These issues are of growing importance as electronic health record (EHR) data are now widely used in large-scale biomedical initiatives such as the All of Us Research Program in the United States \citep{all2019all} and the UK Biobank \citep{bycroft2018uk}, where routinely collected laboratory measurements serve as longitudinal outcomes for studying associations with clinical, demographic, genetic, environmental, behavioral, and social determinants of health. A prominent example is the laboratory-wide association study \citep[LabWAS;][]{goldstein2020labwas, dennis2021clinical}, in which repeated biomarker measurements are linked to genetic risk factors on a phenome-wide scale. Although LabWAS has yielded empirical insights, collapsing repeated measurements into simple summaries neglects within-subject variability and fails to account for the informative nature of the data collection process. Specifically, two clinically driven mechanisms---the timing of visits and the decision to measure a particular biomarker---distort the available data in ways that standard methods do not address.

These two mechanisms reflect a two-stage hierarchy inherent in EHR data collection (Figure~\ref{fig:IPIO}). The first stage is the visiting process, which generates patient encounters at irregular, clinically driven times. When patients with greater disease burden visit more frequently, the resulting overrepresentation of adverse health states introduces a bias known as informative presence \citep[IP;][]{lin2004analysis, sun2007regression}. 
However, the direction of bias under IP is not definitive: limited access to care may reduce visit frequency among the sickest patients \citep{obermeyer2019dissecting}, while heterogeneities in diagnostic testing rates may inflate encounter frequency \citep{ellenbogen2024race}, making the magnitude and sign of bias under IP difficult to determine \emph{a priori}. The second stage is the observation process: conditional on a visit occurring, a clinician decides whether to measure the biomarker of interest---which serves as the longitudinal outcome in our modeling framework---based on the patient's current symptoms, prior test results, and clinical judgment. When this decision depends on factors related to the health trajectory that are not fully captured by observed covariates, it gives rise to a second source of bias that we term informative observation \citep[IO;][]{wells2013strategies, haneuse2016general}. Consider patient 2 in the left panel of Figure~\ref{fig:IPIO} as an illustration of this hierarchy. At the first visit, denoted by an open circle, no laboratory measurement is taken, reflecting the absence of clinical indication. At the second visit, a measurement is observed (solid dot) in response to an adverse condition. Finally, at the third visit, a follow-up measurement is obtained to confirm recovery; notably, this decision is conditional on both the patient's current condition and clinical history. 
In the All of Us data, biomarker measurements are observed only at a fraction of outpatient visits; for example, even among patients with at least one glucose record, the median per-patient measurement rate is only 9.1\%. The extent of missingness varies substantially across biomarkers (Table~\ref{tab:biomarkers}).
Moreover, this missingness is unlikely to be ignorable: the unmeasured factors driving both the frequency of visits and the decision to measure---including latent health status, access to care, and heterogeneity in clinical decision-making---are often correlated with the biomarker trajectory under study, so that both stages constitute a missing not at random (MNAR) mechanism that cannot be resolved by addressing either stage in isolation.

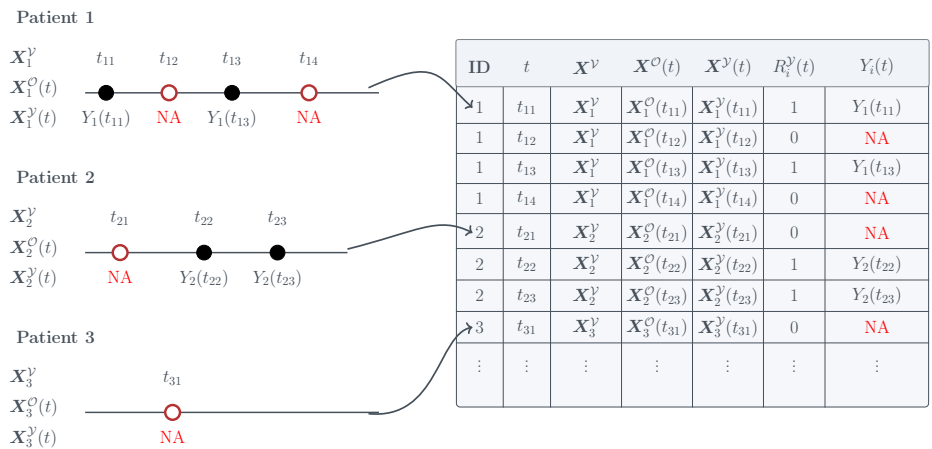
\begin{figure}[htbp]
\centering
\begin{adjustbox}{scale=0.465}
\newcommand{\TextScale}{2.567}
\begin{tikzpicture}[x=1cm,y=1cm]

\providecolor{MySlate}{RGB}{64,82,99}
\providecolor{MyGray}{RGB}{240,243,247}

\def\xL{2.6}
\def\xR{11.0}
\def\Oshift{2.3}            
\def\nameoffset{1.8}        

\def\panelTop{3.2}
\def\panelBot{-9.5}
\pgfmathsetmacro{\rowSep}{(\panelTop - \panelBot)/2}
\pgfmathsetmacro{\yA}{\panelTop}
\pgfmathsetmacro{\yB}{\panelTop - \rowSep}
\pgfmathsetmacro{\yC}{\panelTop - 2*\rowSep}

\def\covV{0.0}     
\def\covO{-1.2}    
\def\covY{-2.4}    

\pgfmathsetmacro{\Xcol}{\xL-\Oshift}

\pgfmathsetmacro{\yTimeA}{\yA + \covO}
\pgfmathsetmacro{\yTimeB}{\yB + \covO}
\pgfmathsetmacro{\yTimeC}{\yC + \covO}

\begin{scope}[yscale=0.72, every node/.style={transform shape=false}]

\node[patient, anchor=west] at (0.5,\yA+\nameoffset) {Patient 1};

\node[covtag, anchor=base west] at (\Xcol,\yA+\covV) {$\bm X_1^{\mathcal V}$};
\node[covtag, anchor=base west] at (\Xcol,\yA+\covO) {$\bm X_1^{\mathcal O}(t)$};
\node[covtag, anchor=base west] at (\Xcol,\yA+\covY) {$\bm X_1^{\mathcal Y}(t)$};

\draw[timeaxis] (\xL,\yTimeA) -- (\xR,\yTimeA);

\foreach \x/\lab in {3.2/$t_{11}$, 5.0/$t_{12}$, 6.8/$t_{13}$, 9.0/$t_{14}$}{
  \draw[tickmark] (\x,\yTimeA-0.14) -- (\x,\yTimeA+0.14);
  \node[covtag, anchor=base] at (\x,\yA+\covV) {\lab};
}

\foreach[count=\i from 1] \x in {3.2,5.0,6.8,9.0}{
  \ifnum\i=1 \node[yobs]  at (\x,\yTimeA) {}; \fi
  \ifnum\i=2 \node[ymiss] at (\x,\yTimeA) {}; \fi
  \ifnum\i=3 \node[yobs]  at (\x,\yTimeA) {}; \fi
  \ifnum\i=4 \node[ymiss] at (\x,\yTimeA) {}; \fi
  \node[covtag, anchor=base] at (\x,\yA+\covY) {%
    \ifnum\i=2
      \textcolor{red}{NA}%
    \else\ifnum\i=4
      \textcolor{red}{NA}%
    \else
      $Y_1(t_{1\i})$%
    \fi\fi
  };
}

\node[patient, anchor=west] at (0.5,\yB+\nameoffset) {Patient 2};

\node[covtag, anchor=base west] at (\Xcol,\yB+\covV) {$\bm X_2^{\mathcal V}$};
\node[covtag, anchor=base west] at (\Xcol,\yB+\covO) {$\bm X_2^{\mathcal O}(t)$};
\node[covtag, anchor=base west] at (\Xcol,\yB+\covY) {$\bm X_2^{\mathcal Y}(t)$};

\draw[timeaxis] (\xL,\yTimeB) -- (\xR-0.9,\yTimeB);

\foreach \x/\lab in {3.6/$t_{21}$, 6.0/$t_{22}$, 8.1/$t_{23}$}{
  \draw[tickmark] (\x,\yTimeB-0.14) -- (\x,\yTimeB+0.14);
  \node[covtag, anchor=base] at (\x,\yB+\covV) {\lab};
}

\foreach[count=\i from 1] \x in {3.6,6.0,8.1}{
  \ifnum\i=1 \node[ymiss] at (\x,\yTimeB) {}; \fi
  \ifnum\i=2 \node[yobs]  at (\x,\yTimeB) {}; \fi
  \ifnum\i=3 \node[yobs]  at (\x,\yTimeB) {}; \fi
  \node[covtag, anchor=base] at (\x,\yB+\covY) {%
    \ifnum\i=1
      \textcolor{red}{NA}%
    \else
      $Y_2(t_{2\i})$%
    \fi
  };
}

\node[patient, anchor=west] at (0.5,\yC+\nameoffset) {Patient 3};

\node[covtag, anchor=base west] at (\Xcol,\yC+\covV) {$\bm X_3^{\mathcal V}$};
\node[covtag, anchor=base west] at (\Xcol,\yC+\covO) {$\bm X_3^{\mathcal O}(t)$};
\node[covtag, anchor=base west] at (\Xcol,\yC+\covY) {$\bm X_3^{\mathcal Y}(t)$};

\draw[timeaxis] (\xL,\yTimeC) -- (\xR,\yTimeC);

\draw[tickmark] (5.1,\yTimeC-0.14) -- (5.1,\yTimeC+0.14);
\node[covtag, anchor=base] at (5.1,\yC+\covV) {$t_{31}$};

\node[ymiss] at (5.1,\yTimeC) {};
\node[covtag, anchor=base, text=red] at (5.1,\yC+\covY) {NA};

\coordinate (LeftAnchor1) at (10.7,\yTimeA+0.20);
\coordinate (LeftAnchor2) at (10.1,\yTimeB+0.14);
\coordinate (LeftAnchor3) at (10.7,\yTimeC-0.05);

\end{scope} 

\node[tablebox, minimum width=13.5cm, minimum height=10.5cm, anchor=west] (T) at (13.2,-2.3) {};
\coordinate (TL) at (T.north west);
\coordinate (TR) at (T.north east);
\coordinate (BL) at (T.south west);
\coordinate (BR) at (T.south east);

\foreach \f in {0.10,0.20,0.35,0.50,0.65,0.78}{
  \draw[sepLine] ($(TL)!\f!(TR)$) -- ($(BL)!\f!(BR)$);
}
\draw[fill=MyGray, draw=MySlate!70, line width=0.6pt, rounded corners=2pt]
  (TL) rectangle ($(TL)!1!(TR)+(0,-1.35)$);

\node[headcell] at ($(TL)!0.05!(TR)+(0,-0.78)$) {ID};
\node[headcell] at ($(TL)!0.15!(TR)+(0,-0.78)$) {$t$};
\node[headcell] at ($(TL)!0.275!(TR)+(0,-0.78)$) {$\bm X^{\mathcal V}$};
\node[headcell] at ($(TL)!0.425!(TR)+(0,-0.78)$) {$\bm X^{\mathcal O}(t)$};
\node[headcell] at ($(TL)!0.575!(TR)+(0,-0.78)$) {$\bm X^{\mathcal Y}(t)$};
\node[headcell] at ($(TL)!0.715!(TR)+(0,-0.78)$) {$R_i^{\mathcal Y}(t)$};
\node[headcell] at ($(TL)!0.89!(TR) +(0,-0.78)$) {$Y_i(t)$};

\def\ryA{-1.95}\def\sAD{-0.86}\def\biggap{-1.00}\def\sEL{-0.90}
\pgfmathsetmacro{\ryB}{\ryA+1*\sAD}
\pgfmathsetmacro{\ryC}{\ryA+2*\sAD}
\pgfmathsetmacro{\ryD}{\ryA+3*\sAD}

\pgfmathsetmacro{\ryE}{\ryD+\biggap}
\pgfmathsetmacro{\ryF}{\ryE+1*\sEL}
\pgfmathsetmacro{\ryG}{\ryE+2*\sEL}

\pgfmathsetmacro{\ryH}{\ryE+3*\sEL}

\pgfmathsetmacro{\ryM}{\ryH+\biggap}

\foreach \yy in {\ryA,\ryB,\ryC,\ryD,\ryE,\ryF,\ryG,\ryH}{
  \draw[sepLine] ($(TL)+(0,\yy-0.45)$) -- ($(TR)+(0,\yy-0.45)$);
}

\node[covtag] at ($(TL)!0.05!(TR)+(0,\ryA)$) {1};
\node[covtag] at ($(TL)!0.15!(TR)+(0,\ryA)$) {$t_{11}$};
\node[covtag] at ($(TL)!0.275!(TR)+(0,\ryA)$) {$\bm X_1^{\mathcal V}$};
\node[covtag] at ($(TL)!0.425!(TR)+(0,\ryA)$) {$\bm X_1^{\mathcal O}(t_{11})$};
\node[covtag] at ($(TL)!0.575!(TR)+(0,\ryA)$) {$\bm X_1^{\mathcal Y}(t_{11})$};
\node[covtag] at ($(TL)!0.715!(TR)+(0,\ryA)$) {1};
\node[covtag] at ($(TL)!0.89!(TR) +(0,\ryA)$) {$Y_1(t_{11})$};

\node[covtag] at ($(TL)!0.05!(TR)+(0,\ryB)$) {1};
\node[covtag] at ($(TL)!0.15!(TR)+(0,\ryB)$) {$t_{12}$};
\node[covtag] at ($(TL)!0.275!(TR)+(0,\ryB)$) {$\bm X_1^{\mathcal V}$};
\node[covtag] at ($(TL)!0.425!(TR)+(0,\ryB)$) {$\bm X_1^{\mathcal O}(t_{12})$};
\node[covtag] at ($(TL)!0.575!(TR)+(0,\ryB)$) {$\bm X_1^{\mathcal Y}(t_{12})$};
\node[covtag] at ($(TL)!0.715!(TR)+(0,\ryB)$) {0};
\node[covtag, text=red] at ($(TL)!0.89!(TR) +(0,\ryB)$) {NA};

\node[covtag] at ($(TL)!0.05!(TR)+(0,\ryC)$) {1};
\node[covtag] at ($(TL)!0.15!(TR)+(0,\ryC)$) {$t_{13}$};
\node[covtag] at ($(TL)!0.275!(TR)+(0,\ryC)$) {$\bm X_1^{\mathcal V}$};
\node[covtag] at ($(TL)!0.425!(TR)+(0,\ryC)$) {$\bm X_1^{\mathcal O}(t_{13})$};
\node[covtag] at ($(TL)!0.575!(TR)+(0,\ryC)$) {$\bm X_1^{\mathcal Y}(t_{13})$};
\node[covtag] at ($(TL)!0.715!(TR)+(0,\ryC)$) {1};
\node[covtag] at ($(TL)!0.89!(TR) +(0,\ryC)$) {$Y_1(t_{13})$};

\node[covtag] at ($(TL)!0.05!(TR)+(0,\ryD)$) {1};
\node[covtag] at ($(TL)!0.15!(TR)+(0,\ryD)$) {$t_{14}$};
\node[covtag] at ($(TL)!0.275!(TR)+(0,\ryD)$) {$\bm X_1^{\mathcal V}$};
\node[covtag] at ($(TL)!0.425!(TR)+(0,\ryD)$) {$\bm X_1^{\mathcal O}(t_{14})$};
\node[covtag] at ($(TL)!0.575!(TR)+(0,\ryD)$) {$\bm X_1^{\mathcal Y}(t_{14})$};
\node[covtag] at ($(TL)!0.715!(TR)+(0,\ryD)$) {0};
\node[covtag, text=red] at ($(TL)!0.89!(TR) +(0,\ryD)$) {NA};

\node[covtag] at ($(TL)!0.05!(TR)+(0,\ryE)$) {2};
\node[covtag] at ($(TL)!0.15!(TR)+(0,\ryE)$) {$t_{21}$};
\node[covtag] at ($(TL)!0.275!(TR)+(0,\ryE)$) {$\bm X_2^{\mathcal V}$};
\node[covtag] at ($(TL)!0.425!(TR)+(0,\ryE)$) {$\bm X_2^{\mathcal O}(t_{21})$};
\node[covtag] at ($(TL)!0.575!(TR)+(0,\ryE)$) {$\bm X_2^{\mathcal Y}(t_{21})$};
\node[covtag] at ($(TL)!0.715!(TR)+(0,\ryE)$) {0};
\node[covtag, text=red] at ($(TL)!0.89!(TR) +(0,\ryE)$) {NA};

\node[covtag] at ($(TL)!0.05!(TR)+(0,\ryF)$) {2};
\node[covtag] at ($(TL)!0.15!(TR)+(0,\ryF)$) {$t_{22}$};
\node[covtag] at ($(TL)!0.275!(TR)+(0,\ryF)$) {$\bm X_2^{\mathcal V}$};
\node[covtag] at ($(TL)!0.425!(TR)+(0,\ryF)$) {$\bm X_2^{\mathcal O}(t_{22})$};
\node[covtag] at ($(TL)!0.575!(TR)+(0,\ryF)$) {$\bm X_2^{\mathcal Y}(t_{22})$};
\node[covtag] at ($(TL)!0.715!(TR)+(0,\ryF)$) {1};
\node[covtag] at ($(TL)!0.89!(TR) +(0,\ryF)$) {$Y_2(t_{22})$};

\node[covtag] at ($(TL)!0.05!(TR)+(0,\ryG)$) {2};
\node[covtag] at ($(TL)!0.15!(TR)+(0,\ryG)$) {$t_{23}$};
\node[covtag] at ($(TL)!0.275!(TR)+(0,\ryG)$) {$\bm X_2^{\mathcal V}$};
\node[covtag] at ($(TL)!0.425!(TR)+(0,\ryG)$) {$\bm X_2^{\mathcal O}(t_{23})$};
\node[covtag] at ($(TL)!0.575!(TR)+(0,\ryG)$) {$\bm X_2^{\mathcal Y}(t_{23})$};
\node[covtag] at ($(TL)!0.715!(TR)+(0,\ryG)$) {1};
\node[covtag] at ($(TL)!0.89!(TR) +(0,\ryG)$) {$Y_2(t_{23})$};

\node[covtag] at ($(TL)!0.05!(TR)+(0,\ryH)$) {3};
\node[covtag] at ($(TL)!0.15!(TR)+(0,\ryH)$) {$t_{31}$};
\node[covtag] at ($(TL)!0.275!(TR)+(0,\ryH)$) {$\bm X_3^{\mathcal V}$};
\node[covtag] at ($(TL)!0.425!(TR)+(0,\ryH)$) {$\bm X_3^{\mathcal O}(t_{31})$};
\node[covtag] at ($(TL)!0.575!(TR)+(0,\ryH)$) {$\bm X_3^{\mathcal Y}(t_{31})$};
\node[covtag] at ($(TL)!0.715!(TR)+(0,\ryH)$) {0};
\node[covtag, text=red] at ($(TL)!0.89!(TR) +(0,\ryH)$) {NA};

\node[covtag] at ($(TL)!0.05!(TR)+(0,\ryM)$) {$\vdots$};
\node[covtag] at ($(TL)!0.15!(TR)+(0,\ryM)$) {$\vdots$};
\node[covtag] at ($(TL)!0.275!(TR)+(0,\ryM)$) {$\vdots$};
\node[covtag] at ($(TL)!0.425!(TR)+(0,\ryM)$) {$\vdots$};
\node[covtag] at ($(TL)!0.575!(TR)+(0,\ryM)$) {$\vdots$};
\node[covtag] at ($(TL)!0.715!(TR)+(0,\ryM)$) {$\vdots$};
\node[covtag] at ($(TL)!0.89!(TR) +(0,\ryM)$) {$\vdots$};

\coordinate (RowID1) at ($(TL)!0.05!(TR)+(0-0.2,\ryA)$);
\coordinate (RowID2) at ($(TL)!0.05!(TR)+(0-0.2,\ryE)$);
\coordinate (RowID3) at ($(TL)!0.05!(TR)+(0-0.2,\ryH)$);

\draw[maparrow] (LeftAnchor1) .. controls +(2.0,0.9) and +(-1.0,0.9) .. (RowID1);
\draw[maparrow] (LeftAnchor2) .. controls +(2.0,0.5) and +(-1.0,0.5) .. (RowID2);
\draw[maparrow] (LeftAnchor3) .. controls +(2.0,-0.2) and +(-1.0,-0.2) .. (RowID3);

\end{tikzpicture}
\end{adjustbox}
\caption{Illustration of the hierarchical data generation process involving Informative Presence (IP) and Informative Observation (IO). 
Left panel: Patient timelines where clinic visits (ticks) are generated by the visiting process driven by covariates $\bm{X}_i^{\mathcal{V}}$. 
At each visit, the observation process (driven by $\bm{X}_i^{\mathcal{O}}(t)$) determines whether the biomarker outcome $Y_i(t)$ is measured (solid dots, $R_i^{\mathcal Y}(t)=1$) or unmeasured (hollow circles, $R_i^{\mathcal Y}(t)=0$), while the underlying longitudinal biomarker trajectory is driven by $\bm{X}_i^{\mathcal{Y}}(t)$. 
Right panel: The resulting long-format dataset used for analysis, where ``NA'' in the $Y_i(t)$ column indicates an unmeasured outcome ($R_i^{\mathcal Y}(t)=0$) despite the patient's presence at the clinic.}
\label{fig:IPIO}
\end{figure}

The statistical literature on longitudinal EHR analysis \citep[e.g.,][]{gasparini2020mixed} has focused predominantly on the IP mechanism, the first stage of the hierarchy, when estimating outcome biomarker trajectories. 
Such approaches include inverse intensity weighting \citep{robins2000marginal, lin2001semiparametric, burvzkova2007longitudinal, yiu2025accommodating}, pairwise likelihood \citep{chen2015regression, shen2019regression}, and joint modeling of the visiting and longitudinal processes through shared frailties \citep{liang2009joint, dai2018joint, weaver2023functional}. While these approaches have substantially advanced the analysis of irregularly observed longitudinal data, they share a common limitation: each implicitly assumes that a clinic visit guarantees measurement of the biomarker of interest---that is, they address IP while treating the observation stage as nonexistent. Although the Bayesian framework proposed by \cite{anthopolos2021modeling} considers both IP and IO, the approach remains computationally intractable for large-scale EHR data.
A recent contribution, EHRJoint \citep{du2025newstatisticalapproachjoint}, acknowledges the two-stage structure by specifying separate models for the visiting, observation, and longitudinal outcome processes. 
However, its observation model is formulated as an independent regression on observed covariates. This formulation fails to capture unmeasured heterogeneity---such as patient health awareness, access to care, or unrecorded clinical intuition---that jointly determines the timing of encounters, the necessity of testing, and the biomarker trajectories. This restriction is equivalent to treating the observation mechanism as missing at random (MAR) conditional on observed covariates, limiting its ability to accommodate the MNAR mechanism identified above.

To address these challenges, we make two main contributions. First, we propose GIVEHR (Gaussian Informative Visiting and observation processes in Electronic Health Records), a semiparametric joint modeling framework. Particularly well-suited to outpatient settings, it explicitly accommodates the two-stage hierarchy of EHR data collection.
Unlike existing methods, GIVEHR links visit intensity, observation mechanisms, and biomarker trajectories through a shared Gaussian latent variable. This structure captures unmeasured heterogeneity while yielding closed-form marginal likelihoods, enabling a computationally efficient estimation procedure that scales to large EHR databases.
In contrast to the Bayesian framework of \citet{anthopolos2021modeling}, which requires time discretization, GIVEHR operates in continuous time, preserving the irregular visit structure characteristic of EHR data. 
In effect, GIVEHR enables valid inference by accommodating the MNAR mechanisms inherent in both the visiting and observation processes---a challenge that existing methods leave unresolved.
We establish the consistency and asymptotic normality of the proposed estimators under regularity conditions.

Second, we provide the first systematic evaluation of 20 existing approaches under joint IP and IO mechanisms. These approaches can be broadly categorized into four classes: outcome-only methods, which ignore both IP and IO; IP-only adjustments, which correct for irregular visiting but assume the biomarker is measured at every visit; imputation-based pipelines, which impute missing biomarkers before applying an IP correction; and joint IP+IO frameworks, which model both stages simultaneously. Since no prior work has combined imputation with IP correction as described above, we introduce and evaluate this pipeline against the other approaches.
A key finding of this evaluation is that partial corrections can be counterproductive: adjusting for IP while ignoring IO can amplify bias relative to na\"ive analyses that ignore both stages entirely, because IP-only methods implicitly assume that every clinic visit yields a measurement, forcing the model to misattribute unmodeled observation bias to the visiting process itself---thereby amplifying, rather than removing, the underlying distortion.
Simulations demonstrate that, when the joint MNAR mechanism is correctly specified, GIVEHR yields approximately unbiased estimates, whereas existing methods exhibit substantial bias. Under departures from this specification, GIVEHR still considerably reduces bias relative to all competing approaches.
We further apply GIVEHR to the All of Us Research Program to investigate the association between two neighborhood-level socioeconomic status indicators and longitudinal trajectories of six clinical biomarkers. Because the visiting process is shared across biomarkers for a given patient, this design allows us to contrast varying observation mechanisms---ranging from routine to highly targeted tests---under a common IP structure.
This application demonstrates the scalability of GIVEHR to large-scale EHR data and illustrates how accounting for the MNAR mechanisms in both the visiting and observation processes can substantially change the magnitude of estimated socioeconomic–biomarker associations.

The remainder of this paper is organized as follows: Sections~\ref{sec:model} and~\ref{sec:estimation} detail the GIVEHR model specification and estimation procedure, respectively; Section~\ref{sec:asy_properties} establishes the asymptotic theory; Section~\ref{sec:supp:existing_methods} reviews existing methods; Section~\ref{sec:simulation} presents the simulation study; Section~\ref{sec:data_analysis} describes the application to the All of Us data; and Section~\ref{sec:discussion} concludes with a discussion.

\section{Method} \label{sec:model}

\subsection{Model Specification}

In a study comprising $n$ patients, let $m_i$ denote the total number of visits for subject $i$, and let $Y_i(t)$ represent the longitudinal biomarker process observed over the window $t \in [0, \tau]$. The follow-up time for each patient is restricted to this interval and is further limited by a random censoring time $C_i$.
The collection of biomarker measurements in the EHR system can be characterized by three processes: the visiting process, the observation process, and the longitudinal process. To account for the MNAR mechanisms that arise when all three processes are driven by unmeasured clinical heterogeneity, we posit a shared, subject-level latent variable $U_i \sim \mathcal{N}(0, 1)$ that captures unmeasured patient tendencies and induces dependence among all three processes. The Gaussian specification places the latent variable on the same scale as the random components in all three submodels, yielding transparent cross-process dependence, closed-form marginal likelihoods, and a scalable estimation procedure, as we detail below.

Each process may depend on distinct but possibly overlapping observed covariate sets. 
Let $\bm{X}_i^{\mathcal{V}}$ denote baseline covariates for the visiting process, 
$\bm{X}_i^{\mathcal{O}}(t)$ time-varying covariates for the observation process, 
and $\bm{X}_i^{\mathcal{Y}}(t)$ covariates for the longitudinal outcome. 
In the observation and outcome models, subvectors 
$\bm{Z}_i^{\mathcal{O}}(t) \subseteq \bm{X}_i^{\mathcal{O}}(t)$ and 
$\bm{Z}_i^{\mathcal{Y}}(t) \subseteq \bm{X}_i^{\mathcal{Y}}(t)$ 
are associated with subject-specific random coefficients linked to $U_i$. The visiting process depends solely on a scalar frailty, as visit intensity is primarily driven by aggregate health burden rather than covariate-specific random effects.

The visiting process captures the temporal pattern of patients' clinical visits. Let $N_i(t)$ be the visit counting process with jumps $\dint N_i(t) \in \{0, 1\}$ at visit times. 
Conditional on $U_i$ and $\bm{X}_i^{\mathcal V}$, we assume a multiplicative intensity model for the visiting process:
\begin{equation}\label{eq:visit_intensity}
\E\left\{\dint N_i(t) | U_i, \bm{X}_i^{\mathcal V} \right\} = \mathbb{I} \left( t \leq C_i \right) \eta_i \exp(\bm{\gamma}^\top \bm{X}_i^{\mathcal V}) \dint \Lambda_0(t)
\end{equation}
where $\dint \Lambda_0(t) = \lambda_0(t)\,\mathrm{d}t$ is an unspecified baseline intensity function, $\bm{\gamma}$ denotes the vector of regression coefficients, and $\mathbb{I} \left( t \leq C_i \right)$ is an indicator function, taking the value 1 if the patient is at risk at time $t$, and 0 otherwise.
The term $\eta_i$ represents a subject-specific multiplicative frailty that accounts for unmeasured heterogeneity in visit intensity. 
Unlike \citet{liang2009joint} and \citet{du2025newstatisticalapproachjoint}, who adopt a gamma frailty for its conjugacy with the Poisson-type visit likelihood, we specify 
\begin{align*}
\eta_i = \exp(\mu_0 + \sigma U_i)
\end{align*}
as lognormal. The lognormal specification ensures that the full vector of latent variables is jointly Gaussian, preserving closed-form conditional distributions throughout the estimation procedure. A gamma frailty would break this joint normality, requiring numerical integration over the full random effects vector whose dimension grows with the number of random coefficients in the observation and outcome processes. We constrain 
$\mu_0 = -\sigma^2/2$ so that $\E (\eta_i) = 1$, 
ensuring identifiability \citep{cook2007statistical}.

Additionally, the observation process determines whether the biomarker of interest, $Y_i(t)$, is observed at a given visit time $t$. Let $R_i^{\mathcal Y}(t) \in \{0, 1\}$ be the observation indicator, conditional on a visit occurring (i.e., $\dint N_i(t) = 1$). We formulate the probability of observation using a probit mixed effects model, as the normal cumulative distribution function (CDF) composed with normally distributed random effects admits a closed-form marginal probability---a property not shared by the logit link---thereby maintaining the analytical tractability of the Gaussian latent structure:
\begin{equation}\label{eq:observation_probability}
\pr \left\{ R_i^{\mathcal Y}(t) = 1 \mid \dint N_i(t) = 1, U_i, \bm{X}_i^{\mathcal O}(t), \bm{Z}_i^{\mathcal O}(t) \right\} = \Phi \left\{ \bm{\alpha}^\top \bm{X}_i^{\mathcal O}(t) + \bm{q}_i^\top \bm{Z}_i^{\mathcal O}(t) \right\}
\end{equation}
where $\Phi(\cdot)$ denotes the standard normal CDF. 
The vector $\bm{q}_i$ represents subject-specific random coefficients that capture unmeasured factors driving the observation decision beyond what is explained by observed covariates.
To capture the dependency on the shared latent structure, 
we assume 
\begin{align*}
\bm{q}_i | U_i \sim \mathcal{N}(\bm{\delta} U_i, \bm{\Sigma}_q)   
\end{align*}
where $\bm{\delta}$ governs the dependence between the shared latent variable and the propensity to observe. When $\bm{\delta} \neq \bm{0}$, the observation process depends on the latent variable that also drives the biomarker trajectory, inducing the MNAR mechanism discussed above; setting $\bm{\delta} = \bm{0}$ reduces the observation model to MAR. For brevity, we define $\omega_i(t;\bm{\alpha},\bm{\delta},\bm{\Sigma}_q, U_i) = \pr \left\{ R_i^{\mathcal Y}(t) = 1 \mid \dint N_i(t) = 1, U_i, \bm{X}_i^{\mathcal O}(t), \bm{Z}_i^{\mathcal O}(t) \right\}$.

The longitudinal process characterizes the biomarker trajectory over time. The biomarker value, $Y_i(t)$, is observed only when a visit occurs and the biomarker is measured (i.e., $\dint N_i(t)=1$ and $R_i^{\mathcal Y}(t)=1$). We assume a semiparametric model with random effects for the trajectory:
\begin{equation}\label{eq:continuous_outcome}
Y_i(t) = \beta_0(t) + \bm{\beta}^\top \bm{X}_i^{\mathcal Y}(t) + \bm{b}_i^\top \bm{Z}_i^{\mathcal Y}(t) + \varepsilon_i(t)
\end{equation}
where $\bm{\beta}$ denotes the fixed-effect coefficients corresponding to $\bm{X}_i^{\mathcal Y}(t)$ and $\varepsilon_i(t) \sim \mathcal{N}(0, \sigma^2_\varepsilon)$ represents the random error. The baseline mean function $\beta_0(t)$ is left unspecified, avoiding the need to select a parametric form for the temporal trend and providing robustness against misspecification of the time effect.
The random effects $\bm{b}_i$ are linked to the 
shared factor $U_i$ via 
\begin{align*}
\bm{b}_i \mid U_i \sim \mathcal{N}(\bm{\theta} U_i, \bm{\Sigma}_b),
\end{align*}
where $\bm{\theta}$ plays an analogous role to $\bm{\delta}$
in the observation process: it governs how the shared latent variable shifts the biomarker trajectory. For instance, when $\sigma >0 $ and $\bm{\theta} > \bm{0}$, patients who visit more frequently also tend to have higher biomarker values, capturing the positive confounding structure commonly encountered in clinical data. On the other hand, setting $\bm{\theta} = \bm{0}$ removes the dependence between the longitudinal process and the latent variable, reducing the outcome model to a standard linear mixed model.

The shared latent variable $U_i$ induces a coherent dependence structure across the three submodels. Define $\zeta_i = \mu_0 + \sigma U_i$ and rewrite $\bm b_i = \bm\theta U_i + \bm{\widetilde{b}}_i$ and $\bm q_i = \bm\delta U_i + \bm{\widetilde{q}}_i$, where $(\bm{\widetilde{b}}_i, \bm{\widetilde{q}}_i)$ are independent of $U_i$. Then $(\zeta_i, \bm b_i^\top, \bm q_i^\top)^\top$ is multivariate normal with covariance matrix
\[
\Var\!\begin{pmatrix}
\zeta_i\\ \bm b_i\\ \bm q_i
\end{pmatrix}=
\begin{pmatrix}
\sigma^{2} & \sigma\,\bm\theta^\top & \sigma\,\bm\delta^\top\\[2pt]
\sigma\,\bm\theta & \bm\theta\bm\theta^\top + \bm\Sigma_b & \bm\theta\,\bm\delta^\top\\[2pt]
\sigma\,\bm\delta & \bm\delta\,\bm\theta^\top & \bm\delta\bm\delta^\top + \bm\Sigma_q
\end{pmatrix}.
\]
The off-diagonal blocks reveal the cross-process dependence implied by the model. For example, $\sigma \bm{\theta}^\top$ quantifies the covariance between visit intensity and the biomarker trajectory, while $\bm{\theta} \bm{\delta}^\top$ captures the association between the biomarker trajectory and the propensity to be observed. This explicit, interpretable dependence structure is the central advantage of the Gaussian latent specification---it is precisely this structure that allows GIVEHR to correct for the joint MNAR mechanism induced by IP and IO.

The parameters to be estimated are organized according to the three submodels. For the visiting process, these include the regression coefficients $\bm{\gamma}$, the frailty parameter $\sigma$, and the non-parametric baseline intensity $\Lambda_0(t)$. The observation process involves the fixed effects $\bm{\alpha}$, the dependence vector $\bm{\delta}$, and the covariance matrix $\bm{\Sigma}_q$. Finally, the longitudinal process encompasses the fixed effects $\bm{\beta}$, the dependence vector $\bm{\theta}$, and the covariance matrix $\bm{\Sigma}_b$. The comprehensive estimation procedure for these components is detailed in Section~\ref{sec:estimation}.

\subsection{Assumptions}

These assumptions serve distinct functions within the GIVEHR framework, bridging theoretical rigor with computational efficiency. Assumptions~\ref{asmp:censoring} and~\ref{asmp:cond_indep} establish the independence structures necessary for unbiased parameter estimation, while Assumption~\ref{asmp:distributions} specifies the parametric forms that enable closed-form representations of the empirical Bayes posterior and marginalized observation probabilities.

\begin{assumption}[Noninformative censoring]\label{asmp:censoring}
The censoring time $C_i$ is assumed to be noninformative. Specifically, conditional on the covariates $\{\bm{X}_i^{\mathcal V}, \bm{X}_i^{\mathcal O}(t), \bm{X}_i^{\mathcal Y}(t)\}$, $C_i$ is independent of the visiting process $N_i(\cdot)$, the observation process $R_i^{\mathcal Y}(\cdot)$, and the longitudinal outcome $Y_i(\cdot)$. This assumption ensures that the censoring mechanism does not introduce additional bias in model estimation.  
\end{assumption}

\begin{assumption}[Conditional independence of data collection and outcome errors]\label{asmp:cond_indep}
Given the shared latent variable $U_i$ and all observed covariates, the visiting, observation, and longitudinal outcome processes are conditionally independent:
\[
\left\{N_i(t), R_i^{\mathcal Y}(t) \right\} \perp \{\varepsilon_i(t)\} 
\; \big|\; \left\{ U_i, \bm{X}_i^{\mathcal V}, \bm{X}_i^{\mathcal O}(t), \bm{X}_i^{\mathcal Y}(t) \right\}.
\]  
This condition permits the factorization of the joint likelihood, justifying the separability of the estimation procedure.
\end{assumption}

\begin{assumption}[latent variable and random-effect distributions]\label{asmp:distributions}
Conditional on \(U_i\), the random slopes in the observation submodel follow a multivariate normal distribution,
\[
\bm{q}_i \mid U_i \sim \mathcal{N}(\bm{\delta} U_i, \bm{\Sigma}_q).
\]
In the longitudinal submodel, the random effects satisfy the conditional mean link
\[
\bm{b}_i \mid U_i \sim \mathcal{N}(\bm{\theta} U_i, \bm{\Sigma}_b).
\]
This Gaussian specification, combined with the probit link in~\eqref{eq:observation_probability}, enables a closed-form marginalization of the latent variable and avoids the need for high-dimensional numerical integration.
\end{assumption}

\section{Three-Stage Estimation}\label{sec:estimation}

\subsection{Overview}

We propose a three-stage estimation procedure summarized in Figure~\ref{fig:estimation_roadmap}. In the first stage, we estimate nuisance parameters governing the visiting process and derive an empirical Bayes posterior for the latent variable $U_i$. In the second stage, we estimate the observation process parameters and obtain the marginalized observation probability $\overline{\omega}_i(t)$. In the third stage, these quantities are used to construct a compensation covariate that corrects for bias in the weighted least squares (WLS) estimation of the target longitudinal outcome parameters $(\bm{\beta}, \bm{\theta})$. 

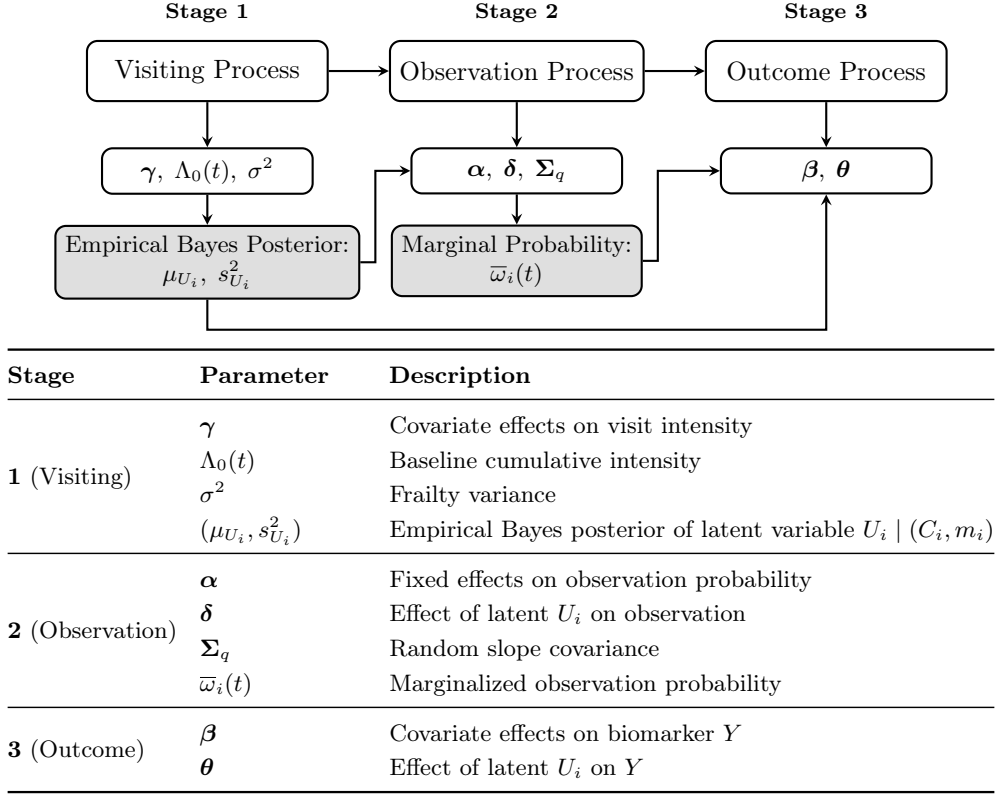
\begin{figure}[t]
\centering
\captionsetup{width=\textwidth} 

\begin{tikzpicture}[
    node distance=0.5cm and 0.8cm,
    stage/.style={rectangle, rounded corners, draw=black, 
                  thick, minimum height=0.8cm, minimum width=3.2cm, align=center, font=\small},
    param/.style={rectangle, rounded corners, draw=black, fill=white,
                    thick, minimum height=0.6cm, minimum width=2.8cm, align=center, font=\footnotesize},
    latent/.style={rectangle, rounded corners, draw=black, fill=gray!25,
                    thick, minimum height=0.6cm, minimum width=2.8cm, align=center, font=\footnotesize},
    arrow/.style={->, thick, >=stealth, draw=black}
]
\node[stage] (visit) {Visiting Process};
\node[stage, right=of visit] (obs) {Observation Process};
\node[stage, right=of obs] (outcome) {Outcome Process};
\node[above=0.1cm of visit, font=\scriptsize\bfseries] {Stage 1};
\node[above=0.1cm of obs, font=\scriptsize\bfseries] {Stage 2};
\node[above=0.1cm of outcome, font=\scriptsize\bfseries] {Stage 3};

\node[param, below=0.6cm of visit] (vp) {$\bm{\gamma},\; \Lambda_0(t),\; \sigma^2$};
\node[param, below=0.6cm of obs] (op) {$\bm{\alpha},\; \bm{\delta},\; \bm{\Sigma}_q$};
\node[param, below=0.6cm of outcome] (outp) {$\bm{\beta},\; \bm{\theta}$};

\node[latent, below=0.4cm of vp] (eb) {Empirical Bayes Posterior:\\ $\mu_{U_i},\; s_{U_i}^2$};
\node[latent, below=0.4cm of op] (omega) {Marginal Probability:\\ $\overline{\omega}_i(t)$};

\draw[arrow, black] (visit) -- (vp);
\draw[arrow, black] (obs) -- (op);
\draw[arrow, black] (outcome) -- (outp);
\draw[arrow, black] (vp) -- (eb);
\draw[arrow, black] (op) -- (omega);

\draw[arrow] (visit) -- (obs);
\draw[arrow] (obs) -- (outcome);

\draw[arrow] (eb.east) -- ++(0.2,0) |- (op.west);
\draw[arrow] (omega.east) -- ++(0.2,0) |- (outp.west);
\draw[arrow] (eb.south) -- ++(0,-0.4) -| (outp.south);
\end{tikzpicture}

\vspace{0.2cm}

\footnotesize
\renewcommand{\arraystretch}{1.2}
\begin{tabular}{@{}l@{\hspace{0.3cm}}l@{\hspace{0.8cm}}l@{}}
\toprule
\textbf{Stage} & \textbf{Parameter} & \textbf{Description} \\
\midrule
\multirow{4}{*}{\textbf{1} (Visiting)} 
& $\bm{\gamma}$ & Covariate effects on visit intensity \\
& $\Lambda_0(t)$ & Baseline cumulative intensity \\
& $\sigma^2$ & Frailty variance \\
& $(\mu_{U_i}, s_{U_i}^2)$ & Empirical Bayes posterior of latent variable $U_i \mid (C_i, m_i)$ \\
\midrule
\multirow{4}{*}{\textbf{2} (Observation)} 
& $\bm{\alpha}$ & Fixed effects on observation probability \\
& $\bm{\delta}$ & Effect of latent $U_i$ on observation \\
& $\bm{\Sigma}_q$ & Random slope covariance \\
& $\overline{\omega}_i(t)$ & Marginalized observation probability \\
\midrule
\multirow{2}{*}{\textbf{3} (Outcome)} 
& $\bm{\beta}$ & Covariate effects on biomarker $Y$ \\
& $\bm{\theta}$ & Effect of latent $U_i$ on $Y$ \\
\bottomrule
\end{tabular}
\caption{Three-stage estimation procedure. Gray shaded boxes indicate latent information transmitted across stages. Stages 1 and 2 estimate nuisance parameters to estimate the empirical Bayes posterior and marginalized observation probabilities, respectively; these quantities are subsequently incorporated into Stage 3 to correct for clinically informed bias.}
\label{fig:estimation_roadmap}
\end{figure}


\subsection{Stage 1: Visiting Process}

We estimate $\bm{\gamma}$ by maximizing the partial likelihood 
for the multiplicative intensity model \eqref{eq:visit_intensity}.
The corresponding score equation takes the form
\begin{equation} \label{eq:EE_gamma}
\sum_{i=1}^{n}\int_{0}^{\tau}\left\{\bm{X}_i^{\mathcal V} -\overline{\bm{X}}^{\mathcal V}(t) \right\}\,\dint N_{i}(t)=0,
\end{equation}
where $\overline{\bm{X}}^{\mathcal V}(t) = \sum_{j=1}^{n} \mathbb{I}(t \leq C_j) \exp\left(\bm{\gamma}^\top \bm{X}_j^{\mathcal V}\right) \bm{X}_j^{\mathcal V} \big/ \sum_{j=1}^{n} \mathbb{I}(t \leq C_j) \exp \left(\bm{\gamma}^\top \bm{X}_j^{\mathcal V}\right)$ 
is the risk-set average at time $t$.
This is the 
Andersen--Gill extension \citep{andersen1982cox} of 
the Cox partial likelihood score equation to recurrent 
events, treating each visit as a recurrent event within 
a counting process framework.
Given $\widehat{\bm{\gamma}}$, the baseline cumulative intensity is estimated via the Breslow estimator \citep{breslow1974covariance}:
\begin{equation} \label{eq:AalenBreslow}
\widehat{\Lambda}_{0}(t)=\sum_{i=1}^{n}\int_{0}^{t}\frac{\dint N_{i}(s)}{\sum_{j=1}^{n} \mathbb{I}(s \leq C_j) \exp(\widehat{\bm{\gamma}}^\top \bm{X}_j^{\mathcal V})}.
\end{equation}
Define the estimated expected cumulative intensity by
\[
\widehat{\nu}_i
:= \exp\!\big(\widehat{\bm{\gamma}}^\top \bm{X}_i^{\mathcal V}\big)\,\widehat{\Lambda}_0(C_i).
\]
Let $\sigma_\eta^2 := \Var(\eta_i)$ denote the common frailty variance.
This variance is estimated using the method of moments based on the observed visit counts $m_i$ for patient $i$:
\begin{equation} \label{eq:MoM_VarEta}
\widehat{\sigma}^2_\eta = \max\left\{0, \frac{\sum_{i=1}^n \left( m_i^2 - m_i - \widehat{\nu}_i^2 \right) }{\sum_{i=1}^n \widehat{\nu}_i^2}\right\}.
\end{equation}
Assuming $\log \eta_i \sim \mathcal{N}(\mu_0, \sigma^2)$ subject to $\E (\eta_i) =1$, we obtain $\widehat{\sigma}^2=\log\left\{1+\widehat{\Var}(\eta_i)\right\}$ and
$\widehat{\mu}_0=-\widehat{\sigma}^2/2$.
Subsequently, we approximate the empirical Bayes posterior by a normal distribution via a Laplace approximation
\citep{tierney1986accurate} (Lemma~\ref{lem:laplace}),
\[
U_i \mid (C_i,m_i) \approx \mathcal{N}(\mu_{U_i}, s_{U_i}^2),
\]
where $\mu_{U_i}$ is the posterior mode and $s_{U_i}^2$ is the negative inverse Hessian of the log-posterior of $U_i$ evaluated at $\mu_{U_i}$.
Lemma~\ref{lem:laplace} further demonstrates that the posterior is strictly log-concave in $U_i$, which guarantees a unique posterior mode
$\mu_{U_i}$ and a uniquely determined $s_{U_i}^2$.

\subsection{Stage 2: Observation Process}

To construct the likelihood for the observation indicators $R_i^{\mathcal Y}(t)$, we require the observation probability averaged over uncertainty in $U_i$. Using the empirical Bayes posterior from Stage~1,
we define the marginalized observation probability
\[
\overline{\omega}_i(t) := \E\!\left\{\omega_i(t; U_i) \mid C_i, m_i\right\}.
\]
By probit--normal conjugacy (Lemma~\ref{lem:probit}), this admits a closed-form representation:
\[
\overline{\omega}_i(t) = \Phi\! \left\{ k_i(t) \right\},
\]
where
\[
k_i(t) := \frac{\bm{\alpha}^\top \bm{X}_i^{\mathcal O}(t) + \left\{ \bm{\delta}^\top \bm{Z}_i^{\mathcal O}(t) \right\} \mu_{U_i}}
{\sqrt{1 + \left\{ \bm{Z}_i^{\mathcal O}(t) \right\}^\top \bm{\Sigma}_q \bm{Z}_i^{\mathcal O}(t) 
+ \left\{\bm{\delta}^\top \bm{Z}_i^{\mathcal O}(t)\right\}^2 s_{U_i}^2}}.
\]
With the marginalized observation probability $\overline{\omega}_i(t)$ defined for each visit, the parameters $(\bm{\alpha},\bm{\delta},\bm{\Sigma}_q)$ are estimated by maximizing the composite Bernoulli log-likelihood function, summed over all visit times in $[0, \tau]$ for all subjects:
\begin{equation}\label{eq:Likelihood_IO}
\ell_{\text{IO}}(\bm{\alpha},\bm{\delta},\bm{\Sigma}_q)
= \sum_{i=1}^n \int_0^{\tau}
\Bigl[
R_i^{\mathcal Y}(t)\log\!\bigl\{\overline{\omega}_i(t)\bigr\}
+ \bigl\{1-R_i^{\mathcal Y}(t)\bigr\}\log\!\bigl\{1-\overline{\omega}_i(t)\bigr\}
\Bigr]\, \dint N_i(t).
\end{equation}
The resulting estimators $(\widehat{\bm{\alpha}},\widehat{\bm{\delta}},\widehat{\bm{\Sigma}}_q)$ are the values that maximize this function, subject to $\widehat{\bm{\Sigma}}_q$ being positive semi-definite. For notational brevity, we denote the plug-in estimate as $\widehat{\overline{\omega}}_i(t)
:= \overline{\omega}_i\!\left(t;\widehat{\bm{\alpha}},\widehat{\bm{\delta}},\widehat{\bm{\Sigma}}_q\right)$.

\subsection{Stage 3: Longitudinal Outcome Model}

Conditional on the follow-up time $C_i$ and the number of visits $m_i$, estimation of the outcome parameters relies on a mean-zero process $M_i(t)$. Following \cite{liang2009joint}, the conditional expectation of the outcome increment is corrected for IP and IO mechanisms driven by $U_i$.

We define the observation-weighted posterior mean ratio as
\begin{align*}
\kappa_i(t) &:= \frac{\E\!\left\{U_i \, \omega_i(t; U_i) \mid C_i, m_i\right\}}{\overline{\omega}_i(t)} \\
&= \mu_{U_i} + \frac{ \left\{ \bm{\delta}^\top \bm{Z}_i^{\mathcal O}(t) \right\} s_{U_i}^2}
{\sqrt{1 + \left\{\bm{Z}_i^{\mathcal O}(t) \right\}^\top \bm{\Sigma}_q \bm{Z}_i^{\mathcal O}(t) 
+ \left\{ \bm{\delta}^\top \bm{Z}_i^{\mathcal O}(t)\right\}^2 s_{U_i}^2}}
\, \frac{\phi\{k_i(t)\}}{\Phi\{k_i(t)\}},
\end{align*}
where $\phi(\cdot)$ denotes the probability density function of the standard normal distribution. As demonstrated in Lemma~\ref{lem:probit}, $\kappa_i(t)$ admits a closed-form representation, which simplifies the subsequent computation of the marginal likelihood.
A compensation covariate is then constructed as $\bm{B}_i(t) := \bm{Z}_i^{\mathcal Y}(t) \, \kappa_i(t)$.

Define the weight function for subject $i$ at time $t$ as
\begin{equation}\label{eq:weight}
p_i(t) = \overline{\omega}_i(t)\, \mathbb{I} (t \leq C_i) \,\frac{m_i}{\Lambda_0(C_i)},
\end{equation}
which corrects for two distinct sources of clinically informative sampling.  The marginalized observation probability $\overline{\omega}_i(t)$ adjusts for the informative observation process, whereas the factor $\mathbb{I}(t \leq C_i) m_i/\Lambda_0(C_i)$ adjusts for the informative visiting process. 
This latter term arises from Lemma~\ref{lem:order}, which establishes that, conditional on the follow-up window $C_i$ and the total number of visits $m_i$, each visit is equally likely to occur at any point along the baseline intensity scale, i.e., $\E\{\dint N_i(t)\mid C_i,m_i\} = \mathbb{I}( t \leq C_i ) m_i\,\dint \Lambda_0(t)/\Lambda_0(C_i)$.

The estimating equations for the cumulative baseline function $\mathcal{A}(t) = \int_0^t \beta_0(s)\,\dint \Lambda_0(s)$ and the regression parameters $(\bm{\beta}, \bm{\theta})$ are derived from the compensated process:
\begin{equation}\label{eq:mean_zero_process}
M_i(t)
=\int_0^t \big\{Y_i(s)-\bm{\beta}^\top \bm{X}_i^{\mathcal Y}(s)-\bm{\theta}^\top \bm{B}_i(s)\big\} R_i^{\mathcal Y}(s) \,\dint N_i(s)
-\int_0^t p_i(s) \,\dint \mathcal{A}(s).
\end{equation}
By Lemma~\ref{lem:order} and the martingale compensation result in Lemma~\ref{lem:compensation}, $M_i(t)$ is a zero-mean martingale at the true parameter values.

We solve these equations using a profiling approach. Summing the martingale representation over subjects yields a Breslow-type estimator for the baseline increment:
\[
\dint\widehat{\mathcal{A}}(t) = \frac{\sum_{i=1}^n \big\{Y_i(t)-\bm{\beta}^\top \bm{X}_i^{\mathcal Y}(t)-\bm{\theta}^\top\widehat{\bm{B}}_i(t)\big\}\,R_i^{\mathcal Y}(t)\,\dint N_i(t)}
{\sum_{i=1}^n \widehat{p}_i(t)},
\]
where $\widehat{p}_i(t)$ is the plug-in estimate of~\eqref{eq:weight}. Substituting this back into the martingale representation and rearranging yields a risk-set centered estimating equation for $(\bm{\beta}, \bm{\theta})$:
\begin{equation}
\label{eq:EE_outcome}
\sum_{i=1}^n \int_0^\tau
\begin{pmatrix} \bm{X}_i^{\mathcal Y}(t)-\overline{\bm{X}}^{\mathcal Y}(t)\\ \widehat{\bm{B}}_i(t)- \widehat{\overline{\bm{B}}}(t) \end{pmatrix}
\big\{Y_i(t)-\bm{\beta}^\top \bm{X}_i^{\mathcal Y}(t)-\bm{\theta}^\top \widehat{\bm{B}}_i(t)\big\}\,R_i^{\mathcal Y}(t)\,\dint N_i(t)=\bm{0},
\end{equation}
where the centering terms are weighted averages over all subjects at each time $t$:
\[
\overline{\bm{X}}^{\mathcal Y}(t) = \frac{\sum_{i=1}^n \bm{X}_i^{\mathcal Y}(t) \, \widehat{p}_i(t)}{\sum_{i=1}^n \widehat{p}_i(t)}, \quad 
\widehat{\overline{\bm{B}}}(t) = \frac{\sum_{i=1}^n \widehat{\bm{B}}_i(t) \, \widehat{p}_i(t)}{\sum_{i=1}^n \widehat{p}_i(t)}.
\]
Solving this equation yields the estimators $(\widehat{\bm{\beta}}, \widehat{\bm{\theta}})$.

The complete three-stage procedure is summarized in Algorithm~\ref{alg:estimation}.

\begin{algorithm}[htbp]
\caption{Three-stage estimation procedure for GIVEHR.}
\label{alg:estimation}
\begin{algorithmic}[1] 

\State \textbf{Step 1. Visiting process estimation.}
\State \quad (a) Estimate $(\bm{\gamma}, \Lambda_0(t), \sigma^2)$ using \eqref{eq:EE_gamma}--\eqref{eq:MoM_VarEta}.
\State \quad (b) Compute the empirical Bayes posterior $(\widehat{\mu}_{U_i}, \widehat{s}_{U_i}^2)$
via Laplace approximation (Lemma~\ref{lem:laplace}).

\Statex

\State \textbf{Step 2. Observation process estimation.}
\State \quad Estimate $(\bm{\alpha}, \bm{\delta}, \bm{\Sigma}_q)$ by maximizing \eqref{eq:Likelihood_IO},
treating $(\widehat{\mu}_{U_i}, \widehat{s}_{U_i}^2)$ from Step~1 as known.

\Statex

\State \textbf{Step 3. Outcome process estimation.}
\State \quad (a) Construct the compensation covariate
$\widehat{\bm{B}}_i(t)=\bm{Z}_i^{\mathcal{Y}}(t)\,\widehat{\kappa}_i(t)$ and weights $\widehat{p}_i(t)$
using estimates from Steps~1--2.
\State \quad (b) Obtain $(\widehat{\bm{\beta}}, \widehat{\bm{\theta}})$ from the closed-form weighted least squares
solution \eqref{eq:EE_outcome}.

\end{algorithmic}
\end{algorithm}

\begin{remark}
The proposed framework can further accommodate dependence on the subject's own observation past history. For example, an abnormal biomarker value may prompt repeat testing. However, including history $R_i^{\mathcal{Y}}(t^-)$ in the observation model introduces endogeneity, since past observation indicators involve the same patient-specific latent variable that drives current observation. 
Moreover, conditioning on the prior history breaks the conditional independence across observation history, substantially increasing computational cost.
A modified estimation algorithm and additional simulation addressing this issue are provided in Sections~\ref{sec:tv_obs} and~\ref{sec:tv_obs_sim} of the Supplementary Material, respectively. 
\end{remark}

\section{Asymptotic Properties of \texorpdfstring{$(\widehat{\bm\beta},\widehat{\bm\theta})$}{(beta, theta)}} \label{sec:asy_properties}

We establish the large-sample behavior of the joint estimator $(\widehat{\bm\beta},\widehat{\bm\theta})$, defined as the solution to \eqref{eq:EE_outcome}.

\begin{theorem}[Consistency]\label{thm:consistency}
Under Assumptions~\ref{asmp:censoring}--\ref{asmp:distributions} and regularity conditions \textup{(C1)}--\textup{(C6)}, let $(\widehat{\bm\beta},\widehat{\bm\theta})$ be the solution to the estimating equations \eqref{eq:EE_outcome}, where the conditional expectations involving $U_i$ are evaluated using the Laplace approximation. Then, as $n \to \infty$,

\[
\begin{pmatrix}\widehat{\bm\beta}\\ \widehat{\bm\theta}\end{pmatrix}
\xrightarrow{p}
\begin{pmatrix}\bm\beta_0\\ \bm\theta_0\end{pmatrix}.
\]
\end{theorem}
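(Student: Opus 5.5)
The estimating equation \eqref{eq:EE_outcome} is linear in $(\bm\beta,\bm\theta)$, so I will treat $(\widehat{\bm\beta},\widehat{\bm\theta})$ as a closed-form weighted least squares solution rather than invoke a general M-estimation argument. Write $\bm W_i(t)=(\bm X_i^{\mathcal Y}(t)^\top,\bm B_i(t)^\top)^\top$ and let $\overline{\bm W}(t)$ be its $p$-weighted risk-set average. Then
\[
\begin{pmatrix}\widehat{\bm\beta}\\ \widehat{\bm\theta}\end{pmatrix}=\widehat{\bm A}_n^{-1}\widehat{\bm c}_n,
\]
where the two sample averages are
\[
\widehat{\bm A}_n=n^{-1}\sum_i\int_0^\tau\{\widehat{\bm W}_i(t)-\widehat{\overline{\bm W}}(t)\}\widehat{\bm W}_i(t)^\top R_i^{\mathcal Y}(t)\,\dint N_i(t),
\]
\[
\widehat{\bm c}_n=n^{-1}\sum_i\int_0^\tau\{\widehat{\bm W}_i(t)-\widehat{\overline{\bm W}}(t)\}Y_i(t)R_i^{\mathcal Y}(t)\,\dint N_i(t).
\]
The plan is to show three things: $\widehat{\bm A}_n\to\bm A$ in probability, with $\bm A$ nonsingular; $\widehat{\bm c}_n\to\bm c$ in probability; and $\bm A(\bm\beta_0^\top,\bm\theta_0^\top)^\top=\bm c$. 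Consistency then follows from the continuous mapping theorem.

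\textbf{Step 1 (nuisance estimators).} I use standard Andersen--Gill theory under (C1)--(C6), namely bounded covariates, $\tau$ with positive at-risk probability, and $\Lambda_0(\tau)<\infty$. This gives $\widehat{\bm\gamma}\to\bm\gamma_0$ and $\sup_{t\le\tau}|\widehat\Lambda_0(t)-\Lambda_0(t)|\to0$ in probability. By the law of large numbers and these uniform limits, the moment estimator \eqref{eq:MoM_VarEta} converges to $\sigma_\eta^2$, so $\widehat\sigma^2\to\sigma^2$.

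The posterior mode $\mu_{U_i}$ and curvature $s_{U_i}^2$ are smooth functions of $(m_i,\widehat\nu_i,\widehat\sigma)$. Strict log-concavity (Lemma~\ref{lem:laplace}) gives a unique root, and the implicit function theorem gives local Lipschitz dependence on these inputs. Hence $\max_i|\widehat\mu_{U_i}-\mu_{U_i}|\to0$ in probability, after truncating on events where $m_i$ is bounded in probability, as guaranteed by the moment conditions.

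For Stage~2, $\ell_{\rm IO}/n$ converges uniformly on a compact parameter set to its expectation. By (C-identifiability) the limit has a unique maximizer at the truth, so $(\widehat{\bm\alpha},\widehat{\bm\delta},\widehat{\bm\Sigma}_q)$ is consistent by the usual argmax argument (Newey--McFadden). Since $\Phi$, $\phi/\Phi$, and the map defining $k_i(t)$ are Lipschitz on compacts with the denominator bounded below by $1$, the plug-ins satisfy
\[
\sup_{i,t}|\widehat{\overline\omega}_i(t)-\overline\omega_i(t)|\to0,\qquad \sup_{i,t}|\widehat{\bm B}_i(t)-\bm B_i(t)|\to0,\qquad \sup_{i,t}|\widehat p_i(t)-p_i(t)|\to0
\]
in probability.

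\textbf{Step 2 (replace hats by truths).} Each of $\widehat{\bm A}_n$ and $\widehat{\bm c}_n$ is a sum of terms of the form $n^{-1}\sum_i\int g_i\,\dint N_i$ in which the $g_i$ are perturbed uniformly by $o_p(1)$. The differences are therefore bounded by $o_p(1)\cdot n^{-1}\sum_i m_i(1+|Y|)$, which is $o_p(1)$ by (C) moment bounds. The weighted averages $\widehat{\overline{\bm W}}(t)$ are ratios whose denominators $n^{-1}\sum_i p_i(t)$ are bounded away from zero on $[0,\tau]$, and a Glivenko--Cantelli argument over $t$ yields uniform convergence to the population ratios $\bm e(t)=\E\{\bm W p(t)\}/\E\{p(t)\}$. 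The LLN then delivers the limits $\bm A$ and $\bm c$.

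\textbf{Step 3 (the limit equation is unbiased).} Substituting $Y_i=\bm\beta_0^\top\bm X_i^{\mathcal Y}+\bm\theta_0^\top\bm B_i+\{\text{residual}\}$, I need
\[
\E\int\{\bm W_i-\bm e\}\{Y_i-\bm\beta_0^\top\bm X_i^{\mathcal Y}-\bm\theta_0^\top\bm B_i\}R_i^{\mathcal Y}\,\dint N_i=0.
\]
This is exactly the mean-zero property of $M_i$ in \eqref{eq:mean_zero_process}. I condition on $(C_i,m_i,\text{covariates})$ and apply Lemma~\ref{lem:order} to the visit increment. Assumption~\ref{asmp:cond_indep} lets the $\varepsilon$ part vanish and separates $R$ from $Y$ given $U_i$. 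Lemma~\ref{lem:compensation} (via Lemma~\ref{lem:probit}) turns $\E\{\bm b_i^\top\bm Z\,\omega\mid C_i,m_i\}$ into $\bm\theta_0^\top\bm B_i\,\overline\omega_i$. The remaining term is $\beta_0(t)p_i(t)\,\dint\Lambda_0$, which is annihilated by the centering at $\bm e(t)$.

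The step I expect to be the main obstacle is the interpretation of "evaluated using the Laplace approximation". The posterior of $U_i$ is not exactly Gaussian, so exact unbiasedness holds only for the working Gaussian posterior. I would handle this either by stating the target as the solution of the Laplace-approximated population equation, which is what I read the theorem as asserting, or by invoking a condition in (C1)--(C6) under which the approximation error vanishes.

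\textbf{Step 4 (nonsingularity).} $\bm A$ equals the $p$-weighted covariance $\E\int(\bm W-\bm e)(\bm W-\bm e)^\top p\,\dint\Lambda_0$, which is positive definite by the design condition in (C). This requires $\bm B_i$ not to be collinear with $\bm X^{\mathcal Y}$, and that is ensured by the nonlinearity of $\kappa_i$ in $m_i$. Inverting $\bm A$ completes the argument.
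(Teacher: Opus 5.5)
Your argument has the same skeleton as the paper's proof. You write $(\widehat{\bm\beta}^\top,\widehat{\bm\theta}^\top)^\top=\bm S_n^{-1}\bm T_n$ and show that both sample arrays converge, via an oracle limit plus a plug-in perturbation bound. You get unbiasedness of the limiting equation from the mean-zero property of the compensated process (Proposition~\ref{prop:mean-zero}) and the centering identity (Lemma~\ref{lem:centering}), and you invert using the identification condition.

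The one real difference is how you identify the limits, and your route is the sounder one. The paper splits $\dint N_i=\dint\Lambda_i+\dint(N_i-\Lambda_i)$, where $\Lambda_i$ is the frailty-dependent compensator. It then discards $\int_0^\tau\bm H_{i,0}\bm V_{i,0}^\top R_i^{\mathcal Y}\,\dint(N_i-\Lambda_i)$ as a mean-zero martingale integral. But $\bm H_{i,0}(t)$ and $\bm V_{i,0}(t)$ depend on $m_i=N_i(C_i)$ through $(\mu_{U_i},s_{U_i}^2)$, so they are not predictable and that term need not have mean zero. Already $\E[\int_0^{C_i}m_i\,\dint\{N_i(t)-\Lambda_i(t)\}]=\E\{\Lambda_i(C_i)\}>0$.

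Conditioning on $(C_i,m_i,U_i)$ and using Lemma~\ref{lem:order}, as you do, gives the correct limits $\bm A=\E\int_0^\tau\{\bm W(t)-\bm e(t)\}\{\bm W(t)-\bm e(t)\}^\top p(t)\,\dint\Lambda_0(t)$ and $\bm c=\bm A(\bm\beta_0^\top,\bm\theta_0^\top)^\top$. The condition you therefore need is positive definiteness of this $\bm A$, not of the paper's matrix in (C6), which integrates against $\dint\Lambda_1$. Impose it directly rather than deriving it from the nonlinearity of $\kappa_i$. That argument fails when, for example, $\sigma=0$ and $\bm\delta=\bm 0$, since then $\bm B_i\equiv\bm 0$.

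On the Laplace issue, your first option does not prove the stated theorem. It gives convergence to the root of the Laplace-approximated population equation, a pseudo-true value that in general differs from $(\bm\beta_0,\bm\theta_0)$. The paper takes your second option: it invokes (C5) and the concentration condition \eqref{eq:info-growth} of Lemma~\ref{lem:laplace}(b). It then works with the exact ratio $\kappa_{i,0}$ in Proposition~\ref{prop:mean-zero}, while using the Laplace closed form (Corollary~\ref{cor:kappa}) everywhere else.

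Be aware that this amounts to an assumption rather than a derivation, for two reasons:
\begin{itemize}
\item Condition \eqref{eq:info-growth} cannot hold for i.i.d.\ subjects on a fixed window, because the $s_{U_i}^2$ are i.i.d.\ and positive.
\item Matching two posterior moments would not make $\overline\omega_i$ or $\kappa_i$ exact anyway, since both are functionals of the whole posterior of $U_i$.
\end{itemize}
The same caveat sits inside the Stage~2 identifiability condition (R3).

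Two smaller points about your Step~1:
\begin{itemize}
\item Consistency of $\widehat{\bm\gamma}$ under a frailty comes from the marginal proportional-rates structure ($\E\eta_i=1$, with $\eta_i$ independent of covariates and censoring). It does not come from Andersen--Gill intensity theory, which is misspecified here; the paper simply assumes it in (C3).
\item The uniform-in-$i$ plug-in convergence is not justified by each $m_i$ being bounded in probability, because $\max_i m_i$ is not $O_p(1)$. You do not need it: every plug-in error enters through $n^{-1}\sum_i\int_0^\tau(\cdot)\,R_i^{\mathcal Y}(t)\,\dint N_i(t)$, so the averaged control of Lemma~\ref{lem:plugin-conv}(b) combined with Cauchy--Schwarz suffices.
\end{itemize}
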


Consistency holds provided that the approximation error introduced by the Laplace method is asymptotically negligible under standard regularity conditions. The proof is detailed in Supplementary Section~\ref{sec:proof-consistency}.

The asymptotic distribution of the estimator follows from standard martingale central limit theory, extending the results of \citet{liang2009joint} and \citet{du2025newstatisticalapproachjoint}.

\begin{theorem}[Asymptotic normality]\label{thm:normality}
Under Assumptions~\ref{asmp:censoring}--\ref{asmp:distributions} and regularity conditions \textup{(C1)}--\textup{(C6)}, if $\sigma^2 > 0$, then as $n \to \infty$,
\[
\sqrt{n} \left\{
\begin{pmatrix}
\widehat{\bm{\beta}} \\[2pt]
\widehat{\bm{\theta}}
\end{pmatrix}
-
\begin{pmatrix}
\bm \beta_0 \\[2pt]
\bm \theta_0
\end{pmatrix}
\right\}
\xrightarrow{d}\;
\mathcal{N}\!\left\{\mathbf{0},\, \bm{W}^{-1} \bm{\Gamma} \left(\bm{W}^{-1} \right) \right\},
\]
where the matrices $\bm W$ and $\bm \Gamma$ are defined in Supplementary Section \ref{sec:proof-normality}.
\end{theorem}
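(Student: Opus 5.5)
The plan is to expand the estimating function \eqref{eq:EE_outcome} at the truth to order $n^{-1/2}$ and write it as a normalized sum of i.i.d.\ terms. A multivariate CLT then gives the limit, and the sandwich form follows from the linearity of \eqref{eq:EE_outcome} in $(\bm\beta,\bm\theta)$.

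\textbf{Step 1: linear representation of the estimator.} Write $\bm\vartheta=(\bm\beta^\top,\bm\theta^\top)^\top$ and $\widehat{\bm D}_i(t)=(\bm X_i^{\mathcal Y}(t)^\top,\widehat{\bm B}_i(t)^\top)^\top$. Let $\widehat{\overline{\bm D}}(t)$ be its $\widehat p_i$-weighted centering and let $\bm U_n(\bm\vartheta;\widehat{\bm\xi})$ denote the left side of \eqref{eq:EE_outcome} divided by $n$. Here $\widehat{\bm\xi}=(\widehat{\bm\gamma},\widehat\Lambda_0,\widehat\sigma^2,\widehat{\bm\alpha},\widehat{\bm\delta},\widehat{\bm\Sigma}_q)$ collects the Stage 1--2 nuisance estimates. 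Because \eqref{eq:EE_outcome} is linear in $\bm\vartheta$,
\[
\sqrt n(\widehat{\bm\vartheta}-\bm\vartheta_0)=\widehat{\bm W}^{-1}\sqrt n\,\bm U_n(\bm\vartheta_0;\widehat{\bm\xi}),
\qquad
\widehat{\bm W}=n^{-1}\sum_i\int_0^\tau\{\widehat{\bm D}_i-\widehat{\overline{\bm D}}\}\widehat{\bm D}_i^\top R_i^{\mathcal Y}\,\dint N_i .
\]
The consistency argument of Theorem~\ref{thm:consistency} and (C1)--(C6) give $\widehat{\bm W}\to_p\bm W$, with $\bm W$ nonsingular. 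It therefore suffices to show that $\sqrt n\,\bm U_n(\bm\vartheta_0;\widehat{\bm\xi})$ converges to $\mathcal N(\bm 0,\bm\Gamma)$.

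\textbf{Step 2: split the estimating function.} Decompose
\[
\sqrt n\,\bm U_n(\bm\vartheta_0;\widehat{\bm\xi})=\sqrt n\,\bm U_n(\bm\vartheta_0;\bm\xi_0)+\sqrt n\{\bm U_n(\bm\vartheta_0;\widehat{\bm\xi})-\bm U_n(\bm\vartheta_0;\bm\xi_0)\}.
\]
For the first term, first use $\sum_i\int\{\bm D_i-\overline{\bm D}\}p_i\,\dint\mathcal A=0$. This rewrites the term as $n^{-1/2}\sum_i\int_0^\tau\{\bm D_i(t)-\overline{\bm D}(t)\}\,\dint M_i(t)$, with $M_i$ as in \eqref{eq:mean_zero_process}. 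By Lemma~\ref{lem:order} and Lemma~\ref{lem:compensation}, $M_i$ has mean zero given $(C_i,m_i)$ and covariates. Next, replace $\overline{\bm D}(t)$ by its deterministic limit $\bm d(t)$. The resulting error is $n^{-1/2}\sum_i\int(\bm d-\overline{\bm D})\,\dint M_i$, which is $o_p(1)$ by a uniform LLN for $\overline{\bm D}$ combined with a Lenglart-type inequality or a standard empirical-process bound. What remains is $n^{-1/2}\sum_i\bm\psi_{1i}$, a sum of i.i.d.\ mean-zero terms with $\bm\psi_{1i}=\int(\bm D_i-\bm d)\,\dint M_i$.

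For the second term, Taylor-expand in $\bm\xi$. The derivative of $\bm U_n$ with respect to $\bm\xi$ converges to a deterministic $\bm H$ by the LLN. The smoothness needed for this expansion holds because $k_i(t)$, $\kappa_i(t)$, and the Laplace quantities $(\mu_{U_i},s^2_{U_i})$ are smooth in $\bm\xi$; strict log-concavity in Lemma~\ref{lem:laplace} plus the implicit function theorem give differentiability of the mode. Each nuisance estimator then needs its own i.i.d.\ expansion:
\begin{itemize}
\item $\widehat{\bm\gamma}$ and $\widehat\Lambda_0$ have the standard Andersen--Gill/Breslow representations. The $\widehat\Lambda_0$ expansion must hold uniformly on $[0,\tau]$ and is plugged in through a functional derivative.
\item $\widehat\sigma^2_\eta$ is a smooth function of sample moments, so the delta method applies. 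The hypothesis $\sigma^2>0$ is used exactly here: it keeps the truncation at $0$ in \eqref{eq:MoM_VarEta} inactive with probability tending to one and $\log(1+\cdot)$ differentiable, which removes the boundary non-normality.
\item $(\widehat{\bm\alpha},\widehat{\bm\delta},\widehat{\bm\Sigma}_q)$ is an M-estimator of \eqref{eq:Likelihood_IO} with plugged-in Stage 1 quantities. Its influence function is the information inverse times the composite score, plus a correction carrying the Stage 1 influence.
\end{itemize}
Combining these gives $\sqrt n(\widehat{\bm\xi}-\bm\xi_0)=n^{-1/2}\sum_i\bm\psi_{\xi i}+o_p(1)$, so the second term equals $\bm H n^{-1/2}\sum_i\bm\psi_{\xi i}+o_p(1)$.

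\textbf{Step 3: CLT and the main obstacle.} With $\bm\psi_i=\bm\psi_{1i}+\bm H\bm\psi_{\xi i}$, the multivariate Lindeberg--L\'evy CLT gives the normal limit with $\bm\Gamma=\E(\bm\psi_i\bm\psi_i^\top)$, provided second moments are finite, which follows from boundedness of covariates and of $m_i$ in (C1)--(C6). Slutsky's lemma then yields $\bm W^{-1}\bm\Gamma(\bm W^{-1})^\top$.

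The hardest step is the nuisance correction $\bm H\bm\psi_{\xi i}$, for two reasons:
\begin{itemize}
\item It is a chained influence function: Stage 1 feeds Stage 2, and both feed both $\widehat{\bm B}_i$ and $\widehat p_i$.
\item The subject-specific $\mu_{U_i}$ depends on $\widehat\Lambda_0(C_i)$, so the expansion must control a functional nuisance uniformly over $C_i$.
\end{itemize}
I would handle this by treating $\mu_{U_i}$ as a fixed smooth map $g(m_i,\nu_i;\sigma^2)$ and applying the functional delta method to $\widehat\Lambda_0$ in $\ell^\infty[0,\tau]$. I would also make explicit that the Laplace-approximation error is a fixed approximation built into the estimand. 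This is the same convention under which Theorem~\ref{thm:consistency} holds, so it introduces no additional $\sqrt n$-scale bias.
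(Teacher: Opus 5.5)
Your route is the paper's. Linearity of \eqref{eq:EE_outcome} reduces the problem to $\sqrt n\,\bm U_n(\bm\vartheta_0;\widehat{\bm\xi})$ with Jacobian tending to $-\bm W$ (Lemma~\ref{lem:jacobian}). That term is split into the oracle piece $n^{-1/2}\sum_i\int\bm H_{i,0}\,\dint M_i$ plus a first-order expansion in the nuisance, with block influence functions substituted (Propositions~\ref{prop:nuisance-IF} and~\ref{prop:Un-expansion}). The CLT and Slutsky then finish.

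Where you differ, you are more careful than the paper:
\begin{itemize}
\item You fold $(\mu_{U_i},s^2_{U_i})$ into the chain rule as smooth maps of $\Omega_{\mathcal V}$. This replaces the paper's separate EB block and its Step~4 ``averaging-out'' argument, and yields the same i.i.d.\ term.
\item You explicitly control the swap of the empirical centre for its limit. The paper does this silently by writing $\phi_i^{(M)}$ with $\bm H_{i,0}$.
\item You identify a role for $\sigma^2>0$ (keeping the truncation in \eqref{eq:MoM_VarEta} inactive), which the paper never states.
\item Your Stage~2 influence function carries the Stage~1 correction. Lemma~\ref{lem:io-mest} instead takes $\psi_i^{(\mathcal O)}=-\mathcal J_{\mathcal O}^{-1}\dot\ell_{\mathcal O,i}$ with $(\widehat\mu_{U_i},\widehat s^2_{U_i})$ treated as known. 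Your $\bm\Gamma$ therefore contains a cross-stage term missing from the paper's literal $\phi_i^{(\mathcal O)}$, and yours is the correct one for the sequential estimator.
\end{itemize}

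The step that does not work as you justify it is the Laplace remark.
\begin{itemize}
\item Proposition~\ref{prop:mean-zero} cancels Term~(II) only when $\kappa_i$ and $\overline\omega_i$ are exact posterior expectations. The Gaussian versions differ by terms of the order of the moment errors in \eqref{eq:moment-error}, i.e.\ $O(s^2_{U_i})$. Hence $\E\{\bm U_n(\bm\vartheta_0;\bm\xi_0)\}$ is generically nonzero, of order $\E(s^2_{U_1})$.
\item If the approximation is ``built into the estimand,'' your argument gives $\sqrt n(\widehat{\bm\vartheta}-\bm\vartheta^\ast)\xrightarrow{d}\mathcal N(\bm 0,\bm W^{-1}\bm\Gamma\bm W^{-1})$. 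Here $\bm\vartheta^\ast$ is the pseudo-true root of the Laplace-based population equation.
\item Centring at $(\bm\beta_0,\bm\theta_0)$, as the theorem does, additionally requires $\sqrt n(\bm\vartheta^\ast-\bm\vartheta_0)\to\bm 0$, for example $n^{-1/2}\sum_i s^2_{U_i}(\Omega_{\mathcal V,0})\xrightarrow{p}0$. This is strictly stronger than the $o(1)$ negligibility behind Theorem~\ref{thm:consistency}, so ``same convention'' does not deliver the $\sqrt n$ statement.
\item Under (C1) with a fixed law of $m_i$, even $n^{-1}\sum_i s^2_{U_i}\to\E(s^2_{U_1})>0$.
\end{itemize}
The paper's proof avoids this only by applying Proposition~\ref{prop:mean-zero} at $\Omega_0$ as if the Gaussian posterior were exact. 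You should adopt that assumption explicitly, impose the rate condition, or recentre at $\bm\vartheta^\ast$.

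Two minor fixes:
\begin{itemize}
\item $m_i$ is Poisson--lognormal and hence unbounded, but all its moments are finite, which is what your second-moment step actually needs.
\item Because $p_i$ depends on $m_i$, $M_i$ is not a martingale in the forward filtration, so Lenglart's inequality is unavailable. Bound the centring-swap error instead by its conditional second moment given $\{(C_j,m_j,\text{covariates})\}_{j\le n}$. This uses the conditional mean-zero property together with $\sup_t\|\overline{\bm D}(t)-\bm d(t)\|\xrightarrow{p}0$.
\end{itemize}
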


The asymptotic variance $\bm{W}^{-1} \bm{\Gamma} \bm{W}^{-1}$ can be estimated via the plug-in method using the influence function decomposition detailed in Section~\ref{sec:proof-normality} of the Supplementary Material. The nonparametric bootstrap offers an alternative that avoids analytic derivations \citep{liang2009joint, du2025newstatisticalapproachjoint}, though at the cost of additional computational burden from repeated estimation across resampled datasets.

\section{Existing Methods}\label{sec:supp:existing_methods}

We classify existing approaches for analyzing longitudinal EHR data into four broad categories based on the assumptions they make regarding the data generation mechanisms: (i) outcome-only methods that ignore the IP and IO mechanisms; (ii) IP-only methods that explicitly model the visiting process but assume complete observation at visits; (iii) imputation+IP methods that first impute missing outcomes conditional on the visiting process and then apply an IP-only model; and (iv) IP- and IO-aware methods (IP+IO) that address both informative visiting and observation. Table~\ref{tab:methods_summary} provides a summary of these approaches.

\begin{table}[h]
\centering
\caption{Comprehensive comparison of statistical methods for longitudinal EHR data. Methods are differentiated by whether they account for informative visiting (IP) and informative observation (IO), the specific scope of the shared latent structure linking these processes to the outcome, and the adjustment mechanism applied. 
Symbols: \ding{51} = Explicitly handled/modeled; \ding{55} = Ignored or not handled. 
Abbreviations: LMM, linear mixed model; VA/OA, visit/observation-adjusted; JMVL, joint modeling of visiting and longitudinal data; MAR, missing at random; MI, multiple imputation; LI, linear increment.}
\label{tab:methods_summary}
\begingroup\footnotesize
\setlength{\tabcolsep}{4pt}
\renewcommand{\arraystretch}{1.5} 
\begin{tabularx}{\linewidth}{@{}l c c l X@{}}
\toprule
\textbf{Method} & \textbf{IP} & \textbf{IO} & \textbf{Shared Latent Pattern} & \textbf{Adjustment Mechanism} \\
\midrule
\multicolumn{5}{l}{\textit{Outcome-only}}\\
\makecell[l]{Summary statistics} &
\ding{55} & \ding{55} &
None &
Reduces longitudinal series to a single scalar (e.g., min, mean, median, max) for regression. \\
\makecell[l]{Standard LMM\\ \scriptsize\citep{laird1982random}} &
\ding{55} & \ding{55} &
None &
Conditions on observed covariates (valid only under MAR). \\
\makecell[l]{VA LMM\\ \scriptsize\citep{goldstein2016controlling}} &
\ding{51} & \ding{55} &
None &
Includes \textit{visit count} as a fixed covariate proxy. \\
\makecell[l]{OA LMM} &
\ding{55} & \ding{51} &
None &
Includes \textit{observation count} as a fixed covariate proxy. \\
\midrule
\multicolumn{5}{l}{\textit{IP-only}}\\
\makecell[l]{Liang\\ \scriptsize\citep{liang2009joint}} &
\ding{51} & \ding{55} &
Visiting \& Outcome &
Joint likelihood estimation via shared frailty linking visiting and outcome. \\
\makecell[l]{JMVL--G\\ \scriptsize\citep{gasparini2020mixed}} &
\ding{51} & \ding{55} &
Visiting \& Outcome &
Joint likelihood modeling inter-visit times via shared random effects. \\
\makecell[l]{IIRR Weighting\\ \scriptsize\citep{burvzkova2007longitudinal}} &
\ding{51} & \ding{55} &
None &
Inverse intensity weighting based on visiting intensity. \\
\makecell[l]{IIRR Balanced\\ \scriptsize\citep{yiu2025accommodating}} &
\ding{51} & \ding{55} &
None &
Inverse intensity weighting adjusted for covariate balance. \\
\makecell[l]{Pairwise Likelihood\\ \scriptsize\citep{chen2015regression}} &
\ding{51} & \ding{55} &
None &
Composite likelihood exploiting visit time ordering. \\
\midrule
\multicolumn{5}{l}{\textit{Imputation + IP-only}}\\
\makecell[l]{Multiple Imputation\\ \scriptsize\citep{Rubin1987MI}} &
\ding{51} & \ding{51} &
Determined by IP-only &
Multiple imputation based on posterior draws, followed by IP-only modeling. \\
\makecell[l]{Linear Increment Methods\\ \scriptsize\citep{aalen2010dynamic}} &
\ding{51} & \ding{51} &
Determined by IP-only &
Deterministic linear interpolation between observed visits to impute missing values, followed by IP-only modeling. \\
\midrule
\multicolumn{5}{l}{\textit{IP+IO}}\\
\makecell[l]{Adapted-Liang\\ \scriptsize\citep{liang2009joint}} &
\ding{51} & \ding{51} &
Visiting \& Outcome &
Joint model structure defined, but observation coefficients are constrained to zero. \\
\makecell[l]{EHRJoint\\ \scriptsize\citep{du2025newstatisticalapproachjoint}} &
\ding{51} & \ding{51} &
Visiting \& Outcome &
Tripartite model; observation process modeled via covariates only (no latent link). \\
\makecell[l]{Proposed\\ \scriptsize(This paper)} &
\ding{51} & \ding{51} &
All three processes &
Tripartite model; full shared latent structure links Visiting, Observation, and Outcome. \\
\bottomrule
\end{tabularx}
\endgroup
\end{table}

\subsection{Outcome-Only Methods}

The simplest method, frequently employed in high-throughput phenotyping studies, involves reducing the longitudinal trajectory to a single cross-sectional summary statistic, such as the mean, median, or maximum of the observed measurements. This derived outcome is then regressed on covariates. While computationally convenient, this approach discards temporal information and yields biased estimates when the frequency of measurement is correlated with disease severity.

The other standard statistical framework for analyzing such longitudinal data is the linear mixed-effects model (LMM). This approach models the outcome trajectory $Y_i(t)$ as:
\begin{equation*}
Y_i(t) = \beta_0(t) + \bm{\beta}^\top \bm{X}_i^{\mathcal{Y}}(t) + \bm{b}_i^\top \bm{Z}_i^{\mathcal{Y}}(t) + \varepsilon_i(t),
\end{equation*}
where $\beta_0(t)$ is a baseline mean function, $\bm{b}_i \sim \mathcal{N}(\bm{0}, \bm{\Sigma}_b)$ are subject-specific random effects, and $\varepsilon_i(t) \sim \mathcal{N}(0, \sigma_\varepsilon^2)$. Estimation is typically performed via restricted maximum likelihood (REML) conditional on the realized observation times. 

To mitigate clinically informed bias within the LMM framework without fully modeling the visiting process, augmented specifications introduce constructed variables to serve as proxies for the unobserved health status. The Visit-Aware LMM (VA-LMM) assumes that the cumulative number of visits up to time $t$, denoted $H_i^N(t) = \int_0^t \dint N_i(s)$, acts as a surrogate for disease severity (or frailty) and includes it as an additional time-varying covariate \citep{goldstein2016controlling}:
\begin{equation*}
Y_i(t) = \beta_0(t) + \bm{\beta}^\top \bm{X}_i^{\mathcal{Y}}(t) + \gamma_N H_i^N(t) + \bm{b}_i^\top \bm{Z}_i^{\mathcal{Y}}(t) + \varepsilon_i(t).
\end{equation*}
Similarly, the Observation-Aware LMM (OA-LMM) employs the historical count of actual biomarker measurements, $H_i^R(t) = \int_0^t R_i^{\mathcal{Y}}(s) \dint N_i(s)$, as a proxy to capture the IP and IO mechanism.

\subsection{IP-Only Methods}

Methods in this class explicitly account for the visiting process $N_i(t)$ to correct for IP, but generally assume that the observation process is noninformative, implying that if a patient visits, the biomarker is observed with probability one (i.e., $R_i^{\mathcal{Y}}(t) \equiv 1$).

One widely used approach involves Inverse Intensity Rate Ratio (IIRR-Weighting) \citep{lin2001semiparametric, burvzkova2007longitudinal}. These methods adopt a marginal modeling perspective, specifying a proportional rates model for the visiting process, $\lambda_i(t) = \lambda_0(t) \exp\left\{\bm{\gamma}^\top X_i^{\mathcal{V}}(t) \right\}$, and estimating regression parameters $\bm{\beta}$ by solving a weighted estimating equation:
\[
\sum_{i=1}^n \int_0^\tau \frac{1}{\widehat{V}_i(t)} \bm{D}_i(t) \left\{ Y_i(t) - \bm{\beta}^\top \bm{X}_i^{\mathcal{Y}}(t) \right\} \dint N_i(t) = 0,
\]
where $\widehat{V}_i(t)$ is the estimated relative visit intensity. We refer to this approach as IIRR weighting. The validity of IIRR hinges on the Visiting-at-Random (VAR) assumption: $\dint N_i(t) \perp Y_i(t) \mid \mathcal{H}_i(t^-)$, where $\mathcal{H}_i$ represents observed history. Extensions using balancing weights \citep{yiu2025accommodating} have been proposed to robustify estimation against misspecification of the visiting intensity $\lambda_i(t)$; we refer to this approach as IIRR-Balanced.

Alternatively, pairwise composite likelihood methods \citep{chen2015regression, shen2019regression}, denoted as PairLik, avoid full specification of the visiting process by constructing an objective function based on pairs of observations conditioned on their order. Under the assumption that the visiting process satisfies a separability condition (factorizing into outcome-dependent and covariate-dependent components), nuisance parameters governing the visiting process cancel out. The estimator for $\bm{\beta}$ is obtained by maximizing the pairwise objective function:
\[
\ell_p(\bm{\beta}) = \sum_{i<j} \int_0^\tau \int_0^\tau - \log \left\{ 1 + \frac{f(Y_i(s)| \bm{X}_j(t); \bm{\beta}) f(Y_j(t)| \bm{X}_i(s); \bm{\beta})}{f(Y_i(s)| \bm{X}_i(s); \bm{\beta}) f(Y_j(t)| \bm{X}_j(t); \bm{\beta})} \right\} \dint N_i(s) \dint N_j(t),
\]
where $f(\cdot)$ denotes the working conditional density of the outcome. 

A pivotal framework for addressing IP via random effects is the semiparametric joint model by \citet{liang2009joint}, which we refer to as Liang. This approach explicitly links the visiting and outcome processes via a shared scalar frailty $\eta_i$. The visiting process is specified as
\[
\lambda_i(t) = \eta_i \lambda_0(t) \exp(\bm{\gamma}^\top \bm{X}_i^{\mathcal{V}}),
\]
where $\eta_i$ is a subject-specific latent variable acting multiplicatively on the risk. Crucially, $\eta_i$ is assumed to follow a Gamma distribution, $\eta_i \sim \text{Gamma}(\phi^{-1}, \phi^{-1})$, with $\E(\eta_i)=1$ and $\Var(\eta_i)=\phi$. Dependence is induced by linking $\eta_i$ to the outcome random effects $\bm{b}_i$ via a linear conditional expectation:
\[
\E[\bm{b}_i \mid \eta_i] = \bm{\theta}(\eta_i - 1).
\]

Another related IP-aware joint model links the visiting and longitudinal processes through a shared Gaussian random effect, commonly referred to as JMVL-G
\citep{gasparini2020mixed}. Specifically, let $S_{ij}=T_{ij}-T_{i(j-1)}$ denote the inter-visit time.
JMVL-G specifies a proportional hazards model for $S_{ij}$,
\[
r\!\left(s_{ij}\mid \bm{W}_{ij},u_i\right)=r_0(s_{ij})\exp\!\left(\bm{\gamma}^\top \bm{W}_{ij}+u_i\right),
\]
and a longitudinal model in which the same latent variable enters the outcome,
\[
Y_{ij} = \bm{\beta}^\top X_{ij} + \beta_u u_i + o_i + \varepsilon_{ij},
\]
where $(u_i,o_i)^\top$ are jointly Gaussian random effects and $\varepsilon_{ij}$ is mean-zero noise.
By construction, JMVL-G addresses IP via the shared latent effect in the visiting model.

\subsection{Imputation+IP Methods}

A sequential strategy is to handle IO through imputation and then apply various IP-only methods. Assuming that the observation decision is Missing at Random (MAR) given the observed history, i.e.,
\[
\pr\{R_i^{\mathcal{Y}}(t)=1 \mid Y_i(t), \mathcal{H}_i(t)\}
=
\pr\{R_i^{\mathcal{Y}}(t)=1 \mid \mathcal{H}_i(t)\},
\]
missing values at observed visits can be imputed. Multiple imputation (MI) \citep{Rubin1987MI} can then be used to generate completed datasets, each of which is analyzed using an IP-only method, with the resulting estimates combined using standard MI rules.

Alternatively, \citet{aalen2010dynamic} proposed the Linear Increments (LI) method, which models the increment $\Delta Y_i(t) = Y_i(t) - Y_i(t^-)$ rather than the value itself. Under the Discrete-Time Independent Censoring assumption, unobserved increments are imputed to reconstruct the trajectory. The completed datasets are then analyzed using IP-only methods such as IIRR or Liang.

\subsection{IP+IO Methods}


The EHRJoint framework \citep{du2025newstatisticalapproachjoint} extends the Liang framework \citep{liang2009joint} to explicitly model the observation process. The visiting and outcome processes follow the specification of \citet{liang2009joint} with shared Gamma frailty. The observation indicator $R_i^{\mathcal{Y}}(t)$ is modeled via a logistic regression
\[
\text{logit}\,\left[ \pr\left\{R_i^{\mathcal{Y}}(t)=1 \mid \mathcal{H}_i(t)\right\} \right] = \bm{\alpha}^\top \bm{X}_i^{\mathcal{O}}(t).
\]
Unlike our proposed unified framework, EHRJoint treats the observation process as a nuisance mechanism that is conditionally independent of the visiting and outcome processes given observed covariates; the shared frailty $\eta_i$ does not enter the observation model. 

Following the benchmarking setup
in \citet{du2025newstatisticalapproachjoint}, we refer to this approach as Adapted-Liang:
it augments the Liang method with an observation model (e.g., logistic regression for $R_i^{\mathcal{Y}}(t)$) but fixes the observation coefficients at zero, i.e., $\bm{\alpha}\equiv\bm{0}$. Consequently, Adapted-Liang does not correct for IO and reduces to fitting Liang on the observed measurements, serving as a comparator rather than a fully IP+IO method.

\section{Simulation Study}\label{sec:simulation}

\subsection{Data Generation}

We conducted comprehensive simulations to evaluate the performance of the proposed GIVEHR method under various data-generating mechanisms. For each configuration, we generated $1,000$ replicated datasets with sample size $n = 1,000$ over a follow-up period $[0, \tau]$ with $\tau = 60$.

In all scenarios, we generated baseline covariates $F_i \sim \text{Bernoulli}(0.5)$ and $X_i \sim \mathcal{N}(0,1)$ for $i = 1, \ldots, n$. The longitudinal outcome $Y_i(t)$ was generated according to the linear mixed-effects model:
\begin{equation*}
Y_i(t) = \beta_0 + \beta_t t + \beta_F F_i + \beta_X X_i + b_{0i} + b_{1i} F_i + \varepsilon_i(t),
\end{equation*}
with fixed effects $\beta_0 = -2$, $\beta_t = 0.1$, $\beta_F = -0.5$, and $\beta_X = 0.5$. The random effects $\mathbf{b}_i = (b_{0i}, b_{1i})^\top$ followed a normal distribution with mean $\boldsymbol{\mu}_b$ and covariance $\Sigma_b = \text{diag}(1, 2)$. The measurement error was $\varepsilon_i(t) \sim \mathcal{N}(0, 1)$. 

The visiting and observation processes varied across settings, as detailed below and summarized in Table~\ref{tab:setting_abc}. Setting~A evaluated performance under correct model specification with increasing levels of complexity. Setting~B examined robustness to visiting process misspecification under MNAR observation mechanisms, employing six distinct visiting processes that deviated from the assumed model (Scenarios~B.1--B.6). Setting~C utilized the same six visiting mechanisms as Setting~B but incorporated an MAR observation process conditional solely on measured covariates:
\[
\Pr\left\{R_i^{\mathcal{Y}}(t)=1 \mid \mathrm{d} N_i(t)=1\right\} = \text{logit}^{-1}(0 + 1.0\, F_i + 1.0\, X_i).
\]
This setting further evaluated robustness to link function misspecification by comparing logistic and probit models. 

\begin{table}[H]
\centering
\caption{Simulation Scenarios for Settings A, B, and C}
\label{tab:setting_abc}
\small
\begin{tabular}{@{}lp{6.8cm}p{6.5cm}@{}}
\toprule
\textbf{Scenario} & \textbf{Visiting Process} & \textbf{Observation Process} \\
\midrule
\multicolumn{3}{l}{\textit{Setting A: Correct Model Specification with Varying Complexity}} \\
\addlinespace[0.05cm]
A.1 (Non-informative) & Fixed regular intervals & 0.5 \\
\addlinespace[0.1cm]
A.2 (Measured) & $\exp(-3.5 + F_i + X_i)$ & $\text{logit}^{-1}(0 - 0.5 F_i - 0.5 X_i)$ \\
\addlinespace[0.1cm]
A.3 (Partial Latent) & $\eta_i \exp(-3.5 + F_i + X_i)$ & $\text{logit}^{-1}(0 - 0.5 F_i - 0.5 X_i)$ \\
 & $\eta_i \mid b_{1i} \sim \text{Gamma}\{1/\exp(0.3 b_{1i}), 1\}$ & \\
\addlinespace[0.1cm]
A.4 (Full Latent) & $\eta_i \exp(-3.5 + F_i + X_i)$ & $\Phi\{2 + F_i - X_i + (-0.2 - 0.6 F_i) U_i\}$ \\
 & $\eta_i = \exp(-0.5 + U_i)$, $U_i \sim \mathcal{N}(0,1)$  \\
\midrule
\multicolumn{3}{l}{\textit{Setting B: Robustness to Visiting Misspecification where $E_i \sim \mathcal{N}(0,1)$}} \\
\addlinespace[0.05cm]
B.1 (Latent-only) & $\exp(-2.5 + 0.8 E_i)$  & Same for all B scenarios: \\
\addlinespace[0.1cm]
B.2 (Measured and Latent) & $\exp(-2.5 + 0.3 F_i + 0.3 X_i + 0.8 E_i)$ & $\Phi(0.2 + 0.4 F_i + 0.3 X_i$ \\
\addlinespace[0.1cm]
B.3 (Gamma Frailty) & $\eta_i \exp(-2.5 + 0.3 F_i + 0.3 X_i)$ & $\quad + 0.8 E_i + 0.5 F_i E_i)$ \\
 & $\eta_i \sim \text{Gamma}(\phi_i, \phi_i)$, $\phi_i = \exp(1.0 - 0.6 E_i)$  \\
\addlinespace[0.1cm]
B.4 (Previous Outcome) & $\exp\{-3.0 + 0.8 Y_i(t_{j-1})\}$ & \\
\addlinespace[0.1cm]
B.5 (Threshold) & Visits when $Y_i(t)$ exceeds 60th percentile & \\
\addlinespace[0.1cm]
B.6 (Mixture) & Each subject assigned to one of B.1--B.5 & \\
\bottomrule
\end{tabular}

\vspace{0.3em}
\raggedright
\footnotesize
\textit{Note:} Setting C replicates all Setting B scenarios but replaces the probit observation model with a logistic specification: $\Pr\{R_i^{\mathcal{Y}}(t)=1 \mid \mathrm{d} N_i(t)=1\} = \text{logit}^{-1}(0 + 1.0\, F_i + 1.0\, X_i)$.
\end{table}

\subsection{Benchmark Methods and Evaluation Metrics}

To benchmark performance, we compared the proposed GIVEHR method against 20 existing approaches spanning four methodological categories. The outcome-only category comprises seven methods: cross-section regressions on summary statistics (mean, median, minimum and maximum), denoted Summary-Mean, Summary-Median, Summary-Min and Summary-Max, and linear mixed models using standard, outcome-aligned \citep{goldstein2016controlling}, and visit-aligned frameworks, denoted standard LMM, OA-LMM, and VA-LMM. The IP-only category includes five methods: the estimator of \citet{liang2009joint}, pairwise likelihood \citep{chen2015regression, shen2019regression}, generalized joint models \citep{gasparini2020mixed}, and inverse intensity rate ratio weighting \citep{burvzkova2007longitudinal} with original and balanced \citep{yiu2025accommodating} formulations, denoted Liang, PairLik, JMVL-G, IIRR-Weighting, and IIRR-Balanced. The imputation+IP category combines two imputation strategies---multiple imputation (MI) and linear increments (LI)---with three IP-only models (Liang, PairLik, and JMVL-G), yielding six method variants. The IP+IO category contains three methods: Adapted-Liang, EHRJoint, and our proposed GIVEHR. The systematic comparison is presented in Table~\ref{tab:methods_summary}.

Across all settings, we focused on estimation of $\beta_F$ because $F_i$ enters the outcome model through both a fixed effect and a random effect, introducing additional heterogeneity that makes accurate estimation more challenging. Performance was assessed using two metrics computed across $1,000$ replications. The empirical bias was calculated as the average deviation of the estimates from the true parameter value $\beta_F = -0.5$:
\[
\text{Bias}(\widehat{\beta}_F) = \frac{1}{1,000}\sum_{r=1}^{1,000} \left\{\widehat{\beta}_F^{(r)} - \beta_F\right\}.
\]
The root mean squared error (RMSE) quantifies total estimation error combining bias and variance:
\[
\text{RMSE}(\widehat{\beta}_F) = \sqrt{\frac{1}{1,000}\sum_{r=1}^{1,000} \left\{\widehat{\beta}_F^{(r)} - \beta_F\right\}^2}.
\]
Results for Settings A, B, and C are presented in Figs.~\ref{fig:settingA_bias_rmse}, \ref{fig:settingB_bias_rmse}, and \ref{fig:settingC_bias_rmse}, respectively.

\begin{figure}[htbp]
  \centering
  \includegraphics[width=\textwidth]{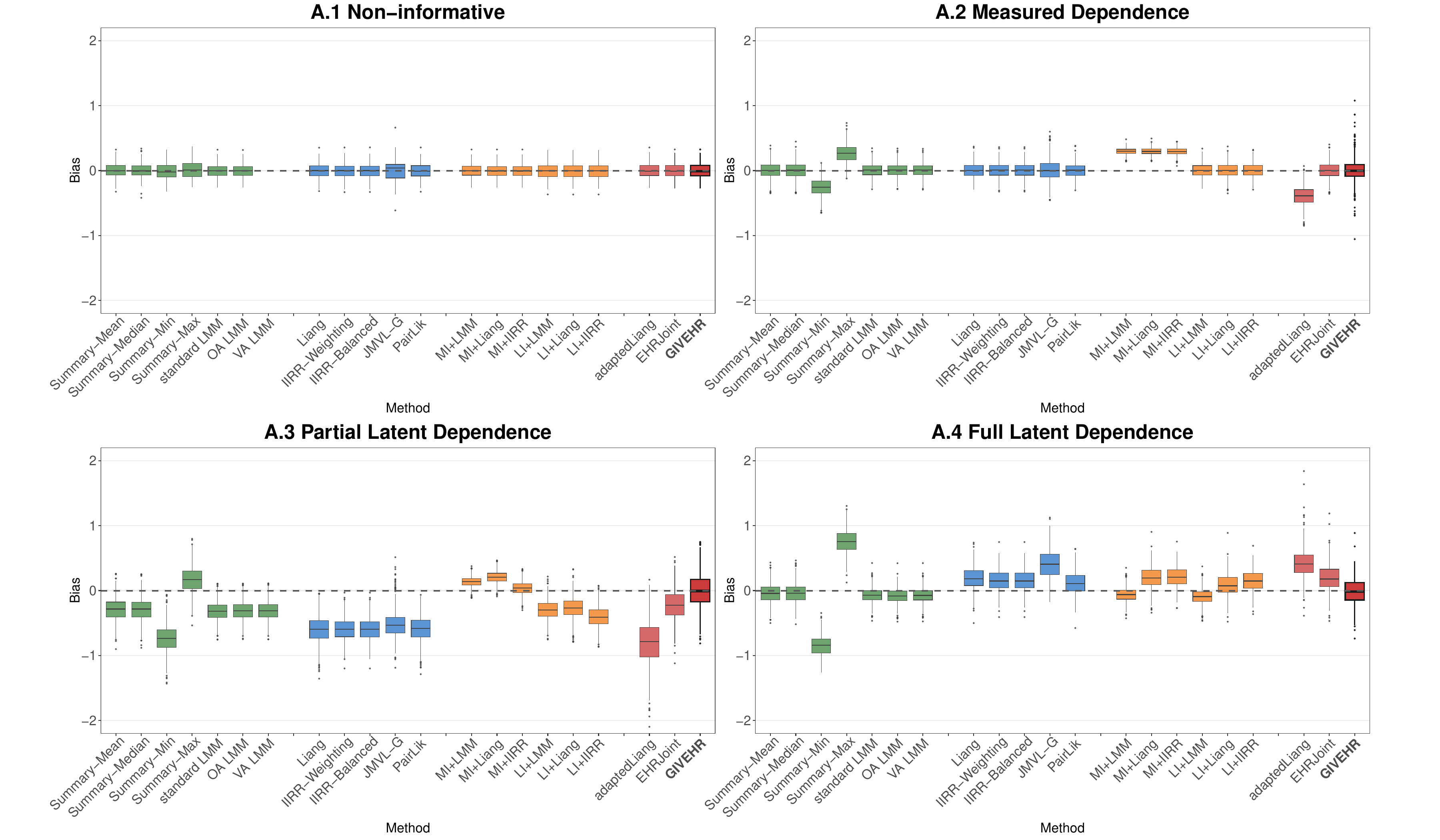}
  \vspace{0.6em}
  \includegraphics[width=\textwidth]{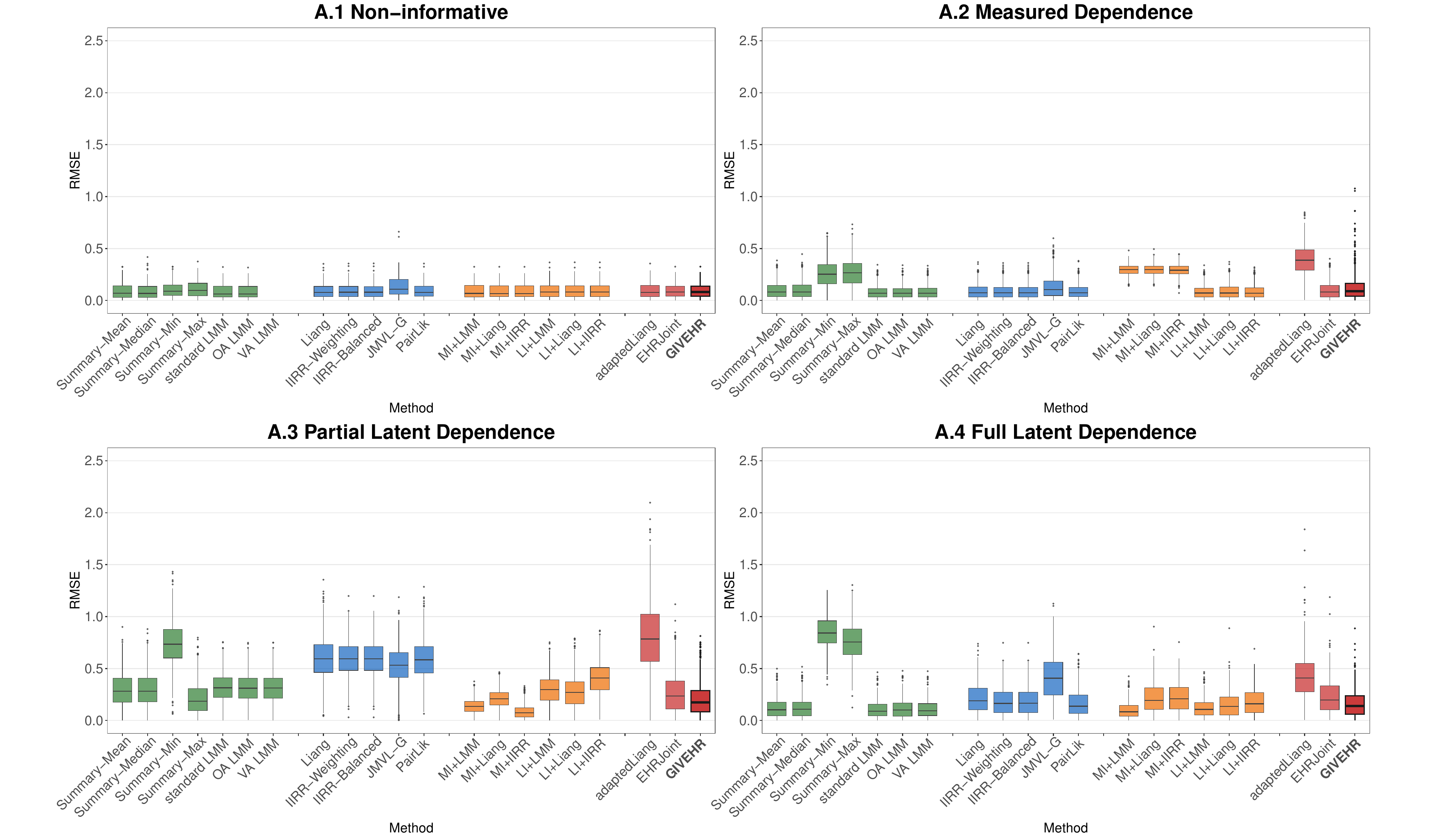}
  \caption{Evaluation of $\beta_F$ estimator performance across Setting~A (Scenarios A.1--A.4). Top: Empirical bias of $\widehat\beta_F$ (dashed line at 0). Bottom: RMSE of $\widehat\beta_F$. Boxplots summarize the distributions across replicates. Estimators are grouped by modeling approach and distinguished by color: Outcome-only (green), IP-only (blue), imputation+IP (orange), and IP+IO (red).}
  \label{fig:settingA_bias_rmse}
\end{figure}

\begin{figure}[htbp]
  \centering
  \includegraphics[width=\textwidth]{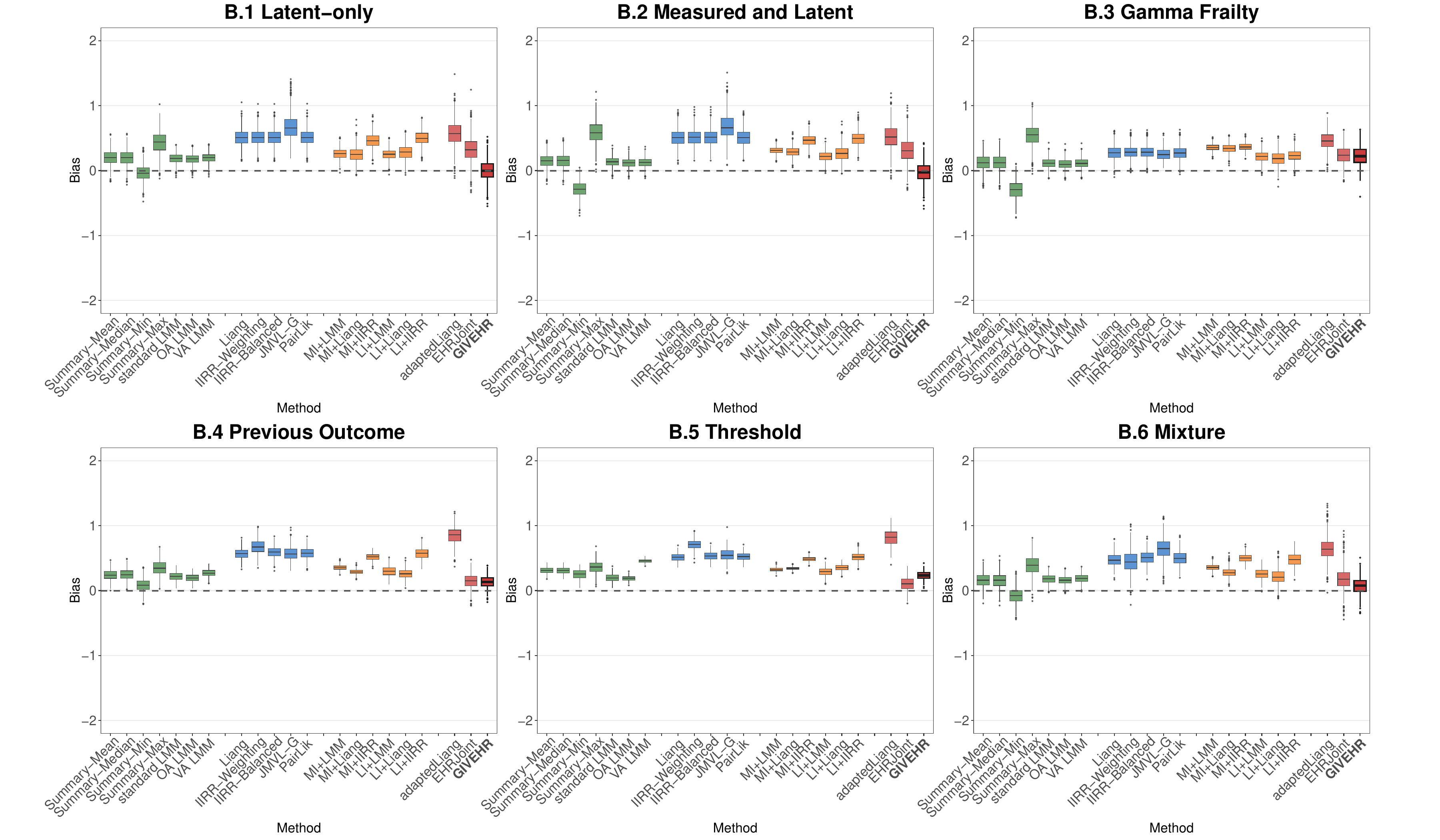}
  \vspace{0.6em}
  \includegraphics[width=\textwidth]{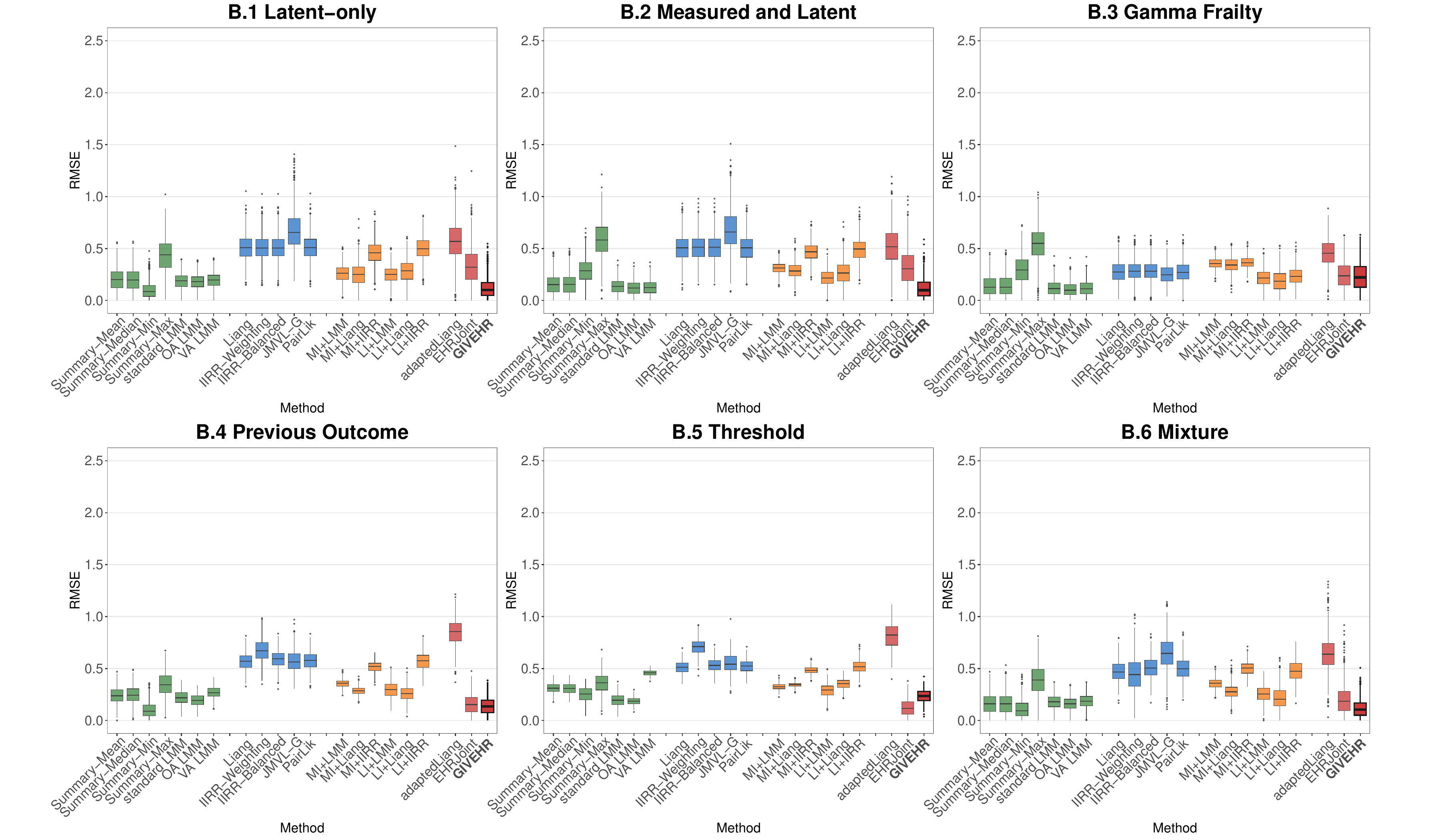}
  \caption{Evaluation of $\beta_F$ estimator performance across Setting~B (Scenarios B.1--B.6). Top: Empirical bias of $\widehat\beta_F$ (dashed line at 0). Bottom: RMSE of $\widehat\beta_F$. Boxplots summarize the distributions across replicates. Estimators are grouped by modeling approach and distinguished by color: Outcome-only (green), IP-only (blue), imputation+IP (orange), and IP+IO (red).}
  \label{fig:settingB_bias_rmse}
\end{figure}

\begin{figure}[htbp]
  \centering
  \includegraphics[width=\textwidth]{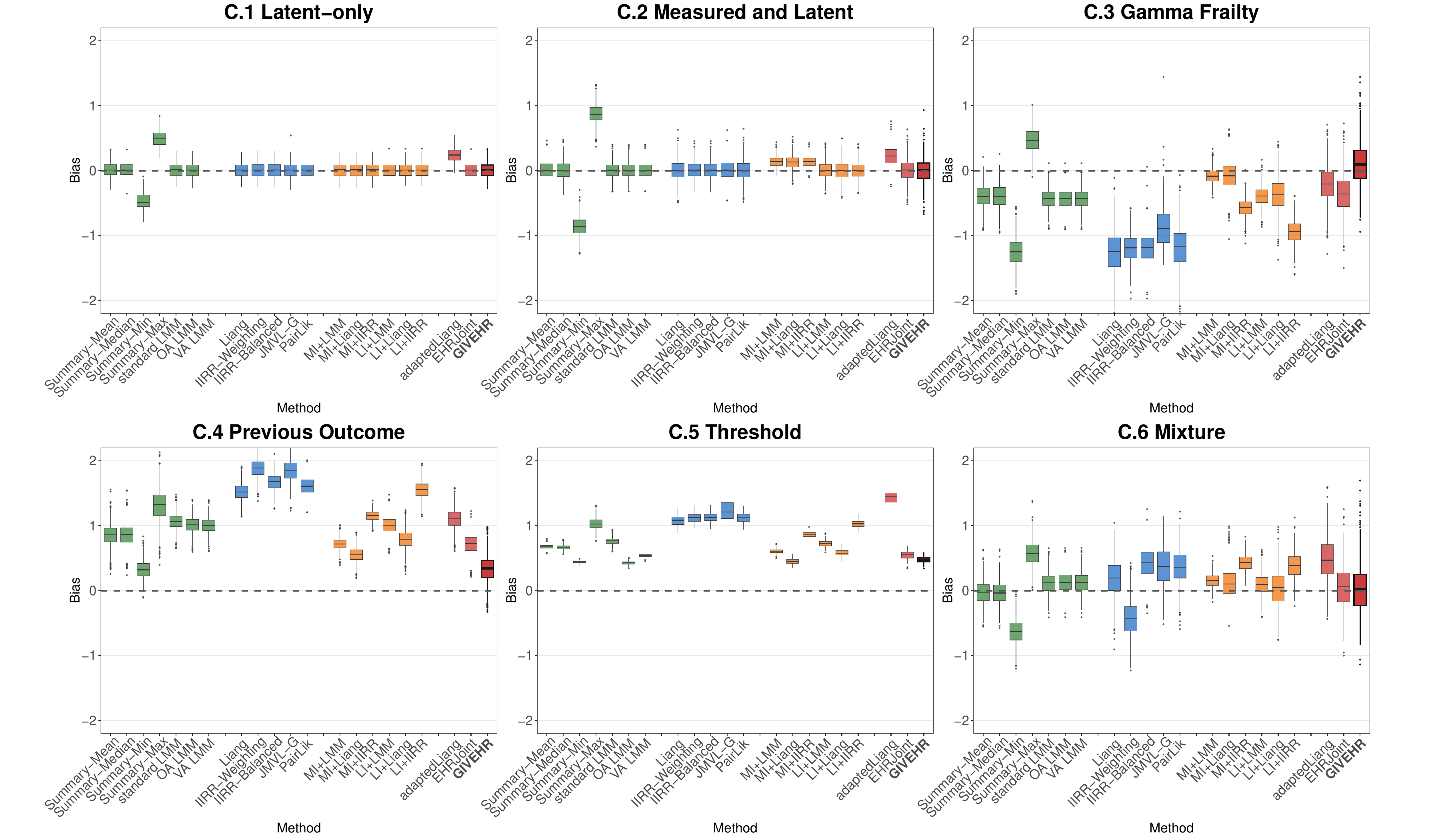}
  \vspace{0.6em}
  \includegraphics[width=\textwidth]{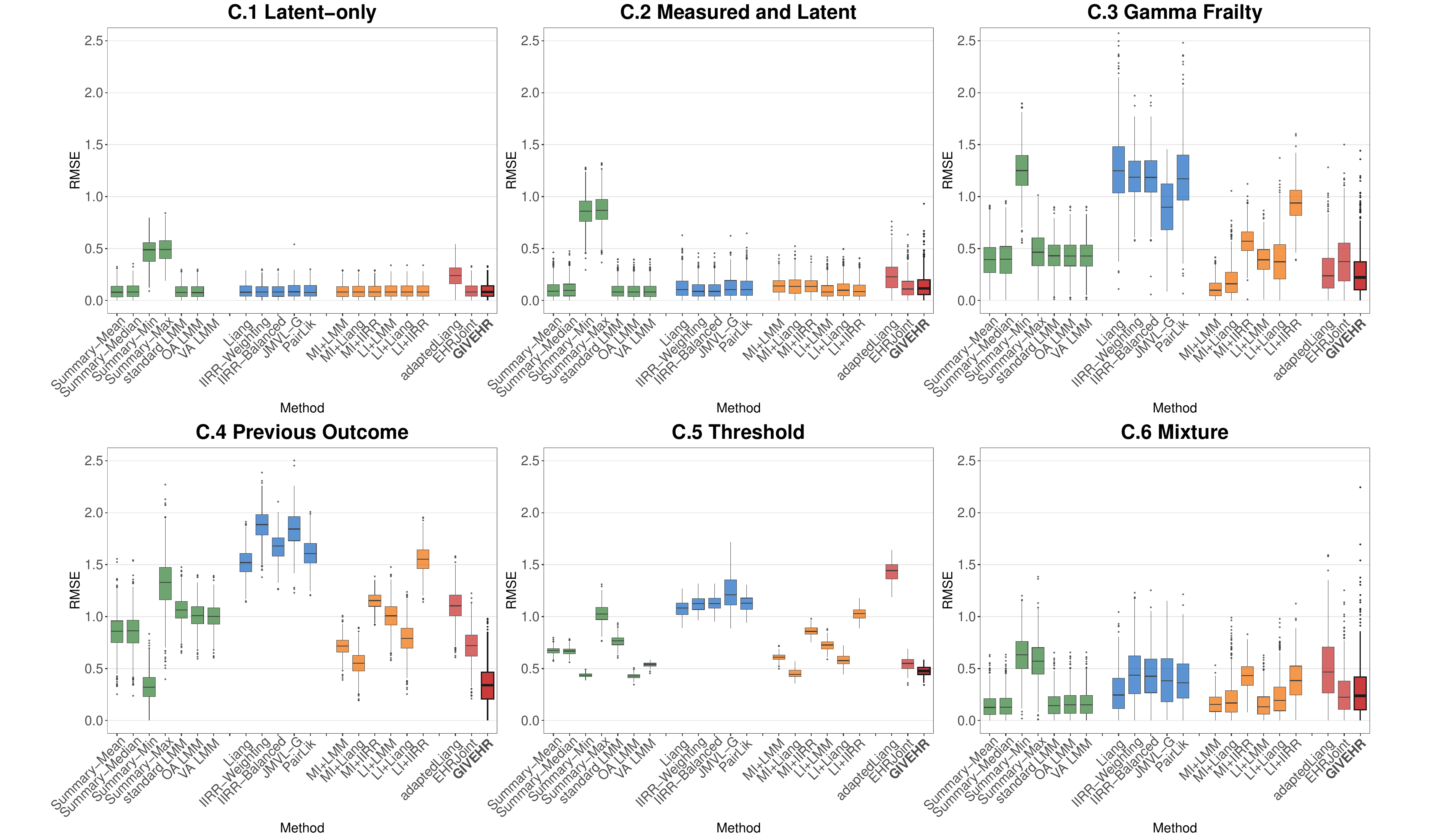}
  \caption{Evaluation of $\beta_F$ estimator performance across Setting~C (Scenarios C.1--C.6). Top: Empirical bias of $\widehat\beta_F$ (dashed line at 0). Bottom: RMSE of $\widehat\beta_F$. Boxplots summarize the distributions across replicates. Estimators are grouped by modeling approach and distinguished by color: Outcome-only (green), IP-only (blue), imputation+IP (orange), and IP+IO (red).}
  \label{fig:settingC_bias_rmse}
\end{figure}

\subsection{Simulation Results for Setting A under Correct Specification of Visiting and Observation Processes with Increasing Complexity}

\textbf{Bias.}
Under the non-informative benchmark (Scenario~A.1), all 21 methods were empirically unbiased with absolute biases below $0.001$. Differences emerged as informative mechanisms were introduced. In Scenario~A.2, where both visiting and observation depended only on observed covariates, outcome-only methods such as the standard LMM (bias $0.005$) performed well, consistent with the validity of maximum likelihood under MAR conditions. IP-only methods showed mixed results: Liang, IIRR, and PairLik exhibited low biases of approximately $0.004$, whereas JMVL-G showed substantial bias. 
Imputation+IP methods varied: LI+Liang was robust (bias $0.004$), while MI+Liang exhibited a bias of $0.295$, suggesting misspecification of the parametric imputation model. 
Among IP+IO methods, Adapted-Liang---which restricts observation coefficients to zero---exhibited a bias of $-0.389$; EHRJoint (bias $0.003$) and GIVEHR (bias $-0.003$) correctly modeled the measured dependence. As latent dependence increased from A.3 to A.4, a clear separation emerged. IP-only methods exhibited the most severe deterioration, with biases reaching $-0.600$ in A.3; in A.4, JMVL-G showed a positive bias of $0.414$. Within the IP+IO category, Adapted-Liang showed the largest bias of $-0.799$ in A.3. EHRJoint reduced these biases considerably ($-0.228$ in A.3, $0.197$ in A.4) but retained residual error because its assumed dependence structure, $\E(\bm{b}_i \mid \eta_i) = \bm{\theta}(\eta_i - 1)$, does not match the true generating mechanism. GIVEHR maintained near-zero bias across all four scenarios (maximum absolute bias $0.011$ in A.4).

\textbf{RMSE.}
The RMSE results mirrored the bias patterns. In Scenario~A.1, GIVEHR exhibited slightly larger RMSE than outcome-only methods due to estimation of additional nuisance parameters. In A.2, the standard LMM achieved the lowest RMSE of $0.100$, while Adapted-Liang showed $0.416$. Under latent dependence, GIVEHR achieved the lowest RMSE in both A.3 ($0.260$ versus $0.322$ for EHRJoint and $0.562$--$0.632$ for IP-only methods) and A.4 ($0.211$ versus $0.288$ for EHRJoint), with the advantage increasing as the complexity of the latent structure grew.

\textbf{Variance estimation.}
We additionally evaluated the cluster-robust sandwich variance estimator from Section~\ref{sec:asy_properties} under Scenario~A.4. The empirical standard errors, computed as the standard deviation of the point estimates across 1,000 simulation runs, were $0.212$ for $F_i$ and $0.128$ for $X_i$. The ratios of the average sandwich standard error to these empirical values were $1.08$ and $1.04$, respectively, indicating that the model-based standard errors slightly overestimate the true sampling variability. These results suggest that the sandwich estimator provides adequate inference at moderate sample sizes, even though it conditions on upstream nuisance as fixed.

\subsection{Simulation Results for Setting B under Misspecification of Visiting Process}

\textbf{Bias.}
When the visiting process deviated from the assumed multiplicative frailty structure while observation remained MNAR (Figure~\ref{fig:settingB_bias_rmse}), IP-only methods exhibited the most severe bias among all categories. In Scenario~B.1, Liang, IIRR, and PairLik clustered around a bias of $0.510$ with JMVL-G reaching $0.670$, whereas the standard LMM---an outcome-only method that ignores both mechanisms---yielded a smaller bias of $0.186$. This pattern illustrates a key finding: adjusting for visiting while ignoring observation can amplify bias beyond that of na\"ive approaches. Imputation+IP pipelines offered moderate improvement (MI: $0.261$; LI: $0.251$). Within the IP+IO category, Adapted-Liang exhibited a bias of $0.568$ and EHRJoint $0.322$, while GIVEHR achieved near-zero bias of $0.001$. This performance gap persisted under the mixture mechanism of Scenario~B.6: IP-only biases ranged from $0.450$ to $0.650$, EHRJoint reduced its bias to $0.177$, and GIVEHR achieved $0.070$.

\textbf{RMSE.}
The RMSE results reinforced these findings. In Scenario~B.1, GIVEHR attained an RMSE of $0.155$, substantially below EHRJoint ($0.374$), Adapted-Liang ($0.600$), and JMVL-G ($0.700$). Under the mixture mechanism (B.6), GIVEHR maintained the lowest RMSE of $0.146$ compared with $0.250$ for EHRJoint and $0.666$ for Adapted-Liang. This robustness stems from the shared Gaussian latent variable, which captures the dependence among the visiting, observation, and outcome processes.

\subsection{Simulation Results for Setting C under Misspecification of Both Visiting and Observation Processes}

\textbf{Bias.}
Setting~C adopted an MAR observation process with a logistic link, introducing link function misspecification relative to the probit model assumed by GIVEHR. Under mild visiting misspecification (Scenarios~C.1 and C.2), nearly all methods were unbiased; for instance, Summary-Mean had a bias of $0.010$ in C.2, and EHRJoint and GIVEHR showed biases of $0.008$ and $0.006$, respectively. Performance diverged sharply in the complex Scenarios~C.3--C.6. IP-only methods again demonstrated vulnerability: in Scenario~C.3, Liang exhibited a bias of $-1.251$, considerably worse than the standard LMM's $-0.434$, confirming that partial IP correction can be counterproductive even when observation is MAR. This arises because IP-only methods attribute observation process effects to the visiting process when the two mechanisms are not distinguished. Imputation+IP methods showed scenario-dependent performance, performing well in C.3 (MI+Liang bias $-0.079$) but poorly under previous outcome-driven visiting in C.4 (bias $0.548$). In Scenario~C.4, GIVEHR achieved a bias of $0.334$, compared with $0.718$ for EHRJoint and $1.066$ for the standard LMM. Under the mixture mechanism (C.6), GIVEHR attained the lowest bias of $0.029$ versus $0.059$ for EHRJoint.

\textbf{RMSE.}
Outcome-only methods achieved the lowest RMSE under mild misspecification due to model parsimony: Summary-Mean attained $0.114$ in C.2 compared with $0.179$ for GIVEHR, reflecting the cost of estimating nuisance parameters that are unnecessary under MAR observation. However, this efficiency advantage vanished in complex scenarios. In C.4, GIVEHR achieved an RMSE of $0.394$ compared with $0.735$ for EHRJoint and $1.073$ for the standard LMM. Similar patterns held across C.3--C.6, with GIVEHR consistently achieving the lowest or near-lowest RMSE.

\subsection{Summary Takeaways}

Across all 16 scenarios, the proposed GIVEHR method consistently achieved the lowest or near-lowest bias and RMSE among the 21 methods evaluated (7 outcome-only, 5 IP-only, 6 imputation+IP, and 3 IP+IO). This advantage stems from its flexible semiparametric specification and the use of shared latent structures to capture dependence across the visiting, observation, and longitudinal outcome processes.

EHRJoint serves as a competitive alternative; while it generally outperformed imputation+IP and outcome-only models, it exhibited residual error in complex latent scenarios, particularly under visiting process misspecification. IP-only methods exhibited the largest biases in most scenarios, with performance deteriorating substantially under MNAR observation mechanisms. Caution is warranted when applying IP-only approaches where informative observation is suspected. The results indicate that partially adjusting for the visiting process while neglecting the observation process can yield larger biases than unadjusted outcome-only methods, as failure to distinguish between visiting and observation leads to distorted estimation of the joint informative mechanism.

\section{Data Analysis}\label{sec:data_analysis}

\subsection{Data Description and Preprocessing}

The GIVEHR method was applied to data from the All of Us Research Program to assess the relationship between neighborhood-level socioeconomic status (NSES) and longitudinal biomarker trajectories, and was compared with six alternative methods: Summary-Median, standard LMM, IIRR, Liang, EHRJoint, and LI+Liang.
The analytic cohort comprised participants aged 18--65 years at enrollment with a minimum follow-up of 5 years. The observation window spanned from 1 January 2018 to 1 October 2023, restricted to outpatient visits.

Baseline characteristics of the study population, overall and by visit status, are summarized in Table~\ref{tab:baseline_chrc}. In total, 206{,}013 participants met the inclusion criteria, of whom 192{,}608 (93.5\%) had at least one outpatient visit during the observation window. Compared with participants who had no outpatient visits, those with at least one visit were slightly older and had a higher prevalence of cancer and chronic disease at baseline.

\begin{table}[H]
\centering
\caption{Baseline Characteristics of Study Population}
\label{tab:baseline_chrc}
\small
\begin{tabular}{@{}l c c c@{}}
\toprule
\textbf{Characteristic} & 
\textbf{Overall} & 
\textbf{$\geq$1 Visit} & 
\textbf{No Visits} \\
& (N=206,013) & (n=192,608) & (n=13,405) \\
\midrule
\multicolumn{4}{@{}l}{\textit{Demographics}} \\
\hspace{3mm}Age, mean (years)        & 46.0 & 46.1 & 44.4 \\
\hspace{3mm}Male sex, \%             & 34.1 & 34.4 & 30.2 \\
\hspace{3mm}Married, \%              & 41.8 & 41.5 & 46.5 \\
\addlinespace[0.5em]
\multicolumn{4}{@{}l}{\textit{Race and Ethnicity}} \\
\hspace{3mm}White, \%                & 71.6 & 70.4 & 88.1 \\
\hspace{3mm}Black, \%                & 28.4 & 29.6 & 11.9 \\
\hspace{3mm}Hispanic, \%             & 22.9 & 22.9 & 23.3 \\
\addlinespace[0.5em]
\multicolumn{4}{@{}l}{\textit{Clinical Characteristics}} \\
\hspace{3mm}Cancer diagnosis, \%     & 12.1 & 12.7 & 4.0 \\
\hspace{3mm}Chronic disease, \%      & 49.2 & 51.1 & 22.6 \\
\bottomrule
\end{tabular}
\end{table}

Our analysis focused on six biomarkers: glucose, glycated hemoglobin (HbA1c), C-reactive protein (CRP), white blood cell count (WBC), prostate-specific antigen (PSA), and cancer antigen 125 (CA125). These biomarkers were selected to represent a spectrum of clinical specificity, ranging from routine markers ordered as part of standard care (glucose, WBC) to highly targeted markers ordered only under specific clinical suspicion (PSA, CA125). This contrast is informative because, for a given patient, the visiting process (IP) remains constant regardless of the biomarker, whereas the observation process (IO) varies substantially; specifically, routine markers are measured at a significantly higher fraction of visits than specialized ones. Analyzing biomarkers across this spectrum allows us to assess the impact of the observation mechanism under a shared visiting process. We performed stratified analyses for glucose by diabetic status and restricted the analyses of CA125 to females and PSA to males with prostate cancer. To investigate associations with NSES, we utilized two indicators derived from the American Community Survey at the three-digit ZIP code level: median annual household income and the uninsured rate. All continuous covariates were standardized to have a mean of zero and unit variance.

Table~\ref{tab:biomarkers} summarizes the measurement patterns across the six biomarkers, confirming the anticipated heterogeneity in the observation process. Among participants with at least one outpatient visit, routine markers such as glucose and white blood cell count were observed for approximately half, whereas targeted markers such as PSA and CA125 were observed in fewer than 6\%. Notably, patients with at least one measurement of a targeted marker exhibited substantially higher healthcare utilization, with median visit counts exceeding 55 compared with approximately 35 for routine markers, consistent with the clinical conditions prompting those tests. Even among patients with available data, measurements occurred at only 6--20\% of visits, underscoring the pervasiveness of the observation mechanism across the entire specificity spectrum.

\begin{table}[H]
\centering
\caption{Biomarker measurement patterns among 206{,}013 patients
(192{,}608 with $\geq$1 clinical visit). Rows are sorted by measurement frequency.
\textbf{Left group}: $N$ is the count of patients with $\geq$1 measurement; Prop.\ is the percentage among patients with $\geq$1 visit.
\textbf{Right group}: Statistics calculated only among patients with $\geq$1 measurement.
``\#\ Visits'' and ``\#\ Meas.'' represent the median [IQR] counts per patient.
``Prop.'' is the median percentage of a patient's visits where the biomarker was observed.}
\label{tab:biomarkers}
\begin{tabular}{lrrrrr}
\toprule
 & \multicolumn{2}{c}{\shortstack[c]{\textbf{Number of Patients with}\\\textbf{at least one measurement}}}
 & \multicolumn{3}{c}{\shortstack[c]{\textbf{Summary of repeated}\\\textbf{measurement per person}}} \\
\cmidrule(lr){2-3} \cmidrule(lr){4-6}
\textbf{Biomarker} 
  & \textbf{$N$} 
  & \textbf{Prop.\ (\%)} 
  & \textbf{\#\ Visits} 
  & \textbf{\#\ Meas.} 
  & \textbf{Prop.\ (\%)} \\ 
\midrule
Glucose & 98{,}966 & 51.4 & 38.0 [16, 78]  & 3.0 [1, 6] &  9.1 \\
WBC     & 94{,}467 & 49.0 & 35.0 [14, 75]  & 2.0 [1, 5] &  4.4 \\
HbA1c   & 45{,}371 & 23.6 & 52.0 [26, 103] & 2.0 [1, 4] &  2.8 \\
CRP     & 13{,}957 &  7.2 & 57.0 [25, 110] & 1.0 [1, 2] &  8.7 \\
PSA     & 10{,}412 &  5.4 & 56.0 [28, 110] & 2.0 [1, 3] &  3.8 \\
CA125   &    797   &  0.4 & 65.0 [35, 120] & 1.0 [1, 3] &  2.6 \\
\bottomrule
\end{tabular}%
\end{table}

\subsection{Estimation Results: Visiting Process}\label{subsec:visit_data}

We modeled outpatient visit intensity as a function of patient characteristics and NSES. For each biomarker-specific cohort, we fitted a visiting intensity model of the form
\begin{equation*}
\begin{split}
    \lambda_i(t) 
    &= \lambda_0(t)\,\eta_i
      \exp\Bigl(
        \gamma_{\mathrm{Age}}\text{Age}_i
      + \gamma_{\mathrm{Male}}\text{Male}_i
      + \gamma_{\mathrm{Mar}}\text{Married}_i
      + \gamma_{\mathrm{Chron}}\text{Chronic}_i \\
    &\qquad\qquad\quad
      + \gamma_{\mathrm{Canc}}\text{Cancer}_i
      + \gamma_{\mathrm{Blk}}\text{Black}_i
      + \gamma_{\mathrm{Hisp}}\text{Hispanic}_i
      + \gamma_{\mathrm{NSES}}\text{NSES}_i
      \Bigr),
\end{split}
\end{equation*}
where $\lambda_0(t)$ is an unspecified baseline intensity function and $\eta_i$ is a subject-level lognormal frailty. The covariates included standardized age, and indicators for male sex, married status, chronic disease, cancer diagnosis, Black race and Hispanic ethnicity; White race and non-Hispanic ethnicity served as reference categories. To examine the role of neighborhood-level context, we fitted separate models for each NSES indicator: standardized median household income and the standardized uninsured rate. The sex indicator was omitted for the female-only CA125 analysis; similarly, the sex indicator and cancer diagnosis were excluded for the PSA analysis, which was restricted to males with prostate cancer.

Table~\ref{tab:visit_gamma} reports the estimated log-intensity ratios, $\widehat{\gamma}$, with Panels~1 and 2 corresponding to the income and uninsured rate specifications, respectively. Across all biomarker cohorts, age, chronic disease and cancer diagnosis were positively associated with visit intensity. For example, in the non-diabetic glucose cohort, chronic disease was associated with a 2.6-fold increase in the visit rate, corresponding to $\widehat{\gamma}_{\mathrm{Chron}} = 0.968$. A similar pattern was observed for cancer diagnosis, with $\widehat{\gamma}_{\mathrm{Canc}}$ ranging from 0.38 to 0.50, indicating substantially higher utilization among patients with a history of cancer.

\begin{table}[H]
\centering
\caption{Estimated log-intensity ratios ($\widehat{\gamma}$) for the visiting process. Panel~1 uses standardized neighborhood-level median household income as the NSES covariate; Panel~2 uses standardized neighborhood-level uninsured rate. Bold values indicate statistical significance at $p\text{-value} < 0.05$. NA indicates covariates not applicable to gender- or disease-restricted cohorts.}
\label{tab:visit_gamma}
\resizebox{\textwidth}{!}{%
\begin{tabular}{@{}lrrrrrrrr@{}}
\toprule
\textbf{Biomarker Cohort} & \textbf{Age} & \textbf{Male} & \textbf{Married} & \textbf{Chronic} & \textbf{Cancer} & \textbf{Black} & \textbf{Hispanic} & \textbf{NSES} \\ 
\midrule
\multicolumn{9}{@{}l}{\textit{Panel 1: NSES = Standardized Median Household Income}} \\
\addlinespace[2pt]
Glucose (Diabetic)            & \textbf{0.079} & \textbf{$-$0.066} & \textbf{$-$0.049} & \textbf{0.890} & \textbf{0.379} & \textbf{0.138} & \textbf{$-$0.222} & \textbf{0.140} \\
Glucose (Non-diabetic)        & \textbf{0.123} & \textbf{$-$0.139} & \textbf{$-$0.075} & \textbf{0.968} & \textbf{0.485} & \textbf{0.102} & \textbf{$-$0.190} & \textbf{0.148} \\
HbA1c, CRP, WBC             & \textbf{0.147} & \textbf{$-$0.109} & \textbf{$-$0.079} & \textbf{1.018} & \textbf{0.459} & \textbf{0.169} & \textbf{$-$0.146} & \textbf{0.141} \\
CA125 (Female)                & \textbf{0.117} & NA                 & \textbf{$-$0.077} & \textbf{0.996} & \textbf{0.463} & \textbf{0.179} & \textbf{$-$0.150} & \textbf{0.132} \\
PSA (Male w/ Prostate Cancer) & \textbf{0.142} & NA                 & $-$0.082          & \textbf{0.672} & NA             & 0.107          & $-$0.075          & \textbf{0.125} \\
\midrule
\multicolumn{9}{@{}l}{\textit{Panel 2: NSES = Standardized Uninsured Rate}} \\
\addlinespace[2pt]
Glucose (Diabetic)            & \textbf{0.087} & \textbf{$-$0.051} & \textbf{$-$0.044} & \textbf{0.845} & \textbf{0.382} & \textbf{0.169} & \textbf{$-$0.126} & \textbf{$-$0.160} \\
Glucose (Non-diabetic)        & \textbf{0.132} & \textbf{$-$0.123} & \textbf{$-$0.067} & \textbf{0.936} & \textbf{0.502} & \textbf{0.108} & \textbf{$-$0.124} & \textbf{$-$0.112} \\
HbA1c, CRP, WBC             & \textbf{0.156} & \textbf{$-$0.094} & \textbf{$-$0.073} & \textbf{0.982} & \textbf{0.471} & \textbf{0.183} & \textbf{$-$0.072} & \textbf{$-$0.124} \\
CA125 (Female)                & \textbf{0.125} & NA                 & \textbf{$-$0.073} & \textbf{0.961} & \textbf{0.474} & \textbf{0.203} & \textbf{$-$0.069} & \textbf{$-$0.128} \\
PSA (Male w/ Prostate Cancer) & \textbf{0.141} & NA                 & $-$0.071          & \textbf{0.662} & NA             & 0.062          & $-$0.059          & $-$0.063 \\
\bottomrule
\end{tabular}%
}
\begin{minipage}{\textwidth}
\small
NSES, neighborhood-level socioeconomic status; PCa, prostate cancer; HbA1c, glycated hemoglobin; CRP, C-reactive protein; WBC, white blood cell count; CA125, cancer antigen 125; PSA, prostate-specific antigen.
\end{minipage}
\end{table}

The estimated NSES effects consistently indicated that higher neighborhood socioeconomic status was associated with increased visit intensity. Specifically, visit intensity was positively associated with median household income (Panel 1: $\widehat{\gamma}_{\mathrm{NSES}} = 0.125$ to $0.148$) and negatively associated with the uninsured rate (Panel 2: $\widehat{\gamma}_{\mathrm{NSES}} = -0.160$ to $-0.063$).

\subsection{Estimation Results: Observation Process}\label{subsec:obs_data}

We modeled the probability of biomarker observation conditional on a clinic visit using the probit mixed effects model described in Section~\ref{sec:estimation}. Table~\ref{tab:obs_process} details the covariates included in each specification. Common covariates across all biomarkers included age, race (Black vs.\ White), ethnicity (Hispanic vs.\ non-Hispanic), baseline body mass index (BMI), chronic disease status, and the NSES indicator. Biomarker-specific covariates were selected to reflect clinical indications for ordering each test. For instance, the glucose observation models included diabetes-related comorbidities---such as type 1 and type 2 diabetes, dyslipidemia, atherosclerotic cardiovascular disease (ASCVD), kidney disease, and pancreatic disease---as these conditions typically prompt clinicians to monitor glycemic status. For inflammatory markers (CRP, WBC), we included conditions associated with inflammation or immune dysfunction. Full covariate lists are reported in Table~\ref{tab:obs_process}.

\begin{table}[H]
\centering
\caption{Covariates included in the probit mixed effects models. Two model specifications were fitted: one using standardized median household income as the NSES covariate, and another using the standardized uninsured rate. Bold text indicates significance ($p\text{-value} < 0.05$) in the income specification; $^{\dagger}$ indicates significance in the uninsured rate specification.}
\label{tab:obs_process}
\resizebox{\textwidth}{!}{%
\begin{tabular}{lll}
\hline
Biomarker & Common covariates & Biomarker-specific covariates \\ \hline
Glucose (Diabetic) & \textbf{Age}$^\dagger$, \textbf{Race}$^\dagger$, \textbf{Ethnicity}, BMI, & \textbf{Sex}$^\dagger$, Dyslipidemia, \textbf{ASCVD}$^\dagger$, \\
 & \textbf{Chronic disease}$^\dagger$, \textbf{NSES}$^\dagger$ & \textbf{Kidney disease}$^\dagger$, \textbf{Pancreatic disease}$^\dagger$, \\
 & & Type 1 diabetes, \textbf{Type 2 diabetes}$^\dagger$ \\ \hline
Glucose (Non-diabetic) & Age$^\dagger$, \textbf{Race}$^\dagger$, Ethnicity, BMI, & \textbf{Sex}$^\dagger$, Dyslipidemia, \textbf{ASCVD}$^\dagger$, \\
 & \textbf{Chronic disease}$^\dagger$, \textbf{NSES}$^\dagger$ & \textbf{Kidney disease}$^\dagger$, \textbf{Pancreatic disease}$^\dagger$ \\ \hline
HbA1c & \textbf{Age}$^\dagger$, \textbf{Race}$^\dagger$, \textbf{Ethnicity}$^\dagger$, BMI, & \textbf{Sex}$^\dagger$, \textbf{Obesity}$^\dagger$, \textbf{Type 1 diabetes}$^\dagger$, \\
 & \textbf{Chronic disease}$^\dagger$, \textbf{NSES}$^\dagger$ & \textbf{Type 2 diabetes}$^\dagger$, \textbf{Dyslipidemia}$^\dagger$, \\
 & & \textbf{ASCVD}$^\dagger$, \textbf{Kidney disease}$^\dagger$ \\ \hline
CRP & \textbf{Age}$^\dagger$, \textbf{Race}$^\dagger$, \textbf{Ethnicity}$^\dagger$, & \textbf{Sex}, \textbf{Autoimmune rheumatic disease}$^\dagger$ \\ 
 & \textbf{BMI}$^\dagger$, \textbf{Chronic disease}$^\dagger$, \textbf{NSES}$^\dagger$ & \\ \hline
WBC & \textbf{Age}$^\dagger$, \textbf{Race}, \textbf{Ethnicity}$^\dagger$, Married$^\dagger$, & \textbf{Sex}$^\dagger$, \textbf{Anemia}$^\dagger$, \textbf{Kidney disease}$^\dagger$, \\ 
 & BMI, \textbf{Chronic disease}$^\dagger$, \textbf{Cancer}$^\dagger$, \textbf{NSES}$^\dagger$ & \textbf{Cancer}$^\dagger$, \textbf{Liver disease}$^\dagger$ \\ \hline
CA125 (Female) & \textbf{Age}$^\dagger$, \textbf{Race}$^\dagger$, Ethnicity, \textbf{BMI}, & \textbf{Ovarian condition}$^\dagger$, Uterine fibroids, \\ 
 & \textbf{Chronic disease}$^\dagger$, NSES$^\dagger$ & \textbf{Pregnancy}$^\dagger$ \\ \hline
PSA (Male w/ PCa) & Age, Race, Ethnicity, BMI, & \textbf{BPH}$^\dagger$, \textbf{Prostatitis}$^\dagger$, LUTS \\
 & \textbf{Chronic disease}$^\dagger$, NSES$^\dagger$ & \\ \hline
\end{tabular}%
}
\vspace{5pt}
\begin{minipage}{\textwidth}
\small
NSES, neighborhood-level socioeconomic status; PCa, prostate cancer; HbA1c, glycated hemoglobin; CRP, C-reactive protein; WBC, white blood cell count; CA125, cancer antigen 125; PSA, prostate-specific antigen.
\end{minipage}
\end{table}

Estimation of the observation model revealed several consistent patterns. Age was significantly associated with observation probability for most biomarkers. Chronic disease served as a robust predictor across all biomarkers ($p\text{-value} < 0.05$). Neighborhood median income exhibited significant positive associations with observation probability for glucose, CRP, and WBC, suggesting that socioeconomic factors influence not only visit frequency but also the decision to order tests.

Associations with disease-specific covariates aligned with established clinical practice. For glucose monitoring, diabetes status (particularly type 2 diabetes), ASCVD, and kidney disease were strongly associated with increased observation probability. For HbA1c, type 1 and type 2 diabetes, dyslipidemia, ASCVD, and kidney disease were significant predictors. The CRP model showed significant associations with autoimmune rheumatic disease, consistent with its utility as an inflammatory marker. For WBC, anemia, kidney disease, cancer, and liver disease were all significantly associated with observation. Among sex-specific markers, ovarian conditions and pregnancy predicted CA125 recording, while benign prostatic hyperplasia (BPH) and prostatitis predicted PSA observation.

Collectively, these results provide strong evidence that the observation process is informative. Furthermore, the estimated parameter $\widehat{\bm{\delta}}$ differs significantly from zero, suggesting that the observation process is MNAR.

\subsection{Estimation Results: Longitudinal Outcomes}\label{subsec:outcome_results}

Having established that both the visiting and observation processes depend on patient covariates and exhibit significant latent heterogeneity (Sections~\ref{subsec:visit_data}--\ref{subsec:obs_data}), we now turn to the longitudinal outcome model. Using the semiparametric framework described in Section~\ref{sec:estimation}, the outcome process for each biomarker was specified as:
\begin{equation*}
\begin{split}
    \log Y_i(t) &= \beta_0(t) + \beta_{\text{Age}} \text{Age}_i + \beta_{\text{Male}} \text{Male}_i + \beta_{\text{Mar}} \text{Married}_i + \beta_{\text{Blk}} \text{Black}_i + \beta_{\text{Hisp}} \text{Hispanic}_i \\
           &\quad + \beta_{\text{NSES}} \text{NSES}_i + b_{0i} + b_{1i} \text{NSES}_i + \varepsilon_i(t),
\end{split}
\end{equation*}
where $\beta_0(t)$ is an unspecified baseline function, NSES represents either standardized neighborhood-level median household income or standardized neighborhood-level uninsured rate, and the random effects $\bm{b}_i = (b_{0i}, b_{1i})^\top$ are linked to the shared latent variable via $\E(\bm{b}_i \mid U_i) = \bm{\theta} U_i$. For sex-specific biomarkers (CA125 restricted to females, PSA restricted to males with prostate cancer), the Male covariate was omitted. Figure~\ref{fig:forest_income} presents the estimated coefficients and 95\% confidence intervals for the association between biomarker levels and standardized neighborhood-level median household income across the seven methods.

\begin{figure}
    \centering
    \includegraphics[width=1\linewidth]{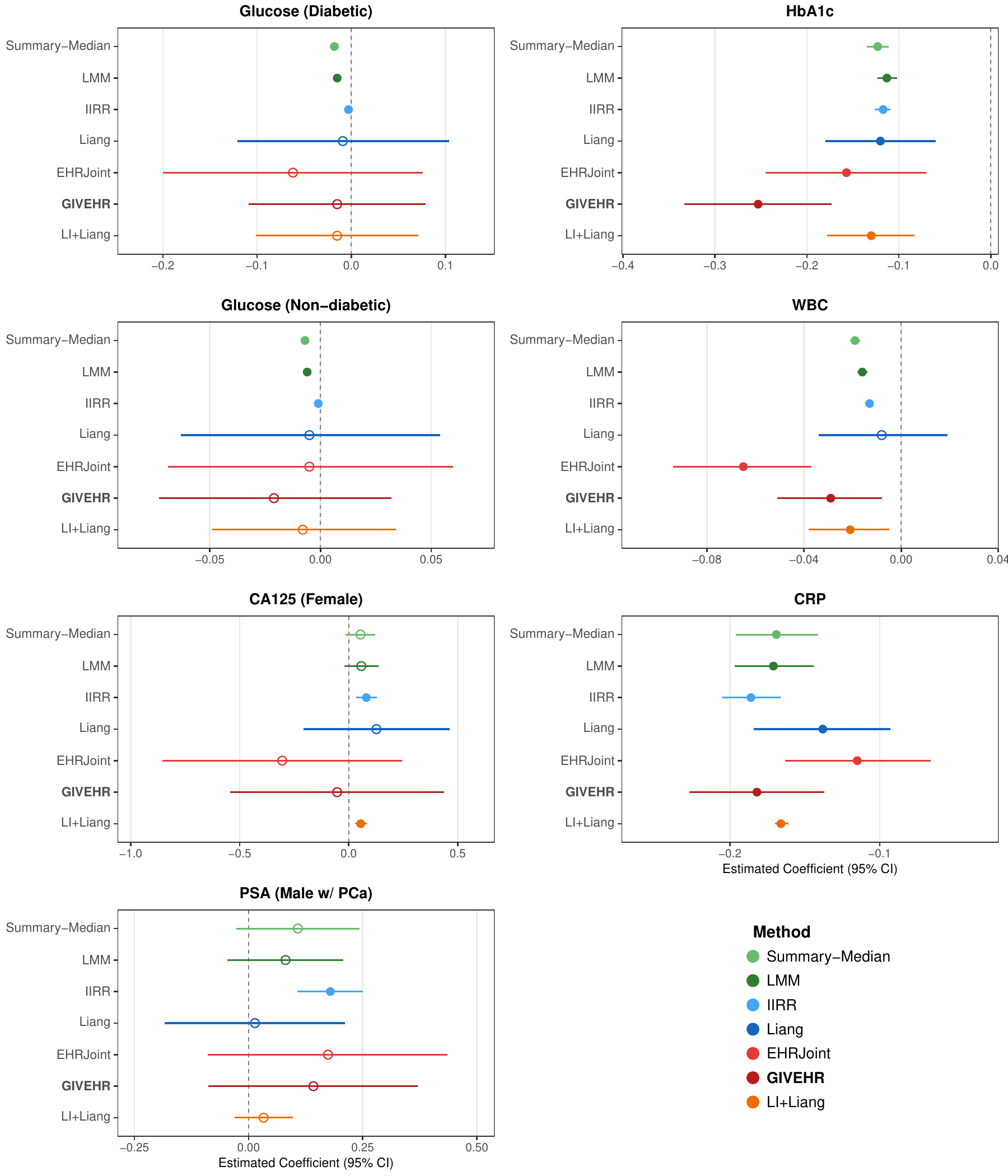}
    \caption{Estimated regression coefficients and 95\% confidence intervals for the association between log-transformed biomarker levels and standardized neighborhood-level median household income, comparing seven estimation methods. The left column displays results for glucose (diabetic and non-diabetic subgroups), CA125 (females), and PSA (males with prostate cancer); the right column displays results for HbA1c, WBC, and CRP. Filled circles denote statistical significance at the 0.05 level; open circles denote non-significance.}
\label{fig:forest_income}
\end{figure}

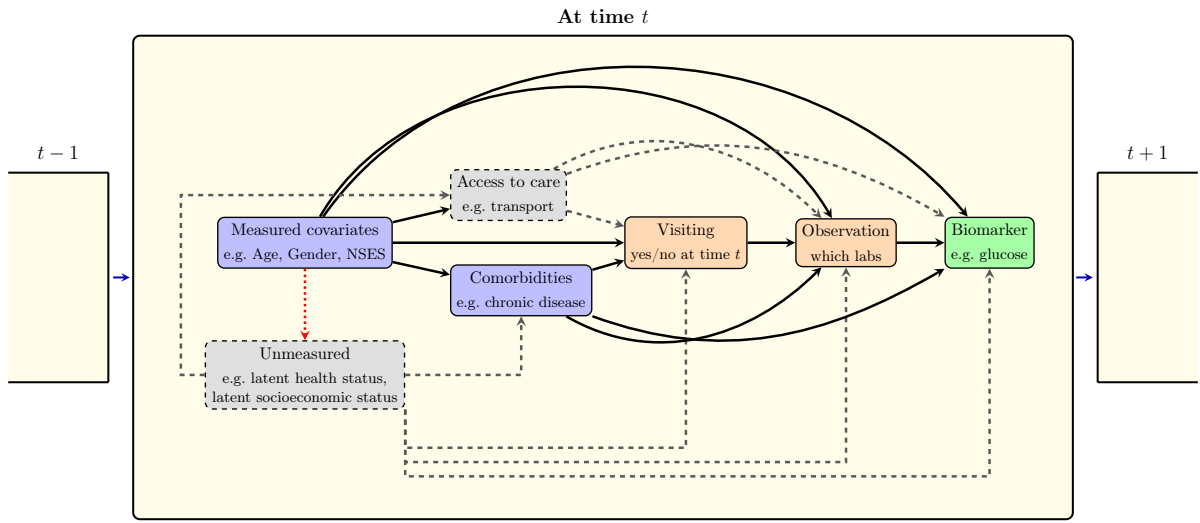
\begin{figure}[H]
\begin{center}
\resizebox{\textwidth}{!}{
\begin{tikzpicture}[
    node distance=0.8cm and 1.1cm,
    every node/.style={align=center, font=\small},
    obs/.style={draw, thick, rounded corners, minimum width=1.8cm, minimum height=0.7cm, fill=blue!25},
    latent/.style={draw, thick, rounded corners, minimum width=1.8cm, minimum height=0.7cm, fill=gray!25, dashed},
    process/.style={draw, thick, rounded corners, minimum width=1.8cm, minimum height=0.7cm, fill=orange!30},
    outcome/.style={draw, thick, rounded corners, minimum width=1.8cm, minimum height=0.7cm, fill=green!35},
    arr/.style={->, >=stealth, thick, line width=1.5pt},
    timelabel/.style={font=\large\bfseries}
]

\begin{scope}[xshift=4cm, local bounding box=timet]
    \node[obs] (Cov) {Measured covariates\\{\footnotesize \textnormal{e.g.\ Age, Gender, NSES}}};
    
    \node[latent, below=1.5cm of Cov, inner sep=4pt] (U) {Unmeasured\\{\footnotesize \textnormal{e.g.\ latent health status,}}\\[-3pt]{\footnotesize \textnormal{latent socioeconomic status}}};
    
    \node[latent, right=1.2cm of Cov, yshift=1cm] (Access) {Access to care\\{\footnotesize \textnormal{e.g.\ transport}}};
    \node[obs, right=1.2cm of Cov, yshift=-1cm] (Comorb) {Comorbidities\\{\footnotesize \textnormal{e.g.\ chronic disease}}};
    
    \node[process, right=1.2cm of Access, yshift=-1cm] (Visit) {Visiting\\{\footnotesize \textnormal{yes/no at time $t$}}};
    
    \node[process, right=1cm of Visit] (Obs) {Observation\\{\footnotesize \textnormal{which labs}}};
    
    \node[outcome, right=1cm of Obs] (Y) {Biomarker\\{\footnotesize \textnormal{e.g.\ glucose}}};
    
    \draw[arr] (Cov) -- (Access);
    \draw[arr] (Cov) -- (Comorb);
    \draw[arr] (Cov) -- (Visit);
    \draw[arr, dotted, red] (Cov) -- (U);  
    \draw[arr] (Cov) to[out=60, in=120] (Obs);
    \draw[arr] (Cov) to[out=55, in=130] (Y);
    
    \coordinate (bottomleft) at ($(U.south west)+(-0.3, -0.8)$);
    \coordinate (bottomright) at ($(Y.south east)+(0.3, -0.8)$);
    
    \draw[arr, dashed, gray!70!black] (U.west) -- ++(-0.5, 0) |- (Access.west);
    
    \draw[arr, dashed, gray!70!black] (U.east) -| (Comorb.south);
    
    \draw[arr, dashed, gray!70!black] (U.south east) -- ++(0, -0.8) -| (Visit.south);
    
    \draw[arr, dashed, gray!70!black] (U.south east) -- ++(0, -1.1) -| (Obs.south);
    
    \draw[arr, dashed, gray!70!black] (U.south east) -- ++(0, -1.4) -| (Y.south);
    
     \draw[arr, dashed, gray!70!black] (Access) to[out=30, in=135] (Obs);
    \draw[arr, dashed, gray!70!black] (Access) to[out=20, in=150] (Y);
    \draw[arr, dashed, gray!70!black] (Access) -- (Visit);
    \draw[arr] (Comorb) -- (Visit);
    \draw[arr] (Comorb) to[out=-30, in=-135] (Obs);
    \draw[arr] (Comorb) to[out=-20, in=-150] (Y);
    
    \draw[arr] (Visit) -- (Obs);
    
    \draw[arr] (Obs) -- (Y);
    
    \coordinate (topbound) at ($(Cov.north)+(0, 3.5)$);
    \coordinate (bottombound) at ($(U.south)+(0, -2.0)$);
    \coordinate (leftbound) at ($(U.west)+(-1.2, 0)$);
    \coordinate (rightbound) at ($(Y.east)+(0.5, 0)$);
\end{scope}

\begin{scope}[on background layer]
    \node[draw, very thick, rounded corners, inner sep=0.3cm, fit=(topbound)(bottombound)(leftbound)(rightbound), fill=yellow!10] (mainbox) {};
\end{scope}
\node[timelabel] at ($(mainbox.north)+(0, 0.4)$) {At time $t$};

\fill[yellow!10] ($(mainbox.west)+(-2.6, -2.2)$) rectangle ($(mainbox.west)+(-0.5, 2.2)$);
\draw[very thick] ($(mainbox.west)+(-0.5, -2.2)$) -- ($(mainbox.west)+(-0.5, 2.2)$);
\draw[very thick] ($(mainbox.west)+(-2.6, 2.2)$) -- ($(mainbox.west)+(-0.5, 2.2)$);
\draw[very thick] ($(mainbox.west)+(-2.6, -2.2)$) -- ($(mainbox.west)+(-0.5, -2.2)$);
\node[timelabel] at ($(mainbox.west)+(-1.55, 2.6)$) {$t-1$};

\fill[yellow!10] ($(mainbox.east)+(0.5, -2.2)$) rectangle ($(mainbox.east)+(2.6, 2.2)$);
\draw[very thick] ($(mainbox.east)+(0.5, -2.2)$) -- ($(mainbox.east)+(0.5, 2.2)$);
\draw[very thick] ($(mainbox.east)+(0.5, 2.2)$) -- ($(mainbox.east)+(2.6, 2.2)$);
\draw[very thick] ($(mainbox.east)+(0.5, -2.2)$) -- ($(mainbox.east)+(2.6, -2.2)$);
\node[timelabel] at ($(mainbox.east)+(1.55, 2.6)$) {$t+1$};

\draw[->, >=stealth, very thick, blue!70!black] ($(mainbox.west)+(-0.4,0)$) -- ($(mainbox.west)-(0.05,0)$);
\draw[->, >=stealth, very thick, blue!70!black] ($(mainbox.east)+(0.05,0)$) -- ($(mainbox.east)+(0.4,0)$);

\end{tikzpicture}
}
\end{center}
\caption{Directed acyclic graph (DAG) illustrating the informative visiting and observation processes in electronic health record data. Blue solid boxes represent measured covariates; gray dashed boxes represent unmeasured factors; orange boxes denote the visiting and observation processes; and the green box denotes the biomarker outcome. Solid arrows indicate pathways through observed variables; dashed arrows indicate pathways through unmeasured factors. The red dotted arrow represents a possible but non-identifiable dependence between measured covariates and the unmeasured factor. The panels at adjacent time points indicate that the same structure repeats across clinic visits.}
\label{fig:DAG}
\end{figure}

For metabolic and inflammatory markers, we observed an interpretable gradient across methods, most striking for HbA1c. Methods that ignore the observation process (Summary-Median, LMM) produced the weakest associations ($-0.123$ and $-0.113$); the Liang estimator \citep{liang2009joint}, which accounts only for the visiting process, yielded a negligible increase ($-0.120$); EHRJoint produced a moderate estimate ($-0.157$); and GIVEHR yielded the largest estimate ($\widehat{\beta}_{\mathrm{NSES}} = -0.253$; 95\% CI: $-0.333$, $-0.173$), corresponding to an approximate 22\% decrease in HbA1c per standard deviation increase in neighborhood income. This progression is consistent with the directed acyclic graph (DAG) in Figure~\ref{fig:DAG}: because the observation process is both informative and MNAR (Section~\ref{subsec:obs_data}), and because NSES jointly influences visiting, test ordering, and health status through the shared latent variable $U_i$, methods that ignore either process yield estimates that conflate the true NSES--biomarker association with the data collection mechanism. By jointly modeling the visiting and observation processes through the shared latent variable $U_i$, GIVEHR estimates the NSES--biomarker association while accounting for the data collection mechanism, whereas methods that omit one or both processes cannot. 
Significant negative associations were also identified by the proposed method for CRP. For WBC, the Liang estimator was the only approach that failed to detect a significant association ($-0.008$; 95\% CI: $-0.034$, $0.019$), likely because ignoring the observation mechanism distorts the visiting weights. In contrast, for the sex-specific cancer markers CA125 and PSA, no method---including GIVEHR---yielded a significant association. This is expected: these markers are ordered under specific clinical suspicion, so the observation mechanism is driven by disease status rather than NSES, leaving little distortion from the data collection process for any method to correct.

Figure~\ref{fig:forest_uninsured} presents results using the standardized neighborhood-level uninsured rate. As expected, coefficients were of opposite sign. The proposed method identified significant positive associations for HbA1c ($\widehat{\beta}_{\mathrm{NSES}} = 0.146$; 95\% CI: $0.037$, $0.254$), CRP ($0.187$; 95\% CI: $0.135$, $0.239$), and WBC ($0.042$; 95\% CI: $0.014$, $0.070$), reinforcing that socioeconomic disadvantage---here captured through the access pathway---is associated with elevated metabolic and inflammatory biomarker levels after accounting for informative visiting and observation. Notably, EHRJoint failed to detect significant associations for these biomarkers under this specification, and in some instances produced estimates whose signs were inconsistent with the expected direction based on the income specification, suggesting that modeling the observation process without a shared latent structure is insufficient to correct the bias.

\begin{figure}
    \centering
    \includegraphics[width=1.0\linewidth]{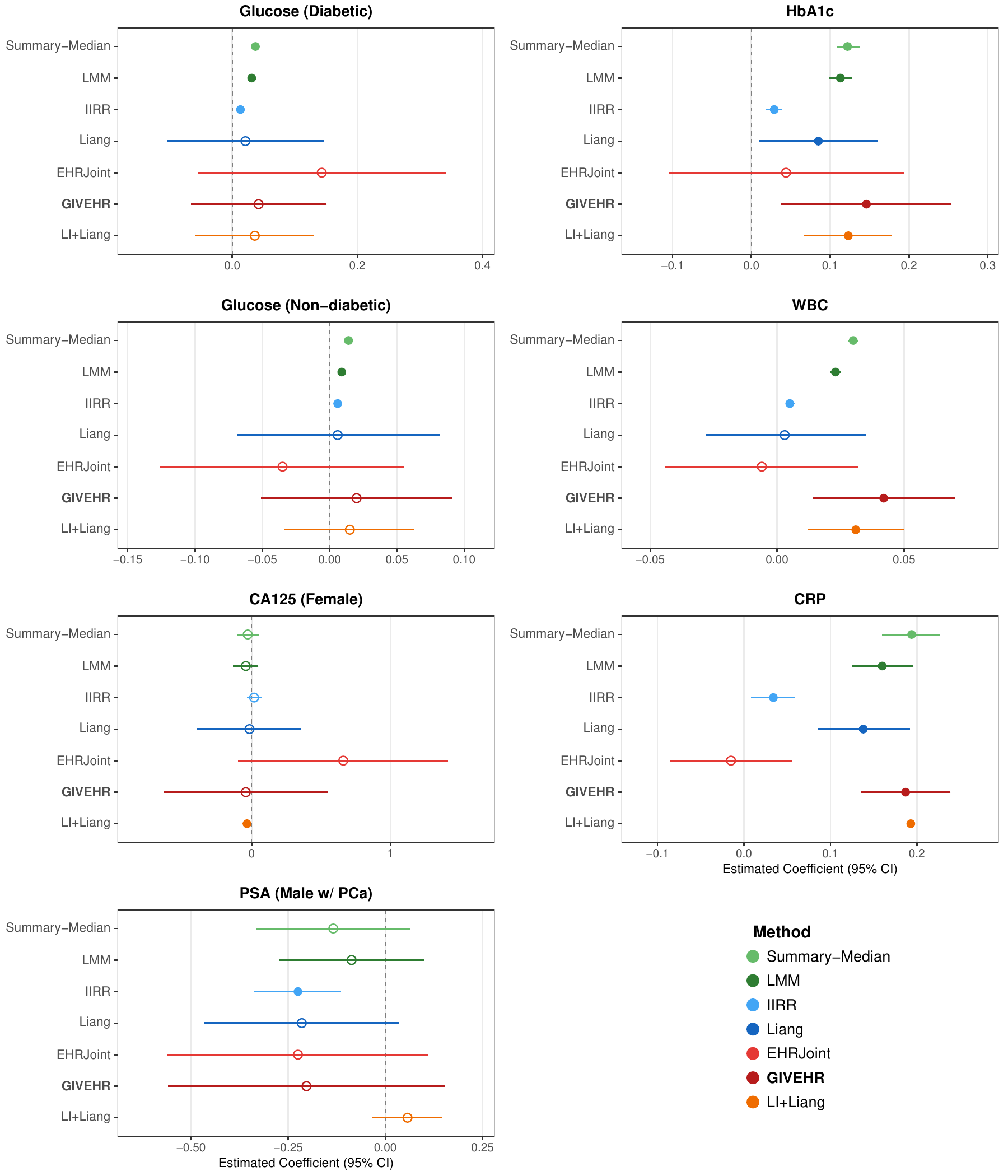}
    \caption{Estimated regression coefficients and 95\% confidence intervals for the association between log-transformed biomarker levels and standardized neighborhood-level uninsured rate, comparing seven estimation methods. The left column displays results for glucose (diabetic and non-diabetic subgroups), CA125 (females), and PSA (males with prostate cancer); the right column displays results for HbA1c, WBC, and CRP. Filled circles denote statistical significance at the 0.05 level; open circles denote non-significance.
    }
    \label{fig:forest_uninsured}
\end{figure}

In summary, neighborhood socioeconomic disadvantage is associated with meaningfully worse metabolic and inflammatory biomarker profiles, but this association is obscured in raw EHR data because NSES simultaneously influences which patients visit and which tests are ordered, confounding the biomarker trajectory with the data collection mechanism (Figure~\ref{fig:DAG}). 
Moreover, because measured covariates may influence the latent variable (the red dotted arrow in Figure~\ref{fig:DAG}), this dependence is non-identifiable but is absorbed into the fixed-effect coefficients, so the estimated coefficients capture the total association between NSES and the biomarker (see Section~\ref{sec:supp:invariance} in Supplementary Materials ). 
Correcting for both informative processes approximately doubles the estimated NSES effect for HbA1c relative to na\"ive approaches. The correction matters most when the observation process is socioeconomically patterned; for cancer markers driven by clinical suspicion rather than NSES, no method detects a significant association, providing a built-in specificity check, though the smaller sample sizes for these markers limit statistical power. These findings indicate that jointly accounting for informative visiting and observation through a shared latent variable can yield substantially different inferences about population-level associations from EHR data.

\section{Discussion}\label{sec:discussion}

Two extensions warrant future investigation. First, existing methods for modeling longitudinal biomarkers typically treat irregular visit times as a single phenomenon without distinguishing between visiting and observation. In the deep learning literature, strategies such as regularizing trajectories onto a fixed time grid \citep{dwarampudi2019effects, xie2022deep} or incorporating time intervals directly into network architectures \citep{che2018recurrent, wang2019mcpl} implicitly assume that biomarkers are observed at every encounter or restrict analysis to visits with observed values. Extending these approaches to explicitly incorporate both processes, for instance by embedding a shared latent structure within predictive models, represents an important direction. More broadly, while our framework can accommodate dependence on past observation history (Section~\ref{sec:tv_obs}), the associated computational cost limits its scalability to large EHR databases, and efficient algorithms for this extension remain an open problem.

A second direction is to extend the joint modeling framework to incorporate time-to-event outcomes, allowing analyses of disease progression or mortality that properly account for informative visiting and observation. Such extensions align naturally with the broader joint longitudinal--survival literature \citep{wang2025joint}. Recent methodological work has begun to jointly model longitudinal markers, visiting processes, and time-to-event outcomes, including competing risks \citep{thomadakis2025shared}, but typically does not include a distinct observation process conditional on a visit. Extending the proposed framework to incorporate survival outcomes while retaining an explicit observation layer would fill an important gap for applications where both visit timing and observation decisions are informative.

\bibliographystyle{chicago}
\bibliography{EHR}

@book{andersen2012statistical,
  title={Statistical models based on counting processes},
  author={Andersen, Per K and Borgan, Ornulf and Gill, Richard D and Keiding, Niels},
  year={2012},
  publisher={Springer Science \& Business Media}
}

@article{anthopolos2021modeling,
  title={Modeling heterogeneity and missing data of multiple longitudinal outcomes in electronic health records},
  author={Anthopolos, Rebecca and Wei, Ying and Chen, Qixuan},
  journal={arXiv preprint arXiv:2103.11170},
  year={2021}
}

@article{goldstein2016controlling,
  title={Controlling for informed presence bias due to the number of health encounters in an electronic health record},
  author={Goldstein, Benjamin A and Bhavsar, Nrupen A and Phelan, Matthew and Pencina, Michael J},
  journal={American journal of epidemiology},
  volume={184},
  number={11},
  pages={847--855},
  year={2016},
  publisher={Oxford University Press}
}

@article{hripcsak2013next,
  title={Next-generation phenotyping of electronic health records},
  author={Hripcsak, George and Albers, David J},
  journal={Journal of the American Medical Informatics Association},
  volume={20},
  number={1},
  pages={117--121},
  year={2013},
  publisher={Oxford Academic}
}

@article{all2019all,
  author  = {{All of Us Research Program Investigators}},
  title   = {The ``All of Us'' Research Program},
  journal = {New England Journal of Medicine},
  volume  = {381},
  number  = {7},
  pages   = {668--676},
  year    = {2019}
}

@article{bycroft2018uk,
  title={The UK Biobank resource with deep phenotyping and genomic data},
  author={Bycroft, Clare and Freeman, Colin and Petkova, Desislava and Band, Gavin and Elliott, Lloyd T and Sharp, Kevin and Motyer, Allan and Vukcevic, Damjan and Delaneau, Olivier and O’Connell, Jared and others},
  journal={Nature},
  volume={562},
  number={7726},
  pages={203--209},
  year={2018},
  publisher={Nature Publishing Group UK London}
}

@article{lin2004analysis,
  title={Analysis of longitudinal data with irregular, outcome-dependent follow-up},
  author={Lin, Haiqun and Scharfstein, Daniel O and Rosenheck, Robert A},
  journal={Journal of the Royal Statistical Society Series B: Statistical Methodology},
  volume={66},
  number={3},
  pages={791--813},
  year={2004},
  publisher={Oxford University Press}
}

@article{pullenayegum2016longitudinal,
  title={Longitudinal data subject to irregular observation: A review of methods with a focus on visit processes, assumptions, and study design},
  author={Pullenayegum, Eleanor M and Lim, Lily SH},
  journal={Statistical methods in medical research},
  volume={25},
  number={6},
  pages={2992--3014},
  year={2016},
  publisher={SAGE Publications Sage UK: London, England}
}

@article{sun2007regression,
  title={Regression analysis of longitudinal data in the presence of informative observation and censoring times},
  author={Sun, Jianguo and Sun, Liuquan and Liu, Dandan},
  journal={Journal of the American Statistical Association},
  volume={102},
  number={480},
  pages={1397--1406},
  year={2007},
  publisher={Taylor \& Francis}
}

@misc{robins2000marginal,
  title={Marginal structural models and causal inference in epidemiology},
  author={Robins, James M and Hernan, Miguel Angel and Brumback, Babette},
  journal={Epidemiology},
  volume={11},
  number={5},
  pages={550--560},
  year={2000},
  publisher={Lww}
}

@article{weaver2023functional,
  title={Functional data analysis for longitudinal data with informative observation times},
  author={Weaver, Caleb and Xiao, Luo and Lu, Wenbin},
  journal={Biometrics},
  volume={79},
  number={2},
  pages={722--733},
  year={2023},
  publisher={Wiley Online Library}
}

@article{wells2013strategies,
  title={Strategies for handling missing data in electronic health record derived data},
  author={Wells, Brian J and Chagin, Kevin M and Nowacki, Amy S and Kattan, Michael W},
  journal={Egems},
  volume={1},
  number={3},
  pages={1035},
  year={2013}
}

@book{cook2007statistical,
  title={The statistical analysis of recurrent events},
  author={Cook, Richard J and Lawless, Jerald F and others},
  year={2007},
  publisher={Springer}
}

@article{andersen1982cox,
  title={Cox's regression model for counting processes: a large sample study},
  author={Andersen, Per Kragh and Gill, Richard D},
  journal={The annals of statistics},
  pages={1100--1120},
  year={1982},
  publisher={JSTOR}
}

@article{haneuse2016general,
  title={A general framework for considering selection bias in EHR-based studies: what data are observed and why?},
  author={Haneuse, Sebastien and Daniels, Michael},
  journal={EGEMs},
  volume={4},
  number={1},
  pages={1203},
  year={2016}
}

@book{aalen2008survival,
  title={Survival and event history analysis: a process point of view},
  author={Aalen, Odd O and Borgan, {\O}rnulf and Gjessing, H{\aa}kon K},
  year={2008},
  publisher={Springer}
}

@article{ogden2021error,
  title={On the error in Laplace approximations of high-dimensional integrals},
  author={Ogden, Helen},
  journal={Stat},
  volume={10},
  number={1},
  pages={e380},
  year={2021},
  publisher={Wiley Online Library}
}

@incollection{van1996weak,
  title={Weak convergence},
  author={Van Der Vaart, Aad W and Wellner, Jon A},
  booktitle={Weak convergence and empirical processes: with applications to statistics},
  pages={16--28},
  year={1996},
  publisher={Springer}
}

@article{lin2001semiparametric,
  title={Semiparametric and nonparametric regression analysis of longitudinal data},
  author={Lin, Danyu Y and Ying, Zhiliang},
  journal={Journal of the American Statistical Association},
  volume={96},
  number={453},
  pages={103--126},
  year={2001},
  publisher={Taylor \& Francis}
}

@article{liang2009joint,
  title={Joint modeling and analysis of longitudinal data with informative observation times},
  author={Liang, Yu and Lu, Wenbin and Ying, Zhiliang},
  journal={Biometrics},
  volume={65},
  number={2},
  pages={377--384},
  year={2009},
  publisher={Oxford University Press}
}

@article{dai2018joint,
  title={Joint modelling of survival and longitudinal data with informative observation times},
  author={Dai, Hongsheng and Pan, Jianxin},
  journal={Scandinavian Journal of Statistics},
  volume={45},
  number={3},
  pages={571--589},
  year={2018},
  publisher={Wiley Online Library}
}

@article{burvzkova2007longitudinal,
  title={Longitudinal data analysis for generalized linear models with follow-up dependent on outcome-related variables},
  author={Burvzkova, Petra and Lumley, Thomas},
  journal={Canadian Journal of Statistics},
  volume={35},
  number={4},
  pages={485--500},
  year={2007},
  publisher={Wiley Online Library}
}

@article{yiu2025accommodating,
  title={Accommodating informative visit times for analysing irregular longitudinal data: a sensitivity analysis approach with balancing weights estimators},
  author={Yiu, Sean and Su, Li},
  journal={Journal of the Royal Statistical Society Series C: Applied Statistics},
  volume={74},
  number={3},
  pages={824--843},
  year={2025},
  publisher={Oxford University Press UK}
}

@article{chen2015regression,
  title={Regression analysis of longitudinal data with irregular and informative observation times},
  author={Chen, Yong and Ning, Jing and Cai, Chunyan},
  journal={Biostatistics},
  volume={16},
  number={4},
  pages={727--739},
  year={2015},
  publisher={Oxford University Press}
}

@article{shen2019regression,
  title={Regression analysis of longitudinal data with outcome-dependent sampling and informative censoring},
  author={Shen, Weining and Liu, Suyu and Chen, Yong and Ning, Jing},
  journal={Scandinavian Journal of Statistics},
  volume={46},
  number={3},
  pages={831--847},
  year={2019},
  publisher={Wiley Online Library}
}

@misc{du2025newstatisticalapproachjoint,
      title={A new statistical approach for joint modeling of longitudinal outcomes measured in electronic health records with clinically informative presence and observation processes}, 
      author={Jiacong Du and Xu Shi and Bhramar Mukherjee},
      year={2025},
      eprint={2410.13113},
      archivePrefix={arXiv},
      primaryClass={stat.ME},
      url={https://arxiv.org/abs/2410.13113}, 
}

@article{aalen2010dynamic,
  title={A dynamic approach for reconstructing missing longitudinal data using the linear increments model},
  author={Aalen, Odd O and Gunnes, Nina},
  journal={Biostatistics},
  volume={11},
  number={3},
  pages={453--472},
  year={2010},
  publisher={Oxford University Press}
}

@article{laird1982random,
  title={Random-effects models for longitudinal data},
  author={Laird, Nan M and Ware, James H},
  journal={Biometrics},
  pages={963--974},
  year={1982},
  publisher={JSTOR}
}

@article{thomadakis2025shared,
  title={Shared parameter modeling of longitudinal data allowing for possibly informative visiting process and terminal event},
  author={Thomadakis, Christos and Meligkotsidou, Loukia and Pantazis, Nikos and Touloumi, Giota},
  journal={Biostatistics},
  volume={26},
  number={1},
  pages={kxae041},
  year={2025},
  publisher={Oxford University Press}
}

@article{wang2025joint,
  title={Joint modeling of longitudinal and survival data},
  author={Wang, Jane-Ling and Zhong, Qixian},
  journal={Annual Review of Statistics and Its Application},
  volume={12},
  number={1},
  pages={449--476},
  year={2025},
  publisher={Annual Reviews}
}

@article{xie2022deep,
  title={Deep learning for temporal data representation in electronic health records: A systematic review of challenges and methodologies},
  author={Xie, Feng and Yuan, Han and Ning, Yilin and Ong, Marcus Eng Hock and Feng, Mengling and Hsu, Wynne and Chakraborty, Bibhas and Liu, Nan},
  journal={Journal of biomedical informatics},
  volume={126},
  pages={103980},
  year={2022},
  publisher={Elsevier}
}

@article{dwarampudi2019effects,
  title={Effects of padding on LSTMs and CNNs},
  author={Dwarampudi, Mahidhar and Reddy, NV},
  journal={arXiv preprint arXiv:1903.07288},
  year={2019}
}

@article{che2018recurrent,
  title={Recurrent neural networks for multivariate time series with missing values},
  author={Che, Zhengping and Purushotham, Sanjay and Cho, Kyunghyun and Sontag, David and Liu, Yan},
  journal={Scientific reports},
  volume={8},
  number={1},
  pages={6085},
  year={2018},
  publisher={Nature Publishing Group UK London}
}

@article{wang2019mcpl,
  title={MCPL-Based FT-LSTM: medical representation learning-based clinical prediction model for time series events},
  author={Wang, Lutong and Wang, Hong and Song, Yongqiang and Wang, Qian},
  journal={IEEE Access},
  volume={7},
  pages={70253--70264},
  year={2019},
  publisher={IEEE}
}

@book{Rubin1987MI,
author = {Rubin, D.},
title= {Multiple imputation for nonresponse in surveys},
year = {1987},
doi = {10.1002/9780470316696},
address = {New York},
publisher = {John Wiley \& Sons},
}

@article{gasparini2020mixed,
  title={Mixed-effects models for health care longitudinal data with an informative visiting process: A Monte Carlo simulation study},
  author={Gasparini, Alessandro and Abrams, Keith R and Barrett, Jessica K and Major, Rupert W and Sweeting, Michael J and Brunskill, Nigel J and Crowther, Michael J},
  journal={Statistica Neerlandica},
  volume={74},
  number={1},
  pages={5--23},
  year={2020},
  publisher={Wiley Online Library}
}

@article{breslow1974covariance,
  title={Covariance analysis of censored survival data},
  author={Breslow, Norman},
  journal={Biometrics},
  pages={89--99},
  year={1974},
  publisher={JSTOR}
}

@article{tierney1986accurate,
  title={Accurate approximations for posterior moments and marginal densities},
  author={Tierney, Luke and Kadane, Joseph B},
  journal={Journal of the american statistical association},
  volume={81},
  number={393},
  pages={82--86},
  year={1986},
  publisher={Taylor \& Francis}
}

@article{obermeyer2019dissecting,
  title={Dissecting racial bias in an algorithm used to manage the health of populations},
  author={Obermeyer, Ziad and Powers, Brian and Vogeli, Christine and Mullainathan, Sendhil},
  journal={Science},
  volume={366},
  number={6464},
  pages={447--453},
  year={2019},
  publisher={American Association for the Advancement of Science}
}

@article{ellenbogen2024race,
  title={Race and ethnicity and diagnostic testing for common conditions in the acute care setting},
  author={Ellenbogen, Michael I and Weygandt, P Logan and Newman-Toker, David E and Anderson, Andrew and Rim, Nayoung and Brotman, Daniel J},
  journal={JAMA Network Open},
  volume={7},
  number={8},
  pages={e2430306--e2430306},
  year={2024},
  publisher={American Medical Association}
}

@article{goldstein2020labwas,
  title={LabWAS: Novel findings and study design recommendations from a meta-analysis of clinical labs in two independent biobanks},
  author={Goldstein, Jeffery A and Weinstock, Joshua S and Bastarache, Lisa A and Larach, Daniel B and Fritsche, Lars G and Schmidt, Ellen M and Brummett, Chad M and Kheterpal, Sachin and Abecasis, Goncalo R and Denny, Joshua C and others},
  journal={PLoS genetics},
  volume={16},
  number={11},
  pages={e1009077},
  year={2020},
  publisher={Public Library of Science San Francisco, CA USA}
}

@article{dennis2021clinical,
  title={Clinical laboratory test-wide association scan of polygenic scores identifies biomarkers of complex disease},
  author={Dennis, Jessica K and Sealock, Julia M and Straub, Peter and Lee, Younga H and Hucks, Donald and Actkins, Ky’Era and Faucon, Annika and Feng, Yen-Chen Anne and Ge, Tian and Goleva, Slavina B and others},
  journal={Genome medicine},
  volume={13},
  number={1},
  pages={6},
  year={2021},
  publisher={Springer}
}


\section*{Supplementary Material}

\renewcommand{\thesection}{S\arabic{section}}
\renewcommand{\thetable}{S\arabic{table}}
\renewcommand{\thefigure}{S\arabic{figure}}
\setcounter{section}{0}
\setcounter{table}{0}
\setcounter{figure}{0}

\setcounter{page}{1}
\renewcommand{\thepage}{S\arabic{page}}

\setcounter{equation}{0}
\renewcommand{\theequation}{S\arabic{equation}}

\renewcommand{\thelemma}{S\arabic{lemma}}
\setcounter{lemma}{0}

\section{Notation and Model Recap}\label{sec:notation and model recap}

To facilitate the reading of the proofs, we summarize the main notation used in this Supplementary Material.

\begin{longtable}{l p{0.70\textwidth}}
\caption{Summary of notation used in the Supplementary Material.}\label{tab:notation}\\
\hline
\textbf{Symbol} & \textbf{Description} \\
\hline
\endfirsthead

\caption[]{(continued)}\\
\hline
\textbf{Symbol} & \textbf{Description} \\
\hline
\endhead

\hline
\endfoot

\multicolumn{2}{l}{\textit{Visiting Process}} \\
$N_i(t)$ & Counting process for visits of subject $i$ up to time $t$. \\
$C_i$ & Censoring time for subject $i$. \\
$m_i = N_i(C_i)$ & Total number of visits for subject $i$ before censoring. \\
$\lambda_0(t)$, $\Lambda_0(t)$ & Baseline intensity and cumulative baseline intensity. \\
$U_i$ & Shared latent factor, $U_i \sim \mathcal{N}(0,1)$. \\
$\eta_i$ & Multiplicative frailty, $\eta_i = \exp(\mu_0 + \sigma U_i)$ with $\mu_0 = -\sigma^2/2$. \\
$\bm\gamma$ & Regression coefficients for the visiting process. \\
$\nu_i$ & Expected cumulative intensity, $\nu_i = \exp(\bm\gamma^\top \bm{X}_i^{\mathcal V}) \Lambda_0(C_i)$. \\
$(\mu_{U_i}, s_{U_i}^2)$ & EB posterior mean and variance of $U_i \mid (C_i, m_i)$ via Laplace approximation. \\
\hline
\multicolumn{2}{l}{\textit{Observation Process}} \\
$R_i^{\mathcal Y}(t)$ & Indicator that outcome is recorded at visit time $t$. \\
$\omega_i(t; u)$ & Conditional probability of observation given $U_i = U$. \\
$\overline{\omega}_i(t)$ & Marginal observation probability, $\E\{\omega_i(t; U_i) \mid C_i, m_i\}$. \\
$k_i(t)$ & Aggregated probit argument for $\overline{\omega}_i(t) = \Phi(k_i(t))$. \\
$\kappa_i(t)$ & Observation-weighted posterior mean ratio, $\E\{U_i \omega_i(t; U_i) \mid C_i, m_i\}/\overline{\omega}_i(t)$. \\
$(\bm\alpha, \bm\delta, \bm\Sigma_q)$ & Fixed effects, latent factor loadings, and random slope covariance for observation. \\
\hline
\multicolumn{2}{l}{\textit{Outcome Process}} \\
$Y_i(t)$ & Longitudinal outcome for subject $i$ at time $t$. \\
$\bm{X}_i^{\mathcal V}$, $\bm{X}_i^{\mathcal O}(t)$, $\bm{X}_i^{\mathcal Y}(t)$ & Covariates for visiting, observation, and outcome processes. \\
$\bm{Z}_i^{\mathcal O}(t)$, $\bm{Z}_i^{\mathcal Y}(t)$ & Covariates for random effects in observation and outcome models. \\
$\bm{b}_i$ & Random effects in the outcome model, $\bm{b}_i \mid U_i \sim \mathcal{N}(\bm\theta U_i, \bm\Sigma_b)$. \\
$\widetilde{\bm{b}}_i$ & Residual random effect, $\bm{b}_i = \bm\theta U_i + \widetilde{\bm{b}}_i$ with $\widetilde{\bm{b}}_i \sim \mathcal{N}(\bm{0}, \bm\Sigma_b)$. \\
$\beta_0(t)$ & Time-varying baseline function in the outcome model. \\
$\varepsilon_i(t)$ & Measurement error, $\varepsilon_i(t) \sim \mathcal{N}(0, \sigma_\varepsilon^2)$. \\
$\mathcal{A}(t)$ & Cumulative baseline function, $\mathcal{A}(t) = \int_0^t \beta_0(s)\,\dint\Lambda_0(s)$. \\
$\psi = (\bm\beta^\top, \bm\theta^\top)^\top$ & Parameters of interest in the outcome model. \\
\hline
\multicolumn{2}{l}{\textit{Nuisance Parameters}} \\
$\Omega_{\mathcal{V}}$ & Visiting process parameters, $\Omega_{\mathcal{V}} = (\bm\gamma, \Lambda_0(\cdot), \sigma^2)$. \\
$\Omega_{\mathcal{O}}$ & Observation process parameters, $\Omega_{\mathcal{O}} = (\bm\alpha, \bm\delta, \mathrm{vech}(\bm\Sigma_q))$. \\
$\Omega_{\mathrm{EB}}$ & EB posterior parameters, $\Omega_{\mathrm{EB}} = \{(\mu_{U_i}, s_{U_i}^2)\}_{i=1}^n$. \\
$\Omega$ & All nuisance parameters, $\Omega = (\Omega_{\mathcal{V}}, \Omega_{\mathcal{O}}, \Omega_{\mathrm{EB}})$. \\
$\mathcal{G}_{\mathcal{V}}$ & Compact parameter space for $\Omega_{\mathcal{V}}$. \\
\hline
\multicolumn{2}{l}{\textit{Estimating Equation Components}} \\
$\bm{B}_i(t)$ & Compensation covariate, $\bm{B}_i(t) = \kappa_i(t) \bm{Z}_i^{\mathcal Y}(t)$. \\
$p_i(t)$ & Weight function, $p_i(t) = \overline{\omega}_i(t)\,\mathbf{1}(t \leq C_i)\,m_i/\Lambda_0(C_i)$. \\
$\bm{V}_i(t)$ & Uncentered design vector, $\bm{V}_i(t) = (\bm{X}_i^{\mathcal Y}(t)^\top, \bm{B}_i(t)^\top)^\top$. \\
$\overline{\bm{V}}(t)$ & Weighted average, $\overline{\bm{V}}(t) = \sum_j \bm{V}_j(t) p_j(t) / \sum_j p_j(t)$. \\
$\bm{H}_i(t)$ & Centered design vector, $\bm{H}_i(t) = \bm{V}_i(t) - \overline{\bm{V}}(t)$. \\
$M_i(t)$ & Compensated process (mean-zero given $(C_i, m_i)$). \\
\hline
\end{longtable}

\paragraph{Model recap.}
For completeness, we restate the three-component model:

\begin{enumerate}[label=(\roman*), leftmargin=2em]
\item \textbf{Visiting process:} The intensity for subject $i$ is
\[
\lambda_i(t) = \eta_i \exp(\bm\gamma^\top X_i^{\mathcal V}) \lambda_0(t), \quad \text{where } \eta_i = \exp\!\big(\mu_0 + \sigma U_i\big).
\]
Conditional on $(U_i, X_i^{\mathcal V}, C_i)$, the counting process $N_i(\cdot)$ is a non-homogeneous Poisson process (NHPP) with cumulative intensity $\eta_i \exp(\bm\gamma^\top X_i^{\mathcal V}) \Lambda_0(t)$.

\item \textbf{Observation process:} At each visit, the probability that the outcome is recorded is
\[
\omega_i(t; u) = \Phi\!\left\{ \frac{\bm\alpha^\top \bm{X}_i^{\mathcal O} + (\bm\delta^\top \bm{Z}_i^{\mathcal O}) u}{\sqrt{1 + (\bm{Z}_i^{\mathcal O})^\top \bm\Sigma_q \bm{Z}_i^{\mathcal O}}}\right\}.
\]
Here, the explicit random slopes $\bm{q}_i$ (where $\bm{q}_i \mid U_i=u \sim \mathcal{N}(\bm\delta u, \bm\Sigma_q)$) have been integrated out to obtain the conditional probability given $U_i=u$.

\item \textbf{Outcome model:} The longitudinal outcome follows
\[
Y_i(t) = \beta_0(t) + \bm\beta^\top \bm{X}_i^{\mathcal Y}(t) + \bm{b}_i^\top \bm{Z}_i^{\mathcal Y}(t) + \varepsilon_i(t),
\]
where $\bm{b}_i \mid U_i \sim \mathcal{N}(\bm\theta U_i, \bm\Sigma_b)$. We decompose $\bm{b}_i$ as $\bm{b}_i = \bm\theta U_i + \widetilde{\bm{b}}_i$, where $\widetilde{\bm{b}}_i \sim \mathcal{N}(\bm{0}, \bm\Sigma_b)$ represents the residual outcome random effect, which is independent of the observation process random effects given $U_i$ (by Assumption 2).
\end{enumerate}

\section{Preliminary Lemmas}\label{sec:lemmas}

This section establishes the key probabilistic identities used throughout the proofs.

\begin{lemma}[NHPP order-statistics identity]\label{lem:order}
Let $N_i(t)$ be a non-homogeneous Poisson process on $[0,\tau]$ with intensity $\lambda_0(t)$ and cumulative intensity $\Lambda_0(t) = \int_0^t \lambda_0(s)\,ds$. Let $C_i \le \tau$ be a (possibly random) censoring time, and let $m_i = N_i(C_i)$ be the total number of jumps before censoring. Conditionally on $(C_i, m_i)$, the jump times of $N_i$ in $[0, C_i]$ have the same joint distribution as the order statistics of $m_i$ i.i.d.\ random variables with density $\lambda_0(t)/\Lambda_0(C_i)$ on $[0, C_i]$. In particular,
\begin{equation}\label{eq:order-stats}
\E\!\left\{ \dint N_i(t) \mid C_i, m_i\right\} = \mathbf{1}(t \le C_i)\,m_i\,\frac{\dint\Lambda_0(t)}{\Lambda_0(C_i)} .
\end{equation}
\end{lemma}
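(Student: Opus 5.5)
The plan is to first condition on the censoring time, then compute the joint density of the jump times given their number, and finally take expectations of increments.

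\textbf{Step 1: reduce to a deterministic censoring time.} Since $C_i$ is independent of the point process (noninformative censoring, Assumption~\ref{asmp:censoring}), conditioning on $C_i=c$ leaves the law of $N_i$ unchanged. So it suffices to fix $c\le\tau$ and work with an NHPP restricted to $[0,c]$.

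\textbf{Step 2: compute the conditional joint density of the jump times.} For fixed $c$ I would use the standard NHPP likelihood. The event that there are exactly $m$ jumps, located in infinitesimal neighbourhoods of $t_1<\dots<t_m\le c$, has probability
\[
\exp\{-\Lambda_0(c)\}\prod_{j=1}^m \lambda_0(t_j)\,\dint t_j .
\]
This follows from independent increments: each small interval contains one event with probability $\lambda_0(t_j)\dint t_j$, and the rest of $[0,c]$ contains none with probability $\exp\{-\Lambda_0(c)\}$. Dividing by $\pr\{N(c)=m\}=\exp\{-\Lambda_0(c)\}\Lambda_0(c)^m/m!$ gives the conditional density
\[
m!\prod_{j=1}^m \frac{\lambda_0(t_j)}{\Lambda_0(c)}\quad\text{on } \{t_1<\dots<t_m\le c\}.
\]
This is exactly the density of the order statistics of $m$ i.i.d.\ draws from $\lambda_0(t)/\Lambda_0(c)$ on $[0,c]$. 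That proves the distributional claim. An alternative route is the time change $s=\Lambda_0(t)$, which maps the process to a homogeneous Poisson process with the classical uniform order-statistics property; I would mention it as a check. Either route needs $\Lambda_0(c)>0$. When $m=0$ the statement is vacuous.

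\textbf{Step 3: derive the increment identity \eqref{eq:order-stats}.} Since $N_i$ counts exactly the jump times, $\dint N_i(t)=\sum_{j}\mathbf 1\{T_j\in \dint t\}$. Because of the symmetry, I can replace the ordered times by the unordered i.i.d.\ sample $S_1,\dots,S_m$, so $\dint N_i(t)=\sum_j \mathbf 1\{S_j\in\dint t\}$. Linearity of expectation then gives
\[
\E\{\dint N_i(t)\mid C_i,m_i\}=m_i\,\mathbf 1(t\le C_i)\,\frac{\dint\Lambda_0(t)}{\Lambda_0(C_i)} .
\]
For $t>C_i$ no jumps are counted, which is where the indicator comes from.

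\textbf{Main obstacle.} The only delicate point is the conditioning on a random $C_i$, which requires that $C_i$ be independent of $N_i$; Step 1 handles this. A second concern is the paper's setting, where the intensity is $\eta_i e^{\bm\gamma^\top\bm X_i^{\mathcal V}}\lambda_0$ rather than $\lambda_0$. I would remark that these multiplicative factors are constant in $t$, so they cancel in the normalization $\lambda(t)/\Lambda(c)$. The identity therefore also holds conditionally on $(U_i,\bm X_i^{\mathcal V})$, and hence after integrating them out given $(C_i,m_i)$.
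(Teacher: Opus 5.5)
Your proof is correct and follows the same route as the paper. Both write the NHPP likelihood of the jump times given their number, recognize it as the order-statistics density of i.i.d.\ draws from $\lambda_0(t)/\Lambda_0(C_i)$ on $[0,C_i]$, and obtain \eqref{eq:order-stats} from the one-dimensional marginals. Your argument is in fact somewhat more complete than the paper's sketch in three places: you compute the normalizing constant explicitly; you use noninformative censoring to justify conditioning on a random $C_i$, a hypothesis the lemma's statement leaves implicit; and you note that the factor $\eta_i\exp(\bm\gamma^\top\bm X_i^{\mathcal V})$ cancels, which is the content of Remark~\ref{rem:frailty-cancel}.
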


\begin{proof}
This is a standard property of non-homogeneous Poisson processes; see, for example, \citet[Chapter~5]{aalen2008survival}. Conditionally on $(C_i, m_i)$, the joint density of the $m_i$ jump times $(T_{i1}, \ldots, T_{im_i})$ is proportional to
\[
\prod_{j=1}^{m_i} \lambda_0(T_{ij})\,\exp\!\big\{-\Lambda_0(C_i)\big\} \propto \prod_{j=1}^{m_i} \lambda_0(T_{ij}),
\]
restricted to $0 < T_{i1} < \cdots < T_{im_i} < C_i$. This is exactly the joint density of the order statistics of $m_i$ i.i.d.\ draws from the density $\lambda_0(t)/\Lambda_0(C_i)$ on $[0, C_i]$. Integrating out all but one time gives the marginal expectation:
\[
\E\!\left\{ \dint N_i(t) \mid C_i, m_i\right\} = m_i \cdot \frac{\lambda_0(t)\,dt}{\Lambda_0(C_i)} \cdot \mathbf{1}\{t \le C_i\} = \mathbf{1}(t \leq C_i)\,m_i\,\frac{\dint\Lambda_0(t)}{\Lambda_0(C_i)}. 
\]
\end{proof}

\begin{remark}[Independence from the frailty]\label{rem:frailty-cancel}
An important consequence of Lemma~\ref{lem:order} is that, conditional on $(C_i, m_i)$, the distribution of visit times does \emph{not} depend on the frailty $\eta_i$ or the covariates $X_i^{\mathcal V}$. Although these factors affect the \emph{number} of visits $m_i$, once $m_i$ is observed, the \emph{timing} of visits follows a universal distribution determined only by the baseline intensity $\lambda_0(t)$.
\end{remark}

\begin{lemma}[Martingale compensation]\label{lem:compensation}
Let $M_i(t)$ be a (local) martingale with respect to a filtration $\{\mathcal{F}_t\}$, and let $H_i(t)$ be a predictable and integrable process. Then the stochastic integral $\int_0^\cdot H_i(s)\,\dint M_i(s)$ is also a (local) martingale and, in particular,
\[
\E\!\left\{ \int_0^\tau H_i(t)\,\dint M_i(t)\right\} = 0,
\]
whenever the expectation exists.
\end{lemma}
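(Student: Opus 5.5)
The plan is to treat this as the standard fact that stochastic integrals of predictable, integrable integrands against martingales are martingales. I will first prove it for simple predictable integrands and then extend by approximation; localization handles the local-martingale case. Throughout, $M_i$ is a counting-process martingale of the form $M_i = N_i^{*} - A_i$ with $A_i$ predictable and of finite variation. The relevant reference is \citet[Chapter~2]{andersen2012statistical} or \citet[Chapter~2]{aalen2008survival}.

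\textbf{Step 1 (simple integrands).} Suppose $H_i(t)=\sum_{k} \xi_k \mathbf{1}(s_k< t\le s_{k+1})$, where each $\xi_k$ is bounded and $\mathcal F_{s_k}$-measurable. Then
\[
\int_0^t H_i\,\dint M_i=\sum_k \xi_k\{M_i(s_{k+1}\wedge t)-M_i(s_k\wedge t)\}.
\]
For $u<t$, the martingale property holds term by term. On the part of an increment lying after $u$, we condition on $\mathcal F_{s_k\vee u}$, pull out $\xi_k$, and use $\E\{M_i(b)-M_i(a)\mid\mathcal F_a\}=0$.

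\textbf{Step 2 (extension).} For a general predictable $H_i$ with $\E\int_0^\tau H_i^2\,\dint\langle M_i\rangle<\infty$, I approximate $H_i$ by simple predictable processes in $L^2(\dint\langle M_i\rangle\times \dint P)$. The Itô isometry
\[
\E\Bigl\{\int H\,\dint M_i\Bigr\}^2=\E\int H^2\,\dint\langle M_i\rangle
\]
shows that the integrals converge in $L^2$, and $L^2$ limits of martingales are martingales. If instead only $\E\int_0^\tau|H_i|\,\dint(N_i^*+A_i)<\infty$ is available, which is the natural integrability notion here, I use a different route. Write $\int H\,\dint M_i=\int H\,\dint N_i^*-\int H\,\dint A_i$. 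Predictability of $H$ gives $\E\int H\,\dint N_i^*=\E\int H\,\dint A_i$, by the defining property of the compensator, first for $H\ge0$ via monotone class and then for $H=H^+-H^-$. This gives mean zero directly.

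\textbf{Step 3 (local case).} I take a localizing sequence $\tau_n\uparrow\infty$ that makes both $M_i^{\tau_n}$ a martingale and $H_i\mathbf{1}(\cdot\le\tau_n)$ suitably bounded. Then $\int_0^{\cdot\wedge\tau_n}H_i\,\dint M_i$ is a martingale by Step 2, which gives the local martingale claim.

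\textbf{Step 4 (the mean-zero identity).} The displayed identity $\E\int_0^\tau H_i\,\dint M_i=0$ then follows from $\E\{\int_0^\tau H_i\,\dint M_i\}=\E\{\int_0^0 H_i\,\dint M_i\}=0$ in the true martingale case. In the local case I pass to the limit $n\to\infty$ by dominated convergence, using the assumption that the expectation exists together with the bound $|\int_0^{\tau\wedge\tau_n}H\,\dint M_i|\le\int_0^\tau|H|\,\dint(N_i^*+A_i)$.

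The main obstacle is precisely this last limit: a local martingale need not have mean zero. I will resolve it using the integrability hypothesis in the finite-variation form of Step 2, which supplies the dominating function. This is the setting in which the lemma is applied in the paper, with bounded covariates and finitely many jumps on $[0,\tau]$.
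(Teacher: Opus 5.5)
Your argument is correct, and it takes a genuinely fuller route than the paper. The paper's proof simply defers to the standard counting-process result and asserts that $\int_0^\cdot H_i\,\dint M_i$ is a (local) martingale ``with mean zero at each fixed time.'' You rebuild that result from simple integrands. You then extend it either by the It\^o isometry in $L^2(\dint\langle M_i\rangle\times\dint P)$ or, for $M_i=N_i^*-A_i$, by the dual-predictable-projection identity $\E\int H\,\dint N_i^*=\E\int H\,\dint A_i$.

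More importantly, you catch what the paper's one-line proof glosses over. For a local martingale, existence of $\E\{\int_0^\tau H_i\,\dint M_i\}$ does not force it to vanish. By Dudley's representation theorem there is a predictable $H$ with $\int_0^1H^2\,\dint t<\infty$ a.s.\ and $\int_0^1 H\,\dint B=1$ a.s.\ for a Brownian motion $B$. So the mean-zero claim needs a genuine integrability hypothesis, such as $\E\int_0^\tau H_i^2\,\dint\langle M_i\rangle<\infty$ or $\E\int_0^\tau|H_i|\,\dint(N_i^*+A_i)<\infty$. You should state explicitly that this is the reading of ``integrable'' you are proving, since the literal statement is false.

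Under the $L^1$ hypothesis your Step~2 already finishes the job. Applying the compensator identity to $H_i\mathbf{1}_F\mathbf{1}_{(s,t]}$ with $F\in\mathcal{F}_s$ gives the martingale property and the mean-zero identity at once. Steps~3--4 are therefore unnecessary, and their ``suitably bounded'' localization is not always available, since predictable integrands need not be locally bounded. The paper's citation buys brevity; yours buys a correctly qualified statement.

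One caution about your closing sentence. In the paper's applications the integrands depend on $m_i=N_i(C_i)$ (through $p_i$ and $\kappa_i$) and on the mark $R_i^{\mathcal Y}(t)$. Neither is predictable for the natural filtration of $N_i$, so those uses are not covered by this lemma as stated. The mean-zero property the paper actually establishes there is the conditional one in Proposition~\ref{prop:mean-zero}, obtained from the order-statistics identity rather than from martingale theory.
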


\begin{proof}
This is a standard result in martingale theory \citep{lin2001semiparametric}. Since $H_i$ is predictable and integrable with respect to $M_i$, the stochastic integral is a (local) martingale with mean zero at each fixed time $t$. Taking $t = \tau$ yields the stated equality.
\end{proof}

\begin{lemma}[Probit-normal convolution]\label{lem:probit}
Let $X \sim \mathcal{N}(\mu_X, \sigma_X^2)$ and $U \sim \mathcal{N}(0, 1)$ be independent. For constants $a, b \in \mathbb{R}$ and $d > 0$, define $Y = a + bX + dU$. Then $Y \sim \mathcal{N}(a + b\mu_X, b^2\sigma_X^2 + d^2)$. Let
\[
k = \frac{a + b\mu_X}{\sqrt{b^2\sigma_X^2 + d^2}}.
\]
Then
\begin{align}
\E\!\left\{ \Phi\!\left(\frac{a + bX}{d}\right)\right\} &= \Phi(k), \label{eq:probit1}\\[4pt]
\E\!\left\{ X\,\Phi\!\left(\frac{a + bX}{d}\right)\right\} &= \mu_X\,\Phi(k) + \frac{b\sigma_X^2}{\sqrt{b^2\sigma_X^2 + d^2}}\,\phi(k), \label{eq:probit2}\\[4pt]
\frac{\E\!\left\{ X\,\Phi\!\left(\frac{a + bX}{d}\right)\right\} }{\E\!\left\{ \Phi\!\left(\frac{a + bX}{d}\right)\right\} } &= \mu_X + \frac{b\sigma_X^2}{\sqrt{b^2\sigma_X^2 + d^2}}\,\frac{\phi(k)}{\Phi(k)}. \label{eq:probit3}
\end{align}
\end{lemma}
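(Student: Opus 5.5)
The plan is to reduce all three identities to a single Gaussian computation. We represent the probit as a probability over an auxiliary standard normal. Take $U\sim\mathcal N(0,1)$ independent of $X$, and use $d>0$. Then
\[
\Phi\!\left(\frac{a+bX}{d}\right)=\pr\!\left(-U\le \frac{a+bX}{d}\,\Big|\,X\right)=\pr\!\left(a+bX+dU\ge 0\mid X\right),
\]
using the symmetry $-U\overset{d}{=}U$. Next we take the expectation over $X$ by the tower property. This gives $\E\{\Phi((a+bX)/d)\}=\pr(Y\ge 0)$ with $Y=a+bX+dU$. Since $Y$ is a linear combination of independent normals, $Y\sim\mathcal N(a+b\mu_X,\,b^2\sigma_X^2+d^2)$. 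Standardizing gives $\pr(Y\ge0)=\Phi(k)$, which proves \eqref{eq:probit1}.

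For \eqref{eq:probit2}, we write $\E\{X\,\Phi((a+bX)/d)\}=\E\{X\,\mathbb I(Y\ge 0)\}$, again by the tower property. The pair $(X,Y)$ is bivariate normal with $\Cov(X,Y)=b\sigma_X^2$. We therefore decompose
\[
X=\mu_X+\frac{b\sigma_X^2}{s^2}(Y-\E Y)+W,\qquad s^2:=b^2\sigma_X^2+d^2,
\]
where $W$ is mean-zero normal and independent of $Y$. The $W$ term contributes nothing, and the constant term contributes $\mu_X\Phi(k)$. For the remaining term, we write $Y-\E Y=sZ$ with $Z\sim\mathcal N(0,1)$. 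The event $\{Y\ge0\}$ is then $\{Z\ge -k\}$, so we need $\E\{Z\,\mathbb I(Z\ge -k)\}=\phi(-k)=\phi(k)$. This term therefore equals $(b\sigma_X^2/s^2)\cdot s\,\phi(k)=b\sigma_X^2\phi(k)/s$, which yields \eqref{eq:probit2}.

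As a cross-check, the same result follows from Stein's lemma: $\E\{(X-\mu_X)g(X)\}=\sigma_X^2\E\{g'(X)\}$ with $g(x)=\Phi((a+bx)/d)$. Here $\E\{g'(X)\}=(b/d)\E\{\phi((a+bX)/d)\}$, and this reduces to $b\,\phi(k)/s$ by the Gaussian convolution of densities. Finally, \eqref{eq:probit3} is simply the ratio of \eqref{eq:probit2} to \eqref{eq:probit1}, which is well defined because $\Phi(k)>0$.

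The only delicate step is the bivariate-normal regression bookkeeping: getting the factor $b\sigma_X^2/s$ right, including the sign convention in $\pr(-U\le\cdot)$ and the truncated-mean identity $\E\{Z\,\mathbb I(Z\ge -k)\}=\phi(k)$. Everything else is routine. It is also worth noting the application: in Stage 2 and Stage 3 one takes $X=U_i\mid(C_i,m_i)\sim\mathcal N(\mu_{U_i},s_{U_i}^2)$, $b=\bm\delta^\top\bm Z_i^{\mathcal O}(t)$, $a=\bm\alpha^\top\bm X_i^{\mathcal O}(t)$, and $d=\{1+(\bm Z_i^{\mathcal O})^\top\bm\Sigma_q\bm Z_i^{\mathcal O}\}^{1/2}$. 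This recovers the stated forms of $k_i(t)$, $\overline\omega_i(t)$ and $\kappa_i(t)$.
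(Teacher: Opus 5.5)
Your proof is correct and takes essentially the same route as the paper's. You write the probit as $\pr(Y\ge 0\mid X)$ using an auxiliary standard normal and standardize $Y$ for \eqref{eq:probit1}. For \eqref{eq:probit2} you regress $X$ on the standardized $Y$ and use $\int_{-k}^{\infty} z\,\phi(z)\,dz=\phi(k)$, and \eqref{eq:probit3} follows as the ratio.

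Your explicit appeal to $-U\overset{d}{=}U$ cleans up a step the paper glosses over: it writes $\pr(dU\le a+bX)=\pr(Y\ge 0)$ with $Y=a+bX+dU$ without comment. The Stein's-lemma cross-check is a valid independent confirmation but is not needed.
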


\begin{proof}
Note that $\Phi((a + bX)/d) = \pr\{U \le (a + bX)/d \mid X\}$. By the law of total expectation,
\[
\E\!\left\{\Phi\!\left(\frac{a + bX}{d}\right)\right\} = \E\!\left\{\pr\!\left(U \le \frac{a + bX}{d} \,\middle|\, X\right)\right\} = \pr (dU \le a + bX) = \pr(Y \ge 0).
\]
Since $Y$ is normal with mean $a + b\mu_X$ and variance $b^2\sigma_X^2 + d^2$, standardization yields $\pr(Y \ge 0) = \pr(Z \ge -k) = \Phi(k)$ for $Z \sim \mathcal{N}(0,1)$, proving \eqref{eq:probit1}.

For \eqref{eq:probit2}, write $\E\{ X\,\Phi((a + bX)/d)\} = \E \{ X\,\mathbf{1}\{Y \ge 0\}\}$. Because $(X, Y)$ is jointly normal, we can express $X$ in terms of the standardized variable $Z = \left( Y - \E(Y) \right)/\sqrt{\Var(Y)}$:
\[
\E (X \mid Z = z) = \mu_X + \Cov(X, Z)\,z.
\]
Computing the covariance:
\[
Z = \frac{b(X - \mu_X) + dU}{\sqrt{b^2\sigma_X^2 + d^2}}, \quad \text{so} \quad \Cov(X, Z) = \frac{b\sigma_X^2}{\sqrt{b^2\sigma_X^2 + d^2}}.
\]
Now integrate over $z$ in the region where $Y \ge 0$, i.e., $Z \ge -k$:
\begin{align*}
\E \{ X\,\mathbf{1}\{Y \ge 0\} \} &= \int_{-k}^\infty \E ( X \mid Z = z ) \,\phi(z)\,dz \\
&= \int_{-k}^\infty \left( \mu_X + \frac{b\sigma_X^2}{\sqrt{b^2\sigma_X^2 + d^2}}\,z\right) \phi(z)\,dz \\
&= \mu_X \int_{-k}^\infty \phi(z)\,dz + \frac{b\sigma_X^2}{\sqrt{b^2\sigma_X^2 + d^2}} \int_{-k}^\infty z\,\phi(z)\,dz.
\end{align*}
Using $\int_{-k}^\infty \phi(z)\,dz = \Phi(k)$ and $\int_{-k}^\infty z\,\phi(z)\,dz = \phi(k)$, we obtain \eqref{eq:probit2}. Dividing \eqref{eq:probit2} by \eqref{eq:probit1} yields \eqref{eq:probit3}.
\end{proof}

\begin{corollary}[Closed-form approximation for $\kappa_i(t)$]\label{cor:kappa}
Under the model specification, the observation-weighted posterior mean ratio $\kappa_i(t)$ has the closed form:
\[
\kappa_i(t) = \frac{\E\{U_i\,\omega_i(t; U_i) \mid C_i, m_i\}}{\E\{\omega_i(t; U_i) \mid C_i, m_i\}} = \mu_{U_i} + \frac{(\bm\delta^\top \bm{Z}_i^{\mathcal O})\,s_{U_i}^2}{\sqrt{1 + (\bm{Z}_i^{\mathcal O})^\top \bm\Sigma_q \bm{Z}_i^{\mathcal O} + (\bm\delta^\top \bm{Z}_i^{\mathcal O})^2 s_{U_i}^2}}\,\frac{\phi \{ k_i(t) \} }{\Phi \{ k_i(t) \} },
\]
where $(\mu_{U_i}, s_{U_i}^2)$ are the EB posterior mean and variance of $U_i \mid (C_i, m_i)$, and
\[
k_i(t) = \frac{\bm\alpha^\top \bm{X}_i^{\mathcal O} + (\bm\delta^\top \bm{Z}_i^{\mathcal O})\,\mu_{U_i}}{\sqrt{1 + (\bm{Z}_i^{\mathcal O})^\top \bm\Sigma_q \bm{Z}_i^{\mathcal O} + (\bm\delta^\top \bm{Z}_i^{\mathcal O})^2 s_{U_i}^2}}.
\]
\end{corollary}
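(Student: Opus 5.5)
The corollary follows from Lemma~\ref{lem:probit} once $\omega_i(t;U_i)$ is written in the form $\Phi\{(a+bX)/d\}$ and the posterior of $U_i$ is replaced by its Laplace (Gaussian) approximation from Lemma~\ref{lem:laplace}. Specifically, I would take $X = U_i \mid (C_i,m_i) \sim \mathcal{N}(\mu_{U_i}, s_{U_i}^2)$. From the model recap, with $\bm{q}_i$ integrated out given $U_i=u$, we have
\[
\omega_i(t;u)=\Phi\!\left\{\frac{\bm\alpha^\top \bm{X}_i^{\mathcal O}+(\bm\delta^\top\bm{Z}_i^{\mathcal O})u}{d}\right\},\qquad d:=\sqrt{1+(\bm{Z}_i^{\mathcal O})^\top\bm\Sigma_q\bm{Z}_i^{\mathcal O}}.
\]
This representation itself comes from Lemma~\ref{lem:probit}, via \eqref{eq:probit1} applied to $\bm q_i^\top \bm Z_i^{\mathcal O}$ given $U_i$, or directly from the latent-normal threshold representation. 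So the identification is $a=\bm\alpha^\top\bm X_i^{\mathcal O}$, $b=\bm\delta^\top \bm Z_i^{\mathcal O}$, $d$ as above, $\mu_X=\mu_{U_i}$, and $\sigma_X^2=s_{U_i}^2$. I need $d>0$; this holds automatically because $\bm\Sigma_q$ is positive semidefinite, so $d\ge 1$.

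Next comes the algebra of plugging in. Since $b^2\sigma_X^2+d^2=1+(\bm{Z}_i^{\mathcal O})^\top\bm\Sigma_q\bm{Z}_i^{\mathcal O}+(\bm\delta^\top\bm Z_i^{\mathcal O})^2 s_{U_i}^2$, the quantity $k$ of Lemma~\ref{lem:probit} coincides with $k_i(t)$. Then \eqref{eq:probit1} gives the denominator $\overline{\omega}_i(t)=\Phi\{k_i(t)\}$, \eqref{eq:probit2} gives the numerator $\E\{U_i\omega_i(t;U_i)\mid C_i,m_i\}$, and their ratio \eqref{eq:probit3} is exactly the displayed formula for $\kappa_i(t)$. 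This is the same computation that produces $\overline\omega_i(t)$ in Stage~2, so the two are mutually consistent.

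The only real subtlety is that the identity is exact under the Gaussian posterior but only approximate for the true posterior of $U_i\mid(C_i,m_i)$, which is not normal under the lognormal–Poisson model. This is why the statement is labeled a closed-form \emph{approximation}. I would state the corollary as an equality with $U_i\mid(C_i,m_i)$ replaced by $\mathcal N(\mu_{U_i},s_{U_i}^2)$, and note that $\omega_i(t;\cdot)$ and $u\,\omega_i(t;\cdot)$ are smooth with at most linear growth. The error relative to the exact posterior expectations is therefore controlled by the Laplace approximation error in Lemma~\ref{lem:laplace}. Because $\Phi\{k_i(t)\}$ is bounded away from zero on compact covariate and parameter sets, that error carries through to the ratio. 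The consistency proof is where this needs to be made asymptotically negligible; the corollary itself only needs it flagged.
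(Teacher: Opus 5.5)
Your proposal is correct and follows the paper's proof: apply Lemma~\ref{lem:probit} with $X=U_i\mid(C_i,m_i)\sim\mathcal{N}(\mu_{U_i},s_{U_i}^2)$, $a=\bm\alpha^\top\bm X_i^{\mathcal O}$, $b=\bm\delta^\top\bm Z_i^{\mathcal O}$ and $d=\{1+(\bm Z_i^{\mathcal O})^\top\bm\Sigma_q\bm Z_i^{\mathcal O}\}^{1/2}$, then read off \eqref{eq:probit3}. Your extra remarks---that $d\ge 1$, and that the identity is exact only for the Gaussian Laplace surrogate of the posterior---are sound clarifications the paper leaves implicit; transferring the error to the exact posterior would need more than the two-moment bounds in Lemma~\ref{lem:laplace}, but you rightly defer that point.
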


\begin{proof}
Apply Lemma~\ref{lem:probit} with $X = U_i \mid (C_i, m_i) \sim \mathcal{N}(\mu_{U_i}, s_{U_i}^2)$, $a = \bm\alpha^\top \bm{X}_i^{\mathcal O}$, $b = \bm\delta^\top \bm{Z}_i^{\mathcal O}$, and $d = \sqrt{1 + (\bm{Z}_i^{\mathcal O})^\top \bm\Sigma_q \bm{Z}_i^{\mathcal O}}$. Then \eqref{eq:probit3} gives the stated formula.
\end{proof}

\section{Mean-Zero Property of the Compensated Process}\label{sec:mean-zero}

This section establishes that the compensated process $M_i(t)$ defined in the manuscript satisfies $\E\{M_i(t) \mid C_i, m_i\} = 0$ at the true parameter values. This is the foundational result enabling consistent estimation.

\begin{proposition}[Conditional mean-zero property]\label{prop:mean-zero}
Under the model specification in Section~\ref{sec:notation}, define the compensated process
\begin{equation}\label{eq:Mi-def}
M_i(t) = \int_0^t \big\{Y_i(s) - \bm\beta_0^\top \bm{X}_i^{\mathcal Y}(s) - \bm\theta_0^\top B_{i,0}(s)\big\}\,R_i^{\mathcal Y}(s)\,\dint N_i(s) - \int_0^t p_{i,0}(s)\,\dint \mathcal{A}_0(s),
\end{equation}
where $B_{i,0}(s) = \kappa_{i,0}(s)\,\bm{Z}_i^{\mathcal Y}(s)$, $p_{i,0}(s) = \overline{\omega}_{i,0}(s)\,\mathbf{1}(s \leq C_i)\,m_i/\Lambda_0(C_i)$, and the subscript ``$0$'' denotes evaluation at the true parameter values. Then
\[
\E\big\{M_i(t) \mid C_i, m_i\big\} = 0 \quad \text{for all } t \in [0, \tau].
\]
\end{proposition}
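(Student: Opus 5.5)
We prove the claim by computing the conditional expectation of the first integral in \eqref{eq:Mi-def} increment by increment and showing that it equals $p_{i,0}(s)\,\dint\mathcal{A}_0(s)$ for every $s$. Integrating over $[0,t]$ then gives the result. We fix $s\le C_i$ and compute
\[
\E\big[\{Y_i(s)-\bm\beta_0^\top\bm X_i^{\mathcal Y}(s)\}R_i^{\mathcal Y}(s)\,\dint N_i(s)\mid C_i,m_i\big]
\]
by iterated conditioning. We first condition on $(U_i,C_i,m_i)$ together with the visit-time configuration, and then integrate $U_i$ out against its posterior given $(C_i,m_i)$. The covariates are treated as fixed throughout, and for $s>C_i$ both integrands vanish.

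\textbf{Step 1: condition on $U_i$.} Write $Y_i(s)-\bm\beta_0^\top\bm X_i^{\mathcal Y}(s)=\beta_0(s)+\bm\theta_0^\top\bm Z_i^{\mathcal Y}(s)U_i+\widetilde{\bm b}_i^\top\bm Z_i^{\mathcal Y}(s)+\varepsilon_i(s)$. By Assumption~\ref{asmp:cond_indep} and the model recap, $\widetilde{\bm b}_i$ and $\varepsilon_i(s)$ are mean zero and independent of $(N_i,R_i^{\mathcal Y})$ given $U_i$. Their contribution therefore vanishes, and censoring is handled by Assumption~\ref{asmp:censoring}.

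Given $U_i$ and a visit at $s$, the indicator $R_i^{\mathcal Y}(s)$ has mean $\omega_i(s;U_i)$ and is conditionally independent of the visit timing. Given $(U_i,C_i,m_i)$, the process is an NHPP with cumulative intensity proportional to $\Lambda_0$. Lemma~\ref{lem:order} (and Remark~\ref{rem:frailty-cancel}) then gives $\E\{\dint N_i(s)\mid U_i,C_i,m_i\}=m_i\,\dint\Lambda_0(s)/\Lambda_0(C_i)$, which is free of $U_i$. This is the key point: conditional on $m_i$, visit timing carries no further information about $U_i$. Hence
\[
\E[\cdot\mid U_i,C_i,m_i]=\{\beta_0(s)+\bm\theta_0^\top\bm Z_i^{\mathcal Y}(s)U_i\}\,\omega_i(s;U_i)\,\frac{m_i\,\dint\Lambda_0(s)}{\Lambda_0(C_i)}.
\]

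\textbf{Step 2: integrate over $U_i\mid(C_i,m_i)$.} By definition, $\E\{\omega_i(s;U_i)\mid C_i,m_i\}=\overline\omega_{i,0}(s)$. Also by definition, $\E\{U_i\omega_i(s;U_i)\mid C_i,m_i\}=\kappa_{i,0}(s)\,\overline\omega_{i,0}(s)$, with its closed form given by Corollary~\ref{cor:kappa}. Thus the expectation equals
\[
\overline\omega_{i,0}(s)\frac{m_i}{\Lambda_0(C_i)}\big\{\beta_0(s)\,\dint\Lambda_0(s)+\bm\theta_0^\top\bm Z_i^{\mathcal Y}(s)\kappa_{i,0}(s)\,\dint\Lambda_0(s)\big\}.
\]
The first term is $p_{i,0}(s)\,\dint\mathcal A_0(s)$.

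\textbf{Step 3: cancel the compensation term.} The second term has to cancel against $\E\{\bm\theta_0^\top B_{i,0}(s)R_i^{\mathcal Y}(s)\,\dint N_i(s)\mid C_i,m_i\}$. Here $B_{i,0}(s)=\kappa_{i,0}(s)\bm Z_i^{\mathcal Y}(s)$ is a function of $(C_i,m_i)$ and covariates only, so it factors out. Repeating the computation of Step 1 without the $U_i$ factor gives
\[
\E\{R_i^{\mathcal Y}(s)\,\dint N_i(s)\mid C_i,m_i\}=\overline\omega_{i,0}(s)\,m_i\,\dint\Lambda_0(s)/\Lambda_0(C_i).
\]
The two terms therefore match exactly. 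Combining the steps,
\[
\E\big[\{Y_i-\bm\beta_0^\top\bm X_i^{\mathcal Y}-\bm\theta_0^\top B_{i,0}\}R_i^{\mathcal Y}\,\dint N_i(s)\mid C_i,m_i\big]=p_{i,0}(s)\,\dint\mathcal A_0(s),
\]
and integrating over $[0,t]$ (Fubini) yields $\E\{M_i(t)\mid C_i,m_i\}=0$.

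\textbf{Main obstacle.} The delicate step is justifying the factorization in Step 1, namely that, jointly given $(U_i,C_i,m_i)$, the visit at $s$, the observation indicator, and $U_i$ interact only through $\omega_i(s;U_i)$, and that the $\dint N_i$ expectation is free of $U_i$. I would argue this by writing the joint conditional density of the jump times given $(U_i,C_i,m_i)$ as in the proof of Lemma~\ref{lem:order}. The factor $\eta_i^{m_i}e^{-\eta_i\nu_i}$ depends on $U_i$ only through $m_i$, so $U_i$ is conditionally independent of the visit times given $(C_i,m_i)$. Consequently the posterior of $U_i$ given $(C_i,m_i)$ equals its posterior given the full visit history, which legitimizes using $(\mu_{U_i},s_{U_i}^2)$.

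If the Laplace approximation is used in place of the exact posterior, the identity holds with the exact posterior moments; the approximate version differs only by the Laplace error, which is handled separately in the consistency proof.
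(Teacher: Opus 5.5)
Your proof is correct and follows essentially the same route as the paper. You split the residual into the baseline, latent, residual random-effect and error pieces, and you use Lemma~\ref{lem:order} to get $\E\{R_i^{\mathcal Y}(s)\,\dint N_i(s)\mid C_i,m_i,U_i\}=\omega_i(s;U_i)\,m_i\,\dint\Lambda_0(s)/\Lambda_0(C_i)$; averaging over $U_i\mid(C_i,m_i)$ and invoking the definitions of $\overline{\omega}_{i,0}$ and $\kappa_{i,0}$ then yields $p_{i,0}(s)\,\dint\mathcal{A}_0(s)$ and the exact cancellation of the compensation term. Two of your additions are sound refinements that the paper leaves implicit: the factorization argument that $U_i$ is conditionally independent of the visit times given $(C_i,m_i)$, and the caveat that the identity is exact only with the exact posterior moments, not their Laplace approximations.
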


\begin{proof}
The proof proceeds by decomposing the outcome residual and computing each term's conditional expectation separately.

\paragraph{Step 1: Decomposition of the outcome residual.}
Under the true model:
\[
Y_i(s) = \beta_0(s) + \bm\beta_0^\top \bm{X}_i^{\mathcal Y}(s) + \bm{b}_i^\top \bm{Z}_i^{\mathcal Y}(s) + \varepsilon_i(s),
\]
where $\bm{b}_i = \bm\theta_0 U_i + \widetilde{\bm{b}}_i$, and $\widetilde{\bm{b}}_i \sim \mathcal{N}(\bm{0}, \bm\Sigma_b)$ is the residual outcome random effect. Substituting into the residual:
\begin{align}
&Y_i(s) - \bm\beta_0^\top \bm{X}_i^{\mathcal Y}(s) - \bm\theta_0^\top B_{i,0}(s) \nonumber\\
&= \beta_0(s) + \bm\theta_0^\top U_i \bm{Z}_i^{\mathcal Y}(s) + \widetilde{\bm{b}}_i^\top \bm{Z}_i^{\mathcal Y}(s) + \varepsilon_i(s) - \bm\theta_0^\top \kappa_{i,0}(s) \bm{Z}_i^{\mathcal Y}(s) \nonumber\\
&= \underbrace{\beta_0(s)}_{\text{(I)}} + \underbrace{\bm\theta_0^\top \left\{ U_i - \kappa_{i,0}(s) \right\} \bm{Z}_i^{\mathcal Y}(s)}_{\text{(II)}} + \underbrace{\widetilde{\bm{b}}_i^\top \bm{Z}_i^{\mathcal Y}(s)}_{\text{(III)}} + \underbrace{\varepsilon_i(s)}_{\text{(IV)}}. \label{eq:decomp}
\end{align}
We analyze each of the four terms separately.

\paragraph{Step 2: Joint conditional expectation structure.}
By Lemma~\ref{lem:order}, given $(C_i, m_i, U_i)$, the visit times are order statistics from the density $\lambda_0(t)/\Lambda_0(C_i)$ on $[0, C_i]$. The observation indicator $R_i^{\mathcal Y}(s)$ at each visit is conditionally Bernoulli with
\[
\E\big\{R_i^{\mathcal Y}(s) \mid C_i, m_i, U_i\big\} = \omega_i(s; U_i).
\]
Here, $\omega_i(s; U_i)$ is obtained by integrating out the random effects $\bm{q}_i$ conditional on $U_i$. Due to the law of iterated expectations, for any function $g(U_i, s)$:
\begin{equation}\label{eq:joint-exp}
\E\big\{g(U_i, s)\,R_i^{\mathcal Y}(s)\,\dint N_i(s) \mid C_i, m_i, U_i\big\} = g(U_i, s)\,\omega_i(s; U_i)\,\mathbf{1}(s \leq C_i)\,\frac{m_i}{\Lambda_0(C_i)}\,\dint\Lambda_0(s).
\end{equation}

\paragraph{Step 3: Term (I) --- The baseline term.}
Taking $g(U_i, s)$ as a deterministic function of $s$ in \eqref{eq:joint-exp} and then averaging over $U_i \mid (C_i, m_i)$:
\begin{align*}
\E\big\{\beta_0(s)\,R_i^{\mathcal Y}(s)\,\dint N_i(s) \mid C_i, m_i\big\} &= \beta_0(s)\,\E\big\{\omega_i(s; U_i) \mid C_i, m_i\big\}\,\mathbf{1}(s \leq C_i)\,\frac{m_i}{\Lambda_0(C_i)}\,\dint\Lambda_0(s) \\
&= \beta_0(s)\,\overline{\omega}_{i,0}(s)\,\mathbf{1}(s \leq C_i)\,\frac{m_i}{\Lambda_0(C_i)}\,\dint\Lambda_0(s) \\
&= p_{i,0}(s)\,\dint \mathcal{A}_0(s),
\end{align*}
where we recognized $p_{i,0}(s) = \overline{\omega}_{i,0}(s)\,\mathbf{1}(s \leq C_i)\,m_i/\Lambda_0(C_i)$ and used $\dint \mathcal{A}_0(s) = \beta_0(s)\,\dint\Lambda_0(s)$.

\paragraph{Step 4: Term (II) --- The compensation term (key identity).}
This step shows that the compensation covariate $\bm{B}_{i,0}(s) = \kappa_{i,0}(s)\,\bm{Z}_i^{\mathcal Y}(s)$ exactly cancels the latent effect. We apply the tower property by first conditioning on $U_i$ and then averaging over $U_i \mid (C_i, m_i)$.

By \eqref{eq:joint-exp} with $g(U_i, s) = U_i$, the inner conditional expectation given $(C_i, m_i, U_i)$ is:
\begin{align*}
\E\big\{U_i\,R_i^{\mathcal Y}(s)\,\dint N_i(s) \mid C_i, m_i, U_i\big\} 
&= U_i\,\omega_i(s; U_i)\,\mathbf{1}(s \leq C_i)\,\frac{m_i}{\Lambda_0(C_i)}\,\dint\Lambda_0(s).
\end{align*}
Taking expectation over $U_i \mid (C_i, m_i)$ and applying the tower property:
\begin{align*}
\E\big\{U_i\,R_i^{\mathcal Y}(s)\,\dint N_i(s) \mid C_i, m_i\big\} 
&= \E\big\{U_i\,\omega_i(s; U_i) \mid C_i, m_i\big\}\,\mathbf{1}(s \leq C_i)\,\frac{m_i}{\Lambda_0(C_i)}\,\dint\Lambda_0(s).
\end{align*}
Recall from the definition of $\kappa_{i,0}(s)$ in Corollary~\ref{cor:kappa} that
\[
\kappa_{i,0}(s) = \frac{\E\big\{U_i\,\omega_i(s; U_i) \mid C_i, m_i\big\}}{\overline{\omega}_{i,0}(s)}.
\]
Hence,
\[
\E\big\{U_i\,R_i^{\mathcal Y}(s)\,\dint N_i(s) \mid C_i, m_i\big\} = \kappa_{i,0}(s)\,\overline{\omega}_{i,0}(s)\,\mathbf{1}(s \leq C_i)\,\frac{m_i}{\Lambda_0(C_i)}\,\dint\Lambda_0(s) = \kappa_{i,0}(s)\,p_{i,0}(s)\,\dint\Lambda_0(s).
\]
Similarly, taking $g(U_i, s) = \kappa_{i,0}(s)$ (which is a deterministic function of $s$ given $(C_i, m_i)$) in \eqref{eq:joint-exp}:
\begin{align*}
\E\big\{\kappa_{i,0}(s)\,R_i^{\mathcal Y}(s)\,\dint N_i(s) \mid C_i, m_i\big\} 
&= \kappa_{i,0}(s)\,\E\big\{\omega_i(s; U_i) \mid C_i, m_i\big\}\,\mathbf{1}(s \leq C_i)\,\frac{m_i}{\Lambda_0(C_i)}\,\dint\Lambda_0(s) \\
&= \kappa_{i,0}(s)\,\overline{\omega}_{i,0}(s)\,\mathbf{1}(s \leq C_i)\,\frac{m_i}{\Lambda_0(C_i)}\,\dint\Lambda_0(s) \\
&= \kappa_{i,0}(s)\,p_{i,0}(s)\,\dint\Lambda_0(s).
\end{align*}
Subtracting, the contribution from Term~(II) vanishes:
\[
\E\big[ \left\{ U_i - \kappa_{i,0}(s)\right\} \,R_i^{\mathcal Y}(s)\,\dint N_i(s) \mid C_i, m_i\big] = 0.
\]

\paragraph{Step 5: Term (III) --- The residual outcome random effect.}
This term involves $\widetilde{\bm{b}}_i$. By Assumption 2 (Conditional Independence), given $U_i$, the longitudinal process $Y_i$ (and thus its random component $\widetilde{\bm{b}}_i$) is independent of the observation process $R_i^{\mathcal Y}$.
Therefore, we can factor the expectation conditional on $U_i$:
\begin{align*}
\E\big\{\widetilde{\bm{b}}_i\,R_i^{\mathcal Y}(s)\,\dint N_i(s) \mid C_i, m_i, U_i\big\} 
&= \E \left( \widetilde{\bm{b}}_i \mid C_i, m_i, U_i \right)  \E\big\{R_i^{\mathcal Y}(s)\,\dint N_i(s) \mid C_i, m_i, U_i\big\} \\
&= \E \left( \widetilde{\bm{b}}_i \mid U_i \right) \E\big\{R_i^{\mathcal Y}(s)\,\dint N_i(s) \mid C_i, m_i, U_i\big\}.
\end{align*}
Since $\widetilde{\bm{b}}_i$ is the residual random effect with $\E(\widetilde{\bm{b}}_i \mid U_i) = \bm{0}$ (by construction in the outcome model), the product is zero. Taking the outer expectation over $U_i$ yields:
\begin{equation}\label{eq:q-zero}
\E\big\{\widetilde{\bm{b}}_i^\top \bm{Z}_i^{\mathcal Y}(s)\,R_i^{\mathcal Y}(s)\,\dint N_i(s) \mid C_i, m_i\big\} = 0.
\end{equation}

\paragraph{Step 6: Term (IV) --- The random error term.}
The random error $\varepsilon_i(s)$ is independent of all other random variables, so:
\begin{equation}\label{eq:eps-zero}
\E\big\{\varepsilon_i(s)\,R_i^{\mathcal Y}(s)\,\dint N_i(s) \mid C_i, m_i\big\} = \E\{\varepsilon_i(s)\} \E\big\{R_i^{\mathcal Y}(s)\,\dint N_i(s) \mid C_i, m_i\big\} = 0.
\end{equation}

\paragraph{Step 7: Combining all terms.}
Summing the contributions from Steps 3--6:
\begin{align*}
\E\big\{M_i(t) \mid C_i, m_i\big\} &= \int_0^t \left\{ p_{i,0}(s)\,\dint \mathcal{A}_0(s) + 0 + 0 + 0 \right\} - \int_0^t p_{i,0}(s)\,\dint \mathcal{A}_0(s) \\
&= 0. \qedhere
\end{align*}
\end{proof}

\begin{remark}[Intuition for the compensation covariate]\label{rem:intuition}
The compensation covariate $\bm{B}_i(t) = \kappa_i(t)\,\bm{Z}_i^{\mathcal Y}(t)$ is constructed precisely so that the term $\bm\theta_0^\top \left\{ U_i - \kappa_{i,0}(t) \right\} \bm{Z}_i^{\mathcal Y}(t)$ has conditional mean zero when weighted by $R_i^{\mathcal Y}(t)\,\dint N_i(t)$. 

The key observation is that $\kappa_{i,0}(t)$ equals the observation-weighted posterior mean of $U_i$:
\[
\kappa_{i,0}(t) 
= \frac{\E\{U_i \,\omega_i(t; U_i) \mid C_i, m_i\}}{\E\{\omega_i(t; U_i) \mid C_i, m_i\}}
= \frac{\E\{U_i \,\omega_i(t; U_i) \mid C_i, m_i\}}{\overline{\omega}_{i,0}(t)},
\]
where the weights $\omega_i(t; U_i) = \pr\left\{R_i^{\mathcal Y}(t) = 1 \mid \dint N_i(t) = 1, U_i\right\}$ reflect the probability of observation. This weighting accounts for the fact that subjects with different latent frailties $U_i$ have different propensities to be observed; by using the observation-weighted mean rather than the ordinary posterior mean $\mu_{U_i} = \E\{U_i \mid C_i, m_i\}$, the compensation covariate achieves exact cancellation of the latent effect under the informative observation mechanism.
\end{remark}


\section{Laplace Approximation for empirical Bayes Posterior}\label{sec:laplace}

\begin{lemma}[Laplace approximation for EB posterior]\label{lem:laplace}
Consider the lognormal frailty specification $\eta_i=\exp(\mu_0+\sigma U_i)$
with $U_i\sim \mathcal N(0,1)$ and $\mu_0=-\sigma^2/2$.
For a given subject $i$, let $\ell_i(u;\Omega_{\mathcal{V}})$ denote the log-posterior of $U_i$
given $(C_i,m_i)$, viewed as a function of $u\in\mathbb R$ and visiting-process nuisance parameters
$\Omega_{\mathcal{V}}=(\bm\gamma,\Lambda_0(\cdot),\sigma^2)$:
\begin{equation}\label{eq:log-posterior}
\ell_i(u;\Omega_{\mathcal{V}})
= m_i(\mu_0+\sigma u)-\nu_i \exp( \mu_0+\sigma u)-\tfrac12u^2+\textup{const},
\qquad
\nu_i=\exp(\bm\gamma^\top \bm{X}_i^{\mathcal V})\,\Lambda_0(C_i).
\end{equation}
Let $\mu_{U_i}(\Omega_{\mathcal{V}})$ denote the unique maximizer (posterior mode) and define
\[
s_{U_i}^2(\Omega_{\mathcal{V}}):=\Big[-\ell_i''\big\{\mu_{U_i}(\Omega_{\mathcal{V}});\Omega_{\mathcal{V}}\big\}\Big]^{-1}.
\]
Assume $\Omega_{\mathcal{V}}$ ranges over a compact set $\mathcal{G}_{\mathcal{V}}$ and that there exist constants
$0<\nu_{\min}\le \nu_{\max}<\infty$ such that $\nu_i\in[\nu_{\min},\nu_{\max}]$ almost surely.
\begin{enumerate}[label=(\alph*)]
\item \textbf{(Laplace approximation).}
The second derivative satisfies
$\ell_i''(u;\Omega_{\mathcal{V}})=-\sigma^2 \nu_i \exp (\mu_0+\sigma u)-1\le -1$ for all $u$,
which ensures existence and uniqueness of $\mu_{U_i}(\Omega_{\mathcal{V}})$ and the bound
$s_{U_i}^2(\Omega_{\mathcal{V}})\in(0,1]$.
Moreover, for fixed $\Omega_{\mathcal{V}}\in\mathcal{G}_{\mathcal{V}}$, a second-order Laplace expansion yields
\begin{equation}\label{eq:laplace-integral}
\int_{-\infty}^{\infty}\exp\{\ell_i(u;\Omega_{\mathcal{V}})\}\,du
=
\sqrt{2\pi\,s_{U_i}^2(\Omega_{\mathcal{V}})}\,
\exp\left[ \ell_i\{ \mu_{U_i}(\Omega_{\mathcal{V}}) ;\Omega_{\mathcal{V}}\} \right] \,
\Big[ 1+ O\big\{s_{U_i}^2(\Omega_{\mathcal{V}})\big\} \Big].
\end{equation}
Consequently, the posterior $U_i\mid(C_i,m_i)$ is well approximated by
$\mathcal N(\mu_{U_i}(\Omega_{\mathcal{V}}),s_{U_i}^2(\Omega_{\mathcal{V}}))$ in the sense that its first two moments satisfy
\begin{equation}\label{eq:moment-error}
\Big|\E (U_i\mid C_i,m_i;\Omega_{\mathcal{V}} )-\mu_{U_i}(\Omega_{\mathcal{V}})\Big|=O\big\{s_{U_i}^2(\Omega_{\mathcal{V}})\big\},
\quad
\Big|\Var(U_i\mid C_i,m_i;\Omega_{\mathcal{V}})-s_{U_i}^2(\Omega_{\mathcal{V}})\Big|=O\big\{s_{U_i}^4(\Omega_{\mathcal{V}})\big\},
\end{equation}
where the big-$O$ bounds are uniform over $\Omega_{\mathcal{V}}\in\mathcal{G}_{\mathcal{V}}$.

\item \textbf{(Plug-in consistency).}
Suppose the visiting-process estimators
$\widehat{\Omega}_{\mathcal{V}}=(\widehat{\bm\gamma},\widehat{\Lambda}_0,\widehat{\sigma}_\zeta^2)$
are consistent for $\Omega_{\mathcal{V},0}$ and satisfy $\|\widehat{\Omega}_{\mathcal{V}}-\Omega_{\mathcal{V},0}\|=O_p(n^{-1/2})$.
Let $\widehat{\mu}_{U_i}=\mu_{U_i}(\widehat{\Omega}_{\mathcal{V}})$ and
$\widehat{s}_{U_i}^2=s_{U_i}^2(\widehat{\Omega}_{\mathcal{V}})$ denote the plug-in Laplace posterior
mean and variance. Then the plug-in Laplace moments are \emph{averagely consistent} for their
population (Laplace-defined) targets:
\begin{equation}\label{eq:avg-consistency-laplace}
\frac{1}{n}\sum_{i=1}^n\big|\widehat{\mu}_{U_i}-\mu_{U_i}(\Omega_{\mathcal{V},0})\big|\xrightarrow{p}0,
\qquad
\frac{1}{n}\sum_{i=1}^n\big|\widehat{s}_{U_i}^2-s_{U_i}^2(\Omega_{\mathcal{V},0})\big|\xrightarrow{p}0.
\end{equation}
In addition, if the subject-level posterior concentrates in the sense that
\begin{equation}\label{eq:info-growth}
\frac{1}{n}\sum_{i=1}^n s_{U_i}^2(\Omega_{\mathcal{V},0})\xrightarrow{p}0
\qquad\text{and}\qquad
\frac{1}{n}\sum_{i=1}^n s_{U_i}^4(\Omega_{\mathcal{V},0})\xrightarrow{p}0,
\end{equation}
then the plug-in Laplace moments are also averagely consistent for the \emph{exact} posterior moments:
\begin{equation}\label{eq:avg-consistency-exact}
\frac{1}{n} \sum_{i=1}^n \big|\widehat{\mu}_{U_i} - \E(U_i \mid C_i, m_i;\Omega_{\mathcal{V},0})\big|
\xrightarrow{p} 0,\quad
\frac{1}{n} \sum_{i=1}^n \big|\widehat{s}_{U_i}^2 - \Var(U_i \mid C_i, m_i;\Omega_{\mathcal{V},0)}\big|
\xrightarrow{p} 0.
\end{equation}
\end{enumerate}
\end{lemma}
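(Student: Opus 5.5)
Part (a) starts from the second derivative of \eqref{eq:log-posterior}. Differentiating twice in $u$ gives $\ell_i''(u)=-\sigma^2\nu_i e^{\mu_0+\sigma u}-1\le -1$. Hence $\ell_i$ is strictly concave and $\ell_i(u)\to-\infty$ as $|u|\to\infty$; the quadratic term $-u^2/2$ dominates on the left and the exponential term on the right. This gives existence and uniqueness of the maximizer $\mu_{U_i}$, which solves the score equation $m_i\sigma-\sigma\nu_i e^{\mu_0+\sigma u}-u=0$. It also gives $s_{U_i}^2=\{-\ell_i''(\mu_{U_i})\}^{-1}\in(0,1]$. By the implicit function theorem, $\mu_{U_i}$ and $s^2_{U_i}$ are smooth in $\Omega_{\mathcal V}$, because $\partial_u$ of the score is bounded away from zero.

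For \eqref{eq:laplace-integral} I would Taylor-expand $\ell_i$ around $\mu_{U_i}$. Writing $u=\mu_{U_i}+s_{U_i}z$ gives
\[
\ell_i(u)=\ell_i(\mu_{U_i})-\tfrac12 z^2+\tfrac16\ell_i'''(\mu_{U_i})s_{U_i}^3z^3+\tfrac1{24}\ell_i^{(4)}(\bar u)s_{U_i}^4z^4 .
\]
The derivatives are $\ell_i'''(u)=-\sigma^3\nu_i e^{\mu_0+\sigma u}$ and $\ell_i^{(4)}(u)=-\sigma^4\nu_ie^{\mu_0+\sigma u}$. The key observation is that $\sigma^2\nu_ie^{\mu_0+\sigma\mu_{U_i}}=s_{U_i}^{-2}-1$. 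It follows that $|\ell_i'''(\mu_{U_i})|s_{U_i}^3\le \sigma s_{U_i}$, and the higher-order terms are controlled similarly, at least on a neighbourhood where $e^{\sigma(u-\mu_{U_i})}$ is bounded. I then apply the standard Tierney--Kadane argument \citep{tierney1986accurate}: expand $e^{\text{cubic}+\text{quartic}}$, note that odd Gaussian moments vanish, and obtain the relative error $1+O(s_{U_i}^2)$. The same expansion, applied to the integrands $u\,e^{\ell_i}$ and $u^2e^{\ell_i}$ and divided by the normalizer, yields \eqref{eq:moment-error}. There the mean shift equals the $O(s^3_{U_i}\ell_i''')\cdot s_{U_i}$-type term, which is $O(s_{U_i}^2)$, and the variance error is $O(s^4_{U_i})$.

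For the uniformity over $\mathcal G_{\mathcal V}$, I would use the bounds $\nu_i\in[\nu_{\min},\nu_{\max}]$, compactness (so $\sigma$ is bounded), and the fact that $s_{U_i}\le1$. Together these make all constants depend only on $\sigma_{\max}$.

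\textbf{Main obstacle.} The hardest step is controlling the tails, because $e^{\sigma u}$ grows, so the remainder $\ell^{(4)}(\bar u)$ is not uniformly bounded. I would split the integral into $|z|\le \delta s_{U_i}^{-1/2}$ and its complement. On the complement, strong concavity ($\ell_i''\le-1$) gives Gaussian tail bounds of order $e^{-c/s_{U_i}}$, which is $o(s^2_{U_i})$. On the inner region, $e^{\sigma(\bar u-\mu_{U_i})}$ stays bounded, so the remainder is controlled.

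For part (b), I would use the smoothness from the implicit function theorem. The partial derivatives of $\mu_{U_i}$ with respect to $(\bm\gamma,\Lambda_0(C_i),\sigma)$ equal $-\partial_\Omega\text{score}/\ell_i''$. These are bounded by $|\partial_\Omega\text{score}|$, which in turn is bounded on $\mathcal G_{\mathcal V}$ by $C(1+m_i)$; the factor comes from the $\sigma$-derivative, which involves $m_i$ and $\mu_0'$. The same holds for $s^2_{U_i}$. Hence
\[
|\widehat\mu_{U_i}-\mu_{U_i}(\Omega_{\mathcal V,0})|\le C(1+m_i)\|\widehat\Omega_{\mathcal V}-\Omega_{\mathcal V,0}\|,
\]
where the norm uses $\sup_t|\widehat\Lambda_0-\Lambda_0|$. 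Averaging, $n^{-1}\sum(1+m_i)=O_p(1)$ by the law of large numbers, since $\E m_i<\infty$. Multiplying by $O_p(n^{-1/2})$ gives \eqref{eq:avg-consistency-laplace}. Finally, \eqref{eq:avg-consistency-exact} follows from the triangle inequality: part (a) bounds the deterministic gap between the Laplace and exact moments by $C s^2_{U_i}$ and $Cs^4_{U_i}$ uniformly, and the averages of these vanish by \eqref{eq:info-growth}.
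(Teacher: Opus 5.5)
Your plan follows the same route as the paper. Strict concavity from $\ell_i''\le -1$ gives the mode. A Taylor expansion about the mode is fed into the Tierney--Kadane expansion for \eqref{eq:laplace-integral} and \eqref{eq:moment-error}. Part (b) uses the implicit function theorem with an integrable envelope, followed by the triangle inequality and \eqref{eq:info-growth}.

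On the scaling you are more careful than the paper. The paper asserts that $\ell_i^{(3)}$ and $\ell_i^{(4)}$ are uniformly bounded near the mode, but this cannot hold uniformly in $m_i$, since $\ell_i^{(k)}(\mu_{U_i})=-\sigma^{k-2}(s_{U_i}^{-2}-1)$ for $k\ge 3$. Your identity $\sigma^2\nu_i e^{\mu_0+\sigma\mu_{U_i}}=s_{U_i}^{-2}-1$ is exactly what makes the standardized cubic and quartic coefficients $O(\sigma s_{U_i})$ and $O(\sigma^2 s_{U_i}^2)$.

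\textbf{Tail bound.} This step fails as written; the paper does not address tails at all. The inequality $\ell_i''\le -1$ only gives $\ell_i(u)-\ell_i(\mu_{U_i})\le -(u-\mu_{U_i})^2/2$. On $|u-\mu_{U_i}|>\delta s_{U_i}^{1/2}$ this bounds the tail mass by $\int_{|v|>\delta s_{U_i}^{1/2}}e^{-v^2/2}\,dv$, which is of order one, not $e^{-c/s_{U_i}}$. That is useless against the main term $\sqrt{2\pi}\,s_{U_i}$ when $s_{U_i}$ is small.

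You need the exact curvature instead. Let $\sigma_{\max}$ bound $\sigma$ over $\mathcal G_{\mathcal V}$ and set $c=e^{-\sigma_{\max}}$. Then
\[
-\ell_i''(u)=1+(s_{U_i}^{-2}-1)\,e^{\sigma(u-\mu_{U_i})}\ \ge\ c\,s_{U_i}^{-2}\qquad\text{for } u\ge \mu_{U_i}-1 .
\]
Hence $\ell_i(u)-\ell_i(\mu_{U_i})\le -c(u-\mu_{U_i})^2/(2s_{U_i}^2)$ on $u\ge\mu_{U_i}-1$, so the relative tail contribution there is at most of order $e^{-c\delta^2/(2s_{U_i})}$ up to polynomial factors. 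For $u<\mu_{U_i}-1$, extend linearly by concavity from $u=\mu_{U_i}-1$. At that point $\ell_i-\ell_i(\mu_{U_i})\le -c/(2s_{U_i}^2)$ and $\ell_i'\ge c/s_{U_i}^2$, so this piece is exponentially small too.

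\textbf{Envelope in part (b).} There is also a smaller slip here. At the mode, the $\sigma$-derivative of the score contains $\sigma\nu_ie^{\mu_0+\sigma\mu_{U_i}}\mu_{U_i}=(\sigma m_i-\mu_{U_i})\mu_{U_i}$. Since $\mu_{U_i}\sim \sigma^{-1}\log m_i$, this term grows like $m_i\log m_i$, so $|\partial_\Omega\text{score}|\le C(1+m_i)$ is false. You can repair this in either of two ways:
\begin{enumerate}
\item Keep the factor $s_{U_i}^2=(1+\sigma^2\nu_ie^{\mu_0+\sigma\mu_{U_i}})^{-1}$ rather than bounding $|\ell_i''|^{-1}$ by $1$. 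This does give $|\nabla\mu_{U_i}|\le C(1+m_i)$.
\item Use a larger envelope such as $C(1+m_i^2)$, as the paper does.
\end{enumerate}
Integrability holds either way: given $U_i$, $m_i$ is Poisson with mean $\eta_i\nu_i\le\eta_i\nu_{\max}$, and the lognormal $\eta_i$ has all moments. With these two repairs your argument goes through.
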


\begin{proof}
\textbf{Part (a).}
Differentiate \eqref{eq:log-posterior}:
\[
\ell_i'(u;\Omega_{\mathcal{V}})=\sigma\big\{m_i-\nu_i \exp (\mu_0+\sigma u ) \big\}-u,
\qquad
\ell_i''(u;\Omega_{\mathcal{V}})=-\sigma^2\nu_i \exp (\mu_0+\sigma u)-1\le -1.
\]
Hence $\ell_i(\cdot;\Omega_{\mathcal{V}})$ is strictly concave, so $\mu_{U_i}(\Omega_{\mathcal{V}})$ exists and is unique, and
$s_{U_i}^2 ( \Omega_{\mathcal{V}} ) = \left[ - \ell_i'' \left\{ \mu_{U_i} ( \Omega_{\mathcal{V}} ) ; \Omega_{\mathcal{V}} \right\} \right]^{-1} \le 1$.

For \eqref{eq:laplace-integral} and \eqref{eq:moment-error}, expand $\ell_i(u;\Omega_{\mathcal{V}})$ around the mode
$\mu:=\mu_{U_i}(\Omega_{\mathcal{V}})$ and set $\sigma^2:=s_{U_i}^2(\Omega_{\mathcal{V}})$:
\[
\ell_i(\mu+\sigma z;\Omega_{\mathcal{V}})
=\ell_i(\mu;\Omega_{\mathcal{V}})-\frac{z^2}{2} + \sigma\,a_{3}(\Omega_{\mathcal{V}})\,z^3 + \sigma^2\,a_{4}(\Omega_{\mathcal{V}})\,z^4 + O(\sigma^3|z|^5),
\]
where $a_3(\Omega_{\mathcal{V}})=\ell_i^{(3)}(\mu;\Omega_{\mathcal{V}})/6$ and $a_4(\Omega_{\mathcal{V}})=\ell_i^{(4)}(\mu;\Omega_{\mathcal{V}})/24$.
Under compactness of $\mathcal{G}_{\mathcal{V}}$ and $\nu_i\in[\nu_{\min},\nu_{\max}]$, the derivatives
$\ell_i^{(3)}$ and $\ell_i^{(4)}$ are uniformly bounded in a neighborhood of $\mu$, so the remainder term is uniform in $\Omega_{\mathcal{V}}$.
Standard one-dimensional Laplace expansions for ratios of integrals \citep[e.g.,][]{tierney1986accurate,ogden2021error}
then yield the normalization expansion \eqref{eq:laplace-integral} with relative error $O(s_{U_i}^2)$, and the moment bounds
in \eqref{eq:moment-error}:
the posterior mean differs from the mode by $O(\sigma^2)$ and the posterior variance differs from $\sigma^2$
by $O(\sigma^4)$, uniformly over $\Omega_{\mathcal{V}}\in\mathcal{G}_{\mathcal{V}}$.

\textbf{Part (b).}
We first prove \eqref{eq:avg-consistency-laplace}. By the mean-value theorem,
for each $i$ there exists $\widetilde{\Omega}_{\mathcal{V}}$ on the line segment between $\widehat{\Omega}_{\mathcal{V}}$ and $\Omega_{\mathcal{V},0}$ such that
\[
\big|\widehat{\mu}_{U_i}-\mu_{U_i}(\Omega_{\mathcal{V},0})\big|
\le \big\|\nabla_{\Omega_{\mathcal{V}}} \mu_{U_i}(\widetilde{\Omega}_{\mathcal{V}})\big\|\,\big\|\widehat{\Omega}_{\mathcal{V}}-\Omega_{\mathcal{V},0}\big\|.
\]
By the implicit-function theorem applied to $\ell_i'\{ \mu_{U_i}(\Omega_{\mathcal{V}});\Omega_{\mathcal{V}}\}=0$,
$\nabla_{\Omega_{\mathcal{V}}} \mu_{U_i}(\Omega_{\mathcal{V}})$ exists and equals
$- \left[ \ell_i'' \left\{ \mu_{U_i} ( \Omega_{\mathcal{V}} ) ; \Omega_{\mathcal{V}} \right\} \right]^{-1} \nabla_{\Omega_{\mathcal{V}}} \ell_i' \left\{ \mu_{U_i} ( \Omega_{\mathcal{V}} ) ; \Omega_{\mathcal{V}} \right\}$.
Since $|\ell_i''|\ge 1$ and $\nu_i$ and $\sigma$ are bounded on $\mathcal{G}_{\mathcal{V}}$, there exists a measurable envelope $G_i$
(with $\E(G_i)<\infty$ under $\E(m_i^2)<\infty$) such that
$\sup_{\Omega_{\mathcal{V}}\in\mathcal{G}_{\mathcal{V}}}\|\nabla_{\Omega_{\mathcal{V}}} \mu_{U_i}(\Omega_{\mathcal{V}})\|\le G_i$.
Therefore,
\[
\frac{1}{n}\sum_{i=1}^n\big|\widehat{\mu}_{U_i}-\mu_{U_i}(\Omega_{\mathcal{V},0})\big|
\le \Big(\frac{1}{n}\sum_{i=1}^n G_i\Big)\,\big\|\widehat{\Omega}_{\mathcal{V}}-\Omega_{\mathcal{V},0}\big\|
= o_p(1),
\]
since $n^{-1}\sum_i G_i\xrightarrow{p}\E(G_1)<\infty$ and $\|\widehat{\Omega}_{\mathcal{V}}-\Omega_{\mathcal{V},0}\|=O_p(n^{-1/2})$.
The argument for $\widehat{s}_{U_i}^2$ is identical, using differentiability of
$\Omega_{\mathcal{V}}\mapsto s_{U_i}^2(\Omega_{\mathcal{V}})=[-\ell_i''\{\mu_{U_i}(\Omega_{\mathcal{V}});\Omega_{\mathcal{V}}\}]^{-1}$ and another integrable envelope.

To obtain \eqref{eq:avg-consistency-exact}, apply the triangle inequality and \eqref{eq:moment-error} at $\Omega_{\mathcal{V}}=\Omega_{\mathcal{V},0}$:
\begin{align*}
\big|\widehat{\mu}_{U_i}-\E(U_i\mid C_i,m_i;\Omega_{\mathcal{V},0}) \big|
&\le \big|\widehat{\mu}_{U_i}-\mu_{U_i}(\Omega_{\mathcal{V},0})\big|
 + \big|\mu_{U_i}(\Omega_{\mathcal{V},0})-\E(U_i\mid C_i,m_i;\Omega_{\mathcal{V},0})\big| \\
&\le \big|\widehat{\mu}_{U_i}-\mu_{U_i}(\Omega_{\mathcal{V},0})\big| + C\,s_{U_i}^2(\Omega_{\mathcal{V},0}),
\end{align*}
for a constant $C$ uniform in $i$.
Averaging over $i$, the first term vanishes by \eqref{eq:avg-consistency-laplace} and the second vanishes by
\eqref{eq:info-growth}. The variance statement follows similarly using the $O(s_{U_i}^4(\Omega_{\mathcal{V},0}))$ bound in
\eqref{eq:moment-error} and the second condition in \eqref{eq:info-growth}.
\end{proof}

\section{Assumptions and First-Step Regularity}\label{sec:assumptions}

Throughout the proofs of Theorems~1 and~2 we work under the following assumptions. We assume the validity of the structural model defined in the manuscript, including Assumptions 1--3 (Noninformative censoring, Conditional independence, and Latent factor distributions). In addition, the asymptotic analysis requires the following regularity conditions (C1)--(C6).

\medskip\noindent
\textbf{(C1) Independent Sampling.} The subject-level processes $\{(N_i(t), R_i^{\mathcal Y}(t), \bm{X}_i^{\mathcal Y}(t), \bm{Z}_i^{\mathcal Y}(t), Y_i(t))\}_{i=1}^n$ are independent and identically distributed.

\medskip\noindent
\textbf{(C2) Boundedness and positivity.} All covariates are uniformly bounded: there exists $C_X < \infty$ such that $\|\bm{X}_i^{\mathcal Y}(t)\| \le C_X$ and $\|\bm{Z}_i^{\mathcal Y}(t)\| \le C_X$ for all $i$ and $t \in [0, \tau]$. There exists $\varepsilon \in (0, 1/2)$ such that, at any realized visit time, the observation probability satisfies $\varepsilon \le \overline{\omega}_i(t; U_i) \le 1 - \varepsilon$ almost surely. The at-risk indicator satisfies $\pr\{\mathbf{1}(t \leq C_i) = 1\} > 0$ on any sub-interval of $[0, \tau]$.

\medskip\noindent
\textbf{(C3) Consistency of Visiting Process Estimators.} The visiting process parameters $(\bm\gamma, \Lambda_0(\cdot), \sigma^2)$ are estimated by the partial-likelihood score \eqref{eq:EE_gamma}, Aalen--Breslow estimator \eqref{eq:AalenBreslow}, and method-of-moments variance estimator \eqref{eq:MoM_VarEta}. These satisfy
\[
\widehat{\bm\gamma} \xrightarrow{p} \bm\gamma_0, \qquad \sup_{c \in [0, \tau]} |\widehat{\Lambda}_0(c) - \Lambda_0(c)| \xrightarrow{p} 0, \qquad \widehat{\sigma}_\zeta^2 \xrightarrow{p} \sigma^2.
\]

\medskip\noindent
\textbf{(C4) Observation process regularity.} The observation process satisfies the regularity conditions (R1)--(R4) in Lemma~\ref{lem:io-mest}, ensuring consistency and asymptotic normality of the estimator $(\widehat{\bm\alpha}, \widehat{\bm\delta}, \widehat{\bm\Sigma}_q)$.

\medskip\noindent
\textbf{(C5) Empirical Bayes posterior regularity.} The log-posterior of the latent factor satisfies the smoothness and curvature conditions required for the Laplace approximation in Lemma~\ref{lem:laplace}.

\medskip\noindent
\textbf{(C6) Identification.} Let
\[
\bm W := \E\!\left\{ \int_0^\tau \bm H_{1,0}(t)\,\bm V_{1,0}(t)^\top\,R_1^{\mathcal Y}(t)\,\dint\Lambda_1(t)\right\},
\]
where 
$\bm{V}_{i,0}(t) = [\bm{X}_i^{\mathcal Y}(t)^\top,\, \kappa_{i,0}(t)\,\bm{Z}_i^{\mathcal Y}(t)^\top]^\top$ 
is the uncentered design vector at the true nuisance parameters, 
$\bm{H}_{i,0}(t) = \bm{V}_{i,0}(t) - \overline{\bm{V}}_\star(t)$ 
is its centered version with population risk-set average 
$\overline{\bm{V}}_\star(t) = \E\{ \bm{V}_{1,0}(t)\,p_{1,0}(t)\}/\E\{p_{1,0}(t)\}$, 
and $\Lambda_i(t)$ is the compensator of the counting process $N_i(t)$. Then $\bm W$ is finite and positive definite.

\medskip

Assumption (C6) is an explicit identification condition: it ensures that the population system of estimating equations has a unique solution and that the WLS design matrix has a well-behaved inverse.

The remaining regularity properties used in the proofs concern the IO M-estimator. We state it as a lemma rather than an assumption.

\begin{lemma}[Observation process M-estimator]\label{lem:io-mest}
Let $\ell_{\mathrm{IO}}(\bm\alpha, \bm\delta, \bm\Sigma_q)$ denote the composite Bernoulli log-likelihood in \eqref{eq:Likelihood_IO} for the observation process. Assume:
\begin{enumerate}[label=(R\arabic*), leftmargin=1.4em]
\item The parameter space for $(\bm\alpha, \bm\delta, \bm\Sigma_q)$ is compact, with $\bm\Sigma_q$ restricted to a closed, positive semi-definite cone.
\item $\ell_{\mathrm{IO}}(\cdot)$ is almost surely continuous and twice continuously differentiable in a neighborhood of the true parameter $(\bm\alpha_0, \bm\delta_0, \bm\Sigma_{q,0})$.
\item The model is identifiable: the expectation $\E\{\ell_{\mathrm{IO}}(\bm\alpha, \bm\delta, \bm\Sigma_q)\}$ has a unique maximizer at $(\bm\alpha_0, \bm\delta_0, \bm\Sigma_{q,0})$.
\item The score and Hessian satisfy suitable moment and non-degeneracy conditions (finite second moments; negative definite expectation of the Hessian at the truth).
\end{enumerate}
Then the maximizer $(\widehat{\bm\alpha}, \widehat{\bm\delta}, \widehat{\bm\Sigma}_q)$ of $\ell_{\mathrm{IO}}$ is consistent and asymptotically normal. Specifically, let $\dot\ell_{\mathcal O,i}(\Omega_{\mathcal O})$ denote the score contribution from subject $i$, and let 
\[
\mathcal{J}_{\mathcal O} = \E\!\left\{-\frac{\partial^2 \ell_{\mathcal O,1}(\Omega_{\mathcal O,0})}{\partial\Omega_{\mathcal O}\,\partial\Omega_{\mathcal O}^\top}\right\}
\]
be the composite information matrix. Define the influence function $\psi_i^{(\mathcal O)} = -\mathcal{J}_{\mathcal O}^{-1}\,\dot\ell_{\mathcal O,i}(\Omega_{\mathcal O,0})$. Then $\E[\psi_i^{(\mathcal O)}] = \bm{0}$, $\Var(\psi_i^{(\mathcal O)}) < \infty$, and
\[
n^{\frac{1}{2}}\,\begin{pmatrix} \widehat{\bm\alpha} - \bm\alpha_0 \\[2pt] \widehat{\bm\delta} - \bm\delta_0 \\[2pt] \widehat{\bm\Sigma}_q - \bm\Sigma_{q,0} \end{pmatrix} = n^{-\frac{1}{2}} \sum_{i=1}^n \psi_i^{(\mathcal O)} + o_p(1).
\]
\end{lemma}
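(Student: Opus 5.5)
The statement to prove is Lemma~\ref{lem:io-mest}: a standard M-estimation result for the composite Bernoulli likelihood \eqref{eq:Likelihood_IO}. I would treat the EB moments $(\mu_{U_i}, s_{U_i}^2)$ as fixed functions of $(C_i,m_i)$ evaluated at $\Omega_{\mathcal V,0}$, as the statement implicitly does by writing the influence function without a first-stage correction. The per-subject contributions $\ell_{\mathcal O,i}(\Omega_{\mathcal O}) = \int_0^\tau [R_i^{\mathcal Y}(t)\log\Phi\{k_i(t)\} + \{1-R_i^{\mathcal Y}(t)\}\log[1-\Phi\{k_i(t)\}]]\,\dint N_i(t)$ are then i.i.d.\ by (C1). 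The proof follows the classical two-step route: consistency via a uniform law of large numbers plus identifiability, then asymptotic normality via a Taylor expansion of the score.

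\textbf{Consistency.} By (R1) the parameter space is compact. By (R2) and boundedness of covariates (C2), each $\log\Phi\{k_i(t)\}$ is continuous in $\Omega_{\mathcal O}$. The denominator of $k_i(t)$ is at least $1$, so $|k_i(t)|$ is bounded by a constant times $(1+|\mu_{U_i}|)$. The Mills-ratio bound $|\log\Phi(x)|\le C(1+x^2)$ then gives an envelope $\sup_{\Omega_{\mathcal O}}|\ell_{\mathcal O,i}| \le C\,m_i(1+\mu_{U_i}^2)$. From the score equation of Lemma~\ref{lem:laplace}, $|\mu_{U_i}|\le \sigma m_i$, so integrability reduces to a finite third moment of $m_i$, which I would take as part of (R4). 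A Glivenko--Cantelli / uniform LLN argument then gives $\sup|n^{-1}\ell_{\mathrm{IO}} - \E\ell_{\mathcal O,1}|\to_p 0$. Combined with the unique maximizer in (R3), the argmax theorem yields consistency. Two points need brief justification: the positive semi-definite cone is closed, and if $\bm\Sigma_{q,0}$ is interior the boundary causes no trouble for normality. I would assume interiority; otherwise one must invoke Self--Liang type boundary results.

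\textbf{Normality.} Compute the score $\dot\ell_{\mathcal O,i}$. Differentiating through $k_i(t)$ gives $\int_0^\tau \frac{\phi(k_i)}{\Phi(k_i)\{1-\Phi(k_i)\}}\{R_i^{\mathcal Y}(t)-\Phi(k_i(t))\}\,\partial_{\Omega}k_i(t)\,\dint N_i(t)$. Its mean is zero at the truth because $\E\{R_i^{\mathcal Y}(t)\mid \dint N_i(t)=1, C_i,m_i\} = \overline\omega_{i,0}(t)$. This is exactly the tower argument of Step~2 of Proposition~\ref{prop:mean-zero}, using Lemma~\ref{lem:probit} for the closed form and assuming the EB posterior is the exact normal posterior (the Laplace approximation in Lemma~\ref{lem:laplace}). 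This mean-zero verification, together with the uniform envelope, is the main obstacle. Under only the Laplace approximation the mean is zero only up to $O(s_{U_i}^2)$, and I would either absorb this into the definition of the pseudo-true target or appeal to \eqref{eq:info-growth}. Given mean zero, a first-order expansion $0=n^{-1/2}\sum_i\dot\ell_{\mathcal O,i}(\widehat\Omega_{\mathcal O}) = n^{-1/2}\sum_i\dot\ell_{\mathcal O,i}(\Omega_{\mathcal O,0}) - \{\mathcal J_{\mathcal O}+o_p(1)\}\sqrt n(\widehat\Omega_{\mathcal O}-\Omega_{\mathcal O,0})$ follows. Here the Hessian converges uniformly near the truth by (R2), consistency, and a ULLN with a dominating function. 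Note that $\phi/\{\Phi(1-\Phi)\}$ and its derivatives are bounded on the relevant range because (C2) bounds $\overline\omega$ away from $0$ and $1$. Invertibility from (R4) then gives the stated linear representation. The relations $\E\psi_i^{(\mathcal O)}=0$ and finite variance follow from the score's mean zero and the second-moment part of (R4), and the multivariate CLT finishes the argument.
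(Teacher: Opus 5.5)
Your proposal follows the same route as the paper's proof. Consistency comes from a uniform LLN over the compact parameter set, the identifiability condition (R3) and the argmax theorem. Normality comes from a mean-value expansion of the score, an LLN for the Hessian, the CLT and Slutsky. Your explicit provisos are also used implicitly in the paper's proof: freezing $(\mu_{U_i},s_{U_i}^2)$ at $\Omega_{\mathcal V,0}$ so the summands are i.i.d., and taking $\bm\Sigma_{q,0}$ interior to the cone. Your envelope argument supplies detail the paper omits. One small fix there: $|\mu_{U_i}|\le\sigma m_i$ needs an additive constant $\sigma\nu_{\max}e^{\mu_0}$, because the mode is negative when $m_i$ is small; this is immaterial for integrability.

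There are two corrections.

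\emph{Sign of the influence function.} The expansion you write gives
\[
n^{1/2}(\widehat\Omega_{\mathcal O}-\Omega_{\mathcal O,0})=\mathcal J_{\mathcal O}^{-1}\,n^{-1/2}\sum_{i=1}^n\dot\ell_{\mathcal O,i}(\Omega_{\mathcal O,0})+o_p(1),
\]
so $\psi_i^{(\mathcal O)}=+\mathcal J_{\mathcal O}^{-1}\dot\ell_{\mathcal O,i}(\Omega_{\mathcal O,0})$, with $\mathcal J_{\mathcal O}$ positive definite. This is not ``the stated'' representation. The minus sign in the statement is a slip, and the paper's own proof makes the same jump from $\mathcal J_{\mathcal O}^{-1}\dot\ell_{\mathcal O,i}$ to $\psi_i^{(\mathcal O)}$. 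The sign does not affect $\E\psi_i^{(\mathcal O)}=\bm 0$ or the finite variance. It would affect the cross-covariance terms in $\bm\Gamma$ downstream, so you should state the correct sign rather than claim agreement.

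\emph{Mean-zero score.} You do not need the probit/tower computation. Under (R3), $\Omega_{\mathcal O,0}$ is an interior maximizer of the expected objective. With domination allowing differentiation under $\E$, $\E\{\dot\ell_{\mathcal O,1}(\Omega_{\mathcal O,0})\}=\bm 0$ is just the first-order condition.

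Your observation that the Laplace approximation spoils exact mean-zero is correct. It shows that (R3) at the true value is a substantive assumption. However, your fallback via \eqref{eq:info-growth} does not rescue it, for two reasons. Under i.i.d. sampling on a fixed window, $n^{-1}\sum_i s_{U_i}^2\to\E(s_{U_1}^2)>0$, so that condition cannot hold. And even a vanishing average bias would have to be $o(n^{-1/2})$ to keep the centering at $\Omega_{\mathcal O,0}$. The pseudo-true-target reading is the only coherent alternative, and it changes the centering of the conclusion.
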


\begin{proof}
The estimator is an M-estimator obtained by maximizing the composite log-likelihood $n^{-1} \ell_{\mathrm{IO}}(\Omega_{\mathcal O})$, where $\Omega_{\mathcal O} = (\bm\alpha^\top, \bm\delta^\top, \mathrm{vec}(\bm\Sigma_q)^\top)^\top$. Under (R1)--(R3), the objective function converges uniformly in probability to its expectation, which has a unique maximizer at $\Omega_{\mathcal O,0}$. By the argmax continuous mapping theorem \citep{van1996weak}, $\widehat\Omega_{\mathcal O} \xrightarrow{p} \Omega_{\mathcal O,0}$.

For asymptotic normality, let $S_n(\Omega_{\mathcal O}) = \sum_{i=1}^n \dot\ell_{\mathcal O,i}(\Omega_{\mathcal O})$ denote the score vector and $H_n(\Omega_{\mathcal O}) = \sum_{i=1}^n \partial^2 \ell_{\mathcal O,i}(\Omega_{\mathcal O})/\partial\Omega_{\mathcal O}\,\partial\Omega_{\mathcal O}^\top$ the Hessian matrix. The score equation $S_n(\widehat\Omega_{\mathcal O}) = \bm{0}$ and a Taylor expansion around $\Omega_{\mathcal O,0}$ yield
\[
\bm{0} = S_n(\Omega_{\mathcal O,0}) + H_n(\widetilde\Omega_{\mathcal O})(\widehat\Omega_{\mathcal O} - \Omega_{\mathcal O,0}),
\]
for some $\widetilde\Omega_{\mathcal O}$ between $\widehat\Omega_{\mathcal O}$ and $\Omega_{\mathcal O,0}$. Rearranging:
\[
n^{\frac{1}{2}}(\widehat\Omega_{\mathcal O} - \Omega_{\mathcal O,0}) = -\left\{\frac{H_n(\widetilde\Omega_{\mathcal O})}{n}\right\}^{-1} \frac{S_n(\Omega_{\mathcal O,0})}{n^{\frac{1}{2}}}.
\]

By (R4) and the law of large numbers, $n^{-1} H_n(\widetilde\Omega_{\mathcal O}) \xrightarrow{p} -\mathcal{J}_{\mathcal O}$, where the composite information matrix $\mathcal{J}_{\mathcal O} = \E\{-\partial^2 \ell_{\mathcal O,1}(\Omega_{\mathcal O,0})/\partial\Omega_{\mathcal O}\,\partial\Omega_{\mathcal O}^\top\}$ is positive definite. Since $\{\dot\ell_{\mathcal O,i}(\Omega_{\mathcal O,0})\}_{i=1}^n$ are i.i.d.\ with mean zero and finite variance, the multivariate central limit theorem gives $n^{-1/2} S_n(\Omega_{\mathcal O,0}) \xrightarrow{d} \mathcal{N}(\bm{0}, \bm\Sigma_S)$ for some positive definite $\bm\Sigma_S$.

By Slutsky's lemma:
\[
n^{\frac{1}{2}}(\widehat\Omega_{\mathcal O} - \Omega_{\mathcal O,0}) = n^{-\frac{1}{2}} \sum_{i=1}^n \mathcal{J}_{\mathcal O}^{-1}\,\dot\ell_{\mathcal O,i}(\Omega_{\mathcal O,0}) + o_p(1) = n^{-\frac{1}{2}} \sum_{i=1}^n \psi_i^{(\mathcal O)} + o_p(1),
\]
where $\psi_i^{(\mathcal O)} = -\mathcal{J}_{\mathcal O}^{-1}\,\dot\ell_{\mathcal O,i}(\Omega_{\mathcal O,0})$ satisfies $\E\left\{\psi_i^{(\mathcal O)}\right\} = \bm{0}$ and $\Var\left\{\psi_i^{(\mathcal O)}\right\} < \infty$.
\end{proof}

\section{Proofs of Main Theorems}\label{sec:proofs}

This section provides complete proofs of Theorems~1 and~2. We first establish unified notation (Section~\ref{sec:notation}), then prove auxiliary lemmas (Section~\ref{sec:auxiliary}), and finally prove consistency (Section~\ref{sec:proof-consistency}) and asymptotic normality (Section~\ref{sec:proof-normality}).

\subsection{Notation and Setup}\label{sec:notation}

\paragraph{Parameter spaces.}
Let $\psi = (\bm\beta^\top, \bm\theta^\top)^\top \in \mathbb{R}^{p}$ denote the target parameter with true value $\psi_0 = (\bm\beta_0^\top, \bm\theta_0^\top)^\top$. The nuisance parameter is partitioned as $\Omega = (\Omega_{\mathcal V}, \Omega_{\mathcal O}, \Omega_{\mathrm{EB}})$, where:
\begin{align*}
\Omega_{\mathcal V} &= (\bm\gamma, \Lambda_0(\cdot), \sigma^2) &&\text{(visiting process)},\\
\Omega_{\mathcal O} &= (\bm\alpha, \bm\delta, \bm\Sigma_q) &&\text{(observation process)},\\
\Omega_{\mathrm{EB}} &= \{(\mu_{U_i}, s_{U_i}^2)\}_{i=1}^n &&\text{(empirical Bayes posteriors)}.
\end{align*}
Let $\Omega_0 = (\Omega_{\mathcal V,0}, \Omega_{\mathcal O,0}, \Omega_{\mathrm{EB},0})$ denote the true nuisance value and $\widehat\Omega = (\widehat\Omega_{\mathcal V}, \widehat\Omega_{\mathcal O}, \widehat\Omega_{\mathrm{EB}})$ the estimator from Algorithm~1.

\paragraph{Oracle and plug-in quantities.}
For any nuisance value $\Omega$, define:
\begin{itemize}[leftmargin=2em]
\item Weight function: $p_i(t; \Omega) = \overline{\omega}_i(t; \Omega)\,\mathbf{1}(t \leq C_i)\,m_i/\Lambda_0(C_i)$
\item Compensation covariate: $\bm{B}_i(t; \Omega) = \kappa_i(t; \Omega)\,\bm{Z}_i^{\mathcal Y}(t)$, where
\[
\kappa_i(t; \Omega) = \frac{\E\{U_i\,\omega_i(t; U_i) \mid C_i, m_i; \Omega\}}{\E\{\omega_i(t; U_i) \mid C_i, m_i; \Omega\}}
\]
\item Uncentered design vector: $\bm{V}_i(t; \Omega) = \big[\bm{X}_i^{\mathcal Y}(t)^\top,\, \bm{B}_i(t; \Omega)^\top\big]^\top$
\item Risk-set averages:
\[
\overline{\bm{V}}(t; \Omega) = \frac{\sum_{j=1}^n \bm{V}_j(t; \Omega)\,p_j(t; \Omega)}{\sum_{j=1}^n p_j(t; \Omega)}
\]
\item Centered design vector: $\bm{H}_i(t; \Omega) = \bm{V}_i(t; \Omega) - \overline{\bm{V}}(t; \Omega)$
\end{itemize}

For brevity, we write $p_{i,0}(t) = p_i(t; \Omega_0)$, $\bm{B}_{i,0}(t) = \bm{B}_i(t; \Omega_0)$, etc.\ for oracle quantities, and $\widehat{p}_i(t) = p_i(t; \widehat\Omega)$, $\widehat{\bm{B}}_i(t) = \bm{B}_i(t; \widehat\Omega)$, etc.\ for plug-in quantities.

\paragraph{Population risk-set center.}
Define the population-level center:
\[
\overline{\bm{V}}_\star(t) = \frac{\E\{\bm{V}_{1,0}(t)\,p_{1,0}(t)\}}{\E\{p_{1,0}(t)\}}.
\]
By the i.i.d.\ assumption (C1), this equals $\lim_{n\to\infty} \overline{\bm{V}}(t; \Omega_0)$ in probability.

\paragraph{Estimating function and estimator.}
The outcome estimating function is:
\begin{equation}\label{eq:Un-general}
U_n(\psi; \Omega) = \frac{1}{n} \sum_{i=1}^n \int_0^\tau \bm{H}_i(t; \Omega)\,\big\{Y_i(t) - \psi^\top \bm{V}_i(t; \Omega)\big\}\,R_i^{\mathcal Y}(t)\,\dint N_i(t).
\end{equation}
The estimator $\widehat\psi$ solves $U_n(\widehat\psi; \widehat\Omega) = \bm{0}$. Since $U_n(\psi; \Omega)$ is affine in $\psi$, the solution admits the closed form:
\begin{equation}\label{eq:psi-hat-closed}
\widehat\psi = \bm{S}_n^{-1}\,\bm{T}_n,
\end{equation}
where
\begin{align}
\bm{S}_n &= \frac{1}{n} \sum_{i=1}^n \int_0^\tau \widehat{\bm{H}}_i(t)\,\widehat{\bm{V}}_i(t)^\top\,R_i^{\mathcal Y}(t)\,\dint N_i(t), \label{eq:Sn-def}\\
\bm{T}_n &= \frac{1}{n} \sum_{i=1}^n \int_0^\tau \widehat{\bm{H}}_i(t)\,Y_i(t)\,R_i^{\mathcal Y}(t)\,\dint N_i(t). \label{eq:Tn-def}
\end{align}

\paragraph{Limiting matrix.}
Define the ``bread'' matrix:
\begin{equation}\label{eq:W-def}
\bm{W} = \E\!\left\{\int_0^\tau \bm{H}_{1,0}(t)\,\bm{V}_{1,0}(t)^\top\,R_1^{\mathcal Y}(t)\,\dint\Lambda_1(t)\right\},
\end{equation}
where $\Lambda_i(t) = \int_0^t \eta_i\,\mathbf{1}(s \leq C_i)\,\exp(\bm\gamma_0^\top \bm{X}_i^{\mathcal V})\,\dint\Lambda_0(s)$ is the compensator of $N_i(t)$.

\subsection{Auxiliary Lemmas}\label{sec:auxiliary}

We establish four lemmas that form the foundation for both theorems.

\begin{lemma}[Centering identity]\label{lem:centering}
For any $t \in [0, \tau]$,
\[
\E\big\{\bm{H}_{i,0}(t)\,p_{i,0}(t)\big\} = \bm{0}.
\]
\end{lemma}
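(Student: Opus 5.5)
The identity is essentially definitional, once we recall that $\bm{H}_{i,0}(t)$ here is centered at the population risk-set average $\overline{\bm{V}}_\star(t)$ rather than at the sample average. That convention is the one used in (C6) and in the definition of $\bm W$. We will therefore read $\bm{H}_{i,0}(t) = \bm{V}_{i,0}(t) - \overline{\bm{V}}_\star(t)$, where $\overline{\bm{V}}_\star(t) = \E\{\bm{V}_{1,0}(t)\,p_{1,0}(t)\}/\E\{p_{1,0}(t)\}$ is a deterministic function of $t$.

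\textbf{Well-definedness.} First we check that the expectations involved are finite and that the denominator is positive.
\begin{itemize}
\item By (C2), $\bm{X}_i^{\mathcal Y}(t)$ and $\bm{Z}_i^{\mathcal Y}(t)$ are bounded, and $\overline{\omega}_{i,0}(t)\in[\varepsilon,1-\varepsilon]$.
\item $\kappa_{i,0}(t)$ is a ratio of $\E\{U_i\omega_i\mid C_i,m_i\}$ to $\overline{\omega}_{i,0}(t)\ge\varepsilon$. So $|\kappa_{i,0}(t)|\le \E(|U_i|\mid C_i,m_i)/\varepsilon$, which is integrable.
\item $p_{i,0}(t)$ is bounded by $m_i/\Lambda_0(C_i)$, which has finite moments because $m_i$ given $U_i$ is Poisson with bounded mean $\eta_i\nu_i$ and $\eta_i$ is lognormal.
\item Hence $\E\{\|\bm{V}_{1,0}(t)\|p_{1,0}(t)\}<\infty$; we would invoke Cauchy--Schwarz with $\E(m_i^2)<\infty$ and a lower bound on $\Lambda_0(C_i)$ from the $\nu_{\min}$ condition.
\item $\E\{p_{1,0}(t)\}>0$ because $\overline{\omega}\ge\varepsilon$ and, by (C2), $\pr(t\le C_1)>0$ for $t$ in the support.
\end{itemize}
Where $\E\{p_{1,0}(t)\}=0$, the product $\bm{H}p$ vanishes almost surely and the identity holds trivially.

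\textbf{Main computation.} By linearity of expectation, and because $\overline{\bm{V}}_\star(t)$ is nonrandom,
\[
\E\{\bm{H}_{i,0}(t)p_{i,0}(t)\} = \E\{\bm{V}_{i,0}(t)p_{i,0}(t)\} - \overline{\bm{V}}_\star(t)\,\E\{p_{i,0}(t)\}.
\]
By (C1) the subjects are identically distributed, so the index $i$ may be replaced by $1$. Substituting the definition of $\overline{\bm{V}}_\star(t)$, the second term equals $\E\{\bm{V}_{1,0}(t)p_{1,0}(t)\}$, and the difference is zero.

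\textbf{Sample-centered remark.} For later use we will also note the finite-sample analogue, $\sum_i \{\bm{V}_i(t;\Omega)-\overline{\bm{V}}(t;\Omega)\}\,p_i(t;\Omega)=\bm 0$ for every $\Omega$. This holds by direct algebra from the definition of $\overline{\bm{V}}(t;\Omega)$. By the law of large numbers and Slutsky, the sample centering converges to $\overline{\bm{V}}_\star(t)$.

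\textbf{Main obstacle.} The only real work is the integrability bookkeeping for $\kappa_{i,0}(t)\,m_i/\Lambda_0(C_i)$. The algebraic cancellation itself is immediate.
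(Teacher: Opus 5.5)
Your proof is correct and is the same as the paper's. Both expand $\E\{\bm{H}_{i,0}(t)p_{i,0}(t)\}$ by linearity into $\E\{\bm{V}_{i,0}(t)p_{i,0}(t)\} - \overline{\bm{V}}_\star(t)\E\{p_{i,0}(t)\}$ and cancel the two terms using the definition of the population center $\overline{\bm{V}}_\star(t)$; your integrability checks and sample-centered remark are extra bookkeeping the paper takes for granted (one small fix there: $\eta_i\nu_i$ is not bounded, but the lognormal frailty has all moments finite, which is what you actually need).
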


\begin{proof}
By definition,
\begin{align*}
\E\big\{\bm{H}_{i,0}(t)\,p_{i,0}(t)\big\} 
&= \E\left[\left\{ \bm{V}_{i,0}(t) - \overline{\bm{V}}_\star(t)\right\}\,p_{i,0}(t)\right] \\
&= \E\left\{\bm{V}_{i,0}(t)\,p_{i,0}(t)\big\} - \overline{\bm{V}}_\star(t)\,\E\big\{p_{i,0}(t)\right\}.
\end{align*}
By the definition of $\overline{\bm{V}}_\star(t)$:
\[
\E\big\{\bm{V}_{i,0}(t)\,p_{i,0}(t)\big\} = \overline{\bm{V}}_\star(t)\,\E\big\{p_{i,0}(t)\big\},
\]
which yields the result.
\end{proof}

\begin{lemma}[Equivalence of $\bm{W}$ forms]\label{lem:W-equiv}
The matrix $\bm{W}$ in \eqref{eq:W-def} satisfies:
\[
\bm{W} = \E\!\left\{\int_0^\tau \bm{H}_{1,0}(t)\,\bm{H}_{1,0}(t)^\top\,R_1^{\mathcal Y}(t)\,\dint\Lambda_1(t)\right\}.
\]
\end{lemma}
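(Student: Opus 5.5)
The plan is to reduce the claim to the centering identity in Lemma~\ref{lem:centering}. By definition $\bm V_{1,0}(t)=\bm H_{1,0}(t)+\overline{\bm V}_\star(t)$, so
\[
\bm W-\E\!\left\{\int_0^\tau \bm H_{1,0}(t)\bm H_{1,0}(t)^\top R_1^{\mathcal Y}(t)\,\dint\Lambda_1(t)\right\}
=\int_0^\tau \E\!\left\{\bm H_{1,0}(t)\,R_1^{\mathcal Y}(t)\,\dint\Lambda_1(t)\right\}\overline{\bm V}_\star(t)^\top .
\]
This uses the fact that $\overline{\bm V}_\star(t)$ is a deterministic population quantity, so it can be pulled outside the expectation. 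Boundedness from (C2) and finiteness of $\bm W$ from (C6) justify exchanging $\E$ with $\int_0^\tau$ (Fubini). It therefore suffices to show that $\E\{\bm H_{1,0}(t)R_1^{\mathcal Y}(t)\,\dint\Lambda_1(t)\}=\bm 0$ for each $t$.

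The first step replaces the compensator $\dint\Lambda_1$ by the counting measure $\dint N_1$. The integrand $\bm H_{1,0}(t)$ depends only on the covariates, $C_1$, $m_1$ and deterministic population quantities. The mark $R_1^{\mathcal Y}(t)$ is, given a visit and $U_1$, a conditionally Bernoulli variable with mean $\omega_1(t;U_1)$. Hence $\int \bm H_{1,0}R_1^{\mathcal Y}(\dint N_1-\dint\Lambda_1)$ should be handled as in Lemma~\ref{lem:compensation}, with the marks incorporated into an enlarged filtration, to conclude that it has mean zero. This gives
\[
\E\{\bm H_{1,0}(t)R_1^{\mathcal Y}(t)\,\dint\Lambda_1(t)\}=\E\{\bm H_{1,0}(t)R_1^{\mathcal Y}(t)\,\dint N_1(t)\}.
\]

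The second step conditions on $(C_1,m_1)$ and the covariates, applying identity \eqref{eq:joint-exp} from the proof of Proposition~\ref{prop:mean-zero}, which rests on Lemma~\ref{lem:order}. Since $\bm H_{1,0}(t)$ is measurable with respect to this conditioning information, averaging over $U_1\mid(C_1,m_1)$ yields
\[
\E\{\bm H_{1,0}(t)R_1^{\mathcal Y}(t)\,\dint N_1(t)\mid C_1,m_1\}=\bm H_{1,0}(t)\,\overline\omega_{1,0}(t)\mathbf 1(t\le C_1)\frac{m_1}{\Lambda_0(C_1)}\dint\Lambda_0(t)=\bm H_{1,0}(t)p_{1,0}(t)\,\dint\Lambda_0(t).
\]
Taking outer expectations and invoking Lemma~\ref{lem:centering} then gives zero, which completes the argument.

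I expect the main obstacle to be the first step. The difficulty is that $\bm H_{1,0}(t)$ involves $m_1=N_1(C_1)$, which is not predictable at time $t$, so Lemma~\ref{lem:compensation} does not apply literally. If the martingale route fails, I would fall back on the following. Read $\dint\Lambda_1$ in the definition of $\bm W$ as the conditional intensity given $(C_1,m_1)$, namely $\mathbf 1(t\le C_1)m_1\,\dint\Lambda_0(t)/\Lambda_0(C_1)$, which is the form implied by Lemma~\ref{lem:order}. Under that reading, $\E\{R_1^{\mathcal Y}(t)\mid C_1,m_1,\text{visit}\}=\overline\omega_{1,0}(t)$ gives $\bm H_{1,0}p_{1,0}\,\dint\Lambda_0$ directly, and the conclusion again follows from Lemma~\ref{lem:centering}.
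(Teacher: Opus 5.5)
Your argument is the paper's: write $\bm V_{1,0}=\bm H_{1,0}+\overline{\bm V}_\star$ and pull the deterministic $\overline{\bm V}_\star(t)$ out of the expectation. Then show $\E\{\bm H_{1,0}(t)R_1^{\mathcal Y}(t)\,\dint\Lambda_1(t)\mid C_1,m_1\}=\bm H_{1,0}(t)p_{1,0}(t)\,\dint\Lambda_0(t)$ and finish with Lemma~\ref{lem:centering}. The paper asserts that conditional identity in one line (``by the tower property and Lemma~\ref{lem:compensation}''), without your detour through $\dint N_1$.

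You were right to distrust that detour, and the failure is real rather than a technicality. Let $\mathcal G_1$ be generated by $U_1$, $C_1$ and the covariates, and set $\nu_1=e^{\bm\gamma_0^\top\bm X_1^{\mathcal V}}\Lambda_0(C_1)$. Given $\mathcal G_1$, $m_1$ is Poisson with mean $\eta_1\nu_1$, and Lemma~\ref{lem:order} gives
\[
\E\bigl[\bm H_{1,0}(t)R_1^{\mathcal Y}(t)\{\dint N_1(t)-\dint\Lambda_1(t)\}\bigm|\mathcal G_1\bigr]
=\Cov\bigl\{\bm H_{1,0}(t),m_1\bigm|\mathcal G_1\bigr\}\,\omega_1(t;U_1)\,\mathbf 1(t\le C_1)\,\frac{\dint\Lambda_0(t)}{\Lambda_0(C_1)} .
\]
This is nonzero in general, because $\kappa_{1,0}(t)$, and hence $\bm H_{1,0}(t)$, moves with $m_1$ through $(\mu_{U_1},s_{U_1}^2)$. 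So drop Step~1 and make your fallback the proof.

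The fallback is not a concession. Under the literal frailty compensator $\dint\Lambda_1(t)=\eta_1\mathbf 1(t\le C_1)e^{\bm\gamma_0^\top\bm X_1^{\mathcal V}}\dint\Lambda_0(t)$ in \eqref{eq:W-def}, the paper's one-line identity would require $\nu_1\E\{\eta_1\omega_1(t;U_1)\mid C_1,m_1\}=m_1\overline{\omega}_{1,0}(t)$. This already fails at $m_1=0$, where the left side is positive and the right side is zero. So the cross term need not vanish, and the lemma needs exactly your reading $\dint\Lambda_1(t)=\mathbf 1(t\le C_1)m_1\,\dint\Lambda_0(t)/\Lambda_0(C_1)$.

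Under that reading your fallback coincides with the paper's computation. It is also the reading under which $\bm W$ really is the limit of $\bm S_n^0$ in Lemma~\ref{lem:LLN-main}. By the law of large numbers and Lemma~\ref{lem:order}, that limit is $\E\int_0^\tau\bm H_{1,0}(t)\bm V_{1,0}(t)^\top p_{1,0}(t)\,\dint\Lambda_0(t)$.
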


\begin{proof}
Since $\bm{H}_{i,0}(t) = \bm{V}_{i,0}(t) - \overline{\bm{V}}_\star(t)$:
\begin{align*}
\bm{H}_{i,0}(t)\,\bm{V}_{i,0}(t)^\top 
&= \bm{H}_{i,0}(t)\,\big\{ \bm{H}_{i,0}(t) + \overline{\bm{V}}_\star(t)\big\}^\top \\
&= \bm{H}_{i,0}(t)\,\bm{H}_{i,0}(t)^\top + \bm{H}_{i,0}(t)\,\overline{\bm{V}}_\star(t)^\top.
\end{align*}
Taking expectations:
\begin{align*}
\E\!\left\{\int_0^\tau \bm{H}_{i,0}(t)\,\overline{\bm{V}}_\star(t)^\top\,R_i^{\mathcal Y}(t)\,\dint\Lambda_i(t)\right\}
&= \int_0^\tau \E\big\{\bm{H}_{i,0}(t)\,R_i^{\mathcal Y}(t)\,\dint\Lambda_i(t)\big\}\,\overline{\bm{V}}_\star(t)^\top.
\end{align*}
By the tower property and Lemma~\ref{lem:compensation}:
\[
\E\big\{\bm{H}_{i,0}(t)\,R_i^{\mathcal Y}(t)\,\dint\Lambda_i(t) \mid C_i, m_i\big\} = \bm{H}_{i,0}(t)\,p_{i,0}(t)\,\dint\Lambda_0(t).
\]
Taking the outer expectation and applying Lemma~\ref{lem:centering}:
\[
\E\big\{\bm{H}_{i,0}(t)\,p_{i,0}(t)\big\}\,\dint\Lambda_0(t) = \bm{0},
\]
which completes the proof.
\end{proof}

\begin{lemma}[Baseline cancellation]\label{lem:baseline-cancel}
Define $\varepsilon_i(t) = Y_i(t) - \beta_0(t) - \bm\beta_0^\top \bm{X}_i^{\mathcal Y}(t) - \bm{b}_i^\top \bm{Z}_i^{\mathcal Y}(t)$. Then:
\[
\E\!\left\{\int_0^\tau \bm{H}_{i,0}(t)\,\big\{\beta_0(t) + \varepsilon_i(t)\big\}\,R_i^{\mathcal Y}(t)\,\dint N_i(t)\right\} = \bm{0}.
\]
\end{lemma}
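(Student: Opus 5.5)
We split the integrand into its baseline part $\beta_0(t)$ and its error part $\varepsilon_i(t)$ and handle each by conditioning on $(C_i,m_i)$, exactly as in Steps 3 and 6 of Proposition~\ref{prop:mean-zero}. The key fact is that $\bm H_{i,0}(t)=\bm V_{i,0}(t)-\overline{\bm V}_\star(t)$ is a deterministic function of $t$ given $(C_i,m_i)$ and the covariates. This holds because $\kappa_{i,0}(t)$ depends only on $(\mu_{U_i},s_{U_i}^2)$, which are functions of $(C_i,m_i,\bm X_i^{\mathcal V})$, together with $\bm X_i^{\mathcal O}(t)$ and $\bm Z_i^{\mathcal Y}(t)$; and $\overline{\bm V}_\star$ is nonrandom. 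Hence $\bm H_{i,0}(t)$ plays the role of $g(U_i,s)$ in \eqref{eq:joint-exp}, with no dependence on $U_i$, and can be pulled outside the inner conditional expectations.

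\emph{Baseline term.} Applying \eqref{eq:joint-exp} and then averaging over $U_i\mid(C_i,m_i)$ gives
\[
\E\{\bm H_{i,0}(t)\beta_0(t)R_i^{\mathcal Y}(t)\,\dint N_i(t)\mid C_i,m_i\}=\bm H_{i,0}(t)\,p_{i,0}(t)\,\beta_0(t)\,\dint\Lambda_0(t).
\]
This uses Lemma~\ref{lem:order} and $\E\{\omega_i(t;U_i)\mid C_i,m_i\}=\overline{\omega}_{i,0}(t)$. We then take the outer expectation, interchange expectation and the $\dint t$-integral by Fubini (justified by the boundedness in (C2) and $\E m_i<\infty$), and use that $\beta_0(t)\,\dint\Lambda_0(t)$ is deterministic. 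Lemma~\ref{lem:centering} then gives
\[
\E\{\bm H_{i,0}(t)p_{i,0}(t)\}\,\dint\mathcal A_0(t)=\bm 0
\]
for every $t$, so this contribution integrates to zero.

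\emph{Error term.} By Assumption~\ref{asmp:cond_indep}, $\varepsilon_i(t)$ is independent of $\{N_i,R_i^{\mathcal Y}\}$ given $U_i$ and the covariates. It is also mean zero and independent of $U_i$, $C_i$ (Assumption~\ref{asmp:censoring}) and $\bm b_i$. Conditioning on $(C_i,m_i,U_i)$, the factor $\bm H_{i,0}(t)R_i^{\mathcal Y}(t)\,\dint N_i(t)$ separates from $\varepsilon_i(t)$, whose conditional mean is $0$. So this term vanishes pointwise in $t$, as in \eqref{eq:eps-zero}, before we even need the centering.

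The main obstacle is making rigorous the claim that $\bm H_{i,0}(t)$ is measurable with respect to $(C_i,m_i)$ and the covariates, and is unaffected by the jump times themselves. This matters because $m_i=N_i(C_i)$ is determined by the same process that generates $\dint N_i(t)$, so conditioning on $m_i$ must be handled through the order-statistics representation of Lemma~\ref{lem:order} rather than through a martingale argument with respect to the natural filtration. A further point is that $\varepsilon_i(t)$ must be independent of the visit times at the specific $t$; we would state this explicitly via the continuous-time version of Assumption~\ref{asmp:cond_indep}, and, if needed, treat $\varepsilon_i$ as a process independent of everything else, so that evaluating it at random visit times causes no selection.
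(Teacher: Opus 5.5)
Your proposal is correct and follows essentially the same route as the paper: condition on $(C_i,m_i)$, use the Term~(I) and Term~(IV) computations from Proposition~\ref{prop:mean-zero} to reduce the conditional mean of the integrand to $\bm H_{i,0}(t)\,p_{i,0}(t)\,\dint\mathcal A_0(t)$, and finish with Fubini and Lemma~\ref{lem:centering}. You also state explicitly that $\bm H_{i,0}(t)$ depends only on $(C_i,m_i)$ and the covariates, which is what allows it to be pulled outside the conditional expectation; the paper uses this step without comment. One small quibble: $\kappa_{i,0}(t)$ is not bounded by (C2) alone, since $\mu_{U_i}$ grows with $m_i$, so the Fubini step really needs a higher moment of $m_i$ rather than just $\E m_i<\infty$ (the Poisson--lognormal model supplies this).
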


\begin{proof}
By Proposition~\ref{prop:mean-zero}, the compensated process $M_i(t)$ satisfies $\E\{M_i(t) \mid C_i, m_i\} = 0$. In particular, the conditional expectation of the outcome residual times the observation indicator equals:
\[
\E\big[\{Y_i(t) - \bm\beta_0^\top \bm{X}_i^{\mathcal Y}(t) - \bm\theta_0^\top \bm{B}_{i,0}(t)\}\,R_i^{\mathcal Y}(t)\,\dint N_i(t) \mid C_i, m_i\big] = p_{i,0}(t)\,\beta_0(t)\,\dint\Lambda_0(t).
\]
This follows from the decomposition in Proposition~\ref{prop:mean-zero}: Terms (II)--(IV) vanish, leaving only Term (I).

Define $\mathcal{A}_0(t) = \int_0^t \beta_0(s)\,\dint\Lambda_0(s)$. Then:
\[
\E\big[\{\beta_0(t) + \varepsilon_i(t)\}\,R_i^{\mathcal Y}(t)\,\dint N_i(t) \mid C_i, m_i\big] = p_{i,0}(t)\,\dint\mathcal{A}_0(t).
\]

Taking the outer expectation and integrating:
\[
\E\!\left[\int_0^\tau \bm{H}_{i,0}(t)\,\{\beta_0(t) + \varepsilon_i(t)\}\,R_i^{\mathcal Y}(t)\,\dint N_i(t)\right] = \E\!\left\{\int_0^\tau \bm{H}_{i,0}(t)\,p_{i,0}(t)\,\dint\mathcal{A}_0(t)\right\}.
\]
By Fubini's theorem and Lemma~\ref{lem:centering}:
\[
\E\!\left[\int_0^\tau \bm{H}_{i,0}(t)\,p_{i,0}(t)\,\dint\mathcal{A}_0(t)\right] = \int_0^\tau \E\big[\bm{H}_{i,0}(t)\,p_{i,0}(t)\big]\,\dint\mathcal{A}_0(t) = \bm{0}. \qedhere
\]
\end{proof}

\begin{lemma}[Uniform convergence of plug-in quantities]\label{lem:plugin-conv}
Under assumptions \textup{(C1)--(C5)} and the consistency results in Lemmas~\ref{lem:io-mest} and~\ref{lem:laplace}:
\begin{enumerate}[label=\textup{(\alph*)}]
\item $\displaystyle\sup_{t \le \tau} \big\|\widehat{\overline{\bm{V}}}(t) - \overline{\bm{V}}_\star(t)\big\| \xrightarrow{p} 0$;
\item $\displaystyle\frac{1}{n} \sum_{i=1}^n \int_0^\tau \big\|\widehat{\bm{B}}_i(t) - \bm{B}_{i,0}(t)\big\|\,R_i^{\mathcal Y}(t)\,\dint N_i(t) \xrightarrow{p} 0$.
\end{enumerate}
\end{lemma}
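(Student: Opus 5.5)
I would prove part (b) first, since part (a) reduces to it. Write $\widehat{\bm B}_i(t)-\bm B_{i,0}(t)=\{\widehat\kappa_i(t)-\kappa_{i,0}(t)\}\bm Z_i^{\mathcal Y}(t)$. By (C2), $\|\bm Z_i^{\mathcal Y}\|\le C_X$, so it suffices to control $n^{-1}\sum_i\int_0^\tau|\widehat\kappa_i(t)-\kappa_{i,0}(t)|R_i^{\mathcal Y}(t)\,\dint N_i(t)$.

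By Corollary~\ref{cor:kappa}, $\kappa_i(t;\Omega)=g\{\mu_{U_i},s_{U_i}^2,\bm X_i^{\mathcal O}(t),\bm Z_i^{\mathcal O}(t);\Omega_{\mathcal O}\}$ for an explicit smooth map $g$. The denominator of $g$ is at least $1$. The Mills ratio $\phi(k)/\Phi(k)$ is Lipschitz with derivative bounded by $1$ in absolute value. Lemma~\ref{lem:laplace}(a) gives $s_{U_i}^2\in(0,1]$. Hence on the compact parameter set (R1), with bounded covariates, $g$ satisfies
\[
|g(\mu,s^2;\Omega_{\mathcal O})-g(\mu',s'^2;\Omega'_{\mathcal O})|\le L\,(1+|\mu|+|\mu'|)\bigl(|\mu-\mu'|+|s^2-s'^2|+\|\Omega_{\mathcal O}-\Omega'_{\mathcal O}\|\bigr)
\]
for a constant $L$. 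The factor $(1+|\mu|)$ arises only because $k$ is linear in $\mu$ while the Mills ratio is at most linear. I would then bound the integral by $\sum_i m_i$ times the pointwise difference. Using $|\mu_{U_i}|\le\sigma m_i+C$ from the mode equation $\ell_i'=0$, together with Lemma~\ref{lem:laplace}(b) and Lemma~\ref{lem:io-mest}, the mean-value envelope from the proof of Lemma~\ref{lem:laplace}(b) becomes $G_i m_i(1+m_i)$. The bound is then $\{n^{-1}\sum_i G_i m_i(1+m_i)\}\cdot O_p(n^{-1/2})$, which is $o_p(1)$ provided suitable moments of $m_i$ are finite. These moments follow from the lognormal–Poisson structure and bounded $\nu_i$, since all moments of $m_i$ are finite.

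For part (a), I would write $\widehat{\overline{\bm V}}(t)=\widehat{\bm N}(t)/\widehat D(t)$ with $\widehat{\bm N}(t)=n^{-1}\sum_j\widehat{\bm V}_j\widehat p_j$ and $\widehat D(t)=n^{-1}\sum_j\widehat p_j$. I would then split each into (i) plug-in minus oracle and (ii) oracle minus population mean. For (i), $|\widehat p_j-p_{j,0}|\le m_j\{|\widehat{\overline\omega}_j-\overline\omega_{j,0}|/\Lambda_0(C_j)+|1/\widehat\Lambda_0(C_j)-1/\Lambda_0(C_j)|\}$. Since $\Phi$ is Lipschitz, the same envelope argument as in (b) applies, together with $\sup_c|\widehat\Lambda_0-\Lambda_0|\to_p0$ from (C3) and $\Lambda_0(C_j)\ge\Lambda_0$ bounded below, which follows from $\nu_i\ge\nu_{\min}$. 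The $\widehat{\bm B}_j$ part of $\widehat{\bm V}_j$ is handled by a pointwise-in-$t$ version of the (b) bound. These bounds are uniform in $t$ because none of the envelopes depend on $t$.

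For (ii), I would obtain a uniform law of large numbers over $t\in[0,\tau]$. The summands $\bm V_{j,0}(t)p_{j,0}(t)$ are products of $\mathbf 1(t\le C_j)$, which is monotone in $t$, with functions of $t$ through bounded covariates. Under bounded-variation paths for the covariates (implicit in (C2)), these classes are Glivenko–Cantelli by bracketing, with integrable envelope $C m_j(1+m_j)$. Finally, $\E\{p_{1,0}(t)\}\ge\varepsilon\,\E\{m_1\mathbf 1(t\le C_1)/\Lambda_0(C_1)\}$ is bounded away from zero on $[0,\tau]$ by the positivity in (C2). The continuous mapping theorem applied to the ratio then gives uniform convergence.

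\textbf{Main obstacle.} I expect the main obstacle to be the uniformity in $t$ for (ii), and in particular the requirement that $\E\,p_{1,0}(t)$ stays bounded away from zero near $\tau$; (C2) only states this loosely. I would first try to read (C2) as $\pr(C_i\ge\tau)>0$. If that fails, I would restrict to $[0,\tau']$ with $\tau'<\tau$. A secondary concern is the unbounded envelope in $\mu_{U_i}$, which I handle through the moment bound on $m_i$.
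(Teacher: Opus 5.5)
Your argument is correct. It shares the paper's skeleton: split $\widehat{\overline{\bm V}}-\overline{\bm V}_\star$ into a plug-in-minus-oracle piece and an oracle-minus-population piece; handle the latter by a uniform LLN, positivity of $\E\{p_{1,0}(t)\}$ and continuous mapping; handle the former, and part (b), through Lipschitz continuity of $\Phi$ and of $\kappa_i$ (Corollary~\ref{cor:kappa}) in $(\Omega_{\mathcal O},\mu_{U_i},s_{U_i}^2)$.

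The real difference is how you control unboundedness. The paper uses a single Lipschitz constant $L$ for $\kappa_i$ and treats $\bm V_{j,0}(t)p_{j,0}(t)$ and $\bm V_{j,0}(t)-\overline{\bm V}_{n,0}(t)$ as uniformly bounded. In (b) it then passes via Cauchy--Schwarz to $n^{-1}\sum_i|\widehat\mu_{U_i}-\mu_{U_i,0}|^2$, and justifies that this vanishes by bounding the empirical variance by the average squared deviation, i.e., by the quantity itself. None of these bounds holds as stated, because $p_{j,0}\propto m_j$ and $\kappa_{j,0}$ grows with $|\mu_{U_j}|$.

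You instead keep the $(1+|\mu|)$ factor in the Lipschitz bound. That factor is right: the derivative of $\phi(k)/\Phi(k)$ is bounded by $1$, the square-root denominator is at least $1$, $s^2\le 1$, and $\mu$ enters $\partial k$ linearly. You then use $|\mu_{U_i}|\le\sigma m_i+C$ and multiply the subject-level mean-value envelope directly by the global nuisance error. The average in (b) becomes $\{n^{-1}\sum_i G_i m_i(1+m_i)\}\cdot o_p(1)$. This bypasses the $L^2$ step and needs only consistency of $\widehat\Omega_{\mathcal V}$ and $\widehat\Omega_{\mathcal O}$, plus moments of $m_i$ that the Poisson--lognormal structure with bounded $\nu_i$ supplies.

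Your route costs two things.
\begin{enumerate}
\item You must check that the envelope $G_i$ from Lemma~\ref{lem:laplace}(b) grows only polynomially in $m_i$. It does: at the mode, $\nu_i\exp(\mu_0+\sigma\mu_{U_i})\le\max\{m_i,\nu_{\max}\}$, so $\nabla_{\Omega_{\mathcal V}}\ell_i'$ is polynomially bounded in $m_i$.
\item You add a bounded-variation condition on covariate paths to get the bracketing ULLN in $t$. The paper's own ULLN claim implicitly needs something similar, since uniform boundedness alone does not give a Glivenko--Cantelli class over $t\in[0,\tau]$.
\end{enumerate}

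Finally, reading (C2) as $\pr(C_1\ge\tau)>0$ is exactly what the paper's assertion $\inf_{t\le\tau}\E\{p_{1,0}(t)\}\ge c_0>0$ requires. With $\E(\eta_1)=1$ and noninformative censoring, $\E\{m_1\mathbf 1(t\le C_1)/\Lambda_0(C_1)\}=\E\{\exp(\bm\gamma_0^\top\bm X_1^{\mathcal V})\mathbf 1(t\le C_1)\}$. This makes your lower bound explicit once $\bm X^{\mathcal V}$ is bounded.
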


\begin{proof}
\textit{Proof of (a).}
Define the sample oracle center $\overline{\bm{V}}_{n,0}(t) = \sum_j \bm{V}_{j,0}(t)\,p_{j,0}(t)/\sum_j p_{j,0}(t)$. Decompose:
\[
\widehat{\overline{\bm{V}}}(t) - \overline{\bm{V}}_\star(t) = \underbrace{\left\{\widehat{\overline{\bm{V}}}(t) - \overline{\bm{V}}_{n,0}(t)\right\}}_{=:\Delta_1(t)} + \underbrace{\left\{(\overline{\bm{V}}_{n,0}(t) - \overline{\bm{V}}_\star(t)\right\}}_{=:\Delta_2(t)}.
\]

\textit{Analysis of $\Delta_2(t)$:} By (C1)--(C2) and the uniform law of large numbers (ULLN), the numerator and denominator of $\overline{\bm{V}}_{n,0}(t)$ converge uniformly:
\[
\sup_{t\le\tau}\left\|\frac{1}{n}\sum_{j=1}^n \bm{V}_{j,0}(t)\,p_{j,0}(t) - \E\left\{\bm{V}_{1,0}(t)\,p_{1,0}(t)\right\}\right\| \xrightarrow{p} 0,
\]
and similarly for $n^{-1}\sum_j p_{j,0}(t) \to \E\{p_{1,0}(t)\}$. The boundedness condition (C2) ensures $\bm{V}_{j,0}(t)\,p_{j,0}(t)$ forms a uniformly bounded class of functions indexed by $t$. The positivity condition $\inf_{t\le\tau}\E\{p_{1,0}(t)\} \ge c_0 > 0$ from (C2) ensures that the denominator is bounded away from zero. By the continuous mapping theorem applied to the ratio, $\sup_{t\le\tau}\|\Delta_2(t)\| \xrightarrow{p} 0$.

\textit{Analysis of $\Delta_1(t)$:} Write:
\[
\widehat{\overline{\bm{V}}}(t) = \frac{\sum_j \widehat{\bm{V}}_j(t)\,\widehat{p}_j(t)}{\sum_j \widehat{p}_j(t)}, \qquad
\overline{\bm{V}}_{n,0}(t) = \frac{\sum_j \bm{V}_{j,0}(t)\,p_{j,0}(t)}{\sum_j p_{j,0}(t)}.
\]
A standard ratio identity gives:
\begin{align*}
\widehat{\overline{\bm{V}}}(t) - \overline{\bm{V}}_{n,0}(t)
&= \frac{\sum_j \{\bm{V}_{j,0}(t) - \overline{\bm{V}}_{n,0}(t)\} \{\widehat{p}_j(t) - p_{j,0}(t)\}}{\sum_j \widehat{p}_j(t)} \\
&\quad + \frac{\sum_j \{\widehat{\bm{V}}_j(t) - \bm{V}_{j,0}(t)\} \widehat{p}_j(t)}{\sum_j \widehat{p}_j(t)}.
\end{align*}

For the first term: By boundedness (C2), $\|\bm{V}_{j,0}(t) - \overline{\bm{V}}_{n,0}(t)\| \le 2C_X$. We need to show $\sup_{t\le\tau}|\widehat{p}_j(t) - p_{j,0}(t)| \xrightarrow{p} 0$. Since $p_i(t; \Omega) = \overline{\omega}_i(t; \Omega)\,\mathbf{1}(t \leq C_i)\,m_i/\Lambda_0(C_i)$, this follows from:
\begin{itemize}
\item Consistency of $\widehat{\Lambda}_0(C_i) \to \Lambda_0(C_i)$ uniformly (C3);
\item Lipschitz continuity of $\overline{\omega}_i(t; \Omega)$ in $(\bm\alpha, \bm\delta, \bm\Sigma_q, \mu_{U_i}, s_{U_i}^2)$: the probit function $\Phi(\cdot)$ is Lipschitz with constant $1/\sqrt{2\pi}$, and the argument $k_i(t)$ is smooth in all parameters by Corollary~\ref{cor:kappa};
\item Consistency of $(\widehat{\bm\alpha}, \widehat{\bm\delta}, \widehat{\bm\Sigma}_q)$ from Lemma~\ref{lem:io-mest};
\item Average consistency of $(\widehat\mu_{U_i}, \widehat s_{U_i}^2)$ from Lemma~\ref{lem:laplace}(b).
\end{itemize}

For the second term: We need $\sup_{t\le\tau}\|\widehat{\bm{V}}_j(t) - \bm{V}_{j,0}(t)\| \xrightarrow{p} 0$. Since $\bm{V}_j(t) = [\bm{X}_j^{\mathcal Y}(t)^\top, \bm{B}_j(t)^\top]^\top$ and $\bm{X}_j^{\mathcal Y}(t)$ does not depend on $\Omega$, this reduces to showing $\sup_{t\le\tau}\|\widehat{\bm{B}}_j(t) - \bm{B}_{j,0}(t)\| \xrightarrow{p} 0$, which is addressed below in part (b).

Combining: The denominator $\sum_j \widehat{p}_j(t)/n \xrightarrow{p} \E\left\{p_{1,0}(t)\right\} > 0$ uniformly by the same ULLN argument. Hence both terms vanish uniformly, yielding $\sup_{t\le\tau}\|\Delta_1(t)\| \xrightarrow{p} 0$.

\textit{Proof of (b).}
By Corollary~\ref{cor:kappa}, $\kappa_i(t; \Omega)$ depends on $\Omega$ through $(\bm\alpha, \bm\delta, \bm\Sigma_q, \mu_{U_i}, s_{U_i}^2)$. The explicit formula shows that $\kappa_i(t; \Omega)$ is a composition of:
\begin{itemize}
\item The linear function $\bm\alpha^\top \bm{X}_i^{\mathcal O}(t) + (\bm\delta^\top \bm{Z}_i^{\mathcal O}(t))\mu_{U_i}$;
\item The square root function $\sqrt{ 1 + \left\{ \bm{Z}_i^{\mathcal{O}} (t) \right\}^{\top} \bm{\Sigma}_q \bm{Z}_i^{\mathcal{O}} (t) + \left\{ \bm{\delta}^{\top} \bm{Z}_i^{\mathcal{O}} (t) \right\}^2 s_{U_i}^2 }$;
\item The Mills ratio $\phi(k)/\Phi(k)$, which is Lipschitz on compact sets bounded away from $-\infty$ (ensured by the positivity condition in (C2)).
\end{itemize}

Under boundedness of covariates (C2), each component is Lipschitz in its arguments. By the chain rule for Lipschitz functions, there exists $L < \infty$ such that:
\[
\sup_{t \le \tau} \left| \widehat{\kappa}_i (t) - \kappa_{i,0} (t) \right| \le L \left\{ \| \widehat{\bm{\alpha}} - \bm{\alpha}_0 \| + \| \widehat{\bm{\delta}} - \bm{\delta}_0 \| + \| \widehat{\bm{\Sigma}}_q - \bm{\Sigma}_{q,0} \|_F + | \widehat{\mu}_{U_i} - \mu_{U_i,0} | + | \widehat{s}_{U_i}^2 - s_{U_i,0}^2 | \right\}
\]

Since $\bm{B}_i(t) = \kappa_i(t)\,\bm{Z}_i^{\mathcal Y}(t)$ and $\|\bm{Z}_i^{\mathcal Y}(t)\| \le C_X$:
\[
\sup_{t\le\tau}\|\widehat{\bm{B}}_i(t) - \bm{B}_{i,0}(t)\| \le C_X L\,\big(\|\widehat\Omega_{\mathcal O} - \Omega_{\mathcal O,0}\| + |\widehat\mu_{U_i} - \mu_{U_i,0}| + |\widehat s_{U_i}^2 - s_{U_i,0}^2|\big).
\]

Now consider the average:
\begin{align*}
\frac{1}{n}\sum_{i=1}^n \int_0^\tau \|\widehat{\bm{B}}_i(t) - \bm{B}_{i,0}(t)\|\,R_i^{\mathcal Y}(t)\,\dint N_i(t) 
\le \frac{1}{n}\sum_{i=1}^n \sup_{t\le\tau}\|\widehat{\bm{B}}_i(t) - \bm{B}_{i,0}(t)\|\,N_i(\tau).
\end{align*}
Using $R_i^{\mathcal Y}(t) \le 1$ and applying Cauchy--Schwarz:
\begin{align*}
&\le \left(\frac{1}{n}\sum_{i=1}^n \sup_{t\le\tau}\|\widehat{\bm{B}}_i(t) - \bm{B}_{i,0}(t)\|^2\right)^{1/2} \left(\frac{1}{n}\sum_{i=1}^n N_i(\tau)^2\right)^{1/2}.
\end{align*}

The second factor is $O_p(1)$ since $\E[N_1(\tau)^2] < \infty$ under (C2). For the first factor:
\begin{align*}
&\frac{1}{n} \sum_{i=1}^n \sup_{t \le \tau} \| \widehat{\bm{B}}_i (t) - \bm{B}_{i,0} (t) \|^2 \\
&\le 2C_X^2 L^2 \left\{ \| \widehat{\Omega}_{\mathcal{O}} - \Omega_{\mathcal{O},0} \|^2 + \frac{1}{n} \sum_{i=1}^n ( | \widehat{\mu}_{U_i} - \mu_{U_i,0} |^2 + | \widehat{s}_{U_i}^2 - s_{U_i,0}^2 |^2 ) \right\}.
\end{align*}

By Lemma~\ref{lem:io-mest}, $\|\widehat\Omega_{\mathcal O} - \Omega_{\mathcal O,0}\| = O_p(n^{-1/2})$, so $\|\widehat\Omega_{\mathcal O} - \Omega_{\mathcal O,0}\|^2 = O_p(n^{-1})$.

By Lemma~\ref{lem:laplace}(b), we have:
\[
\frac{1}{n}\sum_{i=1}^n |\widehat\mu_{U_i} - \mu_{U_i,0}|^2 \le \left(\frac{1}{n}\sum_{i=1}^n |\widehat\mu_{U_i} - \mu_{U_i,0}|\right)^2 + \Var_n(|\widehat\mu_{U_i} - \mu_{U_i,0}|) = o_p(1),
\]
where the first term vanishes by \eqref{eq:avg-consistency-laplace} and the variance term is bounded by the average squared deviation. The same argument applies to $\widehat s_{U_i}^2$.

Therefore, $n^{-1}\sum_i \sup_{t\le\tau}\|\widehat{\bm{B}}_i(t) - \bm{B}_{i,0}(t)\|^2 = o_p(1)$, which implies the result.
\end{proof}

\subsection{Proof of Theorem 1 (Consistency)}\label{sec:proof-consistency}

We prove $\widehat\psi \xrightarrow{p} \psi_0$ by showing $\bm{S}_n \xrightarrow{p} \bm{W}$ and $\bm{T}_n \xrightarrow{p} \bm{W}\psi_0$.

\begin{lemma}[LLN for WLS arrays]\label{lem:LLN-main}
Under \textup{(C1)--(C6)}:
\[
\bm{S}_n \xrightarrow{p} \bm{W}, \qquad \bm{T}_n \xrightarrow{p} \bm{W}\psi_0.
\]
\end{lemma}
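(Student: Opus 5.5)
I will prove both limits by a two-layer reduction: first replace every plug-in quantity in $\bm S_n$ and $\bm T_n$ by its oracle counterpart evaluated at $\Omega_0$ with the population center $\overline{\bm V}_\star(t)$; then apply the law of large numbers to the resulting i.i.d.\ averages and identify their limits using Lemmas~\ref{lem:W-equiv} and~\ref{lem:baseline-cancel}.

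\textbf{Step 1: oracle replacement.} Define the oracle arrays
\[
\bm S_n^\circ=\frac1n\sum_{i=1}^n\int_0^\tau \bm H_{i,0}(t)\bm V_{i,0}(t)^\top R_i^{\mathcal Y}(t)\,\dint N_i(t),\qquad
\bm T_n^\circ=\frac1n\sum_{i=1}^n\int_0^\tau \bm H_{i,0}(t)Y_i(t) R_i^{\mathcal Y}(t)\,\dint N_i(t),
\]
with $\bm H_{i,0}=\bm V_{i,0}-\overline{\bm V}_\star$. Write $\widehat{\bm H}_i-\bm H_{i,0}=(\widehat{\bm V}_i-\bm V_{i,0})-(\widehat{\overline{\bm V}}-\overline{\bm V}_\star)$. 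Expanding $\widehat{\bm H}_i\widehat{\bm V}_i^\top-\bm H_{i,0}\bm V_{i,0}^\top$ bilinearly gives two kinds of error term.
\begin{itemize}
\item Terms involving $\widehat{\bm V}_i-\bm V_{i,0}=(\bm 0^\top,(\widehat{\bm B}_i-\bm B_{i,0})^\top)^\top$ are bounded by a constant times $n^{-1}\sum_i\int\|\widehat{\bm B}_i-\bm B_{i,0}\|R_i^{\mathcal Y}\dint N_i$, which is $o_p(1)$ by Lemma~\ref{lem:plugin-conv}(b). This requires boundedness of $\bm V$, which follows from (C2) together with boundedness of $\kappa_i$. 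The latter holds because $s_{U_i}^2\le1$, the Mills ratio is bounded on the range allowed by (C2), and $\mu_{U_i}$ is controlled by $m_i$ through the score equation.
\item Terms involving the center difference are bounded by $\sup_t\|\widehat{\overline{\bm V}}-\overline{\bm V}_\star\|\cdot n^{-1}\sum_i N_i(\tau)\,\sup_t\|\bm V_i\|$, which is $o_p(1)$ by Lemma~\ref{lem:plugin-conv}(a).
\end{itemize}
For $\bm T_n$ the same decomposition applies with $Y_i$ in place of $\bm V_i^\top$. Here I need $n^{-1}\sum_i\int|Y_i|R_i^{\mathcal Y}\dint N_i=O_p(1)$. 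By Cauchy--Schwarz this follows from the Gaussian moments of $\bm b_i$ and $\varepsilon_i$ together with $\E N_1(\tau)^2<\infty$.

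\textbf{Step 2: LLN and identification of the limit of $\bm S_n^\circ$.} By (C1), $\bm S_n^\circ$ and $\bm T_n^\circ$ are averages of i.i.d.\ integrable terms, so the weak LLN gives $\bm S_n^\circ\to\E\int \bm H_{1,0}\bm V_{1,0}^\top R_1^{\mathcal Y}\dint N_1$ and the analogous limit for $\bm T_n^\circ$. To show this equals $\bm W$, replace $\dint N_1$ by its compensator $\dint\Lambda_1$. This replacement is justified by Lemma~\ref{lem:compensation}: the integrand is predictable, being built from baseline covariates and from $(C_1,m_1)$-measurable quantities. Equivalently, one can condition on $(C_1,m_1,U_1)$ and use \eqref{eq:joint-exp} on both sides.

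\textbf{Step 3: identification of the limit of $\bm T_n^\circ$.} Substitute $Y_1=\beta_0(t)+\psi_0^\top\bm V_{1,0}+\bm\theta_0^\top\{U_1-\kappa_{1,0}\}\bm Z_1^{\mathcal Y}+\widetilde{\bm b}_1^\top\bm Z_1^{\mathcal Y}+\varepsilon_1$, as in \eqref{eq:decomp}. The $\psi_0^\top\bm V_{1,0}$ piece yields $\bm W\psi_0$ by Step 2. The terms (III)--(IV) vanish, and with them the $\beta_0$ piece, by Lemma~\ref{lem:baseline-cancel}, whose proof handles (I), (III) and (IV). Term (II) vanishes by Step 4 of Proposition~\ref{prop:mean-zero}: conditional on $(C_1,m_1)$ the factor $\bm H_{1,0}(t)$ is fixed, and $\E[\{U_1-\kappa_{1,0}\}R_1^{\mathcal Y}\dint N_1\mid C_1,m_1]=0$.

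\textbf{Conclusion and main obstacle.} Combining the steps gives $\bm S_n\to\bm W$ and $\bm T_n\to\bm W\psi_0$. The main obstacle is Step 1 for $\bm T_n$ and for the center-difference terms. Lemma~\ref{lem:plugin-conv}(b) only gives an average $L^1$ rate weighted by $\dint N_i$, whereas $Y_i$ is unbounded. I would handle this by truncation: split on $\{|Y_i|\le K\}$, bound the tail uniformly via Gaussian moments, and then let $K\to\infty$. A second subtle point is measurability. $\bm H_{i,0}$ depends on $m_i=N_i(C_i)$, so it is not predictable in the natural filtration, and the identification should therefore go through the conditional-on-$(C_i,m_i)$ order-statistics argument of Lemma~\ref{lem:order} rather than literal martingale theory.
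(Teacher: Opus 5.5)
Your overall route matches the paper's: replace plug-ins by oracles, apply a direct LLN to the i.i.d.\ oracle averages, and identify the limits via Proposition~\ref{prop:mean-zero} and Lemma~\ref{lem:baseline-cancel}. Both subtleties you flag are real weak points of the paper's argument: $Y_i$ is unbounded, and $\bm H_{i,0}$ depends on $m_i=N_i(C_i)$.

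\textbf{The gap.} It is in Step~2, where you identify $\E\int_0^\tau \bm H_{1,0}\bm V_{1,0}^\top R_1^{\mathcal Y}\,\dint N_1$ with $\bm W$.
\begin{itemize}
\item Your first justification fails because the integrand is not predictable.
\item Your ``equivalent'' justification fails too. Conditioning on $(C_1,m_1,U_1)$ and the covariates, \eqref{eq:joint-exp} gives the factor $\omega_1(t;U_1)\,m_1/\Lambda_0(C_1)$ on the $\dint N_1$ side. The $\dint\Lambda_1$ side instead contributes $\omega_1(t;U_1)\,\eta_1\exp(\bm\gamma_0^\top\bm X_1^{\mathcal V})$.
\item Given $(U_1,\bm X_1^{\mathcal V},C_1)$, $m_1$ is Poisson with mean $\mu_1=\eta_1\exp(\bm\gamma_0^\top\bm X_1^{\mathcal V})\Lambda_0(C_1)$, and $\E\{m_1 f(m_1)\}=\mu_1\E\{f(m_1+1)\}$. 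So the two factors agree in expectation only against $f$ that do not depend on $m_1$. For $f(m)=m$ you get $\mu_1+\mu_1^2$ versus $\mu_1^2$.
\item Here $\bm H_{1,0}(t)\bm V_{1,0}(t)^\top$ depends on $m_1$ through $\kappa_{1,0}(t)$, i.e.\ through $(\mu_{U_1},s^2_{U_1})$, whenever $\sigma>0$.
\end{itemize}
The order-statistics route you fall back on at the end (Lemma~\ref{lem:order}) therefore does not give $\bm W$. What it actually delivers is
\[
\bm S_n\xrightarrow{p}\widetilde{\bm W}:=\E\Big\{\int_0^\tau \bm H_{1,0}(t)\,\bm V_{1,0}(t)^\top\,p_{1,0}(t)\,\dint\Lambda_0(t)\Big\},
\]
which in general differs from $\bm W$ by a Poisson size-bias term.

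\textbf{The repair.} The argument closes if you target $\widetilde{\bm W}$ instead. Run your Steps~2--3 entirely conditionally on $(C_1,m_1)$; this gives $\bm S_n\to\widetilde{\bm W}$ and $\bm T_n\to\widetilde{\bm W}\psi_0$.
\begin{itemize}
\item Terms (II)--(IV) vanish conditionally.
\item The $\beta_0$ term vanishes by Lemma~\ref{lem:centering}, which is built for exactly the measure $p_{1,0}\,\dint\Lambda_0$.
\item The same lemma gives $\widetilde{\bm W}=\E\int_0^\tau\bm H_{1,0}\bm H_{1,0}^\top p_{1,0}\,\dint\Lambda_0$, which is symmetric.
\end{itemize}
Theorem~\ref{thm:consistency} then follows with (C6) imposed on $\widetilde{\bm W}$.

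\textbf{The paper has the same defect.} It splits $\dint N_i=\dint\Lambda_i+\dint M_i$ with an $m_i$-dependent integrand. Lemma~\ref{lem:W-equiv} also uses $\E\{R_i^{\mathcal Y}(t)\,\dint\Lambda_i(t)\mid C_i,m_i\}=p_{i,0}(t)\,\dint\Lambda_0(t)$. That would require $\nu_i\,\E\{\eta_i\,\omega_i(t;U_i)\mid C_i,m_i\}=m_i\,\overline\omega_{i,0}(t)$, which is false in general: take $\omega_i\equiv 1$, and the posterior mean of $\eta_i\nu_i$ given $m_i$ is not $m_i$. So the lemma with $\bm W$ as literally defined is not what either argument proves.

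\textbf{A smaller point.} $\kappa_{i,0}$ is not uniformly bounded: the score equation only gives $-\sigma\nu_{\max}e^{\mu_0}\le\mu_{U_i}\le\sigma m_i$. Replace ``boundedness of $\bm V$'' in your Step~1 by an envelope of order $1+m_i$ together with moments of $m_i$, which are all finite for a lognormal-mixed Poisson. This also makes your truncation argument for $Y_i$ go through.
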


\begin{proof}
Define oracle versions:
\[
\bm{S}_n^0 = \frac{1}{n}\sum_{i=1}^n \int_0^\tau \bm{H}_{i,0}(t)\,\bm{V}_{i,0}(t)^\top\,R_i^{\mathcal Y}(t)\,\dint N_i(t), \quad
\bm{T}_n^0 = \frac{1}{n}\sum_{i=1}^n \int_0^\tau \bm{H}_{i,0}(t)\,Y_i(t)\,R_i^{\mathcal Y}(t)\,\dint N_i(t).
\]

\textit{Step 1: Oracle convergence of $\bm{S}_n^0$.}
By the martingale decomposition $\dint N_i(t) = \dint\Lambda_i(t) + \dint M_i(t)$, where $M_i(t) = N_i(t) - \Lambda_i(t)$ is a martingale:
\[
\bm{S}_n^0 = \underbrace{\frac{1}{n}\sum_{i=1}^n \int_0^\tau \bm{H}_{i,0}(t)\,\bm{V}_{i,0}(t)^\top\,R_i^{\mathcal Y}(t)\,\dint\Lambda_i(t)}_{=:\bm{S}_n^{0,\Lambda}} + \underbrace{\frac{1}{n}\sum_{i=1}^n \int_0^\tau \bm{H}_{i,0}(t)\,\bm{V}_{i,0}(t)^\top\,R_i^{\mathcal Y}(t)\,\dint M_i(t)}_{=:\bm{S}_n^{0,M}}.
\]

For the martingale term $\bm{S}_n^{0,M}$: By Lemma~\ref{lem:compensation}, $\E[\bm{S}_n^{0,M}] = \bm{0}$. The variance satisfies:
\[
\Var ( \bm{S}_n^{0,M} ) = \frac{1}{n^2} \sum_{i=1}^n \E \left\{ \int_0^\tau \bm{H}_{i,0} (t) \bm{V}_{i,0} (t)^{\top} R_i^{\mathcal{Y}} (t) \bm{V}_{i,0} (t) \bm{H}_{i,0} (t)^{\top} \dint \Lambda_i (t) \right\} = O ( 1 / n )
\]
where boundedness of covariates (C2) ensures the integrand is uniformly bounded. Hence $\bm{S}_n^{0,M} \xrightarrow{p} \bm{0}$.

For the compensator term $\bm{S}_n^{0,\Lambda}$: This is an average of i.i.d.\ bounded random variables. By the law of large numbers:
\[
\bm{S}_n^{0,\Lambda} \xrightarrow{p} \E \left\{ \int_0^\tau \bm{H}_{1,0} (t) \bm{V}_{1,0} (t)^{\top} R_1^{\mathcal{Y}} (t) \dint \Lambda_1 (t) \right\} = \bm{W}.
\]

\textit{Step 2: Oracle convergence of $\bm{T}_n^0$.}
Write $Y_i ( t ) = \beta_0 ( t ) + \varepsilon_i ( t ) + \bm{\beta}_0^{\top} \bm{X}_i^{\mathcal{Y}} ( t ) + \bm{\theta}_0^{\top} \bm{B}_{i,0} ( t ) + \bm{\theta}_0^{\top} \left\{ U_i - \kappa_{i,0} ( t ) \right\} \bm{Z}_i^{\mathcal{Y}} ( t ) + \widetilde{\bm{b}}_i^{\top} \bm{Z}_i^{\mathcal{Y}} ( t )$

By Lemma~\ref{lem:baseline-cancel}, the contribution from $\beta_0(t) + \varepsilon_i(t)$ vanishes in expectation. The terms involving $(U_i - \kappa_{i,0}(t))$ and $\widetilde{\bm{b}}_i$ also vanish by Proposition~\ref{prop:mean-zero}. The remaining terms give:
\begin{align*}
\E \left\{ \int_0^\tau \bm{H}_{i,0} ( t ) Y_i ( t ) R_i^{\mathcal{Y}} ( t ) \dint \Lambda_i ( t ) \right\} 
&= \E \left[ \int_0^\tau \bm{H}_{i,0} ( t ) \left\{ \bm{\beta}_0^{\top} \bm{X}_i^{\mathcal{Y}} ( t ) + \bm{\theta}_0^{\top} \bm{B}_{i,0} ( t ) \right\} R_i^{\mathcal{Y}} ( t ) \dint \Lambda_i ( t ) \right] \\
&= \E \left\{ \int_0^\tau \bm{H}_{i,0} ( t ) \bm{V}_{i,0} ( t )^{\top} R_i^{\mathcal{Y}} ( t ) \dint \Lambda_i ( t ) \right\} \psi_0 = \bm{W} \psi_0.
\end{align*}
By the LLN, $\bm{T}_n^0 \xrightarrow{p} \bm{W}\psi_0$.

\textit{Step 3: Plug-in convergence.}
Write:
\[
\bm{S}_n - \bm{S}_n^0 = \frac{1}{n} \sum_{i=1}^n \int_0^\tau \left\{ ( \widehat{\bm{H}}_i - \bm{H}_{i,0} ) \widehat{\bm{V}}_i^{\top} + \bm{H}_{i,0} ( \widehat{\bm{V}}_i - \bm{V}_{i,0} )^{\top} \right\} R_i^{\mathcal{Y}} ( t ) \dint N_i ( t )
\]
By boundedness of covariates (C2):
\[
\|\widehat{\bm{H}}_i(t) - \bm{H}_{i,0}(t)\| \le \|\widehat{\bm{V}}_i(t) - \bm{V}_{i,0}(t)\| + \|\widehat{\overline{\bm{V}}}(t) - \overline{\bm{V}}_\star(t)\| + \|\overline{\bm{V}}_\star(t) - \overline{\bm{V}}_{n,0}(t)\|.
\]
Similarly, $\|\widehat{\bm{V}}_i(t) - \bm{V}_{i,0}(t)\| = \|\widehat{\bm{B}}_i(t) - \bm{B}_{i,0}(t)\|$.

Combining with the boundedness of $\widehat{\bm{V}}_i(t)$, $\bm{H}_{i,0}(t)$, and the triangle inequality:
\begin{align*}
\| \bm{S}_n - \bm{S}_n^0 \| &\le C \left\{ \sup_{t \le \tau} \| \widehat{\overline{\bm{V}}} ( t ) - \overline{\bm{V}}_\star ( t ) \| \cdot \frac{1}{n} \sum_{i=1}^n N_i ( \tau ) \right. \\
&\qquad + \left. \frac{1}{n} \sum_{i=1}^n \int_0^\tau \| \widehat{\bm{B}}_i ( t ) - \bm{B}_{i,0} ( t ) \| R_i^{\mathcal{Y}} ( t ) \dint N_i ( t ) \right\}.
\end{align*}
Both terms vanish by Lemma~\ref{lem:plugin-conv}, so $\|\bm{S}_n - \bm{S}_n^0\| \xrightarrow{p} 0$. The argument for $\bm{T}_n - \bm{T}_n^0$ is analogous, using the boundedness of $Y_i(t)$ implied by the moment conditions.
\end{proof}

\begin{proof}[Proof of Theorem 1]
By Lemma~\ref{lem:LLN-main}, $\bm{S}_n \xrightarrow{p} \bm{W}$ and $\bm{T}_n \xrightarrow{p} \bm{W}\psi_0$. Assumption (C6) ensures $\bm{W}$ is positive definite, so $\bm{S}_n^{-1} \xrightarrow{p} \bm{W}^{-1}$ by the continuous mapping theorem. Therefore:
\[
\widehat\psi = \bm{S}_n^{-1}\bm{T}_n \xrightarrow{p} \bm{W}^{-1}\bm{W}\psi_0 = \psi_0. \qedhere
\]
\end{proof}

\subsection{Proof of Theorem 2 (Asymptotic Normality)}\label{sec:proof-normality}

We establish the asymptotic distribution of $n^{1/2}(\widehat\psi - \psi_0)$ via M-estimation theory.

\subsubsection{Jacobian Analysis}

\begin{lemma}[Jacobian with respect to $\psi$]\label{lem:jacobian}
For fixed $\Omega$, the estimating function \eqref{eq:Un-general} satisfies:
\[
\frac{\partial U_n(\psi; \Omega)}{\partial\psi^\top} = -\bm{S}_n(\Omega),
\]
where $\bm{S}_n(\Omega) = n^{-1}\sum_{i=1}^n \int_0^\tau \bm{H}_i(t; \Omega)\,\bm{V}_i(t; \Omega)^\top\,R_i^{\mathcal Y}(t)\,\dint N_i(t)$. Consequently:
\[
\frac{\partial U_n(\psi; \widehat\Omega)}{\partial\psi^\top}\bigg|_{\psi=\psi_0} \xrightarrow{p} -\bm{W}.
\]
\end{lemma}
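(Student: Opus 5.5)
The identity is a direct differentiation. For fixed $\Omega$, neither $\bm H_i(t;\Omega)$ nor $\bm V_i(t;\Omega)$ depends on $\psi$: the weights $p_i(t;\Omega)$, the compensation covariates $\bm B_i(t;\Omega)$ and the centering $\overline{\bm V}(t;\Omega)$ are all functions of $\Omega$ and the data only. Consequently the integrand of \eqref{eq:Un-general} is affine in $\psi$, with $\psi$ entering only through $-\bm H_i(t;\Omega)\,\bm V_i(t;\Omega)^\top\psi\,R_i^{\mathcal Y}(t)$. Interchanging $\partial/\partial\psi^\top$ with the finite sum and with the counting-process integral is legitimate because $\int_0^\tau\cdot\,\dint N_i(t)$ is a finite sum over the jump times of $N_i$. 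This gives $\partial U_n(\psi;\Omega)/\partial\psi^\top=-\bm S_n(\Omega)$ exactly, for every $\psi$ and $\Omega$. In particular the Jacobian does not depend on $\psi$, so evaluating it at $\psi=\psi_0$ costs nothing.

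For the convergence statement, take $\Omega=\widehat\Omega$. Then $\bm S_n(\widehat\Omega)$ is precisely the matrix $\bm S_n$ of \eqref{eq:Sn-def}, built from $\widehat{\bm H}_i$ and $\widehat{\bm V}_i$. Lemma~\ref{lem:LLN-main} already gives $\bm S_n\xrightarrow{p}\bm W$ under (C1)--(C6). For completeness, that argument has three steps:
\begin{enumerate}
\item Replace $\widehat\Omega$ by $\Omega_0$ to obtain the oracle $\bm S_n^0$; the error is controlled by Lemma~\ref{lem:plugin-conv}(a)--(b) together with $n^{-1}\sum_i N_i(\tau)=O_p(1)$.
\item Split $\dint N_i=\dint\Lambda_i+\dint M_i$. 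The martingale part has mean zero and variance $O(1/n)$ by Lemma~\ref{lem:compensation} and boundedness (C2).
\item Apply the LLN to the compensator part, whose limit is $\bm W$ as defined in \eqref{eq:W-def}.
\end{enumerate}
The continuous mapping theorem, applied to the negation, then yields $\partial U_n(\psi;\widehat\Omega)/\partial\psi^\top|_{\psi=\psi_0}\xrightarrow{p}-\bm W$.

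The only step needing care is the plug-in replacement in the first item above. There the centered design $\widehat{\bm H}_i$ involves the random center $\widehat{\overline{\bm V}}(t)$, which must be replaced both by the sample oracle center and by the population center $\overline{\bm V}_\star(t)$. The plan is to use the uniform-in-$t$ convergence of Lemma~\ref{lem:plugin-conv}(a), and the average $L^1$ control of $\widehat{\bm B}_i-\bm B_{i,0}$ along the observed jumps from Lemma~\ref{lem:plugin-conv}(b). These are weighted by $R_i^{\mathcal Y}\,\dint N_i$ and bounded through the second moment of $N_i(\tau)$. Both ingredients are already established, so nothing beyond citing Lemma~\ref{lem:LLN-main} should be needed.
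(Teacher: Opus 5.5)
Your proposal is correct and follows the paper's proof. The estimating function is affine in $\psi$, and $\bm H_i(t;\Omega)$ and $\bm V_i(t;\Omega)$ do not depend on $\psi$, so differentiating gives $-\bm S_n(\Omega)$ exactly; the convergence to $-\bm W$ is then Lemma~\ref{lem:LLN-main} applied at $\Omega=\widehat\Omega$. Your recap of the steps inside Lemma~\ref{lem:LLN-main}, and your remark that the counting-process integral is a finite sum (which justifies the interchange), add detail the paper omits without changing the argument.
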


\begin{proof}
Since $U_n(\psi; \Omega) = n^{-1}\sum_i \int \bm{H}_i(t; \Omega)\{Y_i(t) - \psi^\top\bm{V}_i(t; \Omega)\}R_i^{\mathcal Y}(t)\,\dint N_i(t)$ is affine in $\psi$:
\[
\frac{\partial}{\partial\psi^\top}\big[\bm{H}_i(t)\{Y_i(t) - \psi^\top\bm{V}_i(t)\}\big] = -\bm{H}_i(t)\,\bm{V}_i(t)^\top.
\]
The convergence follows from Lemma~\ref{lem:LLN-main}.
\end{proof}

\subsubsection{Influence Function Representations for Nuisance Estimators}

We establish asymptotically linear representations for each nuisance block.

\begin{proposition}[Nuisance influence functions]\label{prop:nuisance-IF}
Under \textup{(C1)--(C5)}:
\begin{enumerate}[label=\textup{(\alph*)}]
\item \textbf{Visiting block:}
\begin{equation}\label{eq:IF-V}
n^{1/2}(\widehat\Omega_{\mathcal V} - \Omega_{\mathcal V,0}) = n^{-1/2}\sum_{i=1}^n \psi_i^{(\mathcal V)} + o_p(1),
\end{equation}
where $\psi_i^{(\mathcal V)} = (\psi_i^{(\bm\gamma)\top}, \psi_i^{(\Lambda)}(\cdot), \psi_i^{(\sigma^2)})^\top$ with $\E[\psi_i^{(\mathcal V)}] = \bm{0}$.

\item \textbf{Observation block:}
\begin{equation}\label{eq:IF-O}
n^{1/2}(\widehat\Omega_{\mathcal O} - \Omega_{\mathcal O,0}) = n^{-1/2}\sum_{i=1}^n \psi_i^{(\mathcal O)} + o_p(1),
\end{equation}
where $\psi_i^{(\mathcal O)} = -\mathcal{J}_{\mathcal O}^{-1}\,\dot\ell_{\mathcal O,i}(\Omega_{\mathcal O,0})$ with $\mathcal{J}_{\mathcal O}$ the composite information matrix.

\item \textbf{EB block:} The mapping $(\mu_{U_i}, s_{U_i}^2) = g_i(\Omega_{\mathcal V})$ yields:
\begin{equation}\label{eq:IF-EB}
n^{-1}\sum_{i=1}^n \|(\widehat\mu_{U_i}, \widehat s_{U_i}^2) - (\mu_{U_i,0}, s_{U_i,0}^2)\| = O_p(n^{-1/2}),
\end{equation}
with influence representation $\psi_i^{(\mathrm{EB})} = \nabla_{\Omega_{\mathcal V}} g_i(\Omega_{\mathcal V,0})\cdot\psi_i^{(\mathcal V)}$.
\end{enumerate}
\end{proposition}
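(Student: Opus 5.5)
I will prove the three blocks separately. Each argument expresses a nuisance estimator as a smooth functional of i.i.d. subject-level terms and then linearizes.

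\textbf{Visiting block (a).} I start with $\widehat{\bm\gamma}$, which solves the Andersen--Gill score \eqref{eq:EE_gamma}. Under (C1)--(C2) and the Andersen--Gill theory \citep{andersen1982cox}, a Taylor expansion of the score around $\bm\gamma_0$ gives
\[
n^{1/2}(\widehat{\bm\gamma}-\bm\gamma_0)=\mathcal I_\gamma^{-1}n^{-1/2}\sum_i\int_0^\tau\{\bm X_i^{\mathcal V}-\bm e(t)\}\,\dint \widetilde M_i(t)+o_p(1).
\]
Here $\widetilde M_i$ is the marginal (frailty-integrated) rate martingale. Because $\E(\eta_i)=1$, the score is unbiased under the marginal rate model, and I define $\psi_i^{(\bm\gamma)}$ as the summand above.

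For the Breslow estimator \eqref{eq:AalenBreslow}, I write $\widehat\Lambda_0-\Lambda_0$ as an average of $\int_0^{\cdot}\dint\widetilde M_i/s^{(0)}$. I then add the usual term $-\bm h(t)^\top\psi_i^{(\bm\gamma)}$ coming from the $\bm\gamma$-derivative. This yields $\psi_i^{(\Lambda)}(\cdot)$, uniformly in $t\le\tau$, by the functional delta method on the class of uniformly bounded monotone functions.

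For $\widehat\sigma^2_\eta$ in \eqref{eq:MoM_VarEta}, I note that the maximum with $0$ is inactive with probability tending to one, because $\sigma^2>0$ is interior. The estimator is therefore a ratio of two sample means of $m_i^2-m_i-\nu_i^2$ and $\nu_i^2$, evaluated at $\widehat\nu_i$. I linearize it in $(\bm\gamma,\Lambda_0)$ and substitute the two influence functions just obtained. Applying the delta method to $\sigma^2=\log(1+\sigma^2_\eta)$ then gives $\psi_i^{(\sigma^2)}$. Each piece has mean zero, because the martingales have mean zero and because the moment identity $\E(m_i^2-m_i\mid C_i,\bm X_i^{\mathcal V})=(1+\sigma_\eta^2)\nu_i^2$ holds for a mixed Poisson count.

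\textbf{Observation block (b).} This is exactly Lemma~\ref{lem:io-mest} under (C4). The only caveat is that the objective uses the plug-in $(\widehat\mu_{U_i},\widehat s^2_{U_i})$. The statement concerns the representation with $\dot\ell_{\mathcal O,i}(\Omega_{\mathcal O,0})$, so I take the first-step nuisance at its truth, as the lemma does. If the generated-regressor correction is needed, I would add $\mathcal J_{\mathcal O}^{-1}\E\{\partial\dot\ell_{\mathcal O,1}/\partial\Omega_{\mathcal V}\}\psi_i^{(\mathcal V)}$. This term is still i.i.d. and mean zero, so it does not change the form of \eqref{eq:IF-O}.

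\textbf{EB block (c).} I will reuse the mean-value argument of Lemma~\ref{lem:laplace}(b). By the implicit function theorem, $g_i(\Omega_{\mathcal V})=(\mu_{U_i},s^2_{U_i})$ is differentiable, and it depends on $\Omega_{\mathcal V}$ only through $(\nu_i,\sigma)$. Since $|\ell_i''|\ge1$, the gradient is dominated by an envelope $G_i$ with $\E G_i<\infty$. Here $G_i$ is polynomial in $m_i$, and I use $\E m_i^2<\infty$. This gives
\[
n^{-1}\sum_i\|g_i(\widehat\Omega_{\mathcal V})-g_i(\Omega_{\mathcal V,0})\|\le (n^{-1}\textstyle\sum_i G_i)\,\sup_{c\le\tau}\|\widehat\Omega_{\mathcal V}-\Omega_{\mathcal V,0}\|,
\]
and the right-hand side is $O_p(1)O_p(n^{-1/2})$ by (a). A first-order expansion of the same map produces $\psi_i^{(\mathrm{EB})}=\nabla g_i(\Omega_{\mathcal V,0})\psi_i^{(\mathcal V)}$. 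The second-order remainder is $O_p(n^{-1})$ on average, by a bound on the second derivative of $g_i$ that again uses $|\ell_i''|\ge1$.

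\textbf{Main obstacle.} I expect the hardest step to be (a) for $\Lambda_0$ and $\sigma^2$. There I need tightness of $n^{1/2}(\widehat\Lambda_0-\Lambda_0)$ in $\ell^\infty[0,\tau]$ under the frailty-induced dependence within subjects. I will handle this by treating each subject's whole process as one i.i.d. unit and applying a Donsker argument for bounded-variation integrands, rather than relying on a predictable-variation martingale CLT.
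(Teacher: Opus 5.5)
Your proposal follows the same route as the paper's three-sentence proof, which covers standard counting-process theory for $\widehat{\bm\gamma}$ and the Breslow estimator in (a), Lemma~\ref{lem:io-mest} in (b), and the implicit-function/envelope argument of Lemma~\ref{lem:laplace}(b) in (c). Where you go beyond the paper, you are right on two points.

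\textbf{Part (a).} Under the frailty model, $\widetilde M_i(t)=N_i(t)-\int_0^t\mathbb{I}(s\le C_i)\exp(\bm\gamma_0^\top\bm X_i^{\mathcal V})\,\dint\Lambda_0(s)$ has mean zero. It is not a martingale in the subject's own filtration, because the true intensity carries $\E(\eta_i\mid\text{past visits})\neq 1$, so drop the word ``martingale''. For the same reason, the paper's appeal to martingale-based counting-process theory does not literally apply. Your device of treating each subject's whole process as one i.i.d.\ unit and using an empirical-process argument (the marginal-rates approach of Lin, Wei, Yang and Ying, 2000) is what is actually needed. It is needed already for $\widehat{\bm\gamma}$, not only for $\widehat\Lambda_0$, and $\mathcal I_\gamma$ should be read as the derivative of the limiting score rather than an information matrix. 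You also correctly need $\sigma^2>0$ to make the $\max\{0,\cdot\}$ in \eqref{eq:MoM_VarEta} inactive. At $\sigma^2=0$ the estimator has a non-normal boundary limit and is not asymptotically linear. This hypothesis (the one in Theorem~\ref{thm:normality}) is not among (C1)--(C5) and must be added.

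\textbf{Parts (b) and (c).} Two of your steps do not deliver the displayed formulas. The paper's proof has the same problem, because the formulas themselves need adjusting.

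In (b), the generated-regressor term is not optional. We have $\partial\overline\omega_i(t)/\partial\mu_{U_i}=\phi\{k_i(t)\}\,\bm\delta^\top\bm Z_i^{\mathcal O}(t)/D_i(t)$, where $D_i(t)$ is the denominator of $k_i(t)$, and this is non-zero whenever $\bm\delta\neq\bm 0$. Hence $\E\{\partial\dot\ell_{\mathcal O,1}/\partial\Omega_{\mathcal V}\}\neq 0$ in general. For the feasible estimator,
$\psi_i^{(\mathcal O)}=\mathcal J_{\mathcal O}^{-1}[\dot\ell_{\mathcal O,i}(\Omega_{\mathcal O,0})+\E\{\partial\dot\ell_{\mathcal O,1}/\partial\Omega_{\mathcal V}\}\psi_i^{(\mathcal V)}]$.
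The stated formula is therefore only the oracle influence function. Note also the sign. The Taylor expansion in Lemma~\ref{lem:io-mest} gives $+\mathcal J_{\mathcal O}^{-1}\dot\ell_{\mathcal O,i}$, which matches the sign you used for the correction, not $-\mathcal J_{\mathcal O}^{-1}\dot\ell_{\mathcal O,i}$.

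In (c), a first-order expansion gives
$g_i(\widehat\Omega_{\mathcal V})-g_i(\Omega_{\mathcal V,0})=\nabla g_i(\Omega_{\mathcal V,0})\,n^{-1}\sum_j\psi_j^{(\mathcal V)}+r_i$, with $n^{-1}\sum_i\|r_i\|=o_p(n^{-1/2})$.
Every subject's EB error is driven by the sample average of the visiting influence functions, not by $\psi_i^{(\mathcal V)}$. So the expansion does not ``produce'' $\nabla g_i\,\psi_i^{(\mathcal V)}$. A subject-indexed summand exists only after aggregation. For i.i.d.\ subject-level weights $a_i$,
$n^{-1/2}\sum_i a_i\{g_i(\widehat\Omega_{\mathcal V})-g_i(\Omega_{\mathcal V,0})\}=n^{-1/2}\sum_j\E\{a_1\nabla g_1(\Omega_{\mathcal V,0})\}\psi_j^{(\mathcal V)}+o_p(1)$.
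This is the operator $\mathcal L$ in Step~4 of Proposition~\ref{prop:Un-expansion}. Using $a_j\nabla g_j(\Omega_{\mathcal V,0})\psi_j^{(\mathcal V)}$ instead would give the wrong variance and cross-covariances in $\bm\Gamma$. Your $O_p(n^{-1/2})$ rate bound in (c), obtained from (a) through the envelope $G_i$, is fine as written.
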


\begin{proof}
Part (a) follows from standard counting-process theory for partial likelihood estimators \citep{andersen2012statistical}. The estimating equation for $\bm\gamma$ is a partial likelihood score, and the Aalen--Breslow estimator admits an asymptotically linear representation. Part (b) is established in Lemma~\ref{lem:io-mest}. Part (c) combines the chain rule with Lemma~\ref{lem:laplace}(b): the implicit function theorem gives differentiability of $g_i$, and the influence function for composite quantities follows by the delta method.
\end{proof}

\subsubsection{First-Order Expansion of the Estimating Function}

\begin{proposition}[Asymptotic linearity of $U_n(\psi_0; \widehat\Omega)$]\label{prop:Un-expansion}
Under \textup{(C1)--(C5)}:
\begin{equation}\label{eq:Un-linear-rep}
n^{1/2}\,U_n(\psi_0; \widehat\Omega) = n^{-1/2}\sum_{i=1}^n \phi_i + o_p(1),
\end{equation}
where $\phi_i = \phi_i^{(M)} + \phi_i^{(\mathcal V)} + \phi_i^{(\mathcal O)} + \phi_i^{(\mathrm{EB})}$ with:
\begin{align}
\phi_i^{(M)} &= \int_0^\tau \bm{H}_{i,0}(t)\,\dint M_i(t), \label{eq:phi-M}\\
\phi_i^{(\mathcal V)} &= \mathcal{D}_{\mathcal V} U_n(\psi_0; \Omega_0)[\psi_i^{(\mathcal V)}], \label{eq:phi-V}\\
\phi_i^{(\mathcal O)} &= \mathcal{D}_{\mathcal O} U_n(\psi_0; \Omega_0)[\psi_i^{(\mathcal O)}], \label{eq:phi-O}\\
\phi_i^{(\mathrm{EB})} &= \mathcal{D}_{\mathrm{EB}} U_n(\psi_0; \Omega_0)[\psi_i^{(\mathrm{EB})}]. \label{eq:phi-EB}
\end{align}
Here $M_i(t) = \int_0^t \{Y_i(s) - \bm\beta_0^\top\bm{X}_i^{\mathcal Y}(s) - \bm\theta_0^\top\bm{B}_{i,0}(s)\}R_i^{\mathcal Y}(s)\,\dint N_i(s) - \int_0^t p_{i,0}(s)\,\dint\mathcal{A}_0(s)$ is a mean-zero martingale, and $\mathcal{D}_\bullet$ denotes the Fr\'echet derivative with respect to block $\Omega_\bullet$.
\end{proposition}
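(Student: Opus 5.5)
The plan is to split $n^{1/2}U_n(\psi_0;\widehat\Omega)$ into an oracle part and a nuisance-perturbation part:
\[
n^{1/2}U_n(\psi_0;\widehat\Omega)=n^{1/2}U_n(\psi_0;\Omega_0)+n^{1/2}\{U_n(\psi_0;\widehat\Omega)-U_n(\psi_0;\Omega_0)\}.
\]
Each part gets its own linearization.

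\textbf{Oracle term.} Substitute the decomposition of $Y_i(t)$ from Step 2 of Lemma~\ref{lem:LLN-main}. This gives
\[
U_n(\psi_0;\Omega_0)=n^{-1}\sum_i\int_0^\tau \bm H_i(t;\Omega_0)\,\dint M_i(t)+n^{-1}\sum_i\int_0^\tau \bm H_i(t;\Omega_0)\,p_{i,0}(t)\,\dint\mathcal A_0(t),
\]
with $M_i$ as in Proposition~\ref{prop:mean-zero}.

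The second sum vanishes identically, because $\sum_i \bm H_i(t;\Omega_0)p_{i,0}(t)=0$ by construction of the sample center $\overline{\bm V}(t;\Omega_0)$. This is the finite-sample analogue of Lemma~\ref{lem:centering}.

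In the first sum, replace the sample center by the population center $\overline{\bm V}_\star$, which turns $\bm H_i(t;\Omega_0)$ into $\bm H_{i,0}$. The error is
\[
\int_0^\tau\{\overline{\bm V}_\star(t)-\overline{\bm V}(t;\Omega_0)\}\,\dint\Big\{n^{-1/2}\sum_i M_i(t)\Big\}.
\]
The integrand tends to zero uniformly by Lemma~\ref{lem:plugin-conv}(a). The process $n^{-1/2}\sum_i M_i(\cdot)$ is tight: it is a normalized sum of i.i.d.\ mean-zero processes with bounded variation, using bounded covariates (C2) and finite second moments of $N_i(\tau)$ and $Y_i$. Hence this error is $o_p(1)$, by the standard Lin--Ying argument or integration by parts. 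This yields $\phi_i^{(M)}$.

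\textbf{Perturbation term.} Take a first-order (Fréchet) expansion of $\Omega\mapsto U_n(\psi_0;\Omega)$ around $\Omega_0$, separately in the $\mathcal V$, $\mathcal O$ and EB blocks.

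Dependence on $\Omega$ enters in three places:
\begin{itemize}
\item $\bm V_i$, through $\kappa_i$ (Corollary~\ref{cor:kappa});
\item $p_i$, through $\overline\omega_i$ and $\Lambda_0(C_i)$;
\item the centering term.
\end{itemize}
All of these are smooth, with bounded derivatives: Mills ratios are bounded because (C2) keeps $k_i$ in a compact set, and $s^2_{U_i}\le1$ by Lemma~\ref{lem:laplace}(a).

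Each block derivative is a sample average. By the LLN it converges to its population counterpart $\mathcal D_\bullet U(\psi_0;\Omega_0)$. Combining with Proposition~\ref{prop:nuisance-IF} gives
\[
\mathcal D_\bullet U_n\,[n^{1/2}(\widehat\Omega_\bullet-\Omega_{\bullet,0})]=n^{-1/2}\sum_i\mathcal D_\bullet U[\psi_i^{(\bullet)}]+o_p(1),
\]
which produces $\phi_i^{(\mathcal V)}$, $\phi_i^{(\mathcal O)}$ and $\phi_i^{(\mathrm{EB})}$. Replacing the sample derivative by its limit is justified because the derivative is $O_p(1)$ and converges, while $n^{1/2}(\widehat\Omega-\Omega_0)=O_p(1)$. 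For the functional component $\Lambda_0(\cdot)$ this uses the sup-norm $O_p(n^{-1/2})$ rate implied by Proposition~\ref{prop:nuisance-IF}(a).

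\textbf{Main obstacle: the second-order remainder.} We need the remainder to be $o_p(n^{-1/2})$, and the difficulty is the EB block, which is $n$-dimensional. Proposition~\ref{prop:nuisance-IF}(c) controls only an average $L_1$ error, so a naive quadratic bound is not enough.

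The route I would try first is to write $\widehat\mu_{U_i}-\mu_{U_i,0}=\nabla g_i(\Omega_{\mathcal V,0})(\widehat\Omega_{\mathcal V}-\Omega_{\mathcal V,0})+R_i$. Here $|R_i|\le G_i'\|\widehat\Omega_{\mathcal V}-\Omega_{\mathcal V,0}\|^2$ with an integrable envelope $G_i'$, obtained from second derivatives via the implicit function theorem using $|\ell_i''|\ge1$. This reduces the EB perturbation to the finite-dimensional $\mathcal V$ perturbation, plus a remainder bounded by $O_p(1)\cdot O_p(n^{-1})$.

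The same Taylor-with-envelope device bounds the quadratic remainders in the $\mathcal O$ and $\mathcal V$ blocks by $C\|\widehat\Omega-\Omega_0\|^2\,n^{-1}\sum_iN_i(\tau)(1+|Y_i|)=O_p(n^{-1})$. Summing the four pieces gives the representation~\eqref{eq:Un-linear-rep}.
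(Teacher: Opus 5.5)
Your proposal is correct and follows the paper's own route: the martingale representation of $U_n(\psi_0;\Omega_0)$ at the true nuisance, a first-order Fr\'echet expansion in the nuisance blocks with the influence functions of Proposition~\ref{prop:nuisance-IF} substituted in, and a reduction of the $n$-dimensional EB block to the $\mathcal V$ block, which is exactly the paper's projection step with $\mathcal L=\E\{\mathcal D_{\mathrm{EB},1}U_n(\psi_0;\Omega_0)\circ\nabla_{\Omega_{\mathcal V}}g_1\}$. You are in fact more careful than the paper in three places it glosses over: you replace the sample risk-set center $\overline{\bm V}(t;\Omega_0)$ by $\overline{\bm V}_\star(t)$, you replace the sample Fr\'echet derivatives by their population limits so that the $\phi_i$ are genuinely i.i.d., and you control the EB remainder through an integrable second-derivative envelope rather than the heuristic $O(n^{-1})$ cross-subject influence claim.
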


\begin{proof}
\textit{Step 1: Martingale representation at $\Omega_0$.}
At the true nuisance $\Omega_0$, by Proposition~\ref{prop:mean-zero} and the centering property:
\[
U_n(\psi_0; \Omega_0) = \frac{1}{n}\sum_{i=1}^n \int_0^\tau \bm{H}_{i,0}(t)\,\dint M_i(t).
\]
Since $\{M_i(t)\}$ is a martingale with $\E[M_i(t) \mid C_i, m_i] = 0$, we have $\E[U_n(\psi_0; \Omega_0)] = \bm{0}$.

\textit{Step 2: Taylor expansion in $\Omega$.}
The map $\Omega \mapsto U_n(\psi_0; \Omega)$ is Fr\'echet differentiable under (C2)--(C5). The key quantities $\kappa_i(t; \Omega)$, $\overline{\omega}_i(t; \Omega)$, and the risk-set centers $\overline{\bm{V}}(t; \Omega)$ are all smooth functions of $\Omega$ by the chain rule applied to the probit function and the explicit formulas in Corollary~\ref{cor:kappa}.

A first-order expansion yields:
\begin{align*}
U_n ( \psi_0; \widehat{\Omega} ) &= U_n ( \psi_0; \Omega_0 ) + \mathcal{D}_{\mathcal{V}} U_n ( \psi_0; \Omega_0 ) \left\{ \widehat{\Omega}_{\mathcal{V}} - \Omega_{\mathcal{V},0} \right\} \\
&\quad + \mathcal{D}_{\mathcal{O}} U_n ( \psi_0; \Omega_0 ) \left\{ \widehat{\Omega}_{\mathcal{O}} - \Omega_{\mathcal{O},0} \right\} + \mathcal{D}_{\mathrm{EB}} U_n ( \psi_0; \Omega_0 ) \left\{ \widehat{\Omega}_{\mathrm{EB}} - \Omega_{\mathrm{EB},0} \right\} + R_n.
\end{align*}
where the remainder satisfies $\|R_n\| \le C\|\widehat\Omega - \Omega_0\|^2$. Since each nuisance block is estimated at rate $O_p(n^{-1/2})$, we have $\|R_n\| = O_p(n^{-1})$.

\textit{Step 3: Substitution of influence functions.}
Substituting \eqref{eq:IF-V}--\eqref{eq:IF-EB} and using linearity of Fr\'echet derivatives:
\[
n^{1/2} U_n ( \psi_0; \widehat{\Omega} ) = n^{-1/2} \sum_{i=1}^n \phi_i^{(M)} + n^{-1/2} \sum_{i=1}^n \left\{ \phi_i^{(\mathcal{V})} + \phi_i^{(\mathcal{O})} + \phi_i^{(\mathrm{EB})} \right\} + o_p(1).
\]

\textit{Step 4: Independence of summands.}
Although $\widehat\Omega_{\mathcal V}$ affects all subjects' EB posteriors, the contribution decomposes into i.i.d.\ terms by a projection argument. Define $\mathcal{L} = \E \left\{ \mathcal{D}_{\mathrm{EB},1} U_n ( \psi_0; \Omega_0 ) \circ \nabla_{\Omega_{\mathcal{V}}} g_1 \right\}$. Then:
\[
n^{-1/2} \sum_{i=1}^n \phi_i^{(\mathrm{EB})} = n^{-1/2} \sum_{i=1}^n \mathcal{L} \left\{ \psi_i^{(\mathcal{V})} \right\} + o_p ( 1 ),
\]
where $\mathcal{L}\left\{\psi_i^{(\mathcal V)}\right\}$ depends only on subject $i$'s data. This ``averaging out'' argument is valid because the influence of subject $j$'s data on subject $i$'s EB posterior (through $\widehat\Omega_{\mathcal V}$) is $O(n^{-1})$, contributing $O(n^{-1/2})$ to the sum, which is absorbed in the $o_p(1)$ term.
\end{proof}

\subsubsection{Completion of Proof}

\begin{proof}[Proof of Theorem 2]
\textit{Step 1: Taylor expansion in $\psi$.}
Since $U_n(\widehat\psi; \widehat\Omega) = \bm{0}$ and $U_n(\psi; \widehat\Omega)$ is affine in $\psi$:
\[
\bm{0} = U_n(\psi_0; \widehat\Omega) + \frac{\partial U_n(\psi; \widehat\Omega)}{\partial\psi^\top}\bigg|_{\psi_0}(\widehat\psi - \psi_0).
\]
There is no remainder since $U_n$ is linear in $\psi$.

\textit{Step 2: Rearrangement.}
Multiplying by $n^{1/2}$ and rearranging:
\[
n^{1/2} ( \widehat{\psi} - \psi_0 ) = - \left\{ \left. \frac{ \partial U_n ( \psi; \widehat{\Omega} ) }{ \partial \psi^{\top} } \right|_{\psi_0} \right\}^{-1} n^{1/2} U_n ( \psi_0; \widehat{\Omega} )
\]

\textit{Step 3: Application of Slutsky's lemma.}
By Lemma~\ref{lem:jacobian}: $\partial U_n/\partial\psi^\top|_{\psi_0} \xrightarrow{p} -\bm{W}$.

By Proposition~\ref{prop:Un-expansion}: $n^{1/2}\,U_n(\psi_0; \widehat\Omega) = n^{-1/2}\sum_i \phi_i + o_p(1)$.

The $\{\phi_i\}$ are i.i.d.\ with $\E[\phi_i] = \bm{0}$ and $\Var(\phi_i) = \bm\Gamma < \infty$. By the multivariate CLT:
\[
n^{-1/2}\sum_{i=1}^n \phi_i \xrightarrow{d} \mathcal{N}(\bm{0}, \bm\Gamma).
\]

Slutsky's lemma yields:
\[
n^{1/2}(\widehat\psi - \psi_0) \xrightarrow{d} \mathcal{N}(\bm{0}, \bm{W}^{-1}\bm\Gamma\bm{W}^{-1}). \qedhere
\]
\end{proof}

\subsection{Explicit Forms of Variance Components}

\begin{corollary}[Variance components]\label{cor:variance}
The asymptotic variance $\bm{W}^{-1}\bm\Gamma\bm{W}^{-1}$ has components:
\begin{align}
\bm{W} &= \E \left\{ \int_0^\tau \bm{H}_{1,0} ( t ) \bm{H}_{1,0} ( t )^{\top} R_1^{\mathcal{Y}} ( t ) \dint \Lambda_1 ( t ) \right\}, \label{eq:W-final}\\
\bm{\Gamma} &= \Var ( \phi_1 ) = \Var \left\{ \phi_1^{(M)} + \phi_1^{(\mathcal{V})} + \phi_1^{(\mathcal{O})} + \phi_1^{(\mathrm{EB})} \right\}. \label{eq:Gamma-final}
\end{align}
The ``meat'' matrix $\bm\Gamma$ includes the variance of the primary martingale term and all cross-covariances among the four components.
\end{corollary}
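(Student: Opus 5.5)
The statement is the variance corollary: $\bm W$ has the symmetric form \eqref{eq:W-final} and $\bm\Gamma=\Var(\phi_1)$ as in \eqref{eq:Gamma-final}. I will read both off the proof of Theorem~\ref{thm:normality}, so no new limit argument is needed; everything reduces to identifying the limits that already appear there.

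For $\bm W$, I start from its definition \eqref{eq:W-def}, $\bm W=\E\{\int_0^\tau \bm H_{1,0}\bm V_{1,0}^\top R_1^{\mathcal Y}\,\dint\Lambda_1\}$, which is the probability limit of $-\partial U_n/\partial\psi^\top$ by Lemma~\ref{lem:jacobian} and Lemma~\ref{lem:LLN-main}. I then invoke Lemma~\ref{lem:W-equiv}. It writes $\bm V_{1,0}=\bm H_{1,0}+\overline{\bm V}_\star$, so the problem becomes showing that the cross term $\E\{\int \bm H_{1,0}\overline{\bm V}_\star^\top R_1^{\mathcal Y}\,\dint\Lambda_1\}$ vanishes.

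That cross term is the only step with real content. The lemma conditions on $(C_1,m_1)$ and replaces $R_1^{\mathcal Y}\,\dint\Lambda_1$ by $p_{1,0}\,\dint\Lambda_0$, then applies the centering identity of Lemma~\ref{lem:centering}. I expect this replacement to be the main obstacle. The compensator $\dint\Lambda_1$ carries the frailty $\eta_1$, whereas the order-statistics identity of Lemma~\ref{lem:order} is stated for $\dint N_1$ given $(C_1,m_1)$, not for $\dint\Lambda_1$. To justify it, I would first note that $\E\{\int f\,\dint M_1\}=0$ for predictable bounded $f$, by Lemma~\ref{lem:compensation}. This lets me trade $\dint\Lambda_1$ for $\dint N_1$ inside the unconditional expectation. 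I would then apply Lemma~\ref{lem:order} together with the conditional observation probability to obtain $\overline{\omega}_{1,0}\,m_1\,\dint\Lambda_0/\Lambda_0(C_1)=p_{1,0}\,\dint\Lambda_0$, with the tower property over $U_1$ as in Step~4 of Proposition~\ref{prop:mean-zero}.

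One caveat affects this exchange. $\bm H_{1,0}$ depends on $m_1$, so it is not predictable. I would therefore state the identity with $\dint N_1$ and note that $\bm W$ equals the limit of $\bm S_n$ computed via $\dint N_1$. This is the same object used in Lemma~\ref{lem:LLN-main} once the martingale term is shown negligible. Symmetry and positive semidefiniteness of \eqref{eq:W-final} then follow immediately, and positive definiteness comes from (C6).

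For $\bm\Gamma$, Proposition~\ref{prop:Un-expansion} gives $n^{1/2}U_n(\psi_0;\widehat\Omega)=n^{-1/2}\sum_i\phi_i+o_p(1)$ with $\phi_i=\phi_i^{(M)}+\phi_i^{(\mathcal V)}+\phi_i^{(\mathcal O)}+\phi_i^{(\mathrm{EB})}$. Step~4 of that proof replaces the EB term by $\mathcal L\{\psi_i^{(\mathcal V)}\}$, which makes the $\phi_i$ i.i.d. Their mean is zero by Proposition~\ref{prop:mean-zero}, Proposition~\ref{prop:nuisance-IF} and Lemma~\ref{lem:io-mest}. The CLT limit covariance is therefore $\Var(\phi_1)$. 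Expanding this variance bilinearly produces the four variances and six cross-covariances, which is the statement's remark about the meat matrix. Finiteness follows from (C2), the bounded influence functions, and $\E N_1(\tau)^2<\infty$.

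Finally, I substitute into the sandwich $\bm W^{-1}\bm\Gamma\bm W^{-1}$ from the proof of Theorem~\ref{thm:normality}, using symmetry of $\bm W$.
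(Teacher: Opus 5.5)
Your route is the paper's: its proof of this corollary is just Lemma~\ref{lem:W-equiv} for \eqref{eq:W-final}, plus the definition of $\phi_i$ and independence across subjects for \eqref{eq:Gamma-final}. Your treatment of $\bm\Gamma$ matches that. You are also right that the one substantive step, replacing $R_1^{\mathcal Y}(t)\,\dint\Lambda_1(t)$ by $p_{1,0}(t)\,\dint\Lambda_0(t)$ inside Lemma~\ref{lem:W-equiv}, is unjustified. Lemma~\ref{lem:order} concerns $\dint N_1$, not $\dint\Lambda_1$. Moreover, $\bm H_{1,0}$ depends on the terminal count $m_1$ through $\kappa_{1,0}$, so Lemma~\ref{lem:compensation} cannot swap $\dint\Lambda_1$ for $\dint N_1$. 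The paper's proof inherits exactly this weakness.

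The gap is your repair, which is circular. You say $\bm W$ equals the $\dint N_1$ limit of $\bm S_n$ ``once the martingale term is shown negligible''. But negligibility of $\bm S_n^{0,M}$ in Lemma~\ref{lem:LLN-main} is precisely the exchange you just ruled out. $\bm S_n^{0,M}$ is an i.i.d.\ average that converges to $\E\{\int_0^\tau \bm H_{1,0}\bm V_{1,0}^\top R_1^{\mathcal Y}(\dint N_1-\dint\Lambda_1)\}$, and this limit is not zero in general.

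The discrepancy is not merely technical. Given $(U_1,\bm X_1^{\mathcal V},C_1)$, $m_1$ is Poisson with mean $\mu_1=\eta_1\exp(\bm\gamma^\top\bm X_1^{\mathcal V})\Lambda_0(C_1)$, and $\E\{\mu_1 f(m_1)\mid\cdot\}=\E\{m_1 f(m_1-1)\mid\cdot\}$. Hence the cross term in Lemma~\ref{lem:W-equiv} equals $\int_0^\tau \E\{\bm H^{-}_{1,0}(t)\,p_{1,0}(t)\}\,\overline{\bm V}_\star(t)^\top\dint\Lambda_0(t)$, where $\bm H^{-}_{1,0}$ is $\bm H_{1,0}$ with $m_1$ replaced by $m_1-1$ in $\kappa_{1,0}$. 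This differs from the vanishing quantity of Lemma~\ref{lem:centering} through $\kappa_{1,0}(t;m_1-1)-\kappa_{1,0}(t;m_1)$.

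For example, take $\bm\delta=\bm 0$, $\bm Z_1^{\mathcal Y}\equiv 1$ and $\sigma>0$. Then $\kappa_{1,0}=\mu_{U_1}$ is strictly increasing in $m_1$. The $\bm B$-row of the cross term is $\int_0^\tau c(t)\,\overline{\bm V}_\star(t)^\top\dint\Lambda_0(t)$ with $c(t)<0$, while its $\bm X$-rows vanish. So the $\dint\Lambda_1$ version of \eqref{eq:W-def} is generally not even symmetric and cannot equal \eqref{eq:W-final}.

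What your argument does establish correctly is the identity with $\dint\Lambda_1$ replaced by $\dint N_1$ on both sides (equivalently, $R_1^{\mathcal Y}\dint\Lambda_1$ replaced by $p_{1,0}\,\dint\Lambda_0$). That object is also the right bread:
\begin{itemize}
\item Because $\overline{\bm V}_\star$ is deterministic, $\bm S_n^0$ converges to it by the plain LLN, with no martingale decomposition.
\item $\bm T_n^0$ converges to it times $\psi_0$, by conditioning on $(C_1,m_1)$ and the covariates as in Proposition~\ref{prop:mean-zero}.
\end{itemize}
So redefine $\bm W$ this way in \eqref{eq:W-def}, in \eqref{eq:W-final}, and in (C6), whose positive definiteness you invoke. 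Do not claim agreement with the $\dint\Lambda_1$ object.
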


\begin{proof}
Equation \eqref{eq:W-final} follows from Lemma~\ref{lem:W-equiv}. Equation \eqref{eq:Gamma-final} follows from the definition $\phi_i = \phi_i^{(M)} + \phi_i^{(\mathcal V)} + \phi_i^{(\mathcal O)} + \phi_i^{(\mathrm{EB})}$ and independence across subjects.
\end{proof}


\section{Extension: Time-Varying Observation Process with Lagged Recording Feedback}
\label{sec:tv_obs}

This section extends the proposed framework to accommodate a time-varying observation process in which the recording probability depends on the lagged recording indicator $R_i^{\mathcal Y}(t^-)$.
Such feedback captures clinically plausible scenarios where historical recording patterns influence subsequent testing decisions.

\subsection{Time-varying observation model}

Let $t_{i1} < \cdots < t_{im_i}$ denote the visit times for subject $i$ and let $R_{ij} \equiv R_i^{\mathcal Y}(t_{ij})$ be the recording indicator at visit $j$.
Define the lagged recording indicator as $R_i^{\mathcal Y}(t^-)$ with initial condition $R_i^{\mathcal Y}(t_{i1}^-) = 0$.

Conditional on a visit occurring, we model the recording probability via a probit mixed model with lagged feedback:
\begin{equation}
\pr \left\{ R_i^{\mathcal{Y}} ( t ) = 1 \mid \dint N_i ( t ) = 1, U_i, \bm{q}_i, \mathcal{F}_{i,t^-} \right\}
=
\Phi \left\{ \bm{\alpha}^{\top} \bm{X}_i^{\mathcal{O}} ( t ) + \bm{q}_i^{\top} \bm{Z}_i^{\mathcal{O}} ( t ) + \lambda R_i^{\mathcal{Y}} ( t^- ) \right\},
\label{eq:tv_obs_model}
\end{equation}
where $\bm{\alpha}$ are fixed-effect coefficients, $\bm{q}_i$ are subject-specific random slopes drawn once per patient, and $\lambda$ captures the effect of the lagged recording indicator.
As in the manuscript, $\bm{q}_i \mid U_i \sim N(\bm{\delta} U_i, \bm{\Sigma}_q)$.

The subject-level specification of $\bm{q}_i$ reflects persistent patient characteristics---such as health literacy, provider relationships, or care protocols---that induce correlation in recording decisions across visits beyond what $U_i$ and the lag term capture.

\subsection{Reparameterization via latent innovations}

To facilitate joint inference, we reparameterize the random effects.
Write $\bm{q}_i = \bm{\delta} U_i + \bm{\widetilde{q}}_i$ where $\bm{\widetilde{q}}_i \sim \mathcal{N}(\bm{0}, \bm{\Sigma}_q)$ is independent of $U_i$.
Define the augmented latent vector
\[
\bm{w}_i = (U_i, \bm{\widetilde{q}}_i^\top)^\top \in \mathbb{R}^{1+d},
\]
where $d = \dim(\bm{q}_i)$.
The prior distribution given Stage~1 is
\[
\bm{w}_i \mid (C_i, m_i) \sim \mathcal{N}(\bm{m}_{i,0}, \bm{V}_{i,0}),
\]
with $\bm{m}_{i,0} = (\mu_{U_i}, \bm{0}^\top)^\top$ and $\bm{V}_{i,0} = \mathrm{diag}(s_{U_i}^2, \bm{\Sigma}_q)$.

Under this reparameterization, the linear predictor in~\eqref{eq:tv_obs_model} becomes
\[
\eta_i(t) = a_i(t^-) + \bm{a}_i^\top \bm{w}_i,
\]
where $a_i(t^-) = \bm{\alpha}^\top \bm{X}_i^{\mathcal O}(t) + \lambda R_i^{\mathcal Y}(t^-)$ contains the history-dependent terms and $\bm{a}_i = \left[ \left\{ \bm{\delta}^{\top} \bm{Z}_i^{\mathcal{O}} ( t ) \right\}, \left\{ \bm{Z}_i^{\mathcal{O}} ( t )^{\top} \right\} \right]^{\top}$ collects the loadings on $\bm{w}_i$.

\subsection{Estimation procedure}

The three-stage structure is retained, with modifications to Stages~2--3 necessitated by~\eqref{eq:tv_obs_model}.

\subsubsection{Stage 1: Visiting process (unchanged)}

Stage~1 proceeds as in Section~\ref{sec:estimation}: we estimate $(\widehat{\bm{\gamma}}, \widehat{\Lambda}_0(t), \widehat{\sigma}_\zeta^2)$ and obtain the empirical Bayes approximation $U_i \mid (C_i, m_i) \approx \mathcal{N}(\mu_{U_i}, s_{U_i}^2)$ via Laplace approximation.

\subsubsection{Stage 2: Observation process via subject-level marginal likelihood}

Because $\bm{q}_i$ is drawn once per patient, the recording indicators $\{R_i^{\mathcal Y}(t_{ij})\}_{j=1}^{m_i}$ are correlated through the shared realization of $\bm{q}_i$.
A visit-wise composite likelihood that marginalizes $\bm{q}_i$ separately at each visit would incorrectly treat the $\bm{q}_i$'s as independent across visits:
\[
\int \pr ( \bm{q} \mid U ) \prod_{j=1}^{m_i} p \left\{ R_{ij} \mid U, \bm{q}, R_i^{\mathcal{Y}} ( t_{ij}^- ) \right\} \dint \bm{q} 
\neq 
\prod_{j=1}^{m_i} \int \pr \left\{ R_{ij} \mid U, \bm{q}, R_i^{\mathcal{Y}} ( t_{ij}^- ) \right\} p ( \bm{q} \mid U ) \dint \bm{q}.
\]

To respect this dependence, we maximize the subject-level marginal likelihood for the entire recording sequence.
Let $\Omega_{\mathcal O} = (\bm{\alpha}, \bm{\delta}, \bm{\Sigma}_q, \lambda)$.
The subject-level marginal likelihood is
\begin{equation}
\mathcal{L}_i(\Omega_{\mathcal O})
=
\iint
\prod_{j=1}^{m_i}
\pr_{ij}(u, \bm{q})^{R_{ij}}
\{1 - \pr_{ij}(u, \bm{q})\}^{1-R_{ij}}
\,
\phi(\bm{q}; \bm{\delta}u, \bm{\Sigma}_q)\,
\phi(u; \mu_{U_i}, s_{U_i}^2)\,
\dint \bm{q}\,\dint u,
\label{eq:tv_obs_marginal}
\end{equation}
where $\pr_{ij}(u, \bm{q}) = \Phi\!\left\{\bm{\alpha}^\top \bm{X}_i^{\mathcal O}(t_{ij}) + \bm{q}^\top \bm{Z}_i^{\mathcal O}(t_{ij}) + \lambda R_i^{\mathcal Y}(t_{ij}^-)\right\}$.
We estimate $\widehat{\Omega}_{\mathcal O}$ by maximizing $\sum_{i=1}^n \log \mathcal{L}_i(\Omega_{\mathcal O})$ via tensor-product Gauss--Hermite quadrature.

\subsubsection{Stage 3: Outcome estimation with joint filtering}

Stage~3 retains the weighted least squares structure, but the construction of the weight and compensation terms requires joint filtering on the augmented latent vector $\bm{w}_i = (U_i, \bm{\varepsilon}_i^\top)^\top$ to correctly track the posterior uncertainty in both components.

\paragraph{Joint filtering.}
Let $\mathcal{F}_{i,t^-}$ denote the $\sigma$-field generated by $\{C_i, m_i\}$ and the observation history up to time $t^-$.
We approximate the history-updated posterior by a Gaussian
\[
\bm{w}_i \mid \mathcal{F}_{i,t^-} \approx \mathcal{N} \left\{ \bm{m}_i ( t^- ), \bm{V}_i ( t^- ) \right\}
\]
initialized at $(\bm{m}_{i,0}, \bm{V}_{i,0})$ and updated sequentially via Laplace approximation as each $R_i^{\mathcal Y}(t_{ij})$ is observed.

\paragraph{Predictive quantities.}
Under the joint Gaussian approximation, the linear predictor $L_i(t) = a_i(t^-) + \bm{a}_i^\top \bm{w}_i$ is univariate normal conditional on $\mathcal{F}_{i,t^-}$:
\[
L_i ( t ) \mid \mathcal{F}_{i,t^-} \sim \mathcal{N} \left\{ \mu_{L_i} ( t^- ), \sigma_{L_i}^2 ( t^- ) \right\}
\]
with
\[
\mu_{L_i}(t^-) = a_i(t^-) + \bm{a}_i^\top \bm{m}_i(t^-), \qquad
\sigma_{L_i}^2(t^-) = \bm{a}_i^\top \bm{V}_i(t^-) \bm{a}_i.
\]

The predictive observation probability is
\begin{equation}
\overline{\omega}_i ( t^- )
= \E \left\{ \Phi ( L_i ( t ) ) \mid \mathcal{F}_{i,t^-} \right\}
= \Phi \left\{ \frac{ \mu_{L_i} ( t^- ) }{ \sqrt{ 1 + \sigma_{L_i}^2 ( t^- ) } } \right\}
\equiv \Phi \left\{ k_i ( t^- ) \right\}.
\label{eq:tv_omega_joint}
\end{equation}

For the observation-weighted posterior mean ratio, since $(U_i, L_i(t))$ are jointly Gaussian given $\mathcal{F}_{i,t^-}$ with $\mathrm{Cov}(U_i, L_i(t) \mid \mathcal{F}_{i,t^-}) = \bm{e}_1^\top \bm{V}_i(t^-) \bm{a}_i$ where $\bm{e}_1 = (1, 0, \ldots, 0)^\top$, Lemma~S2.3 yields
\begin{equation}
\kappa_i ( t^- )
= \mu_{U_i} ( t^- ) + \frac{ \bm{e}_1^{\top} \bm{V}_i ( t^- ) \bm{a}_i }{ \sqrt{ 1 + \sigma_{L_i}^2 ( t^- ) } } \cdot \frac{ \varphi \{ k_i ( t^- ) \} }{ \Phi \{ k_i ( t^- ) \} },
\label{eq:tv_kappa_joint}
\end{equation}
where $\mu_{U_i}(t^-) = \left\{\bm{m}_i(t^-)\right\}_1$ is the first component of the posterior mean.

\paragraph{Why joint filtering is necessary.}
The covariance term $\bm{e}_1^\top \bm{V}_i(t^-) \bm{a}_i$ captures how the history of observations updates the joint uncertainty between $U_i$ and $\bm{\varepsilon}_i$.
A one-dimensional filter that marginalizes $\bm{\varepsilon}_i$ before filtering would incorrectly reset this cross-covariance to zero at each step, leading to biased $\kappa_i(t^-)$ and inconsistent estimation of $\bm{\theta}$.

The definitions~\eqref{eq:tv_omega_joint}--\eqref{eq:tv_kappa_joint} ensure the key cancellation holds:
\[
\E\!\left[\{U_i - \kappa_i(t^-)\} R_i^{\mathcal Y}(t)\, \dint N_i(t) \mid \mathcal{F}_{i,t^-}\right] = 0.
\]

\paragraph{Weight function and risk-set centering.}
Define the time-varying weight function
\[
\widehat{p}_i(t^-) = \overline{\omega}_i(t^-)\,\mathbf{1}(t \leq C_i) \,\frac{m_i}{\widehat{\Lambda}_0(C_i)},
\]
and the compensated outcome covariate $\widehat{\bm{B}}_i(t^-) = \kappa_i(t^-)\,\widetilde{\bm{Z}}_i^{\mathcal Y}(t)$.
The risk-set averages
\[
\overline{\bm{X}}^{\mathcal Y}(t^-) = \frac{\sum_{i=1}^n \widehat{p}_i(t^-)\bm{X}_i^{\mathcal Y}(t)}{\sum_{i=1}^n \widehat{p}_i(t^-)},
\qquad
\widehat{\overline{\bm{B}}}(t^-) = \frac{\sum_{i=1}^n \widehat{p}_i(t^-)\widehat{\bm{B}}_i(t^-)}{\sum_{i=1}^n \widehat{p}_i(t^-)}
\]
are computed on the visit-time grid, and $(\widehat{\bm{\beta}}, \widehat{\bm{\theta}})$ is obtained from the WLS normal equations~\eqref{eq:EE_outcome}.

\section{Simulation Study for Time-Varying Observation Process}\label{sec:tv_obs_sim}

This section presents simulation studies evaluating the proposed extension for time-varying observation processes with lagged recording feedback, as described in Section~S8.

We generate data from a model where the observation process depends on the lagged recording indicator $R_i^Y(t^-)$. For each of $n = 1{,}000$ subjects, we generate baseline covariates $Z_i \sim \text{Bernoulli}(0.5)$ and $X_i \sim N(0, 1)$, and a shared latent factor $U_i \sim N(0, 1)$.

\paragraph{Visiting Process.}
The visit intensity follows a multiplicative frailty model:
\[
\lambda_i(t) = \eta_i \exp(-2.2 + Z_i + X_i) \lambda_0(t),
\]
where $\eta_i = \exp(\mu_0 + \sigma U_i)$ with $\sigma = 1$ and $\mu_0 = -0.5$.

\paragraph{Observation Process.}
Conditional on a visit at time $t_{ij}$, the recording indicator $R_{ij}$ is generated from:
\[
P(R_{ij} = 1 \mid U_i, q_i, R_{i,j-1}) = \Phi\bigl(0.5 Z_i - 0.5 X_i + 0.5 R_{i,j-1} + q_i^\top Z_{ij}^O \bigr),
\]
where $R_{i,0} = 0$, $Z_{ij}^O = (1, Z_i)^\top$, and the subject-specific random coefficients satisfy:
\[
\begin{pmatrix} q_{0i} \\ q_{1i} \end{pmatrix} \mid U_i \sim N \left\{ \begin{pmatrix} 0.2 \\ 0.6 \end{pmatrix} U_i, \begin{pmatrix} 0.2 & 0 \\ 0 & 0.1 \end{pmatrix} \right\}
\]

\paragraph{Longitudinal Outcome.}
The biomarker trajectory follows:
\[
Y_i(t) = -2 - 0.5 Z_i + 0.5 X_i + 0.1 t + b_{0i} + b_{1i} Z_i + \varepsilon_i(t),
\]
where $\varepsilon_i(t) \sim N(0, 1)$, and the random effects satisfy:
\[
\begin{pmatrix} b_{0i} \\ b_{1i} \end{pmatrix} \bigg| U_i \sim N \left\{ \begin{pmatrix} 0.5 \\ 0.2 \end{pmatrix} U_i, \begin{pmatrix} 1 & 0 \\ 0 & 2 \end{pmatrix} \right\}
\]
Under this simulation setting, we demonstrate that this modified procedure yields accurate estimation, with a bias of $-0.002$ and an RMSE of $0.301$.

\section{Invariance to Covariate-Dependent Latent Variables}
\label{sec:supp:invariance}

The GIVEHR estimation procedure assumes that the shared latent factor $U_i$ is independent of measured covariates $\bm{X}^{\mathcal{U}}_i$, with $U_i \sim \mathcal{N}(0, 1)$. In practice, however, the true data-generating process may exhibit a dependence $\E(U_i \mid \bm{X}^{\mathcal{U}}_i) \neq 0$ (the red dotted arrow in Figure~\ref{fig:DAG}). We show here that this dependence is non-identifiable from the observed data and is absorbed into the fixed-effect coefficients, leaving the estimation procedure and all variance-component estimates unchanged.

Suppose the true conditional distribution for the unmeasured factor is 
\[
U_i \mid\bm{X}^{\mathcal{U}}_i \sim \mathcal{N}(\bm{a}^\top \bm{X}^{\mathcal{U}}_i, 1)
\]
for some coefficient vector $\bm{a}$. Define the residual $\widetilde{U}_i = U_i - \bm{a}^\top \bm{X}^{\mathcal{U}}_i$, so that $\widetilde{U}_i \sim \mathcal{N}(0, 1)$ with $\widetilde{U}_i \perp \left\{ \bm{X}_i^{\mathcal{V}}, \bm{X}_i^{\mathcal{O}}(t), \bm{X}_i^{\mathcal{Y}}(t) \right\}$. Substituting $U_i = \bm{a}^\top \bm{X}^{\mathcal{U}}_i + \widetilde{U}_i$ into each component of the joint model shows that $\bm{a}$ is absorbed into the fixed effects, as we now demonstrate.

Consider first the visiting process. The true intensity is $\lambda_i(t) = \lambda_0(t)\exp(\bm{\gamma}^\top \bm{X}^{\mathcal{U}}_i + \mu_0 + \sigma U_i)$. After substitution,
\[
\lambda_i(t) = \lambda_0(t)\exp\!\big\{(\bm{\gamma} + \sigma\bm{a})^\top \bm{X}^{\mathcal{U}}_i + \mu_0 + \sigma\widetilde{U}_i\big\}.
\]
The reparametrized model has identical form with $\bm{\gamma}^* = \bm{\gamma} + \sigma\bm{a}$ and frailty $\widetilde{\eta}_i = \exp(\mu_0 + \sigma\widetilde{U}_i)$, where $\widetilde{U}_i \perp \left\{ \bm{X}_i^{\mathcal{V}}, \bm{X}_i^{\mathcal{O}}(t), \bm{X}_i^{\mathcal{Y}}(t) \right\}$. Since the partial likelihood conditions on the frailties, and the method-of-moments estimator for $\sigma^2$ and the Laplace approximation for the posterior of $\widetilde{U}_i$ depend only on the functional form of the frailty distribution, all three steps of the visiting-process estimation are exactly invariant to $\bm{a}$.

Turning to the observation process, the model specifies $\bm{q}_i \mid U_i \sim \mathcal{N}(\bm{\delta} U_i, \bm{\Sigma}_q)$, so the numerator of the observation kernel $k_i(t)$ inside $\Phi(\cdot)$ involves the linear predictor $\bm{\alpha}^\top \bm{X}_i^{\mathcal{O}} + \bm{\delta}^\top \bm{Z}_i^{\mathcal{O}}(t)\, U_i$. After substitution, the linear predictor becomes
\begin{align*}
\pr\left\{ R_i^{\mathcal{Y}} = 1 \mid \dint N_i(t) = 1, U_i, \bm{X}_i^{\mathcal{O}}(t), \bm{Z}_i^{\mathcal{O}}(t) \right\} &= \bm{\alpha}^\top \bm{X}_i^{\mathcal{O}} + \bm{\delta}^\top \bm{Z}_i^{\mathcal{O}}(t)\,(\bm{a}^\top \bm{X}^{\mathcal{U}}_i + \widetilde{U}_i) \\
&= \bm{\alpha}^\top \bm{X}_i^{\mathcal{O}} + \bm{\delta}^\top \bm{Z}_i^{\mathcal{O}}(t)\,\bm{a}^\top \bm{X}^{\mathcal{U}}_i + \bm{\delta}^\top \bm{Z}_i^{\mathcal{O}}(t)\, \widetilde{U}_i.
\end{align*}
When the random-effect design matrix reduces to an intercept ($\bm{Z}_i^{\mathcal{O}} = 1$), this simplifies to $(\bm{\alpha} + \delta_0 \bm{a})^\top \bm{X}_i^{\mathcal{O}} + \delta_0\,\widetilde{U}_i$, which yields the exact redefinition $\bm{\alpha}^* = \bm{\alpha} + \delta_0 \bm{a}$. When $\bm{Z}_i^{\mathcal{O}}$ contains additional covariates, such as a time-varying component $F_i$ with $\bm{Z}_i^{\mathcal{O}} = (1, F_i)^\top$, the absorption term expands to
\begin{align*}
\bm{\delta}^\top \bm{Z}_i^{\mathcal{O}}(t)\,\bm{a}^\top \bm{X}^{\mathcal{U}}_i &= (\delta_0 + \delta_1 F_i)\,\bm{a}^\top \bm{X}^{\mathcal{U}}_i \\
&= \delta_0\,\bm{a}^\top \bm{X}^{\mathcal{U}}_i + \delta_1 F_i\,\bm{a}^\top \bm{X}^{\mathcal{U}}_i,
\end{align*}
generating interactions not spanned by the original fixed-effect specification. This interaction term can be eliminated by including the appropriate cross-product terms in the fixed-effect design matrix.

The same reasoning applies to the outcome process. The longitudinal outcome is $Y_i(t) = \beta_0(t) + \bm{\beta}^\top \bm{X}^{\mathcal{Y}}_i(t) + \bm{b}_i^\top \bm{Z}_i^{\mathcal{Y}}(t) + \varepsilon_i(t)$, where the random effects satisfy $\bm{b}_i \mid U_i \sim \mathcal{N}(\bm{\theta} U_i, \bm{\Sigma}_b)$. Writing $\bm{b}_i = \bm{\theta} U_i + \widetilde{\bm{b}}_i$ with $\widetilde{\bm{b}}_i \sim \mathcal{N}(\bm{0}, \bm{\Sigma}_b)$ and substituting $U_i = \bm{a}^\top \bm{X}^{\mathcal{U}}_i + \widetilde{U}_i$, the outcome becomes
\begin{align*}
Y_i(t) &= \beta_0(t) + \bm{\beta}^\top \bm{X}^{\mathcal{Y}}_i(t) + \big\{\bm{\theta}(\bm{a}^\top \bm{X}^{\mathcal{U}}_i + \widetilde{U}_i) + \widetilde{\bm{b}}_i\big\}^\top \bm{Z}_i^{\mathcal{Y}}(t) + \varepsilon_i(t) \\
&= \beta_0(t) + \bm{\beta}^\top \bm{X}^{\mathcal{Y}}_i(t) + \bm{\theta}^\top \bm{Z}_i^{\mathcal{Y}}(t)\,\bm{a}^\top \bm{X}^{\mathcal{U}}_i + \bm{\theta}^\top \bm{Z}_i^{\mathcal{Y}}(t)\,\widetilde{U}_i + \widetilde{\bm{b}}_i^\top \bm{Z}_i^{\mathcal{Y}}(t) + \varepsilon_i(t).
\end{align*}
When $\bm{Z}_i^{\mathcal{Y}} = 1$, this reduces to
\begin{align*}
Y_i(t) &= \beta_0(t) + (\bm{\beta} + \theta_0 \bm{a})^\top \bm{X}^{\mathcal{Y}}_i(t) + \theta_0\,\widetilde{U}_i + \widetilde{b}_{0i} + \varepsilon_i(t),
\end{align*}
yielding the exact redefinition $\bm{\beta}^* = \bm{\beta} + \theta_0 \bm{a}$. When $\bm{Z}_i^{\mathcal{Y}}$ includes covariates $F_i$ with $\bm{Z}_i^{\mathcal{Y}} = (1, F_i)^\top$, the absorption term expands to
\begin{align*}
\bm{\theta}^\top \bm{Z}_i^{\mathcal{Y}}(t)\,\bm{a}^\top \bm{X}^{\mathcal{U}}_i &= (\theta_0 + \theta_1 \cdot F_i)\,\bm{a}^\top \bm{X}^{\mathcal{U}}_i \\
&= \theta_0\,\bm{a}^\top \bm{X}^{\mathcal{U}}_i + \theta_1\,F_i \cdot \bm{a}^\top \bm{X}^{\mathcal{U}}_i.
\end{align*}
If needed, the interaction terms can be accommodated by including the corresponding covariate interactions in the fixed-effect specification.

As a result, across all three submodels the vector $\bm{a}$ is fully absorbed into the fixed-effect coefficients, and the residual latent factor $\widetilde{U}_i$ satisfies the independence and distributional assumptions of the fitted model. The observed-data likelihood under the true data-generating process with $\E(U_i \mid \bm{X}^{\mathcal{U}}_i) = \bm{a}^\top \bm{X}^{\mathcal{U}}_i$ is therefore identical to that under the fitted model with $U_i \perp \bm{X}^{\mathcal{U}}_i$, up to a relabeling of fixed-effect parameters. This equivalence implies that $\bm{a}$ is not identifiable from the data, and that the estimated fixed-effect coefficients $\widehat{\bm{\gamma}}^*$, $\widehat{\bm{\alpha}}^*$, and $\widehat{\bm{\beta}}^*$ capture the total association between covariates and the outcome---encompassing both the direct pathway and the indirect pathway mediated through the latent variable---rather than the direct causal effect alone.

\end{document}